\documentclass[physrev,aps,reprint,bibnotes,superscriptaddress]{revtex4-2} 

\usepackage[T1]{fontenc}
\usepackage[utf8]{inputenc}
\usepackage[USenglish]{babel}
\usepackage[a4paper,centering,hmargin=1.75cm,vmargin=2cm]{geometry} \usepackage{amsmath,amssymb,graphicx,bm,microtype} 
\usepackage{amsthm}
\usepackage[dvipsnames]{xcolor} 
\usepackage{booktabs} 
\usepackage{multirow} 
\usepackage{setspace}
\usepackage[colorlinks,allcolors=blue!50!black]{hyperref}
\usepackage[all]{hypcap}
\usepackage[capitalize]{cleveref} 
\usepackage{braket}
\usepackage[version=4]{mhchem} 
\usepackage{upgreek,textgreek} 
\usepackage{siunitx,textcomp} 
\DeclareSIUnit\bohr{\text {\ensuremath {a}}_{0}}
\DeclareSIUnit\hartree{\text {\ensuremath {E}}_{\mathrm {h}}}
\usepackage{tikz}
\usepackage{longtable}
\usetikzlibrary{quantikz2}

\newcommand{\norm}[1]{\left\lVert #1 \right\rVert}

\newcommand{\dyad}[2]{\ket{#1}\bra{#2}}

\DeclareMathOperator*{\argmin}{argmin}
\newcommand{\R}{\mathbf{R}}

\newtheorem*{theorem*}{Theorem}
\newtheorem{lemma}{Lemma}
\newtheorem{corollary}{Corollary}
\newtheorem{proposition}{Proposition}

\newcommand{\ceil}[1]{\left\lceil{#1}\right\rceil}
\newcommand{\bits}[1]{\left\lceil\log\left({#1}\right)\right\rceil}
\newcommand{\expval}[2]{\langle#2|#1|#2\rangle}

\newcommand{\bnu}{\boldsymbol{\nu}}

\newcommand{\doublezeta}{\zeta\zeta}

\newcommand{\cX}{\mathcal{X}}

\newcommand{\cO}{\mathcal{O}}
\newcommand{\cW}{\mathcal{W}}
\newcommand{\cA}{\mathcal{A}}

\newcommand{\cR}{\mathcal{R}}
\newcommand{\cM}{\mathcal{M}}

\newcommand{\cL}{\mathcal{L}}

\newcommand{\data}{\mathrm{data}}
\newcommand{\flag}{\mathrm{flag}}
\newcommand{\junk}{\mathrm{junk}}
\newcommand{\init}{\mathrm{init}}
\newcommand{\meas}{\mathrm{meas}}
\newcommand{\prop}{\mathrm{prop}}
\newcommand{\rot}{\mathrm{rot}}

\newcommand{\nuc}{\mathrm{nuc}}

\newcommand{\grid}{\mathrm{grid}}

\newcommand{\bond}{\mathrm{bond}}
\newcommand{\ADD}{\textsc{add}}
\newcommand{\SUB}{\textsc{sub}}
\newcommand{\SOS}{\mathrm{SoS}}
\newcommand{\QFT}{\textsc{qft}}
\newcommand{\QSP}{\mathrm{QSP}}

\newcommand{\QAE}{\mathrm{QAE}}
\newcommand{\est}{\mathrm{est}}
\newcommand{\PREP}{\textsc{prep}}
\newcommand{\proxPREP}{\widetilde{\PREP}}
\newcommand{\proxL}{\tilde{\mathcal{L}}}
\newcommand{\SEL}{\textsc{sel}}
\newcommand{\proxSEL}{\widetilde{\SEL}}
\newcommand{\block}[1]{\PREP_{#1}^{\dagger}\SEL_{#1}\PREP_{#1}}
\newcommand{\proxblock}[1]{\proxPREP_{#1}^{\dagger}\SEL_{#1}\proxPREP_{#1}}
\newcommand{\UNIF}{\textsc{unif}}
\newcommand{\COMP}{\textsc{comp}}
\newcommand{\QROM}{\textsc{qrom}}
\newcommand{\SWAP}{\textsc{swap}}

\newcommand{\AND}{\textsc{and}}
\newcommand{\OR}{\textsc{or}}

\newcommand{\tilf}{\tilde{f}}
\newcommand{\tildegre}{\tilde{d}}
\newcommand{\proxW}{\widetilde{\cW}}
\newcommand{\LCU}{\mathrm{LCU}}

\newcommand{\contr}[1]{\textrm{ctrl-}{#1}}
\newcommand{\out}{\mathrm{out}}
\newcommand{\anc}{\mathrm{anc}}
\newcommand{\eq}{\mathrm{eq}}
\newcommand{\Ps}{\mathrm{Ps}}
\newcommand{\Er}{\mathrm{Er}}

\newcommand{\mass}{\mathrm{mass}}

\newcommand{\Toff}{\mathrm{Toff}}
\newcommand{\Had}{\mathrm{Had}}

\DeclareSIUnit\angstrom{\text{\AA}}

\newcommand{\Ind}{U_{\Pi_{S}}}
\newcommand{\isp}{\mathrm{ISP}}

\newcommand{\ext}{\mathrm{ext}}

\newcommand{\ho}{\Omega_{i \nu}}
\newcommand{\sm}{\Phi_{i \mu}}
\newcommand{\hatsm}{\hat{\Phi}_{i \mu}}
\newcommand{\hatsmQ}{\hat{\Phi}_{i \mu, \bar{\mathbf{Q}}}}
\newcommand{\hatsmQAprx}{\tilde{\hat{\Phi}}_{i \mu, \bar{\mathbf{Q}}}}
\newcommand{\SmnormAprx}{\widetilde{\mathcal{N}}_{i \mu}^{(n)}}
\newcommand{\SmMom}{K_{i \mu}^{(n)}}
\newcommand{\HatSmAprx}{\tilde{\hat{\Phi}}_{i \mu}}
\newcommand{\SmEpsT}{\epsilon^{(n,t)}_{i \mu}}
\newcommand{\SmEpsP}{\epsilon^{(n,p)}_{i \mu}}
\newcommand{\SmEpsC}{\epsilon^{(n,c)}_{i \mu}}
\newcommand{\SmDelT}{\delta^{(n,t)}_{i \mu}}
\newcommand{\SmDelP}{\delta^{(n,p)}_{i \mu}}
\newcommand{\SmDelC}{\delta^{(n,c)}_{i \mu}}
\newcommand{\SmPolyDeg}{M_{i \mu}}

\newcommand{\SmPolyAprx}{\tilde{\hat{\Phi}}^{(\mathrm{poly})}_{i \mu}}
\newcommand{\BBar}{\bar{\mathcal{B}}_{\mathrm{ISP}}}
\newcommand{\NBar}{\bar{N}_{\mathrm{ISP}}}

\newcommand{\GTO}{g_{\ell \mu}}
\newcommand{\GTOMom}{\hat{g}_{\ell \mu}}
\newcommand{\GTOMomAprx}{\tilde{\hat{g}}_{\ell \mu}}
\newcommand{\GTONormAprx}{\widetilde{\mathcal{N}}_{\ell \mu}^{(e)}}

\newcommand{\MOMom}{\hat{\phi}_a}
\newcommand{\MOMomAprx}{\tilde{\hat{\phi}}_a}
\newcommand{\MOPolyAprx}{\tilde{\hat{\phi}}_a^{(\mathrm{poly})}}
\newcommand{\SMTensor}{c^{(n)}_{i \mu \nu}}
\newcommand{\NormT}{\mathcal{N}^{(t)}_{i \mu}}

\makeatletter
\def\l@subsubsection#1#2{}
\makeatother

\begin{document}
\title{End-to-End Simulation of Chemical Dynamics on a Quantum Computer}

\author{Elliot C. Eklund} 
\affiliation{School of Chemistry, University of Sydney, NSW 2006, Australia}

\author{Arkin Tikku}
\affiliation{School of Chemistry, University of Sydney, NSW 2006, Australia}

\author{Patrick Sinnott}
\affiliation{School of Chemistry, University of Sydney, NSW 2006, Australia}

\author{William~J.~Huggins}
\affiliation{Google Quantum AI, Venice, CA, United States}

\author{Guang Hao Low}
\affiliation{Google Quantum AI, Venice, CA, United States}

\author{Dominic W. Berry}
\affiliation{School of Mathematical and Physical Sciences, Macquarie University, NSW 2109, Australia}

\author{Ivan Kassal}
\affiliation{School of Chemistry, University of Sydney, NSW 2006, Australia}

\begin{abstract}
Simulations of chemical dynamics are a powerful means for understanding chemistry. However, classical computers struggle to simulate many chemical processes, especially non-adiabatic ones, where the Born-Oppenheimer approximation breaks down. Quantum computers could simulate quantum-chemical dynamics more efficiently than classical computers, but there is currently no complete quantum algorithm for calculating dynamical observables to within a known error. Here, we develop an efficient, end-to-end quantum algorithm for simulating chemical dynamics that avoids all uncontrolled approximations (including the Born-Oppenheimer approximation) and whose error is bounded subject to mild assumptions. To do so, we treat the nuclei and the electrons on an equal footing and simulate the full molecular wavefunction on a momentum-space grid in first quantization, including all algorithmic steps: initial-state preparation, time evolution using qubitization, and measurement of chemical observables such as reaction yields and rates. Our work gives the first algorithm for quantum simulation of chemistry whose end-to-end complexity achieves sublinear scaling in the size of the grid. We achieve this by developing an exponentially faster method for initial-state-preparation. Photochemistry is a likely early application of our algorithm and we estimate resources required for end-to-end simulations of non-adiabatic dynamics of atmospherically important molecules. Classically intractable photochemical computations could be performed using resources comparable to those required for other chemical applications of quantum computing.
\end{abstract}

\maketitle 

Chemical reactions are inherently dynamical processes, but accurate simulation of the quantum-mechanical motion of both
nuclei and electrons is among the most challenging
problems in computational chemistry. Non-relativistic chemistry is governed by the \textit{molecular Hamiltonian}, which describes electrons and nuclei interacting via the Coulomb potential. We refer to the combined electron-nuclear quantum dynamics under this Hamiltonian as \textit{full dynamics}. Full dynamics reproduces all chemical phenomena, but simulating it on classical computers, by propagating the time-dependent Schrödinger equation, is possible only for the smallest molecules~\cite{arteaga2024strong}.

Classical algorithms can simulate dynamics in many difficult chemical systems, but retain exponential worst-case scaling with molecule size. Methods such as multiconfigurational time-dependent Hartree (MCTDH)~\cite{worth2008using,wang2015multilayer} significantly reduce the simulation cost compared to full dynamics using basis-set contractions optimized for each system~\cite{meng2013multilayer,xie2015full,schulze2016multi}. Nevertheless, like all fully quantum treatments of chemical dynamics, MCTDH scales exponentially with system size in the worst case~\cite{cigrang2025roadmap,prlj2025bestpracticesnonadiabaticmolecular}. Particularly challenging are strongly non-adiabatic systems, which significantly depart from the Born-Oppenheimer (BO) approximation~\cite{domcke2004conical,cederbaum1981multimode}.

Quantum computers promise efficient simulations of full dynamics; by using the molecular Hamiltonian, they could achieve the same performance for adiabatic or non-adiabatic dynamics. An \textit{efficient} simulation is one whose cost scales polynomially with molecule size and the inverse error of the final result. There are three additional desirable features in a simulation algorithm.
First, an \textit{end-to-end} algorithm completes each of the three essential steps: initial-state preparation (loading a given initial state, such as the output of an electronic-structure calculation, onto a quantum computer), time evolution, and measurement of observables. 
Second, in a \textit{controlled-error} algorithm, the final error can be arbitrarily reduced using greater computational resources.
And third, in a \textit{bounded-error} algorithm, the final error can be rigorously bounded in terms of system parameters.

Efficient quantum simulation of dynamics~\cite{zalka1998simulating} was applied to chemical dynamics early on~\cite{kassal2008}. More recent work has used advances in Hamiltonian simulation~\cite{berry2015simulating,Low2017,Low2019hamiltonian,childs2021theory} to extend dynamical simulations to other fields, including purely electronic dynamics~\cite{su2021,von2021quantum}, nuclear dynamics on potential energy surfaces~\cite{ollitrault2020,miessen2023quantum}, and electron-nuclear dynamics in fusion~\cite{rubin2024}. Similar techniques have been applied to simulating chemical dynamics~\cite{kassal2008,Jornada2025,pocrinic2026}, but the existing algorithms do not have the other three desirable features.

Gaps in full-dynamics simulation exist in both initial-state preparation and time evolution. In state preparation, there is no bounded-error method for preparing the full molecular wavefunction (electrons and nuclei) in first quantization. Bounded-error methods exist for preparing the electronic wavefunction alone~\cite{huggins2024}, but those for nuclear-state preparation~\cite{Jornada2025} lack error bounds on crucial approximations and scale linearly (i.e., steeply) with the number of grid points because of the cost of loading classical data. 
In time evolution, bounded-error algorithms exist only for electronic dynamics with fixed nuclei~\cite{su2021,childs2022quantum,georges2025quantum}, when nuclei are described by pseudopotentials~\cite{berry2024quantum,Jornada2025} (an uncontrolled approximation that has not been validated for dynamics), or when the Coulomb potential is truncated~\cite{pocrinic2026}.

Here, we develop an efficient quantum algorithm for simulating full dynamics that has the three desirable features above. 
First, the algorithm is end-to-end, going from a given initial state to expectation values of observables. 
For initial-state preparation, we prepare both electronic and nuclear wavefunctions, as well as entangled ones, in first quantization with only logarithmic scaling with grid size, an exponential improvement over prior work~\cite{Jornada2025}. For the nuclei, state preparation involves a novel approach for general linear coordinate transformations that may be of independent interest. To simulate dynamics, we construct a block encoding of the molecular Hamiltonian and use it to simulate the dynamics using qubitization in a plane-wave basis~(PWB). We then integrate these protocols into an optimal measurement algorithm for computing chemically relevant observables~\cite{cigrang2025roadmap}.
Second, the algorithm is controlled-error, as it makes no uncontrolled approximations such as the BO approximation or pseudopotentials.  
Third, it is bounded-error, subject to two minor approximations in \cref{sect:grid} that do not dominate the errors in the practice. 
As a result, we can provide constant-factor qubit and gate costs for the entire algorithm. 

We estimate the resources required by our algorithm to simulate the light-induced dissociation of small molecules in the atmosphere, one of the most important, but challenging, problems in non-adiabatic dynamics. A key quantity of interest is the quantum yield, the probability that a molecule fragments into smaller constituents after a given time. The molecules we consider require 3000--8000 logical qubits and \num{e13}--\num{e16} Toffoli gates to calculate the quantum yield up to an error $\epsilon=0.095$. These are the first end-to-end, bounded-error resource estimates for calculating dynamical observables and show chemical dynamics as a promising application of early quantum computers.

\tableofcontents

\newpage

\section{Algorithm overview}

Given input data that specifies the initial molecular state, our algorithm (\cref{fig:overview}) efficiently computes an observable of interest to within a specified error tolerance after evolving the initial state under the molecular Hamiltonian. It has all of the desired properties introduced above, being full-dynamics, controllable- and bounded-error, and end-to-end.

\textit{Full dynamics.} Time evolution occurs under the molecular Hamiltonian
\begin{equation} \label{eq:mol-ham}
    H = T + V,
\end{equation}
which is the sum of the total kinetic energy of all particles and their Coulomb interactions,
\begin{align} \label{eq:ham_terms}
    T &= - \sum_{i=1}^{\eta} \frac{\nabla_{i}^2}{2 m_{i}}, \quad V = \sum_{i < j}^{\eta} \frac{\zeta_{i} \zeta_{j}}{\left\|\mathbf{R}_{i}-\mathbf{R}_{j}\right\|_2}, 
\end{align}
where $\eta$ is the total number of particles (the sum of the numbers of electrons $\eta_{e}$ and nuclei $\eta_{n}$) and $m_{j}$, $\zeta_{j}$, and $\mathbf{R}_{j}$ are the mass, electric charge, and position of the $j$th particle, respectively. We work in atomic units ($\hbar = e = m_e = 4\pi\epsilon_0 = 1$), $\Vert \cdot \Vert_k$ is either the $\ell^k$ norm or the Schatten $k$-norm depending on context, and logarithms are in base~2 except as otherwise noted.

\textit{Controllable and bounded errors.} The error of the observable that is calculated after time evolution is fully controllable, i.e., given adequate computational resources, it can be systematically reduced arbitrarily close to zero. In addition, subject to mild assumptions about grid parameters in \cref{sect:grid}, the final error is bounded, i.e., it can be rigorously computed from the simulation parameters.

\textit{End-to-end.} The algorithm's four steps cover all stages of dynamics simulation, shown in \cref{fig:overview}.

\begin{figure*}[tb]
\centering
    \includegraphics[width=0.85\textwidth]{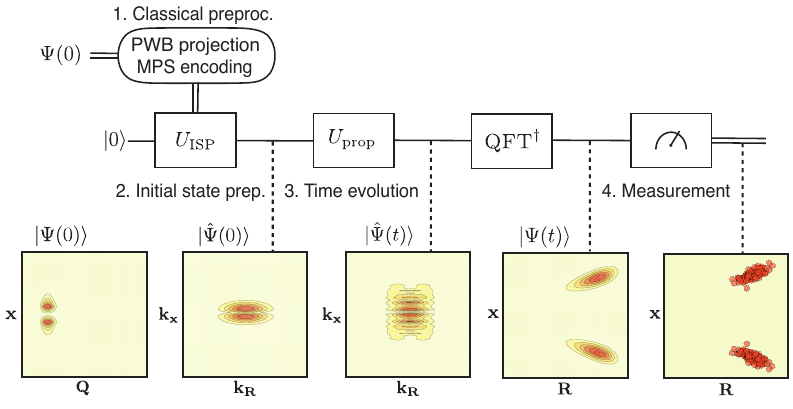} \label{fig:overview}
    \caption{Chemical dynamics algorithm. Given a specification of the initial state in the occupation-number basis (ONB), the algorithm returns an estimate of an observable. Each step is followed by an illustration of the wavefunction at that point. 
    \textit{Step 1: Classical preprocessing} projects the initial state onto a plane-wave basis (PWB) and finds a matrix-product-state (MPS) encoding it.
    \textit{Step 2: Initial-state preparation} converts the MPS into $|\hat{\Psi}(0)\rangle$ on the quantum computer in a PWB representation, and a normal-mode transformation $\textsc{nct}$ is applied to the nuclear states. 
    \textit{Step 3: Time evolution} evolves $|\hat{\Psi}(0)\rangle$ under the molecular Hamiltonian into $|\hat{\Psi}(t)\rangle$. $\QFT^{\dagger}$ is applied if the observable of interest is most easily measured in position space, yielding $|\Psi(t)\rangle$. 
    \textit{Step 4: Measurement} of observable $O$ on $|\Psi(t)\rangle$ gives outcomes that can be averaged to yield an expectation value. Not shown is quantum amplitude estimation, used to reduce the number of measurements. } 
    \label{fig:e2e}
\end{figure*}

\textit{0. Input Data} (\cref{sect:input_data}).
Our algorithm requires classical input data that specifies the initial molecular state. Any molecular state can be expressed as
\begin{equation} \label{eq:non_BO_kspace}
   \Psi(t=0) = \sum_{I}\sum_{J}C_{IJ} \psi^{(e)}_{I} \psi^{(n)}_{J},
\end{equation}
where $\{\psi^{(e)}_{I}\}$ and $\{\psi^{(n)}_J\}$ are electronic and nuclear configurations, with amplitude $\{C_{IJ}\}$. Then, the required input data includes $\{C_{IJ}\}$ and the specifications of $\{\psi^{(e)}_{I}\}$ and $\{\psi^{(n)}_J\}$ (e.g., the underlying basis sets). These requirements are met by output from standard electronic- and vibrational-structure calculations.

\textit{1. Classical Preprocessing (\cref{sect:preproc})}. We first preprocess $\Psi(t=0)$ to yield an approximate representation that can be efficiently prepared on a quantum computer. The electronic state is preprocessed by projecting the molecular orbitals~(MOs) onto a PWB, then encoding as a matrix product state (MPS)~\cite{huggins2024}. We preprocess the nuclear state by extending this approach to efficiently project the single-modals (SMs) onto the PWB and encode them as MPSs. The resulting state is $|\hat{\Psi}(t=0)\rangle_{\mathrm{MPS}}$, where a hat denotes a PWB state.

\textit{2. Initial State Preparation (\cref{sect:isp})}. $U_\mathrm{ISP}$ prepares the approximation $|\tilde{\hat{\Psi}}(t=0)\rangle_{\mathrm{MPS}}$ to $|\hat{\Psi}(t=0)\rangle_{\mathrm{MPS}}$ on the quantum computer in four steps.
First, $|\hat{\Psi}(t=0)\rangle_{\mathrm{MPS}}$ is prepared in the MO and SM bases (MOB and SMB, respectively) using the sum-of-Slaters (SoSlat) algorithm~\cite{fomichev2024}. 
Second, for the electronic state, the MO labels are antisymmetrized to ensure fermionic statistics~\cite{berry2018}.
Third, the state is approximately transformed from the MOB/SMB to a PWB using the unitary synthesis approach for preparing MPSs~\cite{huggins2024}. 
Finally, because the Coulomb operator to be implemented in time evolution is naturally expressed in Cartesian coordinates, the nuclear state is transformed from normal coordinates to Cartesian coordinates. \Cref{fig:coords} illustrates the various bases and how they relate to each other.

\textit{3. Time Evolution (\cref{sect:sim})}. $U_{\mathrm{prop}}$ evolves $|\hat{\Psi}(0)\rangle$ under $H$ for time $t$ to give the final state
\begin{equation} \label{eq:time_evol}
|\hat{\Psi}(t)\rangle = e^{-i H t} |\hat{\Psi}(0)\rangle.
\end{equation}
We simulate the time evolution using qubitization~\cite{Low2019hamiltonian}. To do so, we first project $H$ onto the PWB using the Galerkin representation (see \cref{app:Galerkin_review}), 
\begin{align}\label{eq:Galerkin_Ham}
T&=\sum_{j=1}^{\eta} \sum_{\mathbf{p} \in G} \frac{\|\mathbf{k_p}\|_2^2}{2 m_{j}} \ket{\mathbf{p}}\bra{\mathbf{p}}_{j}\\ 
V&=\frac{2 \pi}{L^3} \sum_{i \neq j=1}^{\eta} \sum_{\mathbf{p}, \mathbf{q} \in G} \sum_{\substack{\bnu \in G_0 \\ 
(\mathbf{p}+\bnu) \in G \\
(\mathbf{q}-\bnu) \in G}} \frac{\zeta_{i} \zeta_{j}}{\|\mathbf{k_\nu}\|_2^2} \nonumber\\
&\quad\,\quad\,\quad\,\quad\,\quad\,\quad\,\quad\,\quad\, \times \ket{\mathbf{p}+\bnu}\bra{\mathbf{p}}_{i}\ket{\mathbf{q}-\bnu}\bra{\mathbf{q}}_{j},
\end{align}
where $L$ is the length of the computational cell along any dimension, the momenta are 
\begin{equation}
    \mathbf{k}_{\mathbf{p}} =\frac{2\pi}{L} \mathbf{p} \;\text{ for }\;\mathbf{p}\in G \label{eq:k_mu},
\end{equation}
and the simulation grid is $G= \left[-(N-1)/2, (N-1)/2\right]^{3} \cap \mathbb{Z}^3$, where $N$ is the number of plane waves along each dimension. Finally, $G_0=G\backslash\{(0,0,0)\}$ excludes the singular zero mode~\cite{martin2004}. 
The projected Hamiltonian is decomposed into a linear combination of unitaries~(LCU), which allows us to block-encode it and perform time-evolution via quantum signal processing~(QSP) \cite{Low2017}. 

\textit{4. Measurement (\cref{sect:measure})}. Any chemical observable $O$ can be extracted from $|\hat{\Psi}(t)\rangle$. To do so, we transform the wavefunction from the PWB into the eigenbasis of the observable. For many chemical observables, such as bond lengths and reaction yields, this means transforming $|\hat{\Psi}(t)\rangle$ to the position-space $\ket{\Psi(t)}$ by applying an inverse QFT. For greatest asymptotic efficiency in estimating $\langle O \rangle$, we use an amplitude-estimation approach~\cite{Rall_2020}. For target error $\epsilon$ in $\langle O \rangle$, amplitude estimation has the asymptotically optical query complexity $\cO(\lambda_{O}/\epsilon)$ for $\lambda_{O}\geq \norm{O}$. For many important observables, such as yields, $\lambda_O =1$, ensuring a small cost.

\textit{Costs (\cref{sect:errors}).}
Ours is the first end-to-end quantum-simulation algorithm for chemistry whose resource costs can be stated in terms of only the system parameters $\eta$, $t$, and $\epsilon$. Overall, it requires 
$\widetilde{\cO}\left(\eta^3t/\epsilon^{4/3}+\eta^2t/\epsilon^{5/3}\right)$ gates, with constant factors given below.

\section{Input data}
\label{sect:input_data}

We begin with conventional approaches to specify $\Psi$ given in \cref{eq:non_BO_kspace}. In many areas of chemistry, dynamical simulations are challenging because correlation between the nuclei and electrons develops during dynamics, but is often negligible at $t=0$. In this case, $\Psi(t=0)$ is well described by a separable state
\begin{equation} \label{eq:psi_bo}
    \Psi = \Psi^{(n)} \Psi^{(e)},
\end{equation}
where $\Psi^{(e)}$ and $\Psi^{(n)}$ are the electronic and nuclear wavefunctions, respectively, and we drop the explicit reference to $t=0$. In the separable setting, the electronic and nuclear states can be determined separately, as discussed in \cref{sect:input_data_elec,sect:input_data_nuc}, respectively. In \cref{sect:input_data_non_sep}, we consider non-separable initial states.

\begin{figure*}[t!] 
	\centering
	\includegraphics[width= 0.86\textwidth]{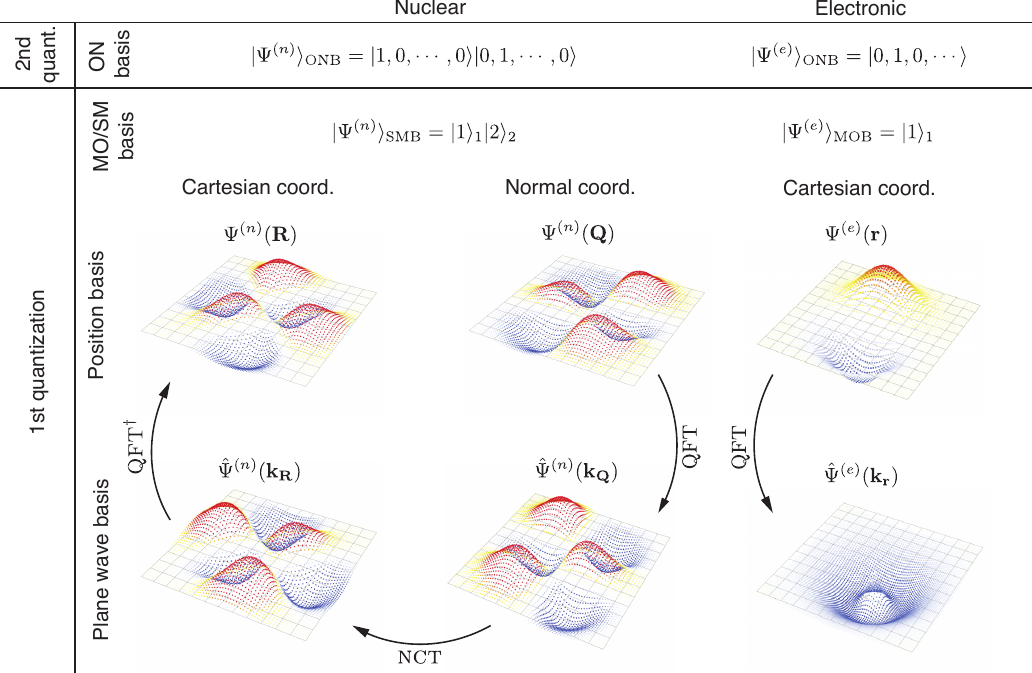}
	\caption{Wavefunction representations (bases and coordinate systems) used in the algorithm.
    \textbf{Top row:} example nuclear and electronic wavefunctions in the occupation number (ON) basis.
    \textbf{Bottom row:} the same wavefunctions in first-quantized representations, including the molecular orbital (MO) and single-modal (SM) bases. The grid bases are related by the normal-mode transformation $\textsc{nct}$ or quantum Fourier transform QFT. Hats indicate functions in plane-wave bases.}
	\label{fig:coords}
\end{figure*}

\subsection{Electronic-state data} \label{sect:input_data_elec}
The electronic state $\Psi^{(e)}$ is specified using classical methods for solving the time-independent electronic Schr\"{o}dinger equation. For fixed nuclear geometry $\mathbf R$, $H$ reduces to the electronic Hamiltonian~\cite{szabo11996, helgaker2012},
\begin{multline}\label{eq:Elec_Ham}
    H^{(e)}(\mathbf{R}) = -\sum_{i=1}^{\eta_e} \frac{\nabla_i^2}{2} + 
\sum_{i < j}^{\eta_e} \frac{1}{\left\|\mathbf{r}_i-\mathbf{r}_j\right\|_2}  \\
 -\sum_{i=1}^{\eta_e} \sum_{\ell=1}^{\eta_n} \frac{\zeta_{\ell}}{\left\|\mathbf{R}_{\ell}-\mathbf{r}_i\right\|_2}.
\end{multline}
Its eigenstates obey
\begin{equation} \label{eq:elec_schrodinger}
    H^{(e)}(\mathbf{R})\Psi^{(e)}(\mathbf{x};\mathbf{R}) = \mathcal{E}(\mathbf{R})\Psi^{(e)}(\mathbf{x};\mathbf{R}),
\end{equation}
where $\mathcal{E}(\mathbf{R})$ is the electronic energy of the molecule given nuclei at $\mathbf{R}$. We use a semicolon to indicate that $\Psi_e$ depends parametrically on $\mathbf{R}$.

In the Hartree-Fock (HF) approximation, the electronic state is a single Slater determinant,
\begin{equation} \label{eq:slat_dt}
\psi^{(e)}_{I}(\mathbf{x}) = \frac{1}{\sqrt{\eta_{e}!}}\left|\begin{array}{cccc}
\chi_a\left(\mathbf{x}_1\right) & \chi_b\left(\mathbf{x}_1\right) & \cdots & \chi_{c}\left(\mathbf{x}_1\right) \\
\chi_a\left(\mathbf{x}_2\right) & \chi_b\left(\mathbf{x}_2\right) & \cdots & \chi_{c}\left(\mathbf{x}_2\right) \\
\vdots & \vdots & \ddots & \vdots \\
\chi_a\left(\mathbf{x}_{\eta_{e}}\right) & \chi_b\left(\mathbf{x}_{\eta_{e}}\right) & \cdots & \chi_{c}\left(\mathbf{x}_{\eta_{e}}\right)
\end{array}\right|,
\end{equation}
where $\chi_{a}(\mathbf{x}_i)$ is the $a$th molecular orbital (MO) occupied by the $i$th electron, $\mathbf{x}_i=(\mathbf{r}_i,\omega_i)$ is a vector of the electron's position $\mathbf{r}_i$ and its spin variable $\omega_i$, and we suppress the explicit reference to $\mathbf{R}$. Each MO is a product of a spatial orbital and a spin function,
\begin{equation}
    \chi_a(\mathbf{x}_i) = \phi_a(\mathbf{r}_i)\sigma_a(\omega_i), \,\,\, \mathrm{where} \,\,\, \sigma(\omega) \in \{\alpha(\omega),\beta(\omega)\}.
\end{equation}
We refer to $\psi^{(e)}_I(\mathbf{x})$ as an electronic configuration and index it using its occupation number vector $I = (n_1,n_2,\ldots,n_{N_{\mathrm{MOB}}})$, where $n_a \in \{0,1\}$ indicates if the $a$th orbital is occupied and $N_{\mathrm{MOB}}$ is the number of MOs.

Each spatial orbital can be expanded as a linear combination of $N_g$ atom-centered Gaussians~\cite{szabo11996, helgaker2012},
\begin{equation} \label{eq:gauss_expand}
    \phi_a(\mathbf{r}) = \sum_{\ell=1}^{\eta_n}\sum_{\mu=1}^{N_g} c_{a \ell \mu}^{(e)} g_{\ell\mu}(\mathbf{r}),
\end{equation}
where the Gaussian primitives are defined by
\begin{equation} \label{eq:gauss_prim}
g_{\ell\mu}(\mathbf{r}) \propto (x-X_\ell)^{l_\mu} (y -Y_\ell)^{m_\mu} (z-Z_\ell)^{n_\mu} e^{-\gamma_\mu\Vert \mathbf{r} - \mathbf{R}_\ell\Vert^2},
\end{equation}
$\gamma_\mu$ is a width, $l_\mu,m_\mu,n_\mu$ are integers, $x, y, z$ are components of $\mathbf{r}$, and $X_\ell, Y_\ell, Z_\ell$ are components of $\mathbf{R}_\ell$. 

Higher-accuracy electronic-structure methods go beyond HF, typically by expressing the electronic wavefunction as a linear combination of configurations~\cite{szabo11996, helgaker2012},
\begin{equation} \label{eq:post-hf}
    \Psi^{(e)}(\mathbf{x}) = \sum_I C_I^{(e)} \psi_I^{(e)}(\mathbf{x}).
\end{equation}
Post-HF methods differ in how they select the set of configurations. For example, configuration interaction singles and doubles (CISD) only includes singly and doubly excited configurations. Post-HF methods converge to the solution of the Schrödinger equation (within the basis set) in the limit of full configuration interaction (FCI), which includes all configurations. Overall, an electronic state is specified by the $C_{I}^{(e)}$, $c_{a\ell\mu}^{(e)}$ and $g_{\ell \mu}$.

\subsection{Nuclear-state data} 
\label{sect:input_data_nuc}
For a separable state, the nuclear wavefunction $\Psi^{(n)}$ is found by solving the time-independent nuclear Schr\"{o}dinger equation,
\begin{equation} \label{eq:nuc_schro}
H^{(n)}_a\Psi^{(n)}(\mathbf{R}) = E \Psi^{(n)}(\mathbf{R}),
\end{equation}
where $H^{(n)}_a$ is the nuclear Hamiltonian corresponding to the $a$th electronic excited state. That is~\cite{szabo11996},
\begin{equation} 
    H^{(n)}_a = - \sum_{\ell=1}^{\eta_n}\frac{\nabla_\ell^2}{2m_\ell}  + \mathcal{V}_a(\mathbf{R}),
\end{equation}
where the potential energy surface (PES) of the $a$th excited state is
\begin{equation}
    \mathcal{V}_{a}(\mathbf{R}) = \mathcal{E}_{a}(\mathbf{R}) + \sum_{\ell < k}^{\eta_n} \frac{\zeta_{\ell} \zeta_k}{\left\|\mathbf{R}_{\ell}-\mathbf{R}_k\right\|_2},
\end{equation}
and $\mathcal{E}_{a}(\mathbf{R})$ is the electronic energy from \cref{eq:elec_schrodinger}.

For chemically relevant, low-energy nuclear states, $\mathcal{E}_{a}(\mathbf{R})$ only needs to be I computed for small displacements about nuclear equilibrium positions, which can be done accurately using a variety of methods~\cite{bowman1979,carter1997, carter1997_2, seidler2007,neff2009,christoffel1982,thompson1980}.

A convenient coordinate system for expressing $\mathcal{V}_{a}(\mathbf{R})$ is the normal coordinates. Expanding $\mathcal{V}_{a}(\mathbf{R})$ about the equilibrium geometry $\mathbf{R}^{(0)}$ to second order in terms of the mass-weighted nuclear displacements $\bar{\mathbf{D}} = \mathbf{M}^{1/2}\mathbf{D}$, with $\mathbf{D} = \mathbf{R} - \mathbf{R}^{(0)}$, and 
\begin{multline}
    \mathbf{M} =\operatorname{diag}(m_{1}, m_{1}, m_{1}, m_{2}, m_{2}, m_{2},\ldots, \\ m_{\eta_n}, m_{\eta_n}, m_{\eta_n})
\end{multline}
yields the quadratic Hamiltonian
\begin{equation}
    H^{(n,2)}_a=-\frac{1}{2}\nabla^2_{\bar{\mathbf{D}}}  +\frac{1}{2} \bar{\mathbf{D}}^{\top} \mathbf{He}_a\bar{\mathbf{D}},
\end{equation}
where $\nabla^{2}_{\bar{\mathbf{D}}}$ is the Laplacian with respect to $\bar{\mathbf{D}}$ and the Hessian is defined by its elements
\begin{equation}
    [\mathbf{He}_a]_{ij} = \frac{\partial^2 \mathcal{V}_a(\bar{\mathbf{D}})} {\partial \bar{\mathbf{D}}_i\partial \bar{\mathbf{D}}_j}\Bigr|_{\bar{\mathbf{D}}=\mathbf{0}}.
\end{equation}
Defining the normal coordinates $\mathbf{Q} = \mathbf{O}\bar{\mathbf{D}}$ as those that diagonalize the Hessian (with orthogonal matrix $\mathbf{O}$  such that $\mathbf{O} \mathbf{He}_a\mathbf{O}^{\top} = \mathbf{W}_a$ for $\mathbf{W}_a=\operatorname{diag}(\omega_{a1}^2, \ldots \omega_{a 3 \eta_n}^2)$),  we arrive at
\begin{equation} \label{eq:hnuc_second}
    H^{(n,2)}_a =  - \frac{1}{2} \nabla_{\mathbf{Q}}^2 + \frac{1}{2} \mathbf{Q}^{\top}\mathbf{W}_a \mathbf{Q}.
\end{equation}
Thus, in normal coordinates, $H^{(n,2)}_a$ is a sum of $3\eta_n$ independent harmonic oscillators with frequencies $\omega_{i}$.  

The eigenstates of $H^{(n,2)}_a$ are products of $3\eta_n$ individual states, each of which is a function of a single normal coordinate. For a non-linear molecule, $3\eta_n - 6$ of the normal coordinates ($3\eta_n -5 $ for linear molecules) have non-zero eigenvalues and correspond to vibrational modes of the molecule. The $\nu$th eigenstate of the $i$th mode is the harmonic-oscillator eigenstate~\cite{tannor2007,wilson1980}
\begin{equation} \label{eq:ho_eig}
    \ho (Q)
    = \frac{1}{\sqrt{2^{\nu}\nu!}}\left(\frac{\omega_i}{\pi}\right)^{\!1/4}
      H_{\nu}\big(\sqrt{\omega_i}Q\big)\,e^{-\omega_i Q^2/2},
\end{equation}
where $H_{\nu}(x)$ are Hermite polynomials. We call $\Omega_{i\nu}$ Hermite-Gaussian primitives because they play a similar role to Gaussian primitives in the previous section. From now on, we omit the dependence on $a$, with it being understood that $\ho $ and $\omega_i$ are defined with respect to a given PES.

The remaining six coordinates (five for linear molecules) have eigenvalues equal to zero and correspond to translations and rotations \cite{bunker2005, wilson1980}:
\begin{align} \label{eq:trans_rot_coords}
    &Q_{3\eta_n-5}=\mathcal T_x,\ \ Q_{3\eta_n-4}=\mathcal T_y,\ \ Q_{3\eta_n-3}=\mathcal T_z,\nonumber\\
    &Q_{3\eta_n-2}=\mathcal R_x,\ \ Q_{3\eta_n-1}=\mathcal R_y,\ \ Q_{3\eta_n}=\mathcal R_z,
\end{align}
where $\mathcal{T}_\tau$ and $\mathcal{R}_\tau$ are the translational and rotational coordinates along or about the axis $\tau$, respectively.
We denote translational and rotational states as $\Phi^{(\mathrm{tr})}_\tau(\mathcal{T}_\tau)$ and $\Phi^{(\mathrm{rot})}_\tau(\mathcal{R}_\tau)$, respectively. Formally, these are plane-wave solutions to the free-particle Hamiltonian. However, to ensure a localized centre of mass, we instead prepare them as Gaussian wavepackets, 
\begin{align} \label{eq:trans_rot_wvfxn}
    \Phi^{(\mathrm{tr})}_\tau(\mathcal{T}_\tau) &=\left(\Gamma_\tau/\pi\right)^{1/4}
    e^{- \Gamma_\tau  \mathcal{T}_\tau^2/2}, \nonumber \\
    \Phi^{(\mathrm{rot})}_\tau(\mathcal{R}_\tau) &= \left(\Upsilon_\tau/\pi\right)^{1/4} e^{- \Upsilon_\tau \mathcal{R}_\tau^2/2},
\end{align}
where $\Gamma_\tau$ and $\Upsilon_\tau$ set the widths of the Gaussians. The rotational coordinates only transform as rotations for infinitesimal displacements~\cite{bunker2005, wilson1980}. Therefore, $\Upsilon_{\tau}$ should be large enough that $\Phi^{(\mathrm{rot})}_\tau(\mathcal{R}_\tau)$ has support only over a sufficiently small range of $\mathcal{R}_{\tau}$.

The total nuclear state consisting of vibrational, translational, and rotational states is therefore
\begin{equation} \label{eq:nuc_st_approx}
    \Psi^{(n)}_{\mathrm{VHP}}(\mathbf{Q}) = \left(\prod_{i=1}^{N_{\mathrm{vib}}} \ho(Q) \right)\psi^{(\mathrm{tr})} \psi^{(\mathrm{rot})},
\end{equation}
where $N_{\mathrm{vib}} = 3\eta_n - 6$, and we define the translational and rotational states as
\begingroup
\allowdisplaybreaks
\begin{align}
\psi^{(\mathrm{tr})} &= \prod_{\tau \in\{x, y, z\}} \Phi^{(\mathrm{tr})}_\tau\left(\mathcal{T}_\tau\right) \\
    \psi^{(\mathrm{rot})} &= \prod_{\tau \in\{x, y, z\}}\Phi^{(\mathrm{rot})}_\tau\left(\mathcal{R}_\tau\right).
\end{align}
\endgroup
We refer to $\Psi^{(n)}_{\mathrm{VHP}}(\mathbf{Q})$ as the vibrational Hartree product (VHP) and the individual states that comprise it as single-modals (SMs). 

By treating the vibrational and rotational states as uncoupled, \cref{eq:nuc_st_approx} implicitly assumes the rigid-rotor approximation. This approximation is justified if the coupling between vibrational and rotational motion is initially small \cite{bunker2005}, which is typically the case for molecules at or near the rovibrational ground state. The translational state exactly separates from the vibrational and rotational states because there is no external field \cite{bunker2005}.

\renewcommand{\arraystretch}{1.5} 
\begin{table*}
\centering
    \begin{tabular}{l@{\hskip 12pt}c@{\hskip 18pt}c}
    \toprule
        & \multicolumn{2}{c}{State description} \\
        \cmidrule(l){2-3}
        Layer & Electronic & Nuclear \\
        \midrule
        Molecular & \multicolumn{2}{c}{ $\Psi = \sum_{IJ}C_{IJ} \psi^{(e)}_I \psi^{(n)}_J $}\\
        Configuration
            & $\psi^{(e)}_I = \det(\chi_a(\mathbf{x}_1)\cdots\chi_c(\mathbf{x}_{\eta_e}))$
            & $\psi_J^{(n)} = \Phi_{1\mu}\cdots\Phi_{3\eta_n \tau}$   \\
        MO/SM
            & $\chi_a(\mathbf{x}) = \sigma_a(\omega)\sum_{\ell,\mu} c_{a\ell\mu}^{(e)} g_{\ell\mu}(\mathbf{r})$
            & $\Phi_{i\mu}(Q) = \sum_{\nu} c_{i\mu\nu}^{(n)} \Omega_{i\nu}(Q)$ \\
        \bottomrule
    \end{tabular}
    \caption{States considered as input data decomposed into layers. The top layer are molecular states which consist of electronic and nuclear degrees of freedom. The middle layer shows electronic and nuclear configurations decomposed as MOs or SMs. The final layer shows MOs and SMs decomposed as Gaussian or Hermite-Gaussian primitives.}
    \label{tab:input_data}
\end{table*}

For states where anharmonic corrections to $\mathcal{V}_{a}$ are important, higher levels of theory are required to improve upon VHP. The next level of accuracy is the VSCF approach~\cite{bowman1979, christoffel1982, carter1997}, in which $H^{(n)}$ is treated using a mean-field approximation, similar to the HF approach in electronic structure. Here, each normal coordinate experiences the average interaction of all other normal coordinates. Like the VHP state, the VSCF state is also a product state,
\begin{equation} \label{eq:vscf_wfxn}
    \Psi^{(n)}_{\mathrm{VSCF}}(\mathbf{Q}) = \psi_{J}^{(\mathrm{vib})} \psi^{(\mathrm{tr})} \psi^{(\mathrm{rot})},
\end{equation}
where
\begin{equation}
    \psi_{J}^{(\mathrm{vib})} = \prod_{i=1}^{N_{\mathrm{vib}}} \sm(Q),
\end{equation}
and $J$ is a collective index given by
\begin{multline}
  J=  ((n_{11},n_{12},\dots, n_{{1N_{\mathrm{SMB}}}}), (n_{21},n_{22},\dots, n_{2N_{\mathrm{SMB}}}), \\
  \ldots (n_{N_{\mathrm{vib}}1}, n_{N_{\mathrm{vib}}2}, \ldots, n_{N_{\mathrm{vib}} N_{\mathrm{SMB}}})),
\end{multline}
with $N_\mathrm{SMB}$ being the size of the SM basis set. Here, $n_{i\mu} = 1$ if the $\mu$th state of the $i$th mode is occupied, and $0$ otherwise. Hence, $J$ defines an occupation number vector for the vibrational states. We refer to each $|\psi_J^{(\mathrm{vib})}\rangle$ as a vibrational configuration. In VSCF, vibrational SMs are linear combinations of Hermite-Gaussian primitives,
\begin{equation} \label{eq:single_mode_vib}
    \sm(Q) = \sum_{\nu = 0}^{N_{hg}-1} c_{i\mu\nu}^{(n)}\ho(Q),
\end{equation}
where $N_{hg}$ is the number of Hermite-Gaussian primitives per normal mode, and $c_{i\mu\nu}^{(n)}$ are expansion coefficients. 

Occasionally, coupling between normal coordinates can be important in the initial state. Capturing this effect requires post-VSCF (pVSCF) methods, such as vibrational configuration interaction~\cite{thompson1980, christoffel1982, neff2009} and vibrational coupled cluster~\cite{seidler2007}. We express a pVSCF nuclear wavefunction as
\begin{equation} \label{eq:psi_nuc}
    \Psi^{(n)}_{\mathrm{pVSCF}}(\mathbf{Q}) = \sum_J C_J^{(n)}\psi_J^{(n)},
\end{equation}
where we define nuclear configurations as
\begin{equation}
\psi^{(n)}_J = \psi^{(\mathrm{vib})}_J \psi^{(\mathrm{tr})} \psi^{(\mathrm{rot})}.
\end{equation}
VHP, VSCF, and pVSCF are all valid nuclear states to be used in \cref{eq:psi_bo}. A nuclear state is fully specified by $C^{(n)}_J$, $c^{(n)}_{i\mu\nu}$, and $\Omega_{\nu_i} $, as well as $\Gamma_{\tau} $ and $\Upsilon_{\tau} $.

In treating nuclear states, we ignore nuclear spin because most chemical processes do not depend on nuclear spin. However, it could easily be included if desired, with identical nuclei being either symmetrized or antisymmetrized depending on their statistics.

\subsection{Non-separable state data} \label{sect:input_data_non_sep}
More general techniques can be used to find a non-separable state of the form in \cref{eq:non_BO_kspace} that is a solution to the full time-independent molecular Schr\"{o}dinger equation, $H \Psi = E\Psi$, where $H$ is given in \cref{eq:mol-ham}. 

We consider $\Psi$ as being decomposed into three layers of abstraction, shown in \cref{tab:input_data}. At the molecular layer, the wavefunction depends on all coordinates of both the electrons and nuclei and is expanded as a linear combination of products of many-body electronic and nuclear configurations. At the configuration layer, each electronic configuration is an antisymmetric product (Slater determinant) of single-body electronic MOs, while each nuclear configuration is a product of single-body SMs. Finally, at the MO/SM layer, MOs and SMs can be expanded, respectively, in finite sets of atom-centered Gaussian primitives or Hermite-Gaussian primitives.

For a non-separable state, the expansion coefficients $C_{IJ}$ are found by treating the nuclear and electronic components simultaneously. One such approach is the nuclear-electronic orbital (NEO) method~\cite{schiffer_2021}, which treats some low-mass nuclei---typically hydrogens---on the same footing as the electrons. The NEO wavefunction has the same form as \cref{eq:non_BO_kspace}, except that the nuclear configurations are in terms of Cartesian coordinates. In our approach, NEO would be used to treat all nuclei quantum mechanically, with the wavefunction expressed in normal coordinates. 

The parallelism between the descriptions of electronic and nuclear configurations in \cref{tab:input_data} means that methods for preparing electronic states on a quantum computer can also be used to prepare nuclear states with only minor modifications.

\section{Classical preprocessing} \label{sect:preproc}
Here, we show that MOs and SMs that comprise $\Psi$, defined in \cref{eq:non_BO_kspace} and specified by the input data, can be efficiently projected onto a PWB and subsequently encoded as an MPS, a form suitable for preparation on a quantum computer. We refer to this sequence of steps as preprocessing. This result extends to SMs the approach of Huggins et al.~\cite{huggins2024}, who showed that MOs can be efficiently projected onto a PWB and then encoded as an MPS. Compared with Grover–Rudolph state-preparation methods~\cite{grover_2002}, this approach avoids potentially costly coherent integration over basis functions.

For both electronic and nuclear states, we only need to preprocess the MOs/SMs. These preprocessed MOs/SMS will be used to construct configurations and linear combinations of configurations in \cref{sect:isp}. Because of this, preprocessing does not distinguish between separable and non-separable initial states. In either case, preprocessing is performed the same, and the distinction between separable and non-separable initial states realized during initial state preparation, \cref{sect:isp}.

\subsection{MO preprocessing} \label{sect:preproc_mo}
An MO can be efficiently projected onto a PWB and encoded as an MPS~\cite{huggins2024}. In what follows, we focus on the spatial orbitals because we show in \cref{sect:isp_elec} that preparing spin functions is trivial. We assume that a spatial orbital $\phi_a(\mathbf{r})$ is a linear combination of $N_g$ atom-centred Gaussian primitives, obtained via canonical orthogonalization (Appendix D of \cite{huggins2024}). 

Projecting $\phi_a$ onto a PWB amounts to projecting its constituent Gaussian primitives:
\begin{equation} \label{eq:3d-guass-proj}
    \GTOMom(\mathbf{r}) = \frac{1}{\GTONormAprx} \sum_{\mathbf{k} \in \mathbb{K}^3}  \left( \int_{\mathbb{R}^3}  \varphi_{\mathbf{k}}(\mathbf{r}' ) \GTO(\mathbf{r}') \,d\mathbf{r}' \right) \varphi_{\mathbf{k}}(\mathbf{r}),
\end{equation}
where $\GTO(\mathbf{r})$ is defined in \cref{eq:gauss_expand}, $\GTONormAprx$ are normalization constants, and the plane waves are 
\begin{equation} \label{eq:plane_wave}
\varphi_{\mathbf{k}}(\mathbf{r})= e^{-i k_x x} e^{-i k_y y} e^{-i k_z z} / L^{3 / 2},
\end{equation}
where $\mathbf{k} = (k_x,k_y,k_z)$ is the momentum vector, and $L^3$ is the simulation volume. We use a hat to denote states in a PWB (e.g., $\GTOMom$). The simulation grid along a given dimension in a PWB is defined as 
\begin{equation} \label{eq:K_lattice}
    \mathbb{K} = \left\{\frac{2 \pi p}{L}: p \in \mathbb{Z} \right\}.
\end{equation}
In \cref{eq:3d-guass-proj}, the integration bounds have been extended from $\pm L$ to $\pm \infty$ so that the integral becomes a Fourier transform. This approximation is justified if $\GTO(\mathbf{r})$ decay sufficiently before $\pm L$. We denote $\phi_a$ projected onto the PWB and expanded in terms of $\GTOMom$ as $\MOMom$.

To make the sum in \cref{eq:3d-guass-proj} tractable, we replace the sum over $\mathbb{K}$ by a sum over the finite set
\begin{equation} \label{eq:mom-cut-set}
\mathbb{K}_{\mathrm{cut}}=\{k \in \mathbb{K}:|k| \leq K^{(e)}_a\},
\end{equation}
where $K^{(e)}_a$ is a momentum cutoff. We denote the approximate Gaussian primitives projected onto this truncated PWB as $\GTOMomAprx$, and MOs expanded in terms of $\GTOMomAprx$ as $\MOMomAprx$. 

Lemma 3 of \cite{huggins2024} demonstrates that there exists a function $\MOPolyAprx$ that approximates $\MOMomAprx$ by expanding $\GTOMomAprx$ over a finite set of polynomials. The error between $\MOMom$ and $\MOPolyAprx$ is defined as 
\begin{equation}
    \epsilon^{(e,c)}_a = D(\MOMom, \MOPolyAprx),
\end{equation}
where, for pure states $|f\rangle$ and $|g\rangle$, the trace distance is
\begin{equation}
    D(|f\rangle,|g\rangle) = \sqrt{1- |\langle f | g \rangle|^2}.
\end{equation}
Here, $|f\rangle$ and $|g\rangle$ may both be either discrete or continuous pure states.

For a given $\delta^{(e,c)}_a\in ( 0,1)$, this error is bounded by~\cite{huggins2024}
\begin{equation}
\label{eq:elec_mps_bound}
    \epsilon^{(e,c)}_a \leq \delta^{(e,c)}_a,
\end{equation}
provided that the momentum cutoff satisfies
\begin{multline} \label{eq:Kelec}
K^{(e)}_a = 2 \sqrt{2 \gamma_{\mathrm{max}}} \\ \times \sqrt{2 \ln \Bigg(\frac{288 \sqrt{3} N_g}{\big(\delta^{(e,c)}_a\big)^4 \Sigma^2}\Bigg)+l_{\mathrm{max}} \ln (4 l_{\mathrm{max}})+\ln (45)},
\end{multline}
$\gamma_{\max } =\max_{\mu} \gamma_\mu,$ $l_{\mathrm{max}} = \max_{\mu} \{l_\mu, m_\mu, n_\mu\}$,
and $\Sigma$ is the eigenvalue cutoff used during canonical orthogonalization \cite{huggins2024}.
Asymptotically, 
\begin{equation} \label{eq:Kelec_asymptotic}
K^{(e)}_{a} = \mathcal{O}\bigg(\sqrt{\ln(1/\epsilon_{a}^{(e,c)})}\bigg).
\end{equation}

We can express $\MOPolyAprx$ as a quantum state by mapping $\varphi_\mathbf{k}$ to the computational basis:
\begin{equation}
    \varphi_\mathbf{k}(\mathbf{r}) \rightarrow \left|\tfrac{k_x L }{2 \pi} \right\rangle \left|\tfrac{k_y L }{2 \pi} \right\rangle \left|\tfrac{k_z L }{2 \pi} \right\rangle.
\end{equation}
Using $N^{(e)}_a = 2\lceil L K^{(e)}_a/2\pi \rceil + 1$ and $\mathbb{G}^{(e)}_a =[-N^{(e)}_a,  N^{(e)}_a ]^{3} \cap \mathbb{Z}^3$ we get 
\begin{equation}
    |\MOPolyAprx\rangle = \sum_{\mathbf{n}\in \mathbb{G}_{a}^{(e)}} \MOPolyAprx(\tfrac{2\pi\mathbf{n}}{L}) \left|n_1\right\rangle \left|n_2 \right\rangle \left|n_3 \right\rangle,
\end{equation}
where $\mathbf{n} = (n_1, n_2, n_3)$.

Finally, $|\MOPolyAprx\rangle$ can be expressed exactly as a $\lceil\log N_a^{(e)}\rceil$-qubit MPS, $|\MOMom\rangle_{\mathrm{MPS}}$ \cite{huggins2024}. An MPS $|\kappa\rangle$ has the form
\begin{equation} \label{eq:mps}
|\kappa \rangle = \sum_{s \in\{0,1\}^n,\{\alpha\}} A_{\alpha_1}^{s_1} A_{\alpha_1 \alpha_2}^{s_2} \cdots A_{\alpha_{n-1}}^{s_n}\left|s_1 s_2 \cdots s_n\right\rangle,
\end{equation}
where $A_{\alpha_i}^{s_{j}}$ and $A_{\alpha_i \alpha_{i+1}}^{s_{i+1}}$ are rank-2 and rank-3 tensors, respectively, and $\alpha_i$ are virtual indices. The dimension of $\alpha_i$ is called the bond dimension. For a given $\delta^{(e,c)}$, the maximum dimension $M_a$ of $|\MOMom\rangle_{\mathrm{MPS}}$ obeys \cite{huggins2024} 
\begin{multline} \label{eq:M_elec}
M_a \leq 8 e^2 N_g\bigg(2 \ln \bigg(\frac{288 \sqrt{3} N_g}{\big(\delta^{(e,c)}_a\big)^4 \Sigma^2}\bigg) \\ +l_{\mathrm{max}} \ln (4 l_{\mathrm{max}})+4\bigg),
\end{multline}
which implies that $|\MOMom\rangle_{\mathrm{MPS}}$ can be efficiently encoded. Additionally, numerical evidence from \cite{huggins2024} suggests that the scaling of the bond dimension with $N_g$ may be sublinear in practice for typical molecular systems. 

\subsection{SM preprocessing} \label{sect:preproc_sm}
We extend the MO preprocessing approach and prove that SMs can also be efficiently projected onto a PWB and encoded as an MPS. A given SM $\sm(Q)$ consists of a linear combination of $N_{hg}$ Hermite-Gaussian primitives, \cref{eq:ho_eig}. For notational simplicity, here $\sm(Q)$ represents vibrational, rotational, and translational states. In the case of translational and rotational states, $\mu$ is dropped, $i$ is substituted with $\tau$, and $Q$ is replaced with the appropriate coordinate in \cref{eq:trans_rot_coords}.

The projection of a given $\sm$ is
\begin{equation} \label{eq:project_pwb}
\hatsm(Q) =\frac{1}{\SmnormAprx} \sum_{k \in \mathbb{K}}\left(\int_{\mathbb{R}} \varphi_{k}^*(Q') \sm(Q') d Q'\right) \varphi_{k}(Q),
\end{equation}
where
\begin{equation}
    \varphi_k(Q) = \frac{e^{-ikQ}}{L^{1/2}} 
\end{equation}
is a 1D plane wave and $\SmnormAprx$ is a normalization constant. As in \cref{eq:3d-guass-proj}, extending the integration bounds to $\pm \infty$ is justified assuming $\sm$ is negligible beyond $\pm L$.

We approximate $\hatsm$ as a finite sum by replacing $\mathbb{K}$ with $\mathbb{K}_{\mathrm{cut}}$ of \cref{eq:mom-cut-set}, but with $K^{(e)}_a$ replaced with the nuclear momentum cutoff $\SmMom$. We denote these approximate states by $\HatSmAprx$ and define the error $\SmEpsT$ from truncating $\mathbb{K}$ to $\mathbb{K}_{\mathrm{cut}}$ as 
\begin{equation} \label{eq:nuc_trunc_error}
\SmEpsT = D(\hatsm, \HatSmAprx).
\end{equation}
According to \cref{lem:pwb} in \cref{app:pwb_proj_sm}, for a given $\SmDelT\in(0,1)$, the truncation error is bounded by 
\begin{equation}
    \SmEpsT \leq \SmDelT
\end{equation}
provided each $\SmnormAprx \geq 2/3$ and $\SmMom$ satisfies 
\begin{multline} \label{eq:K_nuc}
    \SmMom = \sqrt{2\omega}_i \\ \times \sqrt{2\ln \Big(\big(\SmDelT\big)^{-1}\Big) + \ln\left(\tfrac{1}{\sqrt{2}} + \tfrac{2\sqrt{\pi}}{L \sqrt \omega_i}\right) + F(N_{\mathrm{hg}})}
\end{multline}
and $F(N_{\mathrm{hg}}) = (N_{\mathrm{hg}}-1)\ln(4(N_{\mathrm{hg}} -1)) + 5.4$. Asymptotically,
\begin{equation}
    K_{i\mu}^{(n)}=\cO\Big(\sqrt{\ln(1/\epsilon_{i\mu}^{(n,t)})}\Big).
    \label{eq:Knuc_asymptotic}
\end{equation}

\Cref{lem:mps} in \cref{app:mps_encod_sm} proves the existence of $\SmPolyAprx$, a degree-$(\SmPolyDeg-1)$ polynomial approximation to $\HatSmAprx$, such that for a given $\SmDelP \in (0,1)$, the error
\begin{equation} \label{eq:nuc_poly_approx}
    \SmEpsP = D\left(\HatSmAprx, \SmPolyAprx \right)
\end{equation}
is bounded by 
\begin{equation}
    \SmEpsP \leq \SmDelP,
\end{equation}
provided the conditions in \cref{eq:Nuc_M_cond} are satisfied.

The combined error of approximating $\hatsm$ by projecting onto a truncated PWB and approximating this projection with a polynomial is 
\begin{equation} \label{eq:nuc_eps_classical}
    \SmEpsC =  D\big(\hatsm, \SmPolyAprx\big).
\end{equation}
We show in \cref{app:mps_encod_sm} that for a given $\SmDelC\in(0,1)$, setting $\SmDelP =\SmDelT = \SmDelC/2$ and using the triangle inequality leads to the upper bound
\begin{equation}
    \SmEpsC \leq \SmDelC,
\end{equation}
provided that 
\begin{equation} \label{eq:m_nuc_bound}
    \SmPolyDeg \geq \frac{e^2}{\omega_i}\big(\SmMom\big)^2
\end{equation}
and \cref{eq:K_nuc} is satisfied. In this case, \cref{eq:m_nuc_bound} replaces the conditions on $\SmPolyDeg$ given in \cref{eq:Nuc_M_cond}.

In the same way that we expressed $\MOPolyAprx$ as a quantum state, we can express $\HatSmAprx^{(\mathrm{poly})}$ as a quantum state by mapping $\varphi_k$ to the computational basis to get
\begin{equation}
    |\SmPolyAprx\rangle = \sum_{n \in \mathbb{G}_{i\mu}^{(n)}} \SmPolyAprx\bigg(\frac{2\pi n}{L}\bigg)\left| n \right\rangle,
    \label{eq:psi_poly_mps_nuc}
\end{equation}
where $\mathbb{G}_{i \mu}^{(n)} = [-N_{i \mu}^{(n)}, N_{i \mu}^{(n)}] \cap \mathbb{Z}$
and $N_{i \mu}^{(n)} = 2\lceil LK_{i\mu}^{(n)} /2\pi\rceil +1$. In \cref{app:twos_complement_mps}, we prove that, using a two's-complement representation to encode the computational basis in \cref{eq:psi_poly_mps_nuc}, $|\SmPolyAprx\rangle$ can be expressed exactly as a $\lceil\log N_{i\mu}^{(n)}\rceil$-qubit MPS with a maximum bond dimension of $2\SmPolyDeg + 4$. We denote the MPS representation as $|\hatsm\rangle_{\mathrm{MPS}}$ and note that $N_{\mathrm{SMB}}$ is the only term that scales superlogarithmically in $\SmPolyDeg$, which establishes that $\hatsm$ has an efficient MPS encoding.

SMs have the property that, under a coordinate rescaling, the bounds on $\SmMom$ and $\SmPolyDeg$ are adjusted by rescaling $\omega_i$. To see this, let $\bar{Q}= d^{-1/2} Q$, where $d >0$ and $\sm'$ denote $\sm$ in terms of $\bar{Q}$,
\begin{equation}
   \sm'(\bar{Q}) = \sm(Q) =  \sm(\sqrt{d} \bar{Q}).
\end{equation}
Because $\sm$ is a linear combination of Hermite-Gaussian primitives, $\sm(\sqrt{d} \bar{Q})$ is given by substituting $\omega_i \rightarrow \omega_i d$ in \cref{eq:ho_eig}. Thus, the bounds in \cref{lem:pwb,lem:mps} apply to $\sm'(\bar{Q})$ after replacing every instance of $\omega_i$ by $\omega_i d$.

\section{Initial-state preparation}
\label{sect:isp}

This section constructs a unitary $U_{\mathrm{ISP}}$ that prepares the initial state in the PWB,
\begin{equation}
    U_{\mathrm{ISP}}|0\rangle = |\hat{\Psi}\rangle,
\end{equation}
where $|\hat{\Psi}\rangle$ is composed of MOs and SMs projected onto the PWB as described in \cref{sect:preproc}. For separable states, the ISP unitary is the product
\begin{equation} \label{eq:Uinit_sep}
    U_{\mathrm{ISP}}^{(\mathrm{sep})} = U_{\mathrm{ISP}}^{(e)} U_{\mathrm{ISP}}^{(n)},
\end{equation}
of electronic and nuclear ISP unitaries,
\begin{equation}
    U_{\mathrm{ISP}}^{(e)}|0\rangle = |\hat{\Psi}^{(e)}\rangle \quad\, \mathrm{and}\quad\, U_{\mathrm{ISP}}^{(n)}|0\rangle = |\hat{\Psi}^{(n)}\rangle.
\end{equation}

In \cref{sect:grid}, we explain how we construct a common simulation grid to represent the electrons and nuclei in first quantization, which determines the memory requirements to store the molecular wavefunction.

In \cref{sect:isp_elec}, we construct $U_{\mathrm{ISP}}^{(e)}$ using the unitary-synthesis approach~\cite{huggins2024}. For multi-configurational states, we first prepare the second-quantized representation of the state using a combination of arbitrary state preparation \cite{low2024} and the sum-of-Slaters (SoSlat) method~\cite{fomichev2024}, then use this state to prepare $|\hat{\Psi}^{(e)}\rangle$.

For $U_{\mathrm{ISP}}^{(n)}$, we develop a new state-preparation approach in \cref{sect:isp_nuc} which solves the challenge of converting normal-coordinate expressions for initial nuclear states to a Cartesian representation suitable for time evolution. To prepare the normal-coordinate representation, we adapt the unitary-synthesis approach~\cite{huggins2024} that we use for electronic states. We then transform the wavefunction into Cartesian coordinates using a sequence of shears, reflections, and $\pm \pi/2$-rotations. 

We construct separable electronic and nuclear states, $U_{\mathrm{ISP}}^{(\mathrm{sep})}$, in \cref{sect:isp_sep}. 

Our approach to preparing separable states can be extended to prepare non-separable states with only minor modifications. In \cref{sect:Non-BO state prep}, we show that the cost to prepare non-separable states is comparable to the cost of preparing separable ones. 

In each section, we summarize the ancilla-qubit and Toffoli-gate resources needed to implement each ISP unitary. We also give the error that each unitary introduces. Finally, in \cref{sect:asymp_isp}, we determine the overall asymptotic complexity of ISP.

\subsection{Common grid}
\label{sect:grid}

We prepare every particle on the same cubic simulation grid, which means that the momentum cutoff $K_{\mathrm{max}}$ and the momentum grid spacing $\Delta=2\pi/L$ are the same for every electron and nucleus along each of the three Cartesian coordinates. 
We set $K_{\mathrm{max}}$ to be the largest among the momentum cutoffs derived in \cref{sect:preproc}, 
\begin{equation}
    K_{\mathrm{max}} = \max  \Big\{ \max_{a} K^{(e)}_{a} , \max_{i,\mu} K^{(n)}_{i \mu} \Big\},\label{eq:cutoff_prop}
\end{equation}
which ensures that all error bounds from \cref{sect:preproc} are simultaneously satisfied. For a given truncation error $\epsilon^{(t)}$, $K_{\mathrm{max}}$ inherits the scaling of  $K_{a}^{(e)}$ and $K_{i\mu}^{(n)}$, given in \cref{eq:Kelec_asymptotic,eq:Knuc_asymptotic}, i.e.,
\begin{equation}
    K_{\mathrm{max}} = \cO\Big(\sqrt{\log(1/\epsilon^{(t)})}\Big).
    \label{eq:k_max_asymp}
\end{equation}

We set $\Delta$ based on error bounds for performing normal-coordinate transformations. In \cref{eq:L_asymptotic}, we find
\begin{equation} 
    \Delta = \cO(\epsilon_{\mathrm{\mathrm{LCT}}}/\eta_n^2),
    \label{eq:delta_asymp}
\end{equation}
where $\epsilon_{\mathrm{LCT}}$ is the target error for a linear coordinate transformation, with the precise values given by the procedure in \cref{sect:LCT}.

Our choices of $K_{\mathrm{max}}$ and $\Delta$ involve the only approximations in our algorithm whose errors we do not bound, even though both are controllable. We do not bound them because of the difficulty of finding fully general bounds and because, in practice, the errors incurred are significantly smaller than the ones we do bound.

The first approximation is that $K_{\mathrm{max}}$ suffices to represent $|\hat{\Psi}(t)\rangle$ at all times. \Cref{eq:cutoff_prop} guarantees that our encoding of $|\hat{\Psi}(0)\rangle$ is accurate at $t=0$, but the momentum cutoff may become insufficient at longer times if $|\hat{\Psi}(t)\rangle$ explores higher-momentum regions of the Hilbert space. 
In general, it is expected that $K_{\max}=\cO(\epsilon^{-1/3})$ is needed to resolve arbitrary molecular wavefunctions~\cite{gruneis2013explicitly,hattig2011explicit,Harl2008,shepherd2012convergence,helgaker1997basis_b,klopper1995ab2,Halkier1998b}.
However, because we consider low-energy dynamics, we assume that $K_{\mathrm{max}}$ is large enough for this problem to be unimportant.
In the examples in \cref{sect:photochemistry}, the problem does not arise because we impose small ISP error targets, resulting in large $K_{\mathrm{max}}$ (i.e., small real-space grid spacing $\Delta L = \pi/K_{\mathrm{max}}$). As shown in \cref{table_resource}, ISP requires $\Delta L\sim\SI{0.01}{\bohr}$, in agreement with what is generally used in the field~\cite{ksenia2018}.
In any event, the approximation is controllable, and the error can be reduced arbitrarily by increasing $K_{\mathrm{max}}$ beyond \cref{eq:cutoff_prop}.

The second approximation arises from our use of the PWB, which implies periodic boundary conditions, resulting in spurious interactions between particles in the molecule and those in its periodic images. Therefore, we assume that $L$ is large enough that these interactions are negligible. In practice, as shown in \cref{sect:photochemistry}, for the normal-mode transformation to meet our strict error bounds already requires large $L$, typically hundreds of times the size of the molecule (see \cref{table_resource}), making the periodic interactions small. For neutral molecules, the interaction is dominated by the dipole-dipole term, which scales as $\mathcal{O}(L^{-3})$, meaning that an $L$ that is 100 times the size of the molecule results in a relative energy error of \num{e-6}, significantly smaller than other errors we consider. If that were not satisfactory, this approximation is also controllable, with the error arbitrarily reducible by increasing $L$, at polynomial cost.

Overall, $K_{\mathrm{max}}$ and $\Delta$ determine the qubit costs to store the molecular wavefunction. The number of plane waves per particle per spatial dimension must be at least
\begin{equation}
    \bar{N}= 2\left\lceil\frac{K_{\mathrm{max}}}{\Delta} \right\rceil + 1 = \mathcal{O}\bigg( \frac{\eta_n^2\sqrt{\log(1/\epsilon^{(t)})}}{\epsilon_{\mathrm{LCT}}}\bigg).
    \label{eq:N}
\end{equation}
Because the number of qubits must be an integer, we represent the grid using
\begin{equation}
    n_p = \lceil\log_2 \bar{N} \rceil
\end{equation}
qubits per particle per dimension. This increases the number of plane waves to 
\begin{equation}
    N = 2^{n_p} -1,
    \label{eq:N_defined}
\end{equation}
where one plane wave is removed in the signed-magnitude representation, which requires odd $N$.
The increase from \cref{eq:N} implies that $K_\mathrm{max}$, $\Delta$, or both must change. We keep $K_\mathrm{max}$ fixed and update $\Delta$ to $\Delta = 2 K_{\mathrm{max}}/(N-1)$.

The total number of qubits needed to represent $|\hat{\Psi}\rangle$ is therefore
\begin{equation}
    C_\data = 3\eta n_p + \eta_e,\label{eq:cdata}
\end{equation}
of which $3 n_p$ represent the spatial wavefunction of each of $\eta$ particles and an additional qubit represents the spin of each of $\eta_e$ electrons. The integer grid points are expressed using the signed-magnitude representation.

\subsection{Electronic ISP} \label{sect:isp_elec}
Given an initial electronic state $|\Psi^{(e)}\rangle$, specified as in \cref{sect:input_data_elec}, with MOs preprocessed as in \cref{sect:preproc_mo}, we can construct a unitary $\tilde{U}_{\mathrm{ISP}}^{(e)}$ that prepares the approximate state $|\tilde{\hat{\Psi}}^{(e)}\rangle_{\mathrm{MPS}}$ on a quantum computer \cite{huggins2024}. Our approach consists of five main steps. 

\subsubsection{Arbitrary state preparation}
Let $ \textsc{asp}^{(e)}$ be a unitary that implements arbitrary state preparation such that
\begin{equation} \label{eq:asp_psi_e_exact}
    \textsc{asp}^{(e)}|0\rangle = |\psi\rangle =\sum_{i=0}^{D^{(e)}-1} C^{(e)}_i|i\rangle,
\end{equation}
where $C_{i}^{(e)}$ is the $i$th coefficient of the set $\{C_I^{e)} \}$, $D^{(e)}$ is the number of electronic configurations, and $|0\rangle$ is the all-zeros state.  We implement an approximation $\widetilde{\textsc{asp}}^{(e)}$ to $\textsc{asp}^{(e)}$ using the method in \cite{low2024}, which prepares
\begin{equation} \label{eq:asp_psi_e_approx}
    \widetilde{\textsc{asp}}^{(e)}|0\rangle = |\tilde{\psi}\rangle=\sum_{i=0}^{ D^{(e)} -1} \widetilde{C}^{(e)}_i|i\rangle,
\end{equation}
where $\widetilde{C}^{(e)}_i$ is an approximation to $C^{(e)}_i$. The resulting state can be represented using $\lceil\log D^{(e)} \rceil$ qubits. We give the number of ancilla qubits and Toffoli gates used by a subroutine to implement a given unitary by the functions $C_{\mathrm{anc}}(\cdot)$ and $C_{\mathrm{Toff}}(\cdot)$, respectively. The costs to implement $\widetilde{\textsc{asp}}^{(e)}$ are \cite{huggins2024}
\begin{align}
     C_{\mathrm{anc}}(\widetilde{\textsc{asp}}^{(e)}) & = \tfrac12 {\log(D^{(e)})} + \frac{D^{(e)}b_{\textsc{asp}}^{(e)}}{4(b_{\textsc{asp}}^{(e)}+1)^{1/2}} \nonumber \\
     &\quad + \tfrac{1}{2}\log(b_{\textsc{asp}}^{(e)} + 1) 
     + 3b_{\textsc{asp}}^{(e)} - 4   \label{eq:canc_asp} \\
    C_{\mathrm{Toff}}(\widetilde{\textsc{asp}}^{(e)}) & = 2^{5/2}(1 + \sqrt{2}) D^{(e)}(b_{\textsc{asp}}^{(e)} + 1)^{1/2} \nonumber \\ 
    & \quad + 2 \log(D^{(e)})(b_{\textsc{asp}}^{(e)} - 4), \label{eq:ctoff_asp}
\end{align}
where $b_{\textsc{asp}}^{(e)}$ is the number of precision qubits used by the rotation multiplexor in the ASP algorithm~\cite{huggins2024}.

\subsubsection{ONB preparation}
The next step is to convert $|\tilde{\psi}\rangle$ into the ONB. Let 
\begin{equation} \label{eq:psi_2nd}
|\tilde{\Psi}^{(e)}\rangle_{\mathrm{ONB}} =\sum_{I} \widetilde{C}_{I}^{(e)}|I\rangle,
\end{equation}
be an approximate ONB representation of $|\Psi^{(e)}\rangle$, where the occupation number vector $I$ is defined in \cref{sect:input_data_elec} and each $|I\rangle$ encodes a state in the ONB. We prepare $|\Psi^{(e)}\rangle_{\mathrm{ONB}}$ using the SoSlat algorithm~\cite{fomichev2024}. Let $\mathrm{SoSlat}^{(e)}$ be a unitary that implements the SoSlat algorithm,
\begin{equation}
    \mathrm{SoSlat}^{(e)}|\tilde{\psi}\rangle = |\tilde{\Psi}^{(e)}\rangle_{\mathrm{ONB}}.
\end{equation}
The resulting state is represented using $N_{\mathrm{MOB}}$ qubits. To prepare $|\tilde{\Psi}^{(e)}\rangle_{\mathrm{ONB}}$ using $\mathrm{SoSlat}^{(e)}$, we require \cite{fomichev2024}
\begin{align} \label{eq:prep_2nd_t}
     C_{\mathrm{anc}}(\mathrm{SoSlat}^{(e)})  &= 5\log D^{(e)} - 3\\
    C_{\mathrm{Toff}}(\mathrm{SoSlat}^{(e)}) &\leq D^{(e)}(2 \log D^{(e)} + 3).
\end{align}

\subsubsection{ONB to MOB transformation}
Next, we convert $|\tilde{\Psi}^{(e)}\rangle_{\mathrm{ONB}}$ into the MO basis (MOB) representation $|\tilde{\Psi}^{(e)}\rangle_{\mathrm{MOB}}$ using the unitary $\textsc{onb2mo}$ that performs
\begin{equation}\label{eq:onb2mob}
    \textsc{onb2mob}|n_1,n_2\dots\rangle = |a\rangle_1|b \rangle_2 \dots |c\rangle_{\eta_e}.
\end{equation}
In the MOB, each electron has its own register of size $\lceil\log N_{\mathrm{MOB}} \rceil$, so that the register $|a\rangle_{i}$ indicates that the $i$th electron occupies the $a$th MO. Then, $|\tilde{\Psi}^{(e)}\rangle_{\mathrm{MOB}}$ is expressed as 
\begin{equation} \label{eq:elec_mo_basis}
    |\tilde{\Psi}^{(e)}\rangle_{\mathrm{MOB}} = \sum_{a,b,\dots ,c=1}^{N_{\mathrm{MOB}}}\widetilde{C}_{ab\ldots c}^{(e)}|a\rangle_1|b\rangle_2\ldots|c\rangle_{\eta_e},
\end{equation}
where $\widetilde{C}_{ab\ldots c}^{(e)}$ are the components of $\widetilde{C}_{I}^{(e)}$ in the MOB. Representing $|\tilde{\Psi}^{(e)}\rangle_{\mathrm{MOB}}$ uses $\eta_e \lceil \log(N_{\mathrm{MOB}}) \rceil$ qubits.

$\textsc{onb2mob}$ can be implemented using the approach in Appendix G of \cite{Babbush_2023}, which requires
\begin{align}\label{eq:cost_onb2mob}
C_{\mathrm{anc}}(\textsc{onb2mob}) &=N_{\mathrm{MOB}} + 3\lceil \log(\eta_e + 1) \rceil \\
C_{\mathrm{Toff}}(\textsc{onb2mob}) &= N_{\mathrm{MOB}} (2 \eta_e  + \lceil \log (\eta_e + 1) \rceil  \nonumber\\
&\quad + \eta_e \lceil \mathrm{log}(N_{\mathrm{MOB}})\rceil -4 ).
\end{align}

\subsubsection{Antisymmetrization}
We then define $\textsc{asym}$ as the unitary that antisymmetrizes the overall electronic state,
\begin{multline}
\textsc{asym}|a\rangle_1|b\rangle_2 \cdots |c\rangle_{\eta_e} = \\\frac{1}{\sqrt{\eta_e!}} \sum_{\pi \in S_{\eta_e}}(-1)^{|\pi|}|\pi(a)\rangle_1|\pi(b)\rangle_2 \cdots|\pi(c)\rangle_{\eta_e},
\end{multline}
where $S_{\eta_e}$ is the set of all permutations $\pi$ of MO labels. The resulting state is stored using $n_p + 1$ qubits per electron, of which 1 qubit encodes the electron's spin, and the remaining $n_p$ qubits encode the $\mathrm{MOB}$.

After applying $\textsc{asym}$, the state is
\begin{multline}
    |\tilde{\Psi}^{(e)}\rangle_{\textsc{asym}} =
\frac{1}{\sqrt{\eta_e!}} \sum_{a,b,\dots, c=1}^{N_{\mathrm{MOB}}} \sum_{\pi \in S_{\eta_e}}(-1)^{|\pi|} \widetilde{C}_{ab\dots c}^{(e)} \times\\ |\pi(a)\rangle_1|\pi(b)\rangle_2 \cdots|\pi(c)\rangle_{\eta_e}.
\end{multline}

Using the method based on sorting networks~\cite{berry2018}, which improves on previous antisymmetrization algorithms~\cite{abrams_1997}, $\textsc{asym}$ can be implemented using
\begin{align}\label{eq:Antisym}
    C_{\mathrm{anc}}(\textsc{asym}) &=\eta_e \log  \eta_e+\tfrac14 {\bar{n}_p}{\log \bar{n}_p(1+\log \bar{n}_p)}  \nonumber\\
    &\quad +2 ( \eta_e-1)\\
    C_{\mathrm{Toff}}(\textsc{asym}) &= 2(\eta_e-1)(\log \bar{n}_p + 1) + {}\nonumber\\
     \tfrac14 \bar{n}_p  \log \bar{n}_p & (1 + \log  \bar{n}_p ) (6\log \bar{n}_p + n_p + 1),
\end{align}
where $\bar{n}_p = 2^{\lceil \log  (n_p+1) \rceil}$.

\subsubsection{MOB to PWB transformation}
The final step is to transform each register from the MOB to the PWB. This transformation is achieved by implementing the unitary \cite{huggins2024}   
\begin{equation} \label{eq:reflect}
    W^{(e)} = |0\rangle \langle 1| \otimes V^{(e)}_{\mathrm{MPS}}  + |1\rangle \langle 0| \otimes \big(V^{(e)}_{\mathrm{MPS}}\big)^{\dagger},
\end{equation}
where
\begin{equation}
    V^{(e)}_{\mathrm{MPS}} = \sum_{a=1}^{N_{\mathrm{MOB}}}|\sigma_a\rangle|\hat{\phi}_{a}\rangle_{\mathrm{MPS}}\,\langle a-1|.
\end{equation}
Here, $|\sigma_a\rangle$ encodes the spin state of the $a$th MO, and $|\hat{\phi}_{a}\rangle_{\mathrm{MPS}}$ are the MPS encodings of \cref{sect:preproc_mo}. The spin states are mapped to qubits by
\begin{equation}
    |\alpha\rangle \rightarrow |0\rangle, \quad
    |\beta\rangle \rightarrow |1\rangle.
\end{equation}
When $W^{(e)}$ operates on an electron's data register with an ancilla set to $|1\rangle$, it performs the transformation 
\begin{equation}
    W^{(e)}|1\rangle|a-1\rangle = |1\rangle|\sigma_a\rangle|\MOMom\rangle_{\mathrm{MPS}},
\end{equation}
which gives the $a$th spin orbital in a PWB representation. To transform the entire electronic state, we apply $(W^{(e)})^{\otimes\eta_e}$, which applies $W^{(e)}$ to all $\eta_e$ registers. 

We can implement an approximation $\widetilde{W}^{(e)}$ to $W^{(e)}$ using the unitary synthesis approach~\cite{low2024, kliuchnikov2013}, where $\widetilde{W}^{(e)}$ is given as a product of reflections about the states
\begin{equation}
\left|\tilde{w}_a^{(e)}\right\rangle  =\frac{1}{\sqrt{2}}\left(|1\rangle|a-1\rangle-|0\rangle |\sigma_a\rangle \ket{\MOMomAprx}_{\mathrm{MPS}}\right)
\end{equation}
where $\ket{\MOMomAprx}_{\mathrm{MPS}}$ is an approximation to $\ket{\hat{\phi}_{a}}_{\mathrm{MPS}}$. To prepare $|\tilde{w}_a^{(e)}\rangle$, we need a method to prepare the spin state and spatial orbital. The spin state is trivially prepared by applying a $\textsc{not}$ gate to the spin register to prepare $|\beta\rangle$, and doing nothing to prepare $|\alpha\rangle$. The spatial orbital is prepared using the MPS state preparation approach~\cite{fomichev2024}.
Then, $\widetilde{W}^{(e)}$ is given by
\begin{equation} \label{eq:prod_ref}
\widetilde{W}^{(e)} =\prod_{a=1}^{N_{\mathrm{MOB}}} \big( I-2|\tilde{w}_a^{(e)}\langle \tilde{w}_a^{(e)}| \big).
\end{equation}
Explicit circuits to implement \cref{eq:prod_ref} are given in \cite{huggins2024}.

To implement $(\widetilde{W}^{(e)})^{\otimes\eta_e}$ (including the cost to prepare $|\tilde{w}_a^{(e)}\rangle$), we have
\begin{multline}
    C_{\mathrm{anc}}((\widetilde{W}^{(e)})^{\otimes\eta_e}) <  N_{\mathrm{MOB}} n_p+ \tfrac{1}{2}\log(m_{ai_\mathrm{max}}^{(e)})  \\ +\frac{2b_{\mathrm{rot}}^{(e)}(m_{ai_\mathrm{max}}^{(e)})^{1/2}}{(b_{\mathrm{rot}}^{(e)}+1)^{1/2}} + \tfrac{1}{2} \log (b_{\mathrm{rot}}^{(e)}+1)+3 b_{\mathrm{rot}}^{(e)},
\end{multline}
where $m_{ai_\mathrm{max}}^{(e)} = \max_{1\leq i \leq n_p} m_{ai}^{(e)}$,  $m_{ai}^{(e)}$ is the bond dimension of the $i$th tensor of the MPS encoding of the $a$th MO, and $b_{\mathrm{rot}}^{(e)}$ is the number of precision qubits used by the rotation multiplexor subroutine during unitary synthesis \cite{huggins2024}. The number of Toffoli gates is bounded by~\cite{huggins2024}
\begin{multline}\label{eq:Elec_Gates}
C_{\mathrm{Toff}}((\widetilde{W}^{(e)})^{\otimes\eta_e})<\eta_e N_{\mathrm{MOB}} n_p \\+2 \eta_e \sum_{a=1}^{N_{\mathrm{MOB}}} \sum_{i=1}^{n_p}\Big( 32(1+\sqrt{2})(b_{\mathrm{rot}}^{(e)}+1)^{\frac{1}{2}} m_{ai}^{(e)} \big(\bar{m}_{ai}^{(e)}\big)^{\frac{1}{2}}\\
+(8 b_{\mathrm{rot}}^{(e)}-15) m_{ai}^{(e)} \log \big(2 \bar{m}_{ai}^{(e)}\big)\Big),
\end{multline}
where, $\bar{m}_{ai}^{(e)}$ is
\begin{equation}
\bar{m}_{ai}^{(e)} = \max \bigg\{ 2^{\left\lceil\log m_{a(i-1)}^{(e)}\right\rceil}, 2^{\left\lceil\log m_{ai}^{(e)}\right\rceil} \bigg\},
\end{equation}

\subsubsection{Error analysis}
The unitary that prepares  $|\tilde{\hat{\Psi}}^{(e)}\rangle_{\mathrm{MPS}}$ is
\begin{equation}
    \tilde{U}_{\mathrm{ISP}}^{(e)}  = (\widetilde{W}^{(e)})^{\otimes3\eta_n} \textsc{asym} \; \textsc{onb2mob} \; \mathrm{SoSlat}^{(e)} \widetilde{\textsc{asp}} \label{eq:approx_U_init_e_total}.
\end{equation}
The error $\epsilon_{\mathrm{ISP}}^{(e)}$ of preparing $|\tilde{\hat{\Psi}}^{(e)}\rangle_{\mathrm{MPS}}$ consists of two terms~\cite{huggins2024},
\begin{equation} \label{eq:elec_error_bound}
    \epsilon^{(e)}_{\mathrm{ISP}} \leq  \epsilon_{\textsc{asp}}^{(e)} + 2^{3/2} \eta_e \sum_{a=1}^{N_{\mathrm{MOB}}} \epsilon_a^{(e)}.
\end{equation}
The first term is the ASP error
\begin{equation}
   \epsilon_{\textsc{asp}}^{(e)} = D( |\psi\rangle ,| \tilde{\psi}\rangle),
\end{equation}
with $|\psi\rangle$ and $|\tilde{\psi}\rangle$ defined in \cref{eq:asp_psi_e_exact,eq:asp_psi_e_approx}, which is bounded by \cite{huggins2024}
\begin{equation}
    \epsilon_{\textsc{asp}}^{(e)} \leq 2\pi \,2^{-b^{(e)}_{\textsc{asp}}}  \log(D^{(e)}).
\end{equation}
The second is the error from $(\widetilde{W}^{(e)})^{\otimes3\eta_n}$, 
\begin{align} \label{eq:mo_mps_bound_final}
    \epsilon^{(e)}_a &= D\big(|\MOMom\rangle, |\MOMomAprx\rangle_{\mathrm{MPS}}\big) \leq \epsilon_a^{(e,c)} + \epsilon_a^{(e,q)},
\end{align}
where $\epsilon_a^{(e,c)}$ results from MO preprocessing as given in \cref{sect:preproc_mo} and $\epsilon_a^{(e,q)}$ results from unitary synthesis~\cite{huggins2024},
\begin{align} \label{eq:rot_error_bound}
   \epsilon^{(e,q)}_a &= D\big(|\MOMom\rangle_{\mathrm{MPS}}, |\MOMomAprx\rangle_{\mathrm{MPS}} \big)\\
 &<  2^{\frac{7}{2}- b_{\mathrm{rot}}^{(e)} } \sum_{i=1}^{n_p} m_{ai}^{(e)} \log \big(2 \bar{m}_{ai}^{(e)}\big).
\end{align}

If $|\hat{\Psi}^{(e)}\rangle$ is a single Slater determinant and the overlap between $|\MOMom\rangle_{\mathrm{MPS}}$ and $|\MOMomAprx\rangle_{\mathrm{MPS}}$ for $a\neq b$ is small, then $\epsilon^{(e)}_{\mathrm{ISP}}$ is upper bounded by \cite{huggins2024}
\begin{equation} \label{eq:approx_bound}
    \epsilon^{(e)}_{\mathrm{ISP}} \leq \epsilon_{\textsc{asp}}^{(e)} + \sum_{a=1}^{N_{\mathrm{MOB}}} \epsilon^{(e)}_a.
\end{equation}

\subsection{Nuclear ISP} \label{sect:isp_nuc}
We give a method to prepare $|\hat{\Psi}^{(n)}\rangle$ in the Cartesian PWB. We show that $|\hat{\Psi}^{(n)}\rangle$ can be prepared as a linear combination of nuclear configurations in terms of rescaled normal coordinates, then transformed into the Cartesian PWB on the quantum computer. This is useful because linear combinations of configurations are straightforward to prepare using the unitary synthesis approach used in \cref{sect:isp_elec}. 

In our nuclear ISP, the size of the simulation grid changes twice before we arrive at the final grid of size $N=2^{n_p}-1$ per dimension (specified in \cref{sect:grid}). Initially, we prepare $|\hat{\Psi}^{(n)}\rangle$ as uncoupled normal coordinates on a grid of size $N_{\mathrm{ISP}} = 2^{n_{\mathrm{ISP}}}$, where\begin{equation}
    n_{\mathrm{ISP}} =\Big\lceil \log \big( \max_{i,\mu} 2\lceil K_{i\mu}^{(n)}/\Delta  \rceil + 1 \big)\Big\rceil
\end{equation}
is set by the precision of the nuclear pre-processing (\cref{eq:K_nuc}). When converting this state to Cartesian coordinates using the normal-mode transformation, numerical reasons given in \cref{sect:LCT} require a larger grid of $\bar{N}_{\mathrm{ISP}} = 2^{\bar{n}_{\mathrm{ISP}}}$ states, where $\bar{n}_{\mathrm{ISP}} = n_{\mathrm{ISP}} + n_{\mathrm{pad}}$, where the number $n_{\mathrm{pad}}$ of padding qubits is given in \cref{sect:LCT}. To accommodate this larger grid, we pre-allocate $\bar{n}_{\mathrm{ISP}}$ qubits per dimension at the outset of the algorithm, and maintain them throughout. However, after the normal-mode transformation, there is little amplitude in the exterior of the grid, and we show that it can be neglected in subsequent time-evolution with minimal error. This trimming of the exterior grid produces the final grid of size $N=2^{n_p}-1$. 

\subsubsection{Constructing \texorpdfstring{$U_{\mathrm{ISP}}^{(n)}$}{U init n}}
\label{sect:ISP-constructingU}
$|\hat{\Psi}^{(n)}\rangle$ can be prepared in the Cartesian PWB by
\begin{equation}
    U_{\mathrm{ISP}}^{(n)} = \textsc{tc2sm}^{\otimes 3\eta_{n}}_{\bar{n}_{\isp}} \cdot\textsc{nct} \cdot U_{\mathrm{PWB}}, \label{eq:U_ISP_def_n}
\end{equation}
where $ U_{\mathrm{PWB}}$ prepares a linear combinations of configurations in terms of rescaled normal coordinates in the PWB, $\textsc{nct}$ implements a normal-coordinate transformation into Cartesian coordinates, and $\textsc{tc2sm}^{\otimes 3\eta_{n}}_{\bar{n}_{\isp}}$  converts integer grid points from the two's-complement representation used in state preparation to the signed-magnitude representation used in time propagation. 

This decomposition is obtained by rewriting $\Psi^{(n)}(\mathbf{Q})$ using the fact that $\mathbf{Q} = \mathbf{O}\mathbf{M}^{1/2}(\mathbf{R} - \mathbf{R}^{(0)})$,
\begin{equation} 
    \Psi^{(n)}(\mathbf{Q}) = \Psi^{(n)}(\mathbf{O} \mathbf{M}^{1/2}\mathbf{R} - \mathbf{O} \mathbf{M}^{1/2}\mathbf{R}^{(0)}).
\end{equation}
Using an RQ decomposition \cite{golub2013}, we rewrite $\mathbf{O} \mathbf{M}^{1/2}$ as
\begin{equation} \label{eq:RQ_decomp}
    \mathbf{O} \mathbf{M}^{1/2} = \mathbf{d} \mathbf{S}_{\mathrm{U}} \bar{\mathbf{O}} = \mathbf{d} \mathbf{A},
\end{equation}
where $\mathbf{A} = \mathbf{S}_{\mathrm{U}} \bar{\mathbf{O}}$, $\mathbf{d}$ is a $3\eta_n \times 3\eta_n $ diagonal matrix with real, positive values and the same units as $\mathbf{M}^{1/2}$, $\mathbf{S}_{\mathrm{U}} $ is a $3\eta_n \times 3\eta_n $ shear (specifically, a unit upper-triangular matrix), and $\bar{\mathbf{O}}$ is a $3\eta_n \times 3\eta_n $ orthogonal matrix. \Cref{eq:RQ_decomp} is the standard RQ decomposition, except that we factored out a diagonal matrix $\mathbf{d}$ so that $\mathbf{S}_{\mathrm{U}}$ has diagonal elements equal to one. We denote the rescaled normal coordinates as $\bar{\mathbf{Q}} = \mathbf{d}^{-1} \mathbf{Q} $ and note that
\begin{equation} \label{eq:Qbar_def}
    \bar{\mathbf{Q}} = \mathbf{A} \mathbf{R} - \bar{\mathbf{R}}^{(0)}, \quad\, \mathrm{where} \quad\, \bar{\mathbf{R}}^{(0)} = \mathbf{A}\mathbf{R}^{(0)},
\end{equation}
and $\bar{\mathbf{Q}}$, $\mathbf{R}$, and $\bar{\mathbf{R}}^{(0)}$ all have units of length.

We use $\Psi^{(n)}_{\bar{\mathbf{Q}}}$ to denote the nuclear wavefunction given in terms of $\bar{\mathbf{Q}}$ such that 
\begin{equation}\label{eq:PsiQBar}
    \Psi^{(n)}_{\bar{\mathbf{Q}}}(\bar{\mathbf{Q}}) = \Psi^{(n)}(\mathbf{Q}) = \mathcal{N}_{\mathbf{d}} \Psi^{(n)}(\mathbf{d}\bar{\mathbf{Q}})
\end{equation}
where $\mathcal{N}_{\mathbf{d}}$ is a normalization constant given by $\mathcal{N}_{\mathbf{d}} = (\prod_{i} d_{ii})^{1/2}$. Importantly, we see that because $\Psi^{(n)}(\mathbf{Q})$ is expressed as a linear combination of nuclear configurations, $\Psi^{(n)}_{\bar{\mathbf{Q}}}(\bar{\mathbf{Q}})$ is also expressed as a linear combination of nuclear configurations. We also define $\Psi_{\mathbf{R}}^{(n)}$ as the nuclear wavefunction given in terms of $\mathbf{R}$,
\begin{equation} \label{eq:Psi_R_prop}
    \Psi_{\mathbf{R}}^{(n)}(\mathbf{R}) = \Psi^{(n)}_{\bar{\mathbf{Q}}}(\mathbf{A}\mathbf{R} - \bar{\mathbf{R}}^{(0)}),
\end{equation}
which is normalized because $\det(\mathbf{A}) =  1$.

Next, we project both sides of \cref{eq:Psi_R_prop} onto the PWB. We use $\hat{\Psi}_{\mathbf{R}}^{(n)}(\mathbf{k_R})$ to denote $\Psi_{\mathbf{R}}^{(n)}(\mathbf{R})$ in the PWB, where $\mathbf{k_R}$ is the conjugate variable to $\mathbf{R}$. To find the PWB representation of $\hat{\Psi}^{(n)}_{\bar{\mathbf{Q}}}(\mathbf{A}\mathbf{R} - \bar{\mathbf{R}}^{(0)})$, we take the Fourier transform of $\Psi^{(n)}_{\bar{\mathbf{Q}}}$. Using properties of affine transformations under Fourier transforms \cite{bracewell_1993}, we have
\begin{align}
      \hat{\Psi}_{\mathbf{R}}^{(n)}(\mathbf{k_R})
      &= \int d\mathbf{R} \, \varphi_{\mathbf{k}_{\mathbf{R}}} \Psi^{(n)}_{\bar{\mathbf{Q}}}(\mathbf{A}\mathbf{R} - \bar{\mathbf{R}}^{(0)})  \\
      & = e^{-i\mathbf{k_R}^{\top} \mathbf{A}^{-1} \bar{\mathbf{R}}^{(0)} } \hat{\Psi}^{(n)}_{\bar{\mathbf{Q}}}(\mathbf{A}^{- \top} \mathbf{k_R}) \\
      &=  e^{-i  \mathbf{k_R}^{\top} \mathbf{R}^{(0)}}\hat{\Psi}^{(n)}_{\bar{\mathbf{Q}}}(\mathbf{A}^{- \top} \mathbf{k_R}),
\end{align}
where $\mathbf{A}^{- \top} = (\mathbf{A}^{-1})^{\top}$, and
\begin{equation}
    \varphi_{\mathbf{k}_{\mathbf{R}}} =\frac{e^{-i \mathbf{k}_{\mathbf{R}}^{\top} \mathbf{R}} }{ L^{3\eta_n/2}}.
\end{equation}
After mapping $\Psi_{\mathbf{R}}^{(n)}$ to the computational basis as in \cref{sect:preproc}, we get
\begin{equation}
    |\hat{\Psi}_{\mathbf{R}}^{(n)}\rangle = \sum_{\mathbf{n}\in \mathbb{Z}^{3\eta_n}}  e^{-i \Delta \mathbf{n}^{\top} \mathbf{R}^{(0)} } \, \hat{\Psi}^{(n)}_{\bar{\mathbf{Q}}}(\Delta  \mathbf{A}^{- \top}\mathbf{n})|\mathbf{n}\rangle,
\end{equation}
where $\Delta = 2 \pi/L$ and $\mathbf{k}_{\mathbf{R}}=\Delta \mathbf{n}$.

We can now more precisely define the unitaries in \cref{eq:U_ISP_def_n}. $U_{\mathrm{PWB}}$ prepares a normal-coordinate PWB representation of the nuclear state,
\begin{equation}
U_{\mathrm{PWB}}|0\rangle = |\hat{\Psi}^{(n)}_{\bar{\mathbf{Q}}}\rangle  = \sum_{\mathbf{n}\in \mathbb{Z}^{3\eta_n}}  \hat{\Psi}^{(n)}_{\bar{\mathbf{Q}}}(\Delta  \mathbf{n})|\mathbf{n}\rangle.
\end{equation}
Then, $\textsc{nct}$ transforms the normal coordinates to Cartesian coordinates, taking $|\hat{\Psi}^{(n)}_{\bar{\mathbf{Q}}}\rangle$ and preparing $|\hat{\Psi}_{\mathbf{R}}^{(n)}\rangle$,
\begin{equation} \label{eq:Unm_def}
    \textsc{nct} |\hat{\Psi}^{(n)}_{\bar{\mathbf{Q}}}\rangle =  \sum_{\mathbf{n}\in \mathbb{Z}^{3\eta_n}} e^{-i \Delta \mathbf{n}^{\top} \mathbf{R}^{(0)} } \hat{\Psi}^{(n)}_{\bar{\mathbf{Q}}}(\Delta \mathbf{A}^{- \top} \mathbf{n})| \mathbf{n}\rangle.
\end{equation}
We further decompose $\textsc{nct} = \textsc{pk} \cdot U_{\mathbf{A}^{- \top}}$, where $U_{\mathbf{A}^{- \top}}$ linearly transforms the wavefunction coordinates and the phase kickback $\textsc{pk}$ applies the phase $e^{-i \Delta \mathbf{n}^{\top} \mathbf{R}^{(0)}}$. Overall,
\begin{equation}
    U_{\mathrm{ISP}}^{(n)} = \textsc{tc2sm}^{\otimes 3\eta_{n}}_{\bar{n}_{\isp}} \cdot\textsc{pk} \cdot U_{\mathbf{A}^{- \top}} \cdot U_{\mathrm{PWB}}.
\end{equation}

\subsubsection{Implementing \texorpdfstring{$U_{\mathrm{PWB}}$}{UPWB}}
\label{sect:UPWB}
$U_{\mathrm{PWB}}$ prepares $|\hat{\Psi}^{(n)}_{\bar{\mathbf{Q}}}\rangle$ in three steps,
\begin{equation}
    U_{\mathrm{PWB}} = W^{(n)} \cdot \textsc{onb2smb} \cdot \textsc{asp}^{(n)}.
\end{equation}
An arbitrary state in the ONB is prepared by $\textsc{asp}^{(n)}$, which is first transformed to the SMB by $\textsc{onb2smb}$ and then to the PWB by $W^{(n)}$. Because $|\hat{\Psi}^{(n)}_{\bar{\mathbf{Q}}}\rangle$ consists of a linear combination of configurations, we can prepare an approximation $|\tilde{\hat{\Psi}}_{\bar{\mathbf{Q}}}^{(n)}\rangle_{\mathrm{MPS}}$ using the approach from \cref{sect:isp_elec} that prepares an approximation to $|\hat{\Psi}^{(e)}\rangle$. 

As we did for electronic ISP, we begin by using arbitrary state preparation to prepare the state
\begin{equation} \label{eq:nuc_asp}
    \textsc{asp}^{(n)}|0\rangle = |\psi\rangle = \sum_{i=0}^{D^{(n)}-1} C_i^{(n)}|i\rangle,
\end{equation}
where $\textsc{asp}^{(n)}$ is a unitary that implements arbitrary state preparation, $C_i^{(n)}$ is the $i$th coefficient of the set $C_J^{(n)}$, and $D^{(n)}$ is the number of nuclear configurations. Using the method described in \cite{low2024}, we prepare the approximate state 
\begin{equation} \label{eq:nuc_asp_approx}
    \widetilde{\textsc{asp}}^{(n)}|0\rangle = |\tilde{\psi}\rangle = \sum_{i=0}^{D^{(n)}-1} \widetilde{C}_i^{(n)}|i\rangle,
\end{equation}
where $\widetilde{C}_i^{(n)}$ is an approximation to $C_i^{(n)}$ produced by $\widetilde{\textsc{asp}}^{(n)}$. The resources needed for $\widetilde{\textsc{asp}}^{(n)}$ are the same as in \cref{eq:canc_asp,eq:ctoff_asp}, but with $D^{(e)}$ replaced by $D^{(n)}$. Representing $|\tilde{\psi}\rangle$ uses $\lceil \log(D^{(n)}) \rceil$ qubits.

Let $|\tilde{\Psi}^{(n)}_{\bar{\mathbf{Q}}}\rangle_{\mathrm{ONB}}$ denote the approximate ONB representation of $|\Psi^{(n)}_{\bar{\mathbf{Q}}}\rangle$,
\begin{equation} \label{eq:nuc_ONV}
    |\tilde{\Psi}^{(n)}_{\bar{\mathbf{Q}}}\rangle_{\mathrm{ONB}} = \sum_{J}\widetilde{C}_J^{(n)} |J \rangle,
\end{equation}
where the index $J$ is defined in \cref{sect:input_data_nuc}. Because $\Psi^{(n)}_{\bar{\mathbf{Q}}}$ is related to $\Psi^{(n)}$ by a rescaling of $\mathbf{Q}$ and a normalization constant, the two states have the same coefficients $C_J$.

We prepare $|\widetilde{\Psi}^{(n)}_{\bar{\mathbf{Q}}}\rangle_{\mathrm{ONB}}$ using the SoSlat method as in \cref{sect:isp_elec}:
\begin{equation}
    \mathrm{SoSlat}^{(n)}|\tilde{\psi}\rangle = |\tilde{\Psi}^{(n)}_{\bar{\mathbf{Q}}}\rangle_{\mathrm{ONB}}.
\end{equation}
Representing this state uses $N_{\mathrm{vib}} N_{\mathrm{SMB}}$ qubits. The resources needed to implement $\mathrm{SoSlat}^{(n)}$ are:
\begin{align} 
    C_{\mathrm{anc}}(\mathrm{SoSlat}^{(n)}) &= 5\log D^{(n)} -3\\
    C_{\mathrm{Toff}}(\mathrm{SoSlat}^{(n)}) &\leq D^{(n)}(2 \log D^{(n)} + 3).
\end{align} 

Next, we transform the ONB to the SMB using a unitary $\textsc{onb2smb}$ that acts on ONB states as
\begin{equation}
    \textsc{onb2smb}|J\rangle= |\mu\rangle_1 |\nu \rangle_2 \dots |\tau \rangle_{N_{\mathrm{vib}}},
\end{equation}
where the right-hand side is given in the SMB and $|\mu\rangle_i$ indicates, in binary, that the $i$th normal coordinate is in the $\mu$th SM. The SMB consists of $N_{\mathrm{vib}}$ registers, each containing $\lceil \log N_{\mathrm{SMB}} \rceil $ qubits. After the transformation, the vibrational state is 
\begin{equation} \label{eq:nuc_sos}
    |\tilde{\Psi}^{(n)}_{\bar{\mathbf{Q}}}\rangle_{\mathrm{SMB}} \!=\! 
    \sum_{\mu,\nu,\dots, \tau =0}^{N_{\mathrm{SMB}}-1}\widetilde{C}_{\mu \nu \ldots \tau}^{(n)} |\mu\rangle_1 |\nu\rangle_2 \dots |\tau\rangle_{N_{\mathrm{vib}}}.
\end{equation}
Representing $|\tilde{\Psi}^{(n)}_{\bar{\mathbf{Q}}}\rangle_{\mathrm{SMB}}$ uses $N_{\mathrm{vib}} \lceil \log N_{\mathrm{SMB}} \rceil$ qubits.

We can implement $\textsc{onb2smb}$  using the approach used in \cref{sect:isp_elec} to implement $\textsc{onb2mob}$, requiring
\begin{align}\label{eq:cost_onv2sm}
    C_{\mathrm{anc}}(\textsc{onb2smb}) &= N_{\mathrm{SMB}} + 3\\
    C_{\mathrm{Toff}}(\textsc{onb2smb}) &= N_{\mathrm{vib}} N_{\mathrm{SMB}} \left( \lceil \log N_{\mathrm{SMB}} \rceil -2 \right).
\end{align}

The final step is to transform from the SMB to the normal-coordinate PWB. We let $\Phi_{i\mu,\bar{\mathbf{Q}}}$ denote the SMs corresponding to $\Psi^{(n)}_{\bar{\mathbf{Q}}}$ and $|\hat{\Phi}_{i\mu,\bar{\mathbf{Q}}} \rangle_{\mathrm{MPS}}$ the corresponding preprocessed state. As discussed in \cref{sect:preproc_sm}, given $\sm$, $\Phi_{i\mu,\bar{\mathbf{Q}}}$ is easily found by performing the substitution $\omega_i \rightarrow d_{ii}^2 \omega_i$ for all Hermite-Gaussian primitives, where, $d_{ij}$ are the elements of $\mathbf{d}$. The resource requirements for preprocessing an SM with rescaled coordinates are described at the end of \cref{sect:preproc_sm}.   

To convert from the SMB to the MPS encoding of the PWB, we construct the unitary
\begin{equation} \label{eq:vib_unitary}
    W_i^{(n)} = |0\rangle\langle1|\otimes V_i^{(n)} + |1\rangle \langle0 | \otimes \big(V_i^{(n)}\big)^{\dagger}
\end{equation}
for the $i$th normal coordinate, where 
\begin{equation}
    V_i^{(n)} = \sum_{\mu = 0}^{N_{\mathrm{SMB}}-1} |\hat{\Phi}_{i\mu,\bar{\mathbf{Q}}} \rangle_{\mathrm{MPS}} \langle \mu|_i
\end{equation}
acts on $n_{\mathrm{ISP}}$ qubits. 
When $i=\{3\eta_n-5, 3\eta_n-4, \ldots, 3\eta_n\}$, $\hat{\Phi}_{i\mu,\bar{\mathbf{Q}}}$ denotes a rotational or translational state, in which case $N_{\mathrm{SMB}}=1$. We implement an approximation $\widetilde{W}^{(n)}_{i}$ to $W^{(n)}_{i}$ by defining $|\tilde{w}^{(n)}_{\mu}\rangle_i$ given by
\begin{equation} \label{eq:vib_synth_approx}
\big|\tilde{w}^{(n)}_{\mu}\big\rangle_i  =\frac{1}{\sqrt{2}}\big(|1\rangle|\mu\rangle-|0\rangle |\tilde{\hat{\Phi}}_{i\mu,\bar{\mathbf{Q}}} \rangle_{\mathrm{MPS}} \big),
\end{equation}
where $|\tilde{\hat{\Phi}}_{i\mu,\bar{\mathbf{Q}}} \rangle_{\mathrm{MPS}}$ is an approximation to $|\hat{\Phi}_{i\mu,\bar{\mathbf{Q}}} \rangle_{\mathrm{MPS}}$ and is prepared using the approach from \cite{fomichev2024}. Then, $\widetilde{W}^{ (n)}_{i}$ can be synthesized as a product of reflections using the approach from \cref{sect:isp_elec},
\begin{equation} \label{eq:vib_synthesizer}
\widetilde{W}_i^{(n)}=\prod_{\mu=0}^{N_{\mathrm{SMB}}-1} \big(I-2\big|\tilde{w}^{(n)}_{\mu}\big\rangle\big\langle \tilde{w}^{(n)}_{\mu}\big|_i\big).
\end{equation}
We define
\begin{equation} \label{eq:vib_unit_synth}
    \widetilde{W}^{(n)} = \bigotimes_{i=1}^{3\eta_n}\widetilde{W}_i^{(n)}.
\end{equation}
Then, $|\tilde{\hat{\Psi}}_{\bar{\mathbf{Q}}}^{(n)}\rangle_{\mathrm{MPS}}$ is produced by performing
\begin{multline}
    \widetilde{W}^{(n)} |1\rangle|\tilde{\Psi}^{(n)}_{\bar{\mathbf{Q}}}\rangle_{\mathrm{SMB}} = |0\rangle|\tilde{\hat{\Psi}}_{\bar{\mathbf{Q}}}^{(n)}\rangle_{\mathrm{MPS}} =\\
    |0\rangle \!\sum_{\mu, \ldots, \tau=0}^{N_{\mathrm{SMB}}-1} \! \widetilde{C}_{\mu \ldots \tau}^{(n)} |\tilde{\hat{\Phi}}_{1\mu, \bar{\mathbf{Q}}} \rangle_{\mathrm{MPS}} \cdots |\tilde{\hat{\Phi}}_{N_{\mathrm{vib}}\tau, \bar{\mathbf{Q}}} \rangle_{\mathrm{MPS}}  . 
\end{multline}
The ancilla qubit can be reused throughout by resetting it to $|1\rangle$ after each $\widetilde{W}_i^{(n)}$ is applied. 

Implementing $\widetilde{W}^{(n)}$ requires 
\begin{align} \label{eq:W_n_data}
    C_{\mathrm{anc}}(\widetilde{W}^{(n)}) &\leq \tfrac12  \log(m_{i \mu j_\mathrm{max}}^{(n)}) +\frac{2 b_{\mathrm{rot}}^{(n)} (m_{i \mu j_\mathrm{max}}^{(n)})^{1/2} }{(b_{\mathrm{rot}}^{(n)}+1)^{1/2}} \nonumber \\
    & \qquad +\tfrac{1}{2} \log (b_{\mathrm{rot}}^{(n)}+1)+3 b_{\mathrm{rot}}^{(n)}
\end{align}
ancilla qubits, where $b_{\mathrm{rot}}^{(n)}$ is the number of precision qubits, $m_{i \mu j}^{(n)}$ is the bond dimension of the $j$th tensor of the $\mu$th state of the $i$th normal coordinate, and 
\begin{equation}
    m_{i \mu j_\mathrm{max}}^{(n)} = \max_{1\leq j \leq n_p} m_{i \mu j}^{(n)}.
\end{equation}
We assume that $C_{\mathrm{anc}}(\widetilde{W}^{(n)})>n_p-2$, so that the ancillas can be reused when implementing the multi-controlled-\textsc{not} gates of the reflection operators. Further, we reuse the ancillas for each $\widetilde{W}_i^{(n)}$, meaning that the number of ancillas is independent of system size.

The Toffoli cost to implement each $\widetilde{W}_i^{(n)}$ is \cite{huggins2024}
\begin{multline}\label{eq:one_vib_to_PWB_gates}
C_{\mathrm{Toff}}(\widetilde{W}_i^{(n)}) < N_{\mathrm{SMB}}n_{\mathrm{ISP}}  \\
+ 2\sum_{\mu=0}^{N_{\mathrm{SMB}}-1}\sum_{j=1}^{n_{\mathrm{ISP}}}\bigg(32(1+\sqrt{2})(b_{\mathrm{rot}}^{(n)}+1)^{1/2} m_{i \mu j}^{(n)} \bigl(\bar{m}_{i \mu j}^{(n)} \bigr)^{1/2} \\
+(8 b_{\mathrm{rot}}^{(n)} - 15) m_{i \mu j}^{(n)} \log \bigl(2 \bar{m}_{i \mu j}^{(n)}\bigr) \bigg),
\end{multline}
where
\begin{equation}
\bar{m}_{i\mu j}^{(n)} = \max \bigg\{  2^{\left\lceil\log m_{i \mu (j-1)}^{(n)}\right\rceil}, 2^{\left\lceil\log m_{i \mu j}^{(n)}\right\rceil} \bigg\}.
\end{equation} 
The total cost to implement $\widetilde{W}^{(n)}$ is then
\begin{equation}\label{eq:full_vib_to_PWB_gates}
C_{\mathrm{Toff}}(\widetilde{W}^{(n)}) = \sum_{i=1}^{3\eta_n} C_{\mathrm{Toff}}(\widetilde{W}_i^{(n)}).
\end{equation}

\begin{figure*} \label{fig:shear}
\centering
    \includegraphics[width=0.9\textwidth]{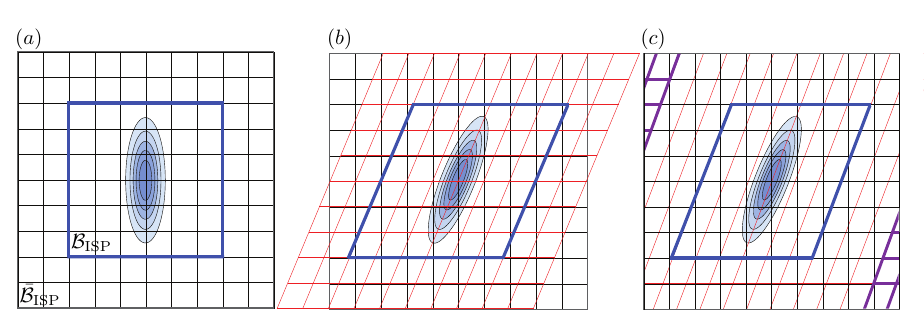}
    \vspace{-3mm}
    \caption{Eliminating modular-addition error by padding the grid $\mathcal{B}_{\mathrm{ISP}}$. 
    \textbf{(a)}~Computational basis states $|\mathbf{n}\rangle$ are represented by grid points. Wavefunction $|g\rangle$ (blue contours) is supported by grid points in $\mathcal{B}_{\mathrm{ISP}}$ (blue square). Padding qubits expand the grid to $\bar{\mathcal{B}}_{\mathrm{ISP}}$. The padded state $|g\rangle$ has non-zero amplitude within $\bar{\mathcal{B}}_{\mathrm{ISP}}$ and is zero outside it. 
    \textbf{(b)}~Applying the shear $U_{\mathbf{S}}$ to $|g\rangle$, with transformed grid points shown in red. We choose $n_{\mathrm{pad}}$ sufficiently large that grid points in $\mathcal{B}_{\mathrm{ISP}}$ remain within $\bar{\mathcal{B}}_{\mathrm{ISP}}$ after the transformation. 
    \textbf{(c)}~Performing $\mathrm{cmod}\, \bar{N}_{\mathrm{ISP}}/2$ leaves grid points in the blue rhombus unchanged while those outside it are wrapped around to the opposite edge of the grid (purple regions). Because these grid points have zero amplitude, their wrapping around does not contribute an error.}
\end{figure*}

\subsubsection{Coordinate transformation \texorpdfstring{$U_{\mathbf{A}^{- \top}}$}{UAT}}
\label{sect:LCT}

Implementing $U_{\mathbf{A}^{- \top}}$ allows us to perform normal-coordinate transformations which enable a wide variety of nuclear initial states to be prepared. Previous algorithms for normal-coordinate transformations are inefficient, scaling as $\mathcal{O}(N)$, and lack error bounds~\cite{Jornada2025}. We develop a new, efficient algorithm that performs linear coordinate transformations (LCT) on a quantum computer with Toffoli cost $\mathcal{O}(\eta_n^2 \log^2 N)$.

Our LCT algorithm, $U_{\mathbf{A}^{- \top}}^{\textsc{lct}}$, implements arbitrary invertible transformations $\mathbf{y} = \mathbf{T}\mathbf{x}$, where, for nuclear ISP, we set $\mathbf{T}=\mathbf{A}^{- \top}$. It is based on a subroutine for performing shearing transformations~\cite{kitaev2009,bauer2021}, along with reflections and rotations by $\pm \pi/2$. We assume that $\lvert\det(\mathbf{T})\rvert=1$; if not, we first express $\mathbf{T}$ as in~\cref{eq:RQ_decomp} and rescale the coordinates accordingly. 

For the important case where the initial nuclear state is a multivariate Gaussian, we provide two approaches to implement $U_{\mathbf{A}^{- \top}}$ and prove error bounds on both. The first approach uses our LCT algorithm, with the error $\epsilon_{\mathrm{LCT}}$ from approximately preparing $U_{\mathbf{A}^{- \top}}^{\textsc{lct}}$ being $\epsilon_{\mathrm{LCT}} = \mathcal{O}(\eta_n^2 \Delta)$. Alternatively, a multivariate Gaussian can be prepared using a single-shear coordinate transformation (SSCT), which requires a single application of the shearing-transformation subroutine~\cite{kitaev2009,bauer2021}. We denote this approach by $U_{\mathbf{A}^{- \top}}^{\textsc{ssct}}$ and prove that the error from approximating $U_{\mathbf{A}^{- \top}}^{\textsc{ssct}}$ is only $\epsilon_{\mathrm{SSCT}} = \mathcal{O}(\eta_n^{1/2} \Delta)$.
For both approaches, we prove a lower bound on $n_{\mathrm{pad}}$ that ensures the nuclear wavefunction remains on the grid while $U_{\mathbf{A}^{- \top}}$ is implemented.

\paragraph{Linear coordinate transformation.}
Our goal is, given a state 
\begin{equation} \label{eq:generic_state}
     |g\rangle =\frac{1}{\mathcal{N}_{\mathrm{ISP}}} \sum_{\mathbf{n}\in \bar{\mathcal{B}}_{\mathrm{ISP}}} g(\Delta \mathbf{n})|\mathbf{n}\rangle,
\end{equation}
where, $\bar{\mathcal{B}}_{\mathrm{ISP}} = [- \frac{1}{2}\bar{N}_{\mathrm{ISP}},\frac{1}{2}\bar{N}_{\mathrm{ISP}}-1]^{3\eta_n}\cap \mathbb{Z}^{3\eta_n}$, $\mathcal{N}_{\mathrm{ISP}}$ is a normalization constant and $g(\Delta \mathbf{n}) = 0$ if $\Vert\mathbf{n}\Vert_{\infty} > N_{\mathrm{ISP}}/2$, to transforms $|g\rangle$ into 
\begin{equation} \label{eq:generic_coord_trans}
\frac{1}{\mathcal{N}_{\mathrm{ISP}}} \sum_{\mathbf{n} \in \bar{\mathcal{B}}_{\mathrm{ISP}}} g(\Delta \mathbf{T}\mathbf{n})|\mathbf{n}\rangle,
\end{equation}
where the normalization is preserved because $\lvert\det(\mathbf{T})\rvert=1$. Rather than transform the argument of the amplitudes, $g(\Delta \mathbf{n}) \rightarrow g(\Delta \mathbf{Tn})$, our approach is, equivalently, to construct a unitary that approximately transforms each computational basis state, $\ket{\mathbf{n}} \rightarrow \ket{\mathbf{T}^{-1}\mathbf{n}}$, as accurately as possible when working on a grid. For a matrix $\mathbf{C}$, we use the notation
\begin{equation}
    |\mathbf{C}\mathbf{n}\rangle = | [\mathbf{C}\mathbf{n}]_1\rangle \dots | [\mathbf{C}\mathbf{n}]_{3\eta_n}\rangle,
\end{equation}
where $[\cdot]_i$ is the $i$th element of the enclosed vector.

To begin, we perform a QL decomposition to get $\mathbf{T}^{-1} =  \mathbf{X} \mathbf{L}$, where $\mathbf{X}$ is an orthogonal matrix and $\mathbf{L}$ is a lower shear (unit lower-triangular matrix),
\begin{equation}
\mathbf{L}=\left(\begin{array}{ccccc}
1 & 0 & \cdots & 0 \\
b_{21} & 1 &  \cdots & 0 \\
\vdots & \vdots & \ddots & \vdots \\
b_{3 \eta_n 1} & b_{3 \eta_n 2} & \cdots & 1
\end{array}\right),
\end{equation}
with $b_{ij}\in \mathbb{R}$. 
In principle, any variation of the QR decomposition can be used; we use the QL decomposition to align with our eventual implementation of $U_{\mathbf{A}^{- \top}}$. We also require $\det(\mathbf{X})=1$; if it does not, we first multiply it by the reflection 
$\mathbf{Y} = \mathrm{diag}(\mathrm{det}(\mathbf{X}),1,1,\dots,1)$.

We then decompose $\mathbf{X}$ into a product of $N_{G} = 3\eta_n(3\eta_n-1)/2$ Givens rotations~\cite{golub2013},  
\begin{equation} 
    \mathbf{X} = \prod_{(i,j)\in \mathcal{G}} \mathbf{G}(i,j,\theta_{ij}),
\end{equation}
where $\mathcal{G}$ is a set containing $N_G$ pairs of indices and $\mathbf{G}(i,j,\theta_{ij})$ is a Givens rotation by angle $\theta_{ij}\in [-\pi,\pi)$ in a plane spanned by the indices $i$ and $j$,
\begin{equation}\label{eq:givens}
\mathbf{G}(i,j,\theta_{ij})=
{\renewcommand{\arraystretch}{0.85}
\begin{pmatrix}
1 & \cdots & 0 & \cdots & 0 & \cdots & 0 \\
\smash{\vdots} & \ddots & \smash{\vdots} & & \smash{\vdots} & & \smash{\vdots} \\
0 & \cdots & c_{ii} & \cdots & s_{ij} & \cdots & 0 \\
\smash{\vdots} & & \smash{\vdots} & \ddots & \smash{\vdots} & & \smash{\vdots} \\
0 & \cdots & -s_{ji} & \cdots & c_{jj} & \cdots & 0 \\
\smash{\vdots} & & \smash{\vdots} & & \smash{\vdots} & \ddots & \smash{\vdots} \\
0 & \cdots & 0 & \cdots & 0 & \cdots & 1
\end{pmatrix}},
\end{equation}
with $c_{ij}=\mathrm{cos}(\theta_{ij})$, $s_{ij} = \mathrm{sin}(\theta_{ij})$, and $\theta_{ij} \in [-\pi, \pi)$, which are found by performing a Givens decomposition on a classical computer~\cite{golub2013}. 

Next, each Givens rotation is decomposed into a product of three shears. To avoid complications that arise when $|\theta_{ij}| > \pi/2$, we first reduce the angle of the Givens rotations to $\varphi_{ij}\in[-\tfrac{\pi}{2}, \tfrac{\pi}{2})$ by writing
\begin{equation}
    \mathbf{G}(i,j,\theta_{ij}) =  \mathbf{G}(i,j, \varphi_{ij} ) \mathbf{J}(i,j, \theta_{ij}),
\end{equation}
where $\varphi_{ij} = \theta_{ij} - \tfrac{\pi}{2} h_{ij} \operatorname{sgn}\theta_{ij}$, $h_{ij} = H(|\theta_{ij}| -\tfrac{\pi}{2})$, $H(\cdot)$ is the Heaviside function, and $\mathbf{J}(i,j,\theta_{ij}) = \mathbf{G}(i,j,\tfrac{\pi}{2}\operatorname{sgn}\theta_{ij})^{h_{ij}}$.
Thus, $\mathbf{J}(i,j, \theta_{ij})$ is a rotation by $\pm \pi/2$ in the $ij$ plane when $h_{ij}=1$ and the identity otherwise. 

Then, $\mathbf{G}(i,j,\varphi_{ij})$ can be safely decomposed into a product of three 2D shears \cite{toffoli1997},
\begin{equation} \label{eq:shearing}
    \mathbf{G}(i,j,\varphi_{ij}) = \mathbf{S}_1(i,j,\varphi_{ij}) \mathbf{S}_2(i,j,\varphi_{ij}) \mathbf{S}_1(i,j,\varphi_{ij}),
\end{equation}
where
\begin{align}
    \mathbf{S}_1(i,j,\varphi_{ij})& = \mathbf{I} + t_{ij} \mathbf{E}(i,j), \quad t_{ij} = \mathrm{tan}(\varphi_{ij}/2) \label{eq:s1_def}   \\
    \mathbf{S}_2(i,j,\varphi_{ij}) &= \mathbf{I} - s_{ji} \mathbf{E}(i,j), \quad s_{ij} = \sin(\varphi_{ij}) \label{eq:s2_def}.
\end{align}
Here, $\mathbf{I}$ is the $3\eta_n$-dimensional identity matrix, and $\mathbf{E}(i,j)$ has only one non-zero matrix element, $E_{kl}(i,j)= \delta_{ik} \delta_{kl}$. We refer to $\mathbf{S}_1$ and $\mathbf{S}_2$ as 2D shears because they operate within 2D planes spanned by the indices $i$ and $j$, as opposed to $\mathbf{L}$, which operates on all $3\eta_n$ coordinates. 

The LCT algorithm is thus the sequence of unitaries
\begin{multline}
    U_{\mathbf{A}^{- \top}}^{\textsc{lct}} = U_{\mathbf{Y}} \bigg(\prod_{(i,j) \in \mathcal{G}} U_{\mathbf{S}_1(i,j,\varphi_{ij}) } \\
     \times U_{\mathbf{S}_2(i,j,\varphi_{ij})} U_{\mathbf{S}_1(i,j,\varphi_{ij})} U_{ \mathbf{J}(i,j,\theta_{ij})} \bigg) U_{\mathbf{L}},
    \label{eq:U_AT_exact}
\end{multline}
where $U_{\mathbf{Y}}$ and $U_{ \mathbf{J}(i,j,\theta_{ij})}$ implement $\mathbf{Y}$ and $\mathbf{J}(i,j,\theta_{ij})$,
\begin{align}
    U_\mathbf{Y}|\mathbf{n}\rangle &= |\mathbf{Y}\mathbf{n}\rangle \\
     U_{\mathbf{J}(i,j, \theta_{ij})}|\mathbf{n}\rangle &= |\mathbf{J}(i,j,\theta_{ij})\,\mathbf{n}\rangle,
\end{align}
and remaining unitaries implement shears.

Shears are performed in-place using an iterative algorithm~\cite{bauer2021}. In-place means that the data register of the initial state is directly transformed into the final state without requiring the initial state to be copied. A lower shear with elements $b_{ij}$ is implemented by performing
\begin{equation}
    |n_i\rangle \leftarrow \bigg|\sum_{j=1}^{i}b_{ij} n_j \bigg\rangle
    \label{eq:simple_shear_lower}
\end{equation}
first for $i=3\eta_n$ and then sequentially down to $i=1$. In doing so, the $i$th register is overwritten at the $i$th step. For an upper shear, we iteratively perform 
\begin{equation}
    |n_i\rangle \leftarrow \bigg|\sum_{j=i}^{3\eta_n }b_{ij} n_j \bigg\rangle
    \label{eq:simple_shear},
\end{equation}
starting with $i=1$ and going up to $i= 3\eta_n$.

A general shear $\mathbf{S}$ cannot be exactly implemented on an integer lattice using the approach above because $\mathbf{S}\mathbf{n}$ is not necessarily a grid point of $\bar{\mathcal{B}}_{\mathrm{ISP}}$, since it can be outside of $\bar{\mathcal{B}}_{\mathrm{ISP}}$ or not integer valued. Instead, we implement an approximation to $\mathbf{S}$,
\begin{equation}
    \tilde{U}_{\mathbf{S}}|\mathbf{n}\rangle = |S(\mathbf{n})\rangle,
    \label{eq:approx_shear_general}
\end{equation}
where $S:\bar{\mathcal{B}}_{\mathrm{ISP}}\rightarrow \bar{\mathcal{B}}_{\mathrm{ISP}}$ is a bijection. When $\mathbf{S}$ is a lower shear, the components of the transformed state $|S(\mathbf{n})\rangle$ are given by iteratively performing
\begin{equation} \label{eq:approx_SL}
    |n_i\rangle \leftarrow  \left| R\Bigg( \Bigg(\sum_{j=1}^{i} b_{ij} n_j \Bigg) \, \operatorname{cmod}\, \bar{N}_{\mathrm{ISP}}/2 \Bigg) \right\rangle,
\end{equation}
where the centered modulo function is defined by
\begin{equation}
    y \operatorname{cmod} M = ((y+M) \bmod (2 M))-M,
\end{equation}
and $R(\cdot)$ rounds each vector component to the nearest integer. The transformation starts with $i=3\eta_n$, and $i$ is decremented after each iteration until $i=1$. For an upper shear, the transformation iterates from $i=1$ to $i =3\eta_n$. At each iteration, \cref{eq:approx_SL} is performed, except that the summation runs from $j=i$ to $3\eta_n$.

Importantly, $S$ is invertible. For a lower shear, the state $|S^{-1}(\mathbf{n})\rangle$ is given iteratively, incrementing $i$ from 1 to $3\eta_n$. For $i=1$, $|n_1\rangle \leftarrow |n_1\rangle$. Thereafter, 
\begin{equation} \label{eq:n_i_expression}
    |n_i\rangle \leftarrow \left|n_i \oplus R\Bigg( \Bigg(-\sum_{j=1}^{i-1}b_{ij}n_j\Bigg) \, \operatorname{cmod}\, \bar{N}_{\mathrm{ISP}}/2 \Bigg) \right\rangle,
\end{equation}
where $a_i \oplus a_j = (a_i \oplus a_j) \operatorname{cmod} \bar{N}_{\mathrm{ISP}}/2$. The inverse of an upper shear is also given by an iterative procedure, running from $i=3\eta_n$ down to $i=1$. For $i=3\eta_n$, $|n_{3\eta_n}\rangle \leftarrow |n_{3\eta_n}\rangle$. The following iterations perform \cref{eq:n_i_expression}, except that the summation runs from $j=i+1$ to $3\eta_n$ for an upper shear.

Therefore, we approximate $U_{\mathbf{A}^{- \top}}^{\textsc{lct}}$ as
\begin{multline}
        \tilde{U}_{\mathbf{A}^{- \top}}^{\textsc{lct}} = U_{\mathbf{Y}} \bigg(\prod_{(i,j) \in \mathcal{G}} \tilde{U}_{\mathbf{S}_1(i,j,\varphi_{ij}) } \\
    \times \tilde{U}_{\mathbf{S}_2(i,j,\varphi_{ij})} \tilde{U}_{\mathbf{S}_1(i,j,\varphi_{ij})} U_{ \mathbf{J}(i,j,\theta_{ij})} \bigg) \tilde{U}_{\mathbf{L}},
    \label{eq:ortho_seq_shear}
\end{multline}
where each shearing unitary $U_{\mathbf{S}}$ is replaced by the appropriate $\tilde{U}_{\mathbf{S}}$. Using $\tilde{U}_{\mathbf{S}}$ deals with both problems with shears on a grid, ensuring that $S(\mathbf{n}) \in \bar{\mathcal{B}}_{\mathrm{ISP}}$. First, $\mathrm{cmod}$ wraps points that would be mapped off the simulation grid back onto its opposite edge. Second, rounding ensures that each transformed grid point is an integer vector. An explicit algorithm to implement $\tilde{U}_{\mathbf{A}^{- \top}}^{\textsc{lct}}$ is given in \cref{app:implementing_A}.

At the beginning of \cref{sect:isp_nuc}, we padded the grid $\mathcal{B}_{\mathrm{ISP}} = [- {N_{\mathrm{ISP}}}/{2},{N_{\mathrm{ISP}}}/{2}-1]^{3\eta_n}\cap \mathbb{Z}^{3\eta_n}$ with additional qubits to create a larger grid $\bar{\mathcal{B}}_{\mathrm{ISP}}$. This padding was necessary because modular addition can introduce errors by mapping vectors with $\Vert\mathbf{S}\mathbf{n}\Vert_{\infty} > N_{\mathrm{ISP}}/2$ to the opposite edge of the grid (see \cref{fig:shear}). Therefore, we choose $\bar{\mathcal{B}}_{\mathrm{ISP}}$ to be large enough that all grid points $\Vert\mathbf{n}\Vert_{\infty} < N_{\mathrm{ISP}}/2$ remain within $\bar{\mathcal{B}}_{\mathrm{ISP}}$ during every step of $\tilde{U}_{\mathbf{A}^{- \top}}^{\textsc{lct}}$. Because $g(\Delta \mathbf{n}) = 0$ if $\Vert\mathbf{n}\Vert_{\infty} > N_{\mathrm{ISP}}/2$, no error is introduced if these points are mapped to the opposite edge. 
In \cref{app:nm_algo_grid_pad}, we show that this requirement is met if
\begin{equation}
   n_{\mathrm{pad}} \geq \left\lceil \log \left(1.619 \sqrt{3 \eta_n} ( N_{\mathrm{ISP}}\Vert \mathbf{L}\Vert_{\infty} +\beta)+1\right) \right \rceil - n_{\mathrm{ISP}},
   \label{eq:n_pad_bound_LCT}
\end{equation}
where $\beta = 18 \eta_n^2 - 6 \eta_n +1$ is the total number of unitaries in \cref{eq:U_AT_exact}, not counting $U_{\mathbf
L}$. 

On $\bar{\mathcal{B}}_{\mathrm{ISP}}$, each approximate shear $\tilde{U}_{\mathbf{S}}$ acts with a controllable error that can be decreased by reducing $\Delta$. We define the ideal state transformed by $\mathbf{S}$ as
\begin{equation} \label{eq:amp_shear}
   |g_{\mathrm{ideal}}\rangle = \frac{1}{\mathcal{N}_{\mathrm{ISP}}}\sum_{\mathbf{n} \in \bar{\mathcal{B}}_{\mathrm{ISP}}} g(\Delta  \mathbf{S}^{-1} \mathbf{n})|\mathbf{n}\rangle,
\end{equation}
and write the approximate state as
\begin{align} \label{eq:amp_shear_2}
    \tilde{U}_{\mathbf{S}}|g\rangle& = \frac{1}{\mathcal{N}_{\mathrm{ISP}}}\sum_{\mathbf{m} \in \bar{\mathcal{B}}_{\mathrm{ISP}}} g(\Delta  \mathbf{m})|S(\mathbf{m})\rangle  \\ 
    &= \frac{1}{\mathcal{N}_{\mathrm{ISP}}}\sum_{\mathbf{n} \in \bar{\mathcal{B}}_{\mathrm{ISP}}} g(\Delta  S^{-1}(\mathbf{n}))|\mathbf{n}\rangle, \label{eq:inverse_S}
\end{align}
where $\mathbf{n} = S(\mathbf{m})$. In \cref{app:lct_ssct_error}, we show that $S^{-1}(\mathbf{n}) = \mathbf{S}^{-1}\mathbf{n} -\mathbf{S}^{-1}\boldsymbol{\delta}$, where $\boldsymbol{\delta}$ is a vector of residual components that are rounded off in \cref{eq:n_i_expression}. Therefore, the argument of $g$ of the approximate state differs from that of the ideal state by $\Delta \mathbf{S}^{-1}\boldsymbol{\delta}$, meaning that the approximate state approaches the ideal state as $\Delta\to 0$. 

Therefore, the LCT procedure determines the grid spacing $\Delta$.
The error from implementing $\tilde{U}_{\mathbf{A}^{- \top}}^{\textsc{lct}}$ is
\begin{equation}
    \epsilon_{\mathrm{LCT}} = D(\tilde{U}_{\mathbf{A}^{- \top}}^{\textsc{lct}} |g\rangle,U_{\mathbf{A}^{- \top}}^{\textsc{lct}}|g\rangle).
\end{equation}
When $|g\rangle$ is a separable multivariate Gaussian, we upper bound $\epsilon_{\mathrm{LCT}}$ in \cref{app:lct_ssct_error}, which is asymptotically 
\begin{equation}
    \epsilon_{\mathrm{LCT}} = \mathcal{O}(\eta_n^2 \Delta).
    \label{eq:eps_LCT}
\end{equation}
\Cref{eq:eps_LCT} therefore determines $\Delta$ as
\begin{equation}
    \Delta = \mathcal{O}(\epsilon_{\mathrm{LCT}}/\eta_n^2).
    \label{eq:L_asymptotic}
\end{equation}

We show in \cref{app:implementing_A} that the resources needed to implement $\tilde{U}_{\mathbf{A}^{- \top}}^{\textsc{lct}}$ are 
\begin{align} 
     C_{\mathrm{anc}}(\tilde{U}_{\mathbf{A}^{- \top}}^{\textsc{lct}}) &= 4\bar{n}_{\mathrm{ISP}} -3 \label{eq:LCT_resources_b}\\
    C_{\mathrm{Toff}}(\tilde{U}_{\mathbf{A}^{- \top}}^{\textsc{lct}}) &= \tfrac{9}{2}\eta_n^2(8 \bar{n}_{\mathrm{ISP}}^2 + 39\bar{n}_{\mathrm{ISP}} - 8)  - \bar{n}_{\mathrm{ISP}} \nonumber\\
    &\quad - \tfrac{3}{2}\eta_n(8\bar{n}_{\mathrm{ISP}}^2 + 35\bar{n}_{\mathrm{ISP}} -8) \label{eq:LCT_resources_c}.
\end{align}

In the specific case of $\textbf{T}=\textbf{A}^{-\top}$, we can give expressions for the shears above in terms of the matrices introduced in \cref{sect:ISP-constructingU}. 
Because $\mathbf{A} = \mathbf{S}_{\mathrm{U}} \bar{\mathbf{O}}$, we have $\mathbf{A}^{\top} = \bar{\mathbf{O}}^{\top}\mathbf{S}_{\mathrm{L}} $, where $\mathbf{S}_{\mathrm{L}} = \mathbf{S}_{\mathrm{U}}^{\top}$ and $\bar{\mathbf{O}}^{\top}$ is an orthogonal transformation. $\tilde{U}_{\mathbf{A}^{- \top}}^{\textsc{lct}}$ is then given by setting $\mathbf{L}= \mathbf{S}_{\mathrm{L}}$ and $\mathbf{X} =\bar{\mathbf{O}}^{\top}$.

\paragraph{Single-shear coordinate transformation.}
\label{sect:SSCT}
Multivariate Gaussian initial states can also be prepared using an SSCT~\cite{kitaev2009,bauer2021}. 

For a multivariate Gaussian wavefunction,
\begin{equation}
    \hat{\Psi}_{\bar{\mathbf{Q}}}^{(n)}(\Delta \mathbf{A}^{- \top} \mathbf{n}) \propto e^{- \Delta^2 \mathbf{n}^{\top} 
    \boldsymbol{\Lambda} \mathbf{n} /2 },
\end{equation}
where $\boldsymbol{\Lambda} = \mathbf{A}^{-1} \mathbf{W}^{-1} \mathbf{A}^{- \top}$ and $\mathbf{W} = \mathrm{diag}(\omega_1^2, \ldots, \omega_{3\eta_n}^2)$, the Cholesky decomposition $\boldsymbol{\Lambda}=\mathbf{L} \mathbf{D}_{\mathrm{Ch}} \mathbf{L}^{\top}$ gives a shear $\mathbf{L}$ and a diagonal matrix $\mathbf{D}_{\mathrm{Ch}}$. This yields
\begin{equation}
    \hat{\Psi}_{\bar{\mathbf{Q}}}^{(n)}(\Delta \mathbf{A}^{- \top} \mathbf{n}) = \hat{\Psi}_{\bar{\mathbf{Q}}}^{(n)'}(\Delta \mathbf{L}^{\top} \mathbf{n}),
\end{equation}
where $\hat{\Psi}_{\bar{\mathbf{Q}}}^{(n)'}(\Delta\mathbf{n}) \propto e^{- \Delta \mathbf{n}^{\top} \mathbf{D}_{\mathrm{Ch}} \mathbf{n}}$ is a product state of $3\eta_n$ independent Gaussians. 

An approximation to $\hat{\Psi}_{\bar{\mathbf{Q}}}^{(n)'}(\Delta \mathbf{L}^{\top} \mathbf{n})$ can be prepared using a single application of $\tilde{U}_{\mathbf{S}}$ given in \cref{eq:approx_shear_general}. First, the state 
\begin{equation}
    |\hat{\Psi}^{(n)'}_{\bar{\mathbf{Q}}}\rangle = \sum_{\mathbf{n} \in \mathcal{B}_{\mathrm{ISP}}} \hat{\Psi}^{(n)'}_{\bar{\mathbf{Q}}}(\Delta \mathbf{n}) | \mathbf{n}\rangle
\end{equation}
is prepared on the quantum computer. Then, the transformation $|\mathbf{n}\rangle \rightarrow |\mathbf{L}^{-\top}\mathbf{n}\rangle= \tilde{U}_{\mathbf{S}}|\mathbf{n}\rangle$ is achieved with an application of $\tilde{U}_{\mathbf{S}}$ with $\mathbf{S}=\mathbf{L}^{- \top}$.
Therefore, the approximate SSCT unitary is simply $\tilde{U}_{\mathbf{A}^{-\top}}^{\textsc{ssct}} = \tilde{U}_{\mathbf{S}}$.

When using the SSCT approach, the number of padding qubits needed is (see \cref{app:nm_algo_grid_pad})
\begin{equation} \label{eq:npad_shear}
    n_{\mathrm{pad}} \geq \lceil\log(N_{\mathrm{ISP}}\Vert \mathbf{\mathbf{L}^{-\top}} \Vert_{\infty} + 1)\rceil.
\end{equation}
The resources needed to implement $\tilde{U}_{\mathbf{A}^{-\top}}^{\textsc{ssct}}$ are
\begin{align}
    C_{\anc}(\tilde{U}_{\mathbf{A}^{-\top}}^{\textsc{ssct}}) = & 4\bar{n}_{\mathrm{ISP}} -3 \\
    C_{\Toff}(\tilde{U}_{\mathbf{A}^{-\top}}^{\textsc{ssct}}) = &9\eta_n^2(\bar{n}^2_{\mathrm{ISP}} + 4\bar{n}_{\mathrm{ISP}} -1 )  \\ 
    &- 3\eta_n(\bar{n}_{\mathrm{ISP}}^2 + 2\bar{n}_{\mathrm{ISP}} -1) - 2\bar{n}_{\mathrm{ISP}}.
\end{align}

The SSCT error
\begin{equation}
    \epsilon_{\mathrm{SSCT}} = D(\tilde{U}_{\mathbf{A}^{-\top}}^{\textsc{ssct}}|g\rangle, U_{\mathbf{A}^{-\top}}^{\textsc{ssct}}|g\rangle)
\end{equation}
results from approximating a single shearing transformation. \Cref{lem:shear_error} immediately establishes
\begin{equation}
    \epsilon_{\mathrm{SSCT}} \leq   2^{1/2} \Big(1 -  e^{- \Delta^2 3 \eta_n \lambda_{\mathrm{max}}(\boldsymbol{\Lambda}')  }  \Big)^{1/2} = \mathcal{O}(\eta_n^{1/2} \Delta),
\end{equation}
where $\lambda_{\mathrm{max}}(\cdot)$ is the largest eigenvalue of the input matrix, $\boldsymbol{\Lambda}' = \mathbf{S}^{- \top} \boldsymbol{\Sigma}' \mathbf{S}^{-1}$, and $\boldsymbol{\Sigma}'$ is a diagonal matrix with positive entries that defines the Gaussian state. Therefore, the SSCT method allows the use of a larger $\Delta$ than the LCT method,
\begin{equation}
    \Delta = \mathcal{O}\big(\epsilon_{\mathrm{SSCT}}/\eta_n^{1/2}\big).
\end{equation}

\subsubsection{Phase kickback \textsc{pk}}\label{sect:isp_nuc_nm}
The action of the phase-kickback unitary  $\textsc{pk}$ is \cite{cleve_1998}
\begin{equation} \label{eq:U_A}
    \textsc{pk}|\mathbf{n}\rangle =  e^{-i \Delta \mathbf{n}^{\top} \mathbf{R}^{(0)} } |\mathbf{n}\rangle.
\end{equation}
To perform an approximate PK, denoted $\widetilde{\textsc{pk}}$, we use the implementation~\cite{nam2020approximate} in which a repeat-until-success (RUS) circuit synthesizes the $Z$ rotations~\cite{bocharov_2015}. A circuit for $\widetilde{\textsc{pk}}$ is provided in \cref{app:nm_algo_pkb}, along with a resource analysis that shows that
\begin{align}
C_{\mathrm{anc}}(\widetilde{\textsc{pk}})&=4 b_{\mathrm {grad}}+2\bar{n}_{\text {ISP }}-1, \\
C_{\mathrm{Toff}}(\widetilde{\textsc{pk}})=&3\eta_n (4 \bar{n}_{\mathrm{ISP}} b_{\mathrm{grad}} + b_{\mathrm{grad}} - 2\bar{n}_{\mathrm{ISP}}) \nonumber\\ &+  b_{\mathrm{grad}} (1.149 (\log(b_{\mathrm{grad}}\epsilon_{\mathrm{PK}}^{-1})) + 9.2)/4 \label{eq:cost_PK}
\end{align}
where $b_{\mathrm{grad}}$ is the number of qubits used to construct a phase-gradient state and $\epsilon_{\mathrm{PK}}$ is the target PK error.

\subsubsection{Integer conversion}
\label{subsubsec:tc2sm}

We convert from the two's-complement representation to the signed-magnitude representation, which is used during time evolution. In the signed-magnitude representation, we do not use the plane wave corresponding to $-0$, giving a simulation grid with one fewer grid point per dimension, $\bar{\mathcal{B}}_{\mathrm{ISP}}\rightarrow \left[-\tfrac12(\bar{N}_{\mathrm{ISP}}-1), \tfrac12(\bar{N}_{\mathrm{ISP}}-1)\right]^{3} \cap \mathbb{Z}^{3}$.

We implement the conversion using the unitary $\textsc{tc2sm}_{n}$ which acts on $n$-bit integers following the standard two-step procedure.
First, if the leading bit is 1, we invert the remaining bits using $n-1$ \textsc{cnot}s, which require no Toffoli gates. Second, we add the leading bit to the remaining bits, discarding the carry-out bit. This addition can be achieved using a ripple-carry adder with $n-2$ Toffoli gates and $n-2$ ancillas.
For all $3\eta_n$ registers, this requires
\begin{align}
    C_{\mathrm{anc}}(\textsc{tc2sm}_{\bar{n}_{\mathrm{ISP}}}^{\otimes 3\eta_{n}})&= \bar{n}_{\mathrm{ISP}}-2\\
    C_{\mathrm{Toff}}(\textsc{tc2sm}_{\bar{n}_{\mathrm{ISP}}}^{\otimes 3\eta_{n}}) &= 3 \eta_n (\bar{n}_{\mathrm{ISP}}-2). \label{eq:cost_tc2sm} 
\end{align}

\subsubsection{Grid trimming}
\label{subsubsec:ISP_trimming}
To reduce the cost of subsequent time evolution, we trim the prepared state on the padded grid by removing grid points on the exterior of the grid, which have negligible or zero amplitude. Doing so reduces the padded grid $\BBar$ to the interior grid $G = [- \tfrac12(N-1),\tfrac12(N-1)]^{3\eta_n} \cap \mathbb{Z}^{3\eta_n}$, whose size is set by the grid size $N$ determined in \cref{sect:grid}. $G$ is the grid that time evolution is performed on.

Trimming could be done by discarding the qubits that represent the exterior of the grid or, in architectures where there is a cost to mid-circuit measurement and re-initialization, these qubits can be kept, but not acted on by the subsequent time-evolution subroutines. We follow the latter approach in our resource estimates.

For any state $|\hat{\Psi}\rangle$, the error that results from this approximation is 
\begin{equation}
    \epsilon_{\mathrm{trim}} = D(|\hat{\Psi}\rangle_{\mathrm{int}}, |\hat{\Psi}\rangle) = \sqrt{1 - {\mathcal{N}_{\mathrm{int}}^2}}\label{eq:epsilon-trim},
\end{equation}
where $|\hat{\Psi}_{\mathrm{int}}\rangle = \Pi_{\mathrm{int}}|\hat{\Psi}\rangle/\mathcal{N}_{\mathrm{int}}$, $\Pi_{\mathrm{int}} = I^{(e)}\otimes \sum_{\mathbf{n}\in\mathcal{B}_{\mathrm{int}}}|\mathbf{n}\rangle \langle \mathbf{n}|$
is the projection onto the interior grid, and the normalization is $\mathcal{N}_{\mathrm{int}} = \sqrt{|\langle \hat{\Psi}|\Pi_{\mathrm{int}} |\hat{\Psi}\rangle|}$. 

In resource estimation, we estimate $\epsilon_{\mathrm{trim}}$ using a Monte Carlo calculation to estimate $p=\mathcal{N}^2_{\mathrm{int}}$, which is the probability that a grid point $\mathbf{n}\in \bar{\mathcal{B}}_{\mathrm{ISP}}$ sampled from the distribution $|\hat{\Psi}_{\mathbf{n}}|^2$ is within $\mathcal{B}_{\mathrm{int}}$. We obtain the estimate
$\hat{p} = {N_{\mathrm{int}}}/{N_{\mathrm{MC}}}$
by numerically sampling $N_{\mathrm{MC}}$ points from this distribution and determining the number $N_{\mathrm{int}}$ within $\mathcal{B}_{\mathrm{int}}$.

For the molecules in \cref{sect:photochemistry}, $\hat{p}$ is very close to~1. To estimate the uncertainty in $\hat{p}$, we use a one-sided confidence interval, i.e., we are $1 - \alpha$ confident that $p \geq \alpha^{1/N_{\mathrm{MC}}}$~\cite{owen_2013}.
This in turn allows us to be $1 - \alpha$ confident that 
\begin{equation}
    \epsilon_{\mathrm{trim}} \leq \sqrt{1 - \alpha^{1/N_{\mathrm{MC}}}}. \label{eq:MC-error}
\end{equation}

\subsubsection{Error analysis} \label{sect:isp_nuc_error}
Overall, our nuclear-ISP unitary is
\begin{multline}
    \tilde{U}_{\mathrm{ISP}}^{(n)}  = \textsc{tc2sm}^{\otimes 3\eta_{n}}_{\bar{n}_{\isp}}\;\widetilde{\textsc{pk}} \; \tilde{U}_{\mathbf{A}^{- \top}}  \widetilde{W}^{(n)} \\
    \cdot \textsc{onb2smb} \; \mathrm{SoSlat}^{(n)}  \widetilde{\textsc{asp}}^{(n)}. \label{eq:u_init_n_approx}
\end{multline}
In \cref{app:isp_error_breakdown}, we show that the error $\epsilon_{\mathrm{ISP}}^{(n)}$ from the approximations in $\tilde{U}_{\mathrm{ISP}}^{(n)}$ is bounded by a sum of seven contributions, 
\begin{multline}\label{eq:epsilon_init_n}
        \epsilon_{\mathrm{ISP}}^{(n)}  \leq \epsilon_{\textsc{asp}}^{(n)} + 2^{3/2}\sum_{i=1}^{3\eta_n} \sum_{\mu=0}^{N_{\mathrm{SMB}}-1} (\epsilon^{(n,c)}_{i \mu} + \epsilon^{(n,q)}_{i \mu})\\ + \epsilon_{\mathrm{shear}} + \epsilon_{\mathrm{ortho}} + \epsilon_{\textsc{pk}} + \epsilon_{\mathrm{trim}}.
\end{multline} 

The first error contribution is the arbitrary-state-preparation error
\begin{equation}
    \epsilon_{\textsc{asp}}^{(n)} = D(|\psi\rangle, |\tilde{\psi}\rangle),
\end{equation}
where $|\psi\rangle$ and $|\tilde{\psi}\rangle$ are defined in \cref{eq:nuc_asp,eq:nuc_asp_approx}. It is bounded by \cite{huggins2024}
\begin{equation}
    \epsilon_{\textsc{asp}}^{(n)} \leq 2\pi \, 2^{-b^{(n)}_{\textsc{asp}}} \log(D^{(n)}).
\end{equation}

The next two contributions, $\epsilon_{i\mu}^{(n,c)}$ and $\epsilon_{i\mu}^{(n,q)}$, result from $\widetilde{W}^{(n)}$ and are analogous to  $\epsilon_{i\mu}^{(e,c)}$ and $\epsilon_{i\mu}^{(e,q)}$ because they result from projecting each SM onto a PWB and then encoding it as an MPS. $\epsilon_{i\mu}^{(n,c)}$ is defined in \cref{eq:nuc_eps_classical} and $ \epsilon^{(n,q)}_{i \mu}$ is
\begin{equation}
    \epsilon^{(n,q)}_{i \mu} = D\big(|\hatsmQ\rangle _{\mathrm{MPS}}, |\hatsmQAprx\rangle_{\mathrm{MPS}} \big).
\end{equation}
In \cref{app:isp_error_breakdown}, we show that
\begin{equation}
    \epsilon^{(n,q)}_{i \mu} < 2^{\frac{7}{2}-b_{\mathrm{rot}}^{(n)}
    }\sum_{i=j}^{n_{\mathrm{ISP}}}m_{i\mu j}^{(n)}\log(2 \bar{m}_{i \mu j}^{(n)} ) .
\end{equation}

The errors $\epsilon_{\mathrm{shear}}$ and $\epsilon_{\mathrm{ortho}}$ come from approximately performing a coordinate transformation using $\tilde{U}_{\mathbf{A}^{- \top}}^{\textsc{lct}}$. We define the shearing error as
\begin{align}
    \epsilon_{\mathrm{shear}} &= D\Big( U_{\mathbf{S}_{\mathrm{L}}}|\tilde{\hat{\Psi}}^{(n)}_{\bar{\mathbf{Q}}}\rangle, \, \tilde{U}_{\mathbf{S}_{\mathrm{L}}}|\tilde{\hat{\Psi}}^{(n)}_{\bar{\mathbf{Q}}}\rangle\Big),
\end{align}
where $|\tilde{\hat{\Psi}}_{\bar{\mathbf{Q}}}^{(n)}\rangle$ is the state that results from projecting $|\hat{\Psi}_{\bar{\mathbf{Q}}}^{(n)}\rangle$ onto a finite PWB (see \cref{app:lct_ssct_error}).
In \cref{app:lct_ssct_error}, we prove \cref{lem:shear_error}, which shows that for a Gaussian initial state, 
\begin{equation} \label{eq:eps_shear_bound}
     \epsilon_{\mathrm{shear}} \leq   2^{1/2} \Big(1 -  e^{- \Delta^2 3\eta_n \lambda_{\mathrm{max}}(\boldsymbol{\Lambda}')  }  \Big)^{1/2}.
\end{equation}
The orthogonal transformation error is 
\begin{equation}
     \epsilon_{\mathrm{ortho}} = D\Big(U_{\bar{\mathbf{O}}^{\top}} U_{\mathbf{S}_{\mathrm{L}}}|\tilde{\hat{\Psi}}^{(n)}_{\bar{\mathbf{Q}}}\rangle,\, \tilde{U}_{\bar{\mathbf{O}}^{\top}} U_{\mathbf{S}_{\mathrm{L}}}|\tilde{\hat{\Psi}}^{(n)}_{\bar{\mathbf{Q}}}\rangle\Big).
\end{equation}
We show in \cref{app:lct_ssct_error} that
\begin{equation} \label{eq:eps_ortho_bound}
    \epsilon_{\mathrm{ortho}} \leq\sum_{l=1}^{\beta} \epsilon^{(\mathrm{2D-shear})}_l,
\end{equation}
where $\epsilon^{(\mathrm{2D-shear})}_l$ is the error of performing the $l$th unitary in the decomposition of $\tilde{U}_{\bar{\mathbf{O}}^{\top}}$, defined in \cref{eq:2d_shear_eps_def}. We call this error the 2D-shearing error because it is non-zero only for 2D shears, since reflections and $\pm \pi/2$-rotations can be implemented exactly. \Cref{lem:ortho} of \cref{app:lct_ssct_error} shows that for a Gaussian initial state,
\begin{equation}
    \epsilon_{P}^{(\mathrm{2D-shear})} \leq  2^{1/2}\big(1  -  e^{- \Delta^2 [\boldsymbol{\Lambda'}_P]_{k_P,k_P}}\big)^{1/2},
\end{equation}
where $k_P = \mathrm{index}(\mathbf{F}_P)$, $\mathbf{F}_P$ represents the $P$th sequential unitary in the decomposition of $\tilde{U}_{\bar{\mathbf{O}}^{\top}}$,  
\begin{equation}
    \mathrm{index}(\mathbf{F}_P) = \begin{cases}
        i, & \mathrm{if}\,\,\mathbf{F}_P = \mathbf{S}_1(i,j,\varphi_{ij})\\
        j, & \mathrm{if}\,\,\mathbf{F}_P = \mathbf{S}_2(i,j,\varphi_{ij}),
    \end{cases}
\end{equation}
$\boldsymbol{\Lambda}_P'  = (\bar{\mathcal{F}}_P)^{\top} \mathbf{S}_L^{- \top} \boldsymbol{\Sigma}' \mathbf{S}_L^{-1} \bar{\mathcal{F}}_P$, and $ \bar{\mathcal{F}}_P = \prod_{l=P}^{0}\mathbf{F}_l^{-1}$.

The phase-kickback error $\epsilon_{\textsc{pk}}$, given in \cref{eq:eps_pk}, can be made very small because the cost of doing so, $C_{\mathrm{Toff}}(\widetilde{\textsc{pk}}) \sim b_{\mathrm{grad}}\log(\epsilon_{\mathrm{PK}}^{-1})$, is logarithmic in $\epsilon_{\mathrm{PK}}^{-1}$.

Finally, the trimming error obeys \cref{eq:epsilon-trim},
\begin{equation}
    \epsilon_{\mathrm{trim}} \leq \sqrt{1 - {\mathcal{N}_{\mathrm{int}}^2}}.
\end{equation}

\subsection{Separable ISP} \label{sect:isp_sep}
Overall, we can prepare any separable initial state to within a controllable error by approximating $U_{\mathrm{ISP}}^{(\mathrm{sep})}$ as
\begin{equation}
\tilde{U}_{\mathrm{ISP}}^{(\mathrm{sep})} = \tilde{U}_{\mathrm{ISP}}^{(e)} \tilde{U}_{\mathrm{ISP}}^{(n)}.
\end{equation}
The resources needed are 
\begin{align}
         C_{\mathrm{anc}}(\tilde{U}_{\mathrm{ISP}}^{(\mathrm{sep})}) & = \mathrm{max}\big\{  C_{\mathrm{anc}}(\tilde{U}_{\mathrm{ISP}}^{(e)}) ,\, C_{\mathrm{anc}}(\tilde{U}_{\mathrm{ISP}}^{(n)})\big\}\\
         C_{\mathrm{Toff}}(\tilde{U}_{\mathrm{ISP}}^{(\mathrm{sep})})  & = C_{\mathrm{Toff}}(\tilde{U}_{\mathrm{ISP}}^{(e)}) +     C_{\mathrm{Toff}}(\tilde{U}_{\mathrm{ISP}}^{(n)}),
\end{align}
where $C_{\mathrm{anc}}$ is the maximum total of all ancilla qubits used at any step of the ISP. The components are
\begin{align}
    C_{\mathrm{anc}}(\tilde{U}_{\mathrm{ISP}}^{(e)}) & = \mathrm{max}\big\{ C_{\mathrm{anc}}(\widetilde{\textsc{asp}}^{(e)}), C_{\mathrm{anc}}(\mathrm{SoSlat}^{(e)}), \nonumber \\ 
    &\qquad C_{\mathrm{anc}}(\textsc{onb2mob}), C_{\mathrm{anc}}(\textsc{asym}), \nonumber \\
    &\qquad C_{\mathrm{anc}}((\widetilde{W}^{(e)})^{\otimes \eta_e})\big\} \\
    C_{\mathrm{Toff}}(\tilde{U}_{\mathrm{ISP}}^{(e)}) & =  C_{\mathrm{Toff}}(\widetilde{\textsc{asp}}^{(e)}) + C_{\mathrm{Toff}}(\mathrm{SoSlat}^{(e)})  \nonumber \\
    & \qquad {} + C_{\mathrm{Toff}}(\textsc{onb2mob}) + C_{\mathrm{Toff}}(\textsc{asym}) \nonumber \\
    & \qquad {} + C_{\mathrm{Toff}}((\widetilde{W}^{(e)})^{\otimes \eta_e} ), \\
    C_{\mathrm{anc}}(\tilde{U}_{\mathrm{ISP}}^{(n)}) & =  3\eta_n n_{\mathrm{ext}} + \mathrm{max}\big\{ C_{\mathrm{anc}}(\widetilde{\textsc{asp}}^{(n)}), \nonumber \\
     &\qquad C_{\mathrm{anc}}(\mathrm{SoSlat}^{(n)}) , C_{\mathrm{anc}} (\textsc{onb2smb}), \nonumber \\
     &\qquad C_{\mathrm{anc}}(\widetilde{W}^{(n)}), C_{\mathrm{anc}}(\tilde{U}_{\mathbf{A}^{- \top}}), \nonumber  \\
     &\qquad C_{\mathrm{anc}}(\widetilde{\textsc{pk}}), C_{\mathrm{anc}}(\textsc{tc2sm})\big\} \\
     C_{\mathrm{Toff}}(\tilde{U}_{\mathrm{ISP}}^{(n)})  & = C_{\Toff}(\widetilde{\textsc{asp}}^{(n)}) +  C_{\mathrm{Toff}}(\mathrm{SoSlat}^{(n)})  \nonumber \\ 
     & \qquad {} +  C_{\mathrm{Toff}}(\textsc{onb2smb}) + C_{\mathrm{Toff}}(\widetilde{W}^{(n)}) \nonumber \\
     & \qquad {} + C_{\mathrm{Toff}}(\tilde{U}_{\mathbf{A}^{- \top}}) C_{\mathrm{Toff}}(\widetilde{\textsc{pk}}) \nonumber \\
     & \qquad {}+ C_{\mathrm{Toff}}(\textsc{tc2sm}^{\otimes 3\eta_{n}}_{\bar{n}_{\isp}}), \label{eq:cost_toff_isp_n}
\end{align}
where $n_{\mathrm{ext}} = \bar{n}_{\mathrm{ISP}} - n_p$.

The error from approximately preparing a separable initial state with $\tilde{U}_{\mathrm{ISP}}^{(\mathrm{sep})}$ is 
\begin{equation}
    \epsilon_{\mathrm{ISP}}^{(\mathrm{sep})} = D \big( |\hat{\Psi}^{(e)} \rangle_{\mathrm{MPS}} |\hat{\Psi}^{(n)} \rangle_{\mathrm{MPS}}, |\tilde{\hat{\Psi}}^{(e)} \rangle_{\mathrm{MPS}} |\tilde{\hat{\Psi}}^{(n)} \rangle_{\mathrm{MPS}}   \big).
\end{equation}
Using the triangle inequality, 
\begin{equation}
    \epsilon_{\mathrm{ISP}}^{(\mathrm{sep})} \leq \epsilon_{\mathrm{ISP}}^{(e)} + \epsilon_{\mathrm{ISP}}^{(n)}, 
    \label{eq:error_isp_sep}
\end{equation}
which is bounded by the sum of \cref{eq:approx_bound,eq:epsilon_init_n}.

\subsection{Non-separable ISP}
\label{sect:Non-BO state prep}
Non-separable states can be prepared by extending the approach for separable states. The main difference is that, instead of initially preparing a product of ONB electronic and nuclear states, we prepare a single ONB state with entangled electronic and nuclear states.   

It is helpful to express $\Psi$ using a collective index $K = (I,J)$ formed by the concatenation of $I$ and $J$, the collective indices that specify electronic and nuclear configurations, respectively. $K$ therefore contains $N_{\mathrm{MOB}} + N_{\mathrm{vib}} N_{\mathrm{SMB}}$ elements. In terms of $K$, \cref{eq:non_BO_kspace} is 
\begin{equation}
    \Psi = \sum_K C_{K}^{(en)} \psi_K^{(en)},
\end{equation}
where $\psi_K^{(en)} = \psi_I^{(e)} \psi_J^{(n)}$.
We assume that the set of $C_K^{(en)}$ contains $D^{(en)}$ non-zero coefficients. 

To begin, we use arbitrary state preparation to prepare
\begin{equation}
    \textsc{asp}^{(en)}|0\rangle = |\psi\rangle = \sum_{i}C_i^{(en)}|i\rangle,
\end{equation}
where 
$C_i^{(en)}$ is the $i$th non-zero coefficient of the set $\{C_K^{(en)} \}$, and 
$\textsc{asp}^{(en)}$ is the arbitrary-state-preparation unitary. As for separable ISP, we implement an approximate unitary $\widetilde{\textsc{asp}}^{(en)}$ such that
\begin{equation}
    \widetilde{\textsc{asp}}^{(en)}|0\rangle = |\tilde{\psi}\rangle = \sum_{i}\widetilde{C}_i^{(en)}|i\rangle,
\end{equation}
where $\widetilde{C}_i^{(en)}$ is an approximation to $C_i^{(en)}$. Using $b^{(en)}_{\textsc{asp}}$ precision qubits, the resources needed to implement $\widetilde{\textsc{asp}}^{(en)}$ are the same as in \cref{eq:canc_asp,eq:ctoff_asp} with $D^{(e)}$ and $b^{(e)}_{\textsc{asp}}$ replaced by $D^{(en)}$ and $b^{(en)}_{\textsc{asp}}$, respectively. 

Next, we convert to an approximate ONB representation of $|\Psi\rangle$,
\begin{align}
    |\tilde{\Psi}\rangle_{\mathrm{ONB}} & = \sum_K \widetilde{C}_{K}^{(en)} |K \rangle \\
    & =  \sum_{I,J} \widetilde{C}_{IJ} |I \rangle |J\rangle.
\end{align}
We prepare $ |\tilde{\Psi}\rangle_{\mathrm{ONB}}$ using the SoSlat unitary $\mathrm{SoSlat}^{(en)}$,
\begin{equation}
    \mathrm{SoSlat}^{(en)}|\tilde{\psi}\rangle = |\tilde{\Psi}\rangle_{\mathrm{ONB}}
\end{equation}
using 
\begin{align} 
    C_{\mathrm{anc}}(\mathrm{SoSlat}^{(en)}) &= 5\log D^{(en)} -3\\
    C_{\mathrm{Toff}}(\mathrm{SoSlat}^{(en)}) &\leq D^{(en)}(2 \log D^{(en)} + 3).
\end{align} 

From here, the remaining ISP steps are the same as for preparing the electronic and nuclear states separably. The non-separable ISP unitary is 
\begin{equation} \label{eq:non_sep_uinit_approx}
    \tilde{U}_{\mathrm{ISP}} = (\tilde{U}_{\mathrm{ISP}}^{(e)'} \otimes \tilde{U}_{\mathrm{ISP}}^{(n)'}) \big(\mathrm{SoSlat}^{(en)}  \widetilde{\textsc{asp}}^{(en)}\big),
\end{equation}
where
\begin{align}
    \tilde{U}_{\mathrm{ISP}}^{(e)'} &= (\widetilde{W}^{(e)})^{\otimes 3\eta_e} \textsc{asym}\; \textsc{onb2mob} \label{eq:UISP_e_prime}\\
    \tilde{U}_{\mathrm{ISP}}^{(n)'} & = \textsc{tc2sm}_{\bar{n}_{\mathrm{ISP}}}^{\otimes 3\eta_n}\, \widetilde{\textsc{pk}} \, \tilde{U}_{\mathbf{A}^{-\top}} \, \widetilde{W}^{(n)} \textsc{onb2smb} \label{eq:UISP_n_prime}.
\end{align}

The resources needed to implement $\tilde{U}_{\mathrm{ISP}}$ are
\begin{align}
     C_{\mathrm{anc}}(\tilde{U}_{\mathrm{ISP}}) & = \mathrm{max}\big\{ C_{\mathrm{anc}}(\tilde{U}^{(e)'}),  C_{\mathrm{anc}}(\tilde{U}^{(n)'}), \nonumber \\ 
     &\qquad C_{\mathrm{anc}}(\widetilde{\textsc{asp}}^{(en)}),C_{\mathrm{anc}}(\mathrm{SoSlat}^{(en)})\big\} \label{eq:anc_cost_isp_full}  \\
     C_{\mathrm{Toff}}(\tilde{U}_{\mathrm{ISP}}) & =   C_{\mathrm{Toff}}(\tilde{U}_{\mathrm{ISP}}^{(e)'}) + C_{\mathrm{Toff}}(\tilde{U}_{\mathrm{ISP}}^{(n)'})  \nonumber \\
     &\qquad +C_{\mathrm{Toff}}(\widetilde{\textsc{asp}}^{(en)}) + C_{\mathrm{Toff}}(\mathrm{SoSlat}^{(en)}). \label{eq:toff_cost_isp_full}
\end{align}
Therefore, preparing a non-separable state costs more than a separable one by only the small additive term $C_{\mathrm{Toff}}(\widetilde{\textsc{asp}}^{(en)}) + C_{\mathrm{Toff}}(\mathrm{SoSlat}^{(en)})$.

The error from using $\tilde{U}_{\mathrm{ISP}}$ consists of seven terms. In \cref{app:isp_error_breakdown}, we show that 
\begin{align} 
    \epsilon_{\mathrm{ISP}} & \leq  \epsilon_{\textsc{asp}}^{(en)} + \epsilon_{\mathrm{trim}} +  2^{3/2} \eta_e \sum_{a=1}^{N_{\mathrm{MOB}}}\epsilon_{a}^{(e)}  \nonumber \\
    & \quad + 2^{3/2}\sum_{i=1}^{3\eta_n} \sum_{\mu=0}^{N_{\mathrm{SMB}}-1} \big(\epsilon^{(n,c)}_{i \mu} +\epsilon^{(n,q)}_{i \mu} \big)  \nonumber \\
    & \quad + \sum_{I,J}|C_{IJ}| \big( \epsilon_{\mathrm{shear}}' + \epsilon'_{\mathrm{ortho}} + \epsilon_{\textsc{pk}}' \big),
    \label{eq:error_budget_non_sep}
\end{align}
where $\epsilon_{\textsc{asp}}^{(en)} = D(\textsc{asp}^{(en)}|0\rangle, \widetilde{\textsc{asp}}^{(en)}|0\rangle)$
and $\epsilon'_{\mathrm{shear}}$, $\epsilon'_{\mathrm{ortho}}$, and  $\epsilon'_{\textsc{pk}}$ are defined in \cref{eq:eps_shear_nuc_prime,eq:eps_ortho_nuc_prime,eq:eps_pk_nuc_prime}. These error terms are similar to their unprimed counterparts ($\epsilon_{\mathrm{shear}}$, $\epsilon_{\mathrm{ortho}}$, and $\epsilon_{\textsc{pk}}$, but are given in terms of the two-norm. Nevertheless, they obey the same upper bounds as their the unprimed counterparts.

\subsection{Asymptotic cost of ISP} \label{sect:asymp_isp}
Overall, ISP scales quadratically in system size, up to logarithmic terms. Its asymptotic time complexity for both separable and non-separable states is
\begin{equation}\label{eq:asympt_Toff_cost_ISP}
     C_{\Toff}(U_{\mathrm{ISP}}) =   \cO\Big( \eta^2 \log^2 \big( \eta^2 \epsilon^{-1} \sqrt{\log(1/\epsilon)} \big)\Big) ,
\end{equation}
which arises from the most expensive subroutine, $\tilde{U}_{\mathbf{A}^{- \top}}^{\textsc{lct}}$. The scaling follows from \cref{eq:LCT_resources_c}, which shows that
$C_{\Toff}(\tilde{U}_{\mathbf{A}^{- \top}}^{\textsc{lct}}) = \mathcal{O}(\eta_n^2 \bar{n}_{\mathrm{ISP}}^2) = \mathcal{O}(\eta^2 \bar{n}_{\mathrm{ISP}}^2)$, with $\bar{n}_{\mathrm{ISP}} = n_{\mathrm{pad}} + n_{\mathrm{ISP}}$. Then, from \cref{eq:n_pad_bound_LCT} and the fact that $N_{\mathrm{ISP}} \Vert \mathbf{L} \Vert_{\infty} \gg \beta$ asymptotically, we have $n_{\mathrm{pad}} = \mathcal{O}(\log(\eta) + n_{\mathrm{ISP}})$. Then, using $n_{\mathrm{ISP}} = \log(N_{\mathrm{ISP}})$ and $N_{\mathrm{ISP}}\propto \bar{N}$, \cref{eq:N} establishes $n_{\mathrm{ISP}} = \mathcal{O}(\log( \eta^2 \sqrt{\log(1/\epsilon)}/\epsilon))$. This asymptotic analysis treats the following quantities as constants because they do not depend on $\eta_e$, $\eta_n$, $N$, or $L$: $D^{(e)}$, $D^{(n)}$, $D^{(en)}$, $N_{\mathrm{MOB}}$, $N_{\mathrm{SMB}}$, $\gamma_{\mathrm{max}}$, $l_{\mathrm{max}}$, $N_{g}$, $\Sigma$, $\omega_i$, $\Vert\mathbf{L}\Vert_{\infty}$, and $\sum_{I,J}|C_{IJ}|$. We also assume that all errors go as $\cO(\epsilon)$, where $\epsilon$ is the total error bound on the final observable. 

Separable and non-separable ISP have the same asymptotic cost because the two differ only in the dominant term $\tilde{U}_{\mathbf{A}^{- \top}}^{\textsc{lct}}$ for non-separable ISP being multiplied by $\sum_{I,J}|C_{IJ}|$, which we assume to be constant.

Our ISP routine is exponentially faster in grid size $N$ than prior work. Because it is quadratic in $\bar{n}_{\mathrm{ISP}}$, our approach scales as $\mathcal{O}(\log^2 N)$. By contrast, previous work performed the coordinate transformation on a classical computer, using classical data loading with cost $\cO(\eta N)$ to transfer the result onto a quantum computer~\cite{Jornada2025}. Our approach instead performs the coordinate transformation coherently on the quantum computer, avoiding the costly data loading.

\section{Time evolution by qubitization} \label{sect:sim}

Our algorithm continues by time-evolving the initial state for a specified time $t$ under the molecular Hamiltonian $H$. Because $H$ does not include terms that depend on the spin of the electrons, the spin registers are suppressed in this section and we concern ourselves only with the registers that represent the simulation grid.

We use the qubitization-based approach to time-evolution by first block-encoding $H$ and then performing QSP to synthesize the propagator $e^{-iHt}$. To this end, we explicitly block encode the full molecular Hamiltonian of \cref{eq:mol-ham} in a way that can be understood as an extension of the block-encoding of the electronic Hamiltonian~\cite{su2021}.

Qubitization is a method to construct the qubiterate, a unitary walk operator that one can use to synthesize functions of the Hamiltonian---including the propagator---using QSP~\cite{Low2017,Low2019hamiltonian}.
The starting point for the approach is to write $H$ as a linear combination of unitaries (LCU),
\begin{equation} \label{eq:LCU}
    H=\sum_{\ell=1}^{N_{\LCU}} \alpha_{\ell} H_{\ell}, \quad \lambda_H=\sum_{\ell=1}^{N_{\LCU}} \alpha_\ell,
\end{equation}
where the terms $H_\ell$ are self-inverse unitaries and the coefficients $\alpha_{\ell}$ are positive. Using this decomposition, $H$ is embedded into a larger unitary $U_{H}$ whose upper-left sub-block contains the normalized version of $H$,
\begin{equation}\label{eq:UH}
U_H=\left(\begin{array}{cc}
H /\lambda_H & \cdot \\
\cdot & \cdot
\end{array}\right).
\end{equation}
To construct such a unitary block-encoding, qubitization prescribes two oracles, $\PREP_H$ and $\SEL_H$. $\PREP_H$ prepares a state that encodes the LCU coefficients $\alpha_{\ell}$ on an ancillary register of size $\bits{N_{\LCU}}$,
\begin{equation}\label{eq:Prep}
\PREP_H|0\rangle=\sum_{\ell=1}^{L} \sqrt{\frac{\alpha_{\ell}}{\lambda_{H}}}|\ell\rangle,
\end{equation}
where $\ket{0}$ is the all-zeros state.
$\SEL_H$ selectively applies the unitary $H_\ell$ on the data register if the ancilla register is in state $\ket{\ell}$, i.e.,
\begin{equation}
\SEL_H|\ell\rangle|\psi\rangle=|\ell\rangle \otimes H_{\ell} |\psi\rangle.
\end{equation}
These oracles combine to give the block-encoding $U_{H}$,
\begin{equation}\label{eq:PSP}
U_H=\PREP_H  \SEL_H  \PREP_H^{\dagger}.
\end{equation}
With this block-encoding, we construct the qubiterate 
\begin{equation}\label{Qubit_Walk}
\cW=(2|0\rangle\langle 0|-I)  \PREP_H^{\dagger}  \SEL_H  \PREP_H.
\end{equation}

To perform time-evolution, we need to transform $\cW$ to the propagator $e^{-iHt}$. This can be done using QSP, which, given a real function $f$ and a unitary $e^{iO}$ generated by a Hermitian operator $O$, constructs the unitary $e^{if(O)}$ ~\cite{Low2017,Low2019hamiltonian}. Since $\cW = \exp(-i\,\mathrm{arccos}(H/\lambda_H))$, it is generated by $\mathrm{arccos}(H/\lambda_H)$, meaning that we can synthesize the propagator $e^{-iHt}$ by performing QSP on the qubiterate $\cW$ with $f= \lambda_{H}t \cos(\cdot)$. This is done by repeatedly performing controlled applications of $\cW$ and its inverse interleaved with single-qubit rotations applied to the control qubit. The number of repetitions, conventionally referred to as queries, required to approximate the propagator to precision $\epsilon$ for a system with LCU-norm $\lambda_{H}$ evolving for time $t$ is~\cite{Low2017}
\begin{equation} \label{eq:qubitz_query}
\cO\left(\lambda_{H} t+\frac{\log(1/\epsilon)}{\log\log(1/\epsilon)}\right).
\end{equation}
The Toffoli complexity $C_{\Toff}$ for time-evolution is then given by this number of repetitions multiplied by the Toffoli complexity of synthesizing the qubiterate $\cW$. 

For the first-quantized electronic Hamiltonian in a PWB, the qubiterate can be constructed using $\cO(\eta_{e}n_{p})$ Toffoli gates~\cite{su2021}. In the following sections, we use the same techniques to construct the qubiterate for the first-quantized molecular Hamiltonian in the PWB and show that it can be done using $\cO(\eta n_{p})$ Toffoli gates. As $\eta_{e}>\eta_{n}$ for most molecules, this represents only a modest increase of the Toffoli complexity compared to the electronic Hamiltonian. However, the number $n_{p}$ of plane-waves per particle per dimension required to resolve the molecular wavefunction to a given resolution is higher than for the electronic wavefunction, which further increases the Toffoli complexity.

\subsection{LCU decomposition of the molecular Hamiltonian}
We now give an LCU decomposition for the molecular Hamiltonian and construct $\PREP_{H}$ and $\SEL_{H}$ oracles to block-encode it.
We order the particles by index $j$, so that electrons are labelled by $j\in \{1,2,\ldots,\eta_{e}\}$ and the nuclei by $j\in \{\eta_{e}+1,\ldots,(\eta_{e}+\eta_{n})\}$. 
We rewrite the kinetic and potential terms $T$ and $V$ in terms of a basis of unitary and Hermitian operators $H_{\ell_{T}}$ and $H_{\ell_{V}}$ with positive real coefficients $\alpha_{\ell_{T}}$ and $\alpha_{\ell_{V}}$ such that $T=\sum_{\ell_{T}} \alpha_{\ell_{T}} H_{\ell_{T}}$ and $V=\sum_{\ell_{V}} \alpha_{\ell_{V}} H_{\ell_{V}}$. Explicitly,
\begin{align}
T&=\sum_{j=1}^{\eta} \sum_{w\in \{x,y,z\}}\sum_{r,s=0}^{n_{p}-2}\sum_{b\in \{0,1\}} \frac{\pi^2 2^{r+s}}{\Omega^{2/3}m_{j}} H_{(b,j,w,r,s)}\label{eq:LCU_decomp_kin}\\
V&=\sum_{\bnu\in G_{0}}\sum_{i\neq j=1}^{\eta}\sum_{b\in\{0,1\}}\frac{\pi |\zeta_{i}||\zeta_{j}|}{\Omega \norm{\mathbf{k}_{\bnu}}^{2}_2}H_{(b,i,j,\bnu)},\label{eq:LCU_decomp_pot}
\end{align}
where $\Omega=L^3$ is the real-space simulation-grid volume, $\mathbf{k}_{\bnu}$ is defined in \cref{eq:k_mu}, $G_0=G \setminus \{(0,0,0)\}$ is the basis set of plane waves $G$ without the singular zero mode, and $|\zeta _{j}|$ is the absolute value of the electric charge $\zeta_{j}$ of the $j$th particle (for an electron, $\zeta_{j}=-1$). The self-inverse unitaries are 
\begin{align}\label{eq:H_l_T}
    H_{(b,j,w,r,s)}&=\sum_{\mathbf{p}\in G} (-1)^{b\left(p_{w,r} p_{w,s}\oplus 1\right)}\ket{\mathbf{p}}\bra{\mathbf{p}}_{j}\\
    H_{(b,i,j,\bnu)}&=\sum_{\mathbf{p},\mathbf{q}\in G}\left(-1\right)^{f(\mathbf{p},\mathbf{q},\boldsymbol{\nu},i,j)} \nonumber\\
    &\quad\,\quad\,\quad\, \times\ket{\mathbf{p}+\boldsymbol{\nu}}\bra{\mathbf{p}}_{i} \ket{\mathbf{q}-\boldsymbol{\nu}}\bra{\mathbf{q}}_{j},\label{eq:H_l_V}
\end{align}
where $p_{w,r}$ denotes the $r$th bit of the binary representation of the $w$-component of the grid point $\mathbf{p}$ (likewise for $p_{w,s}$). Negative terms in $V$, caused by electrostatic attraction, are accounted for by the Boolean function 
\begin{multline}\label{eq:Boolfunc_sel}
 f(\mathbf{p},\mathbf{q},\boldsymbol{\nu},i,j)= b\left([(\mathbf{p}+\boldsymbol{\nu})\notin G]\lor[(\mathbf{q}-\boldsymbol{\nu}\notin G)]\right)\\
 \oplus \left([i< \eta_{e}+1]\oplus[j<\eta_{e}+1]\right).   
\end{multline}
The LCU norms of the kinetic and potential terms are
\begin{align}
    \lambda_{T}&=\frac{6\pi^2}{\Omega^{2/3}} (2^{n_{p}-1}-1)^2\lambda_{m}=\cO\left(\frac{\eta N_{\grid}^{2/3}}{\Omega^{2/3}}\right)\label{eq:lambda_T}\\
    \lambda_{V}&=\frac{\sum_{i\neq j=1}^{\eta} |\zeta_{i}||\zeta_{j}|} {2\pi \Omega^{1/3}}\lambda_{\bnu}=\cO\left(\frac{\eta^2 N_{\grid}^{1/3}}{\Omega^{1/3}}\right)
    \label{eq:lambda_V}
\end{align}
where $\lambda_{m}=\sum_{j=1}^{\eta} m_{j}^{-1}$ and $\lambda_{\bnu}=\sum_{\boldsymbol{\nu} \in G_0} \|\boldsymbol{\nu}\|^{-2}_2$.

\subsection{Overview of block encoding}\label{sec:block-encoding_overview}
We construct the block-encoding for the molecular Hamiltonian by block-encoding the kinetic and potential terms separately and taking a linear combination of them. For this purpose, we define $\PREP$ and $\SEL$ routines for the kinetic and potential terms separately, such that 
\begin{align}\label{eq:blockenc_T}
\bra{0} \block{T}\ket{0}=\frac{T}{\lambda_T},\\\label{eq:blockenc_V}
\bra{0} \block{V}\ket{0}=\frac{V}{\lambda_V}.
\end{align}
These routines are defined via the LCU decompositions for the kinetic term $T$ and the potential term $V$ in \cref{eq:LCU_decomp_kin} and \cref{eq:LCU_decomp_pot}, respectively. The state preparation routines are therefore
\begin{align}\label{eq:PREP_T}
    \PREP_{T}\ket{0}&=\!\sum_{b,j,w,r,s} \sqrt{\frac{2^{r+s}}{6\lambda_{m} \left(2^{n_{p}-1}-1\right)m_{j}}} \ket{b,j,w,r,s}\\
    \PREP_{V}\ket{0}&=\sum_{\bnu, i\neq j} \sqrt{\frac{|\zeta_{i}||\zeta_{j}|}{\norm{\bnu}_{2}\lambda_{\bnu}\sum_{k\neq l=1}^{\eta} |\zeta_{k}||\zeta_{l}|}}\ket{\bnu,i,j}\label{eq:PREP_V}, 
\end{align}
where the summation ranges are as given in \cref{eq:LCU_decomp_kin,eq:LCU_decomp_pot} and $\ket{0}$ denotes the all-zeros state on the registers supporting the states in \cref{eq:PREP_T,eq:PREP_V}.

The routines $\SEL_{T}$ and $\SEL_{V}$ are determined by \cref{eq:H_l_T,eq:H_l_V} to act as
\begin{align}
\nonumber \SEL_{T}&\ket{b}\ket{j}\ket{w}\ket{r}\ket{s}\ket{\mathbf{p}}_{j}=\\ 
\label{eq:SEL_T_main}& (-1)^{b\left(p_{w,r} p_{w,s}\oplus 1\right)} \ket{b}\ket{j}\ket{w}\ket{r}\ket{s}\ket{\mathbf{p}}_{j}\\
\nonumber\SEL_{V}& \ket{b}\ket{\bnu}\ket{i}\ket{j}\ket{\mathbf{p}}_{i}\ket{\mathbf{q}}_{j}=\\&(-1)^{f(\mathbf{p},\mathbf{q},\boldsymbol{\nu},i,j)}\ket{b}\ket{\bnu}\ket{i}\ket{j}
\ket{\mathbf{p}+\bnu}_{i}\ket{\mathbf{q}-\bnu}_{j}\label{eq:SEL_V},
\end{align}
where $\SEL_{T}$ acts on the ancilla register prepared in the state $\PREP_{T}\ket{0}$ as well as the data register and $\SEL_{V}$ acts on the register prepared in the state $\PREP_{V}\ket{0}$ and the data register. 

To construct the block-encoding for the full Hamiltonian $H=T+V$, we combine the routines in \cref{eq:PREP_T,eq:PREP_V,eq:SEL_T_main,eq:SEL_V} to construct $\PREP_{H}$ and $\SEL_{H}$. More precisely, 
\begin{align}\label{eq:PREP_H}
\PREP_{H}\ket{0}&= \ket{\cL_{\theta}}_{a}\otimes \PREP_T \ket{0} \otimes \PREP_{V}\ket{0}.
\end{align}
Here, the additional ancilla qubit $a$ is prepared as
\begin{align}\label{eq:control_state}
\ket{\cL_{\theta}}_{a}=\cos(\theta)\ket{0}_{a}+\sin(\theta)\ket{1}_{a}
\end{align}
with 
$\theta=\arccos\left(\sqrt{\frac{\lambda_T}{\lambda_T+\lambda_V}}\right)=\arcsin\left(\sqrt{\frac{\lambda_V}{\lambda_T+\lambda_V}}\right)$
chosen so that the Hamiltonian is block-encoded with the correct LCU norm $\lambda_{H}=\lambda_{T}+\lambda_{V}$. The corresponding $\SEL_{H}$ operator uses register $a$ as a control qubit to select between applying $\SEL_{T}$ and $\SEL_{V}$, i.e.,
\begin{align}
    \SEL_H&=
|0\rangle\langle 0|_{a} \otimes \SEL_T+|1\rangle\langle 1|_{a} \otimes \SEL_{V}.
\end{align}
The block-encoding $\PREP_{H}^{\dagger}\SEL_{H}\PREP_{H}$ then satisfies 
\begin{align}
    H=\lambda_{H}(\bra{0}\otimes I)\,\PREP_{H}^{\dagger}\SEL_{H}\PREP_{H}\, (\ket{0}\otimes I).
    \label{eq:block_simple}
\end{align}

The subroutines $\PREP_{T}$ and $\PREP_{V}$ can be decomposed into more elementary protocols, as described in \cref{subsubsec: time-evo_PREP_T,subsec:time-evo_PREP_V} and summarized here. 

The state $\PREP_{T}\ket{0}$ given in \cref{eq:PREP_T} can be decomposed into the product state
\begin{align}
    \PREP_{T}\ket{0}=\ket{+}_{b}\otimes \ket{\cL_{m}}_{c}\otimes   \ket{\cL_{w}}_{d} \otimes \ket{\cL_{r,s}}_{e,f} 
    \label{eq:prept-decomp}
\end{align}
where the mass state
\begin{align}
\ket{\cL_{m}}=\frac{1}{\sqrt{\sum_{j=1}^{\eta}m_{j}^{-1}}}\sum_{j=1}^{\eta_{}}\frac{1}{\sqrt{m_{j}}}\ket{j}_{c} 
\end{align}
encodes the inverse masses $m_{j}^{-1}$ of the particles. Next, we have the uniform superposition over three basis states,
\begin{align}
    \ket{\cL_{w}}=\frac{1}{\sqrt{3}} \sum_{w=0}^2|w\rangle_{d},
\end{align}
which acts as an index register allowing $\SEL_{T}$ to iterate over the three spatial components of any single-particle momentum register. Finally, the state 
\begin{align}\label{eq:state_rs}
\ket{\cL_{r,s}}=\frac{1}{2^{n_p-1}-1} \sum_{r, s=0}^{n_p-2} 2^{(r+s) / 2}\ket{r}_{e}\ket{s}_{f}\end{align}
allows $\SEL_{T}$ to iterate over the $r$th and $s$th bits of the $w$th spatial component of the momentum register of a particle. 
Because $\PREP_{T}\ket{0}$ is a product state, we prepare each factor independently, with detailed procedures and costs given in \cref{subsubsec: time-evo_PREP_T}.

We also write $\PREP_{V}\ket{0}$ as the product state 
\begin{align}
\PREP_{V}\ket{0}=\ket{\cL_{\bnu}}_{h}\otimes \ket{\cL_{\zeta\zeta}}_{\ell,m},
\label{eq:prepv-decomp}
\end{align}
where the momentum state
\begin{align}
    \ket{\cL_{\bnu}}_{h}=\frac{1}{\sqrt{\lambda_{\bnu}}}\sum_{\bnu \in G_0} \frac{1}{\|\bnu\|_2}\ket{\bnu}_{h},
\end{align}
encodes the inverse momentum of a particle~\cite{su2021}. Similarly, the Coulomb state 
\begin{align}\label{eq:Coulomb_state}
    \ket{\cL_{\zeta\zeta}}_{\ell,m}=\sum_{i\neq j=1}^{\eta}\frac{\sqrt{|\zeta_{i}||\zeta_{j}|}}{\sqrt{\sum_{i\neq j=1}^{\eta}|\zeta_{i}||\zeta_{j}|}}\ket{i,j}_{\ell,m}
\end{align}
encodes the absolute values of the charge pairs occurring in the Coulomb potential (\cref{eq:LCU_decomp_pot}).

In practice, we can reduce resource costs by replacing $\PREP_{V}$ with the approximate version 
\begin{align}\label{eq:proxPREP_V_tensor}
\proxPREP_{V}\ket{0}_{g,h,k,\ell,m}=\ket{\tilde{\cL}_{\bnu}}_{g,h}\otimes \ket{\tilde{\cL}_{\zeta\zeta}}_{k,\ell,m}
\end{align}
that is the product of approximations $\ket{\tilde{\cL}_{\bnu}}$ and $\ket{\tilde{\cL}_{\zeta\zeta}}$ to the momentum and Coulomb states, respectively. For more details, see \cref{subsec:time-evo_PREP_V}, where we show that this state can be prepared with a success probability $p_{V}$ that is lower-bounded by a constant close to $1/8$, with the actual value typically closer to $1/4$ for most molecules.

The approximate state preparation also necessitates a modified implementation of $\SEL_{H}$. This implementation is controlled on ancilla qubits in the $\proxPREP_{H}$ register that flag successful state preparation and can be written as 
\begin{align}
    \widetilde{\SEL}_{H}=\sum_{x}\dyad{x}{x}\otimes \SEL_{x},
\end{align}
where bit string $x$ iterates over flag-qubit values and $\SEL_{x}\in\{\SEL_{T}, \SEL_{V}, I\}$, as detailed in \cref{subsec:SEL_H_implement}.

By combining $\proxPREP_{H}$ with $\proxSEL_{H}$, we arrive at an approximate block encoding $\proxblock{H}$ of $H$,
\begin{align}
     \big\|H-\tilde{\lambda}_{H}\bra{0}\proxblock{H}\ket{0}\big\|\leq \epsilon_{H},
 \end{align}
where the error $\epsilon_{H}$ is defined in \cref{eq:epsilon_H}.
In \cref{app:SEL_H_implement}, we show that the LCU norm $\tilde{\lambda}_{H}$ of our block-encoded Hamiltonian is
\begin{align}\label{eq:norm_LCU}
\tilde{\lambda}_{H}=\frac{1}{P_{\eq}}\max\left\{\lambda_{T}+\lambda_{V},\frac{\lambda_{V}}{p_{V}}\right\},
\end{align}
where $p_{V}$ is the success probability of $\proxPREP_{V}$ and 
$P_{\eq}=\Ps(3,8)\Ps(\eta,b_{r})\Ps(\eta,b_{r})^2$ is a product of success probabilities for the preparation of uniform superposition states that occur for both $\PREP_{T}$ and $\proxPREP_{V}$. Each of these probabilities is very close to $1$, with the analytic expression provided in \cite{su2021}.

\subsection{\texorpdfstring{$\PREP_{T}$}{PREPT}}
\label{subsubsec: time-evo_PREP_T}
Implementing $\PREP_{T}$ for the block-encoding of the kinetic term $T$ requires the preparation of the three states in \cref{eq:prept-decomp}: the mass state $\ket{\cL_{m}}$, the uniform-superposition state $\ket{\cL_{w}}$, and the state $\ket{\cL_{r,s}}$. Our main focus in this section is to describe and cost the preparation of $\ket{\cL_{m}}$, as the preparation protocols for the other two states are the same as for the known block-encoding of the electronic Hamiltonian~\cite{su2021}. By contrast, the preparation of $\ket{\cL_{m}}$ is only necessary for the block-encoding of the molecular Hamiltonian. 

To encode the masses of the particles, we prepare the mass state 
\begin{align}
    \ket{\cL_{m}}_{c}= \frac{1}{\sqrt{\sum_{j=1}^{\eta}m_{j}^{-1}}}\sum_{j=1}^{\eta} \frac{1}{\sqrt{m_{j}}}\ket{j}_{c}.
\end{align}
To do so, we use coherent alias sampling with standard $\QROM$, a technique to prepare an arbitrary quantum state, with $L$ known coefficients, up to error at most $\epsilon$, in time $\cO\left(L+\log(1/\epsilon)\right)$~\cite{babbush2018}. We obtain the approximate mass state (entangled with a junk register of ancillas)
\begin{align}
    \ket{\tilde{\cL}_{m}}_{c,\anc}= \frac{1}{\sqrt{\sum_{j=1}^{\eta}\tilde{m}_{j}^{-1}}}\sum_{j=1}^{\eta} \frac{1}{\sqrt{\tilde{m}_{j}}}\ket{j}_{c}\ket{\junk_{j}}_{\anc}\label{eq:mass_state}
\end{align}
by using alias sampling with $L=\eta$ and precision $2^{-\mu_{T}}\eta^{-1}$, which guarantees an accuracy
\begin{align}
\max_{j\in \{1,2,...,\eta\}}\left|\frac{1}{m_{j}}-\frac{1}{\tilde{m}_{j}}\right|\leq \frac{2^{-\mu_{T}}}{\eta}\Bigg(\sum_{j=1}^{\eta}\frac{1}{m_{j}}\Bigg),
\end{align} 
where $\mu_{T}$ is the number of bits of precision used to approximate the coefficients of the state.

Alias sampling consists of several subroutines, whose Toffoli costs are~\cite{babbush2018}
\begingroup
\allowdisplaybreaks
\begin{align}
C_{\Toff}(\UNIF(\eta,b_{r}))&=3n_{\eta}+2b_{r}-9\\
C_{\Toff}(\QROM_{\eta})&= \eta-2\\
C_{\Toff}(\COMP_{\mu_{T}})&=\mu_{T}\\
C_{\Toff}(\textsc{swap}(\ket{\ell}\leftrightarrow\ket{\textrm{alt}_{\ell}}))&=n_{\eta},
\end{align}
\endgroup
where $n_{\eta}=\lceil\log(\eta)\rceil$. For the uniform state preparation $\UNIF(\eta, b_{r})$, we set the ancilla-qubit rotation parameter $b_{r}=8$, which ensures a sufficiently high probability of success~\cite{su2021}.
The Toffoli cost of preparing $\ket{\tilde{\cL}_{m}}$ is then
\begin{align}
C_{\Toff}(\ket{\tilde{\cL}_{m}})&= \eta+\mu_{T}+ 4n_{\eta}+5.
\label{eq:C_Toff_m}
\end{align}
In \cref{app: bound_T}, we show that if the block-encoded kinetic energy $\tilde{T}=\lambda_{T}\bra{0}\block{T}\ket{0}$ needs to be accurate to $\|T-\tilde{T}\| \leq \epsilon_{T}$,
the number of qubits needed to represent the coefficients is at least 
\begin{align}
\mu_{T}=\left\lceil\log\left(\lambda_{T}/\epsilon_{T}\right)\right\rceil. 
\end{align}
Therefore, the asymptotic Toffoli cost of the mass-state preparation is $\cO\left(\eta+\left\lceil\log\left(\lambda_{T}/\epsilon_{T}\right)\right\rceil\right)$. 

The ancilla cost of preparing $\ket{\tilde{\cL}_{m}}$,
\begin{align}
C_{\anc}(\ket{\tilde{\cL}_{m}})=3n_{\eta}+3\mu_{T}+2,\label{eq:anc_cost_mass_state}
\end{align}
is the sum of the $n_{\eta}$ qubits in register $c$, the $n_{\eta}+2\mu_{T}+1$ junk qubits in the $\anc$ register in \cref{eq:mass_state}, and the ancillas used within the subroutines,
\begin{align}
C_{\anc}(\UNIF(\eta,b_{r}))&=2\\
C_{\anc}(\QROM_{\eta})&= n_{\eta}\\
C_{\anc}(\COMP_{\mu_{T}})&=\mu_{T}-1.
\end{align}

When unpreparing $\ket{\tilde{\cL}_{m}}$, we use measurement-based uncomputation~\cite{Berry_2019} to uncompute the $\QROM$ because it is cheaper than uncomputing via standard inversion. It involves measuring the junk registers in \cref{eq:mass_state} in the Hadamard basis and performing adaptive phase corrections. This erasure process has a Toffoli cost of 
\begin{align}
C_{\Toff}(\QROM^{\dagger})=\Er(\eta)=\min_{k\in \mathbb{N}} \left(\left\lceil\frac{\eta}{2^k}\right\rceil+2^k\right),
\end{align}
which is approximately $2\sqrt{\eta}$ and therefore quadratically smaller than the cost to apply $\QROM$~\cite{Berry_2019}. It also needs 
\begin{align}
C_{\anc}(\QROM^{\dagger})=
n_{\Er}=\argmin_{k\in \mathbb{N}} \left(\left\lceil\frac{\eta}{2^k}\right\rceil+2^k\right)
\end{align}
ancilla qubits.
The inequality test $\COMP_{\mu_{T}}$ can be uncomputed using only Cliffords by keeping around the flag register storing the test outcome, on which we control the SWAPs. We thus only have to consider the additional Toffoli cost to invert the controlled SWAPs and the uniform state preparation, which we do via standard inversions. Hence, with slight abuse of notation, the costs to unprepare $\ket{\tilde{\cL}_{m}}$ are
\begin{align}
C_{\anc}(\bra{\tilde{\cL}_{m}})&=n_{\Er}\label{eq:anc_cost_unprep_mass}\\
 C_{\Toff}(\bra{\tilde{\cL}_{m}})&= \Er(\eta)+4n_{\eta}+7.\label{eq:Toff_cost_unprep_mass}
\end{align}

The second state we need to prepare is the uniform superposition $\ket{\cL_{w}}$. We do so using the uniform-state-preparation $\UNIF(3,8)$, whose costs are~\cite{su2021}
\begin{align}
    C_{\anc}(\ket{\cL_{w}})&= 4\\ 
    C_{\Toff}(\ket{\cL_{w}})&=13. \label{eq:Toff_cost_w}
\end{align}

Finally, we review the preparation of $\ket{\cL_{r,s}}$~\cite{su2021}. As shown in \cref{eq:state_rs}, $\ket{\cL_{r,s}}$ is a product state of the form 
\begin{align}
    \ket{\cL_{r,s}}_{e,f}= \ket{\cL_{r}}_{e}\otimes \ket{\cL_{s}}_{f},
\end{align}
with both factors of the form
\begin{align}
\ket{\cL_{x}}&=\frac{1}{\sqrt{2^{n_{p}-1}-1}}\sum_{x=0}^{n_{p}-2}\sqrt{2^{x}}\ket{x},
\end{align}
where $x\in\{r,s\}$. We prepare both states using the protocol in \cite{su2021}, whose Toffoli cost is $C_{\Toff}(\ket{\cL_{x}})=n_{p}-2$, and therefore
\begin{align}
C_{\Toff}\left(\ket{\cL_{r,s}}\right)=2(n_{p}-2) \label{eq:Toff_cost_rs}.
\end{align}
Each state is supported on $n_{p}$ qubits. In addition, one ancilla qubit is prepared in the state $R_{z}(\pi/4)\ket{+}$ to be catalytically used to perform controlled Hadamards needed to prepare $\ket{\cL_{r,s}}$~\cite{su2021}, so that
\begin{align}
 C_{\anc}(\ket{\cL_{r,s}})=2n_p+1.
\end{align}

The full Toffoli cost of $\PREP_{T}$ is
\begin{align}
C_{\Toff}(\PREP_{T}) &=C_{\Toff}(\ket{\tilde{\cL}_{m}})+C_{\Toff}(\ket{\cL_{w}}) \nonumber \\ & \quad +C_{\Toff}(\ket{\cL_{r,s}})
\end{align}
so that, from \cref{eq:C_Toff_m,eq:Toff_cost_w,eq:Toff_cost_rs}, it follows that
\begin{align}\label{eq:PREP_T_Toff_cost}
C_{\Toff}\left(\PREP_{T}\right)=\eta+\mu_{T}+ 4n_{\eta}+ 2n_{p}+14.
\end{align}
For the unpreparation, we get a reduced cost of 
\begin{equation}\label{eq:UNPREP_T_Toff_cost}
C_{\Toff}\bigl(\PREP_{T}^{\dagger}\bigr)=    \Er(\eta)+4n_{\eta}+2n_{p}+16.
\end{equation}
The mass-state preparation dominates the Toffoli complexity of the $\PREP_{T}$ routine with an $\cO(\eta)$ scaling.

The total number of ancilla qubits used is
\begin{align}
C_{\anc}(\proxPREP_{T})&=C_{\anc}(\ket{+})+C_{\anc}(\ket{\cL_w})\nonumber\\
&\qquad+C_{\anc}(\ket{\cL_{r,s}})+C_{\anc}(\ket{\tilde{\cL}_{m}})\nonumber\\
&=3n_{\eta}+3\mu_{T}+2n_{p}+8.
\label{eq:anc_cost_PREPT}
\end{align}

\subsection{\texorpdfstring{$\PREP_{V}$}{PREPV}}\label{subsec:time-evo_PREP_V}

Implementing $\PREP_{V}$ requires the preparation of the two states in \cref{eq:prepv-decomp}: the momentum state $\ket{\cL_{\bnu}}$ that encodes the inverse momenta of the particles and the Coulomb state $\ket{\cL_{\doublezeta}}$ that encodes the charge pairs in the Coulomb potential. Our main focus in this section is the preparation of $\ket{\proxL_{\doublezeta}}$, since the preparation routine for $\ket{\proxL_{\nu}}$ has been described previously~\cite{Babbush_2019,su2021}. 

The exact momentum state 
\begin{align}
    \ket{\cL_{\bnu}}=\frac{1}{\lambda_{\bnu}}\sum_{\bnu\in G_{0}} \frac{1}{\norm{\bnu}_2}\ket{\bnu},
\end{align}
with normalization $\lambda_{\bnu}=\sum_{\boldsymbol{\nu} \in G_0} \|\boldsymbol{\nu}\|_2^{-2}$, encodes the inverse momentum of each particle. It is approximated by the easier-to-prepare state
\begin{align}
    &\ket{\proxL_{\bnu}}_{g,h,\anc}=\frac{1}{\sqrt{\cM2^{n_{p}+1}}}\sum_{\mu=2}^{n_{p}+1}\sum_{\bnu\in B_{\mu}}    \nonumber\\
    &\sum_{m=0}^{\left\lceil\mathcal{M}\left(2^{\mu-2} /\|\nu\|_{2}\right)^{2}\right\rceil-1}\frac{1}{2^\mu}\ket{0}_{g}\ket{\bnu}_{h}\ket{\mu,m}_{\anc}+\sqrt{1-p_{\bnu}} \ket{\perp},
\end{align}
where $\cM=2^{n_{\cM}}$, with $n_{\cM}$ being a control parameter setting the precision, and $B_{\mu}=\{\bnu \in G_{0}\mid (\norm{\bnu}_{\infty}< 2^{\mu-1})\wedge(\bigvee_{w\in\{x,y,z\}}(|\bnu_{w}|\geq 2^{\mu-2}))\}$. The qubit $g$ flags successful preparation of $\ket{\tilde{\cL}_{\bnu}}$ with  probability~\cite{su2021}
\begin{align}\label{eq:p_bnu_def}
p_{\bnu}=\sum_{\mu=2}^{n_{p}+1}\sum_{\bnu\in B_{\mu}}\frac{\lceil\cM (2^{\mu-2}/\norm{\bnu}_2)^2\rceil}{\cM 2^{2\mu}2^{n_{p}+1}}\approx1/4.
\end{align} 

The cost to prepare and unprepare $\ket{\proxL_{\bnu}}_{g,h}$ is~\cite{su2021}
\begin{align}
C_{\anc}(\ket{\proxL_{\bnu}})+C_{\anc}(\bra{\proxL_{\bnu}})&=3n_{p}^2+10n_{p}+5n_{\cM}\nonumber\\
&\quad{} +4n_{\cM}n_{p}+10\label{eq:anc_cost_L_nu}\\
C_{\Toff}(\ket{\proxL_{\bnu}})+C_{\Toff}(\bra{\proxL_{\bnu}})&=3n_{p}^{2}+15n_{p}\nonumber\\
&\quad{} +4n_{\cM}(n_{p}+1)-7.
\label{eq:Toff_cost_L_nu}
\end{align}
The parameter $n_{\cM}$ is set by the error $\epsilon_{V}=\| V-\tilde{V}\|_{\infty}$ on the block-encoded potential energy $\tilde{V}$. This error is due to the imprecision in the amplitudes of $\ket{\proxL_{\bnu}}$. We show in \cref{app:bound_V} that
\begin{align}
n_{\cM}&=\left\lceil\log\left(\lambda_{V}r_{\bnu}/\epsilon_{V} \right)\right\rceil,
\end{align}
where
\begin{align}\label{eq:ratio_bnu}
r_{\bnu}=\frac{4}{\lambda_{\bnu}}\left(7\times2^{n_{p}+1}-9n_{p}-11-3\times 2^{-n_{p}}\right).    
\end{align}
We also show that $r_{\bnu}\leq 12$ in \cref{lem:r_bnu_bound}, meaning that $n_{\cM}=\cO\left(\left\lceil\log\left(\lambda_{V}/\epsilon_{V}\right)\right\rceil\right)$.

Our preparation of the Coulomb state $\ket{\cL_{\doublezeta}}$ extends the preparation routine for a similar state used for the block-encoding of the simpler, purely electronic Hamiltonian~\cite{su2021}. That block-encoding requires the preparation of the state 
\begin{align}
    \frac{1}{\sqrt{\sum_{l=1}^{\eta_{n}}\zeta_{l}}}\sum_{l=1}^{\eta_{n}} \sqrt{\zeta_{l}}\ket{l},
    \label{eq:su-charge-state}
\end{align}
which encodes the electric charges $\zeta_{l}$ of only the nuclei of the system. The preparation of this state in~\cite{su2021} uses QROM, but simplifies the preparation by using the fact that $\zeta_{l}$ are integers.

The Coulomb state is 
\begin{align}
    \ket{\cL_{\zeta\zeta}}=\sum_{i\neq j=1}^{\eta}\frac{\sqrt{|\zeta_{i}||\zeta_{j}|}}{\sqrt{\sum_{i\neq j=1}^{\eta}|\zeta_{i}||\zeta_{j}|}}\ket{i,j}.
\end{align}
The approximate state includes a flag register and is 
\begin{align}
\ket{\tilde{\cL}_{\zeta\zeta}}_{k,\ell,m}=\sqrt{p_{\zeta}}\ket{0}_{k}\ket{\cL_{\zeta\zeta}}_{\ell,m}+\sqrt{1-p_{\zeta}}\ket{1}_{k}\ket{\cL_{\zeta\zeta}^{\perp}}_{\ell,m},
\end{align}
with overlap 
\begin{align}\label{eq:def_p_zeta}
p_{\zeta}=\frac{\sum_{i\neq j=1}^{\eta} |\zeta_{i}||\zeta_{j}| }{\sum_{i,j=1}^{\eta} |\zeta_{i}||\zeta_{j}| }
\end{align}
with $\ket{\cL_{\zeta\zeta}}_{\ell,m}$ and with the orthogonal state
\begin{align}\label{eq:Coulomb_perp_state}
\ket{\cL_{\zeta\zeta}^{\perp}}_{\ell,m}&=\sum_{j=1}^{\eta}\frac{\sqrt{|\zeta_{j}||\zeta_{j}|}}{\sqrt{\sum_{ j=1}^{\eta}\zeta^2_{j}}}\ket{j,j}_{\ell,m}.
\end{align}

The backbone of our technique for preparing $\ket{\proxL_{\doublezeta}}$ is the charge state 
\begin{align}
\ket{\zeta}=\frac{1}{\sqrt{\sum_{j=1}^{\eta}|\zeta_{j}|}}\sum_{j=1}^{\eta}\sqrt{|\zeta_{j}|}\ket{j},
\label{eq:charge_state_nuc_elec}
\end{align}
which encodes the absolute values of the electric charges $\zeta_{j}$ of both electrons and nuclei. We prepare it using the procedure in~\cite{su2021}, originally used to prepare the nuclei-only state in \cref{eq:su-charge-state}, but now applied to \cref{eq:charge_state_nuc_elec}, as detailed in \cref{app:Coulomb_state_prep}.

We then prepare $\ket{\proxL_{\doublezeta}}$ by performing an inequality test on two independent copies of $\ket{\zeta}$ and storing the outcome of the test in the flag register $k$ in $\ket{\proxL_{\doublezeta}}$, as detailed in \cref{app:Coulomb_state_prep}. 
Therefore,
\begin{align}\label{eq:comp_prep}\ket{\proxL_{\doublezeta}}&=\COMP_{k,\ell,m}^{\neq}\left(\ket{0}_{k}\otimes \ket{\zeta}_{\ell}\otimes \ket{\zeta}_{m}\right),
\end{align}
where $\COMP_{k,\ell,m}^{\neq}$ is an inequality test between registers $\ell$ and $m$ that writes the result into the flag register $k$. 

The Toffoli cost of this procedure is
\begin{multline}
C_{\Toff}\big(\ket{\proxL_{\doublezeta}}\big)= C_{\Toff}(C\COMP_{k,\ell,m}^{\neq})\\+ 2C_{\Toff}\left(\ket{\zeta}\right).
\end{multline}
Since an $n$-qubit inequality test costs $n-1$ Toffoli gates,
\begin{align}
    C_{\Toff}(\COMP_{k,\ell,m}^{\neq})=n_{\eta}-1,
\end{align}
where $n_{\eta}=\lceil\log(\eta)\rceil$.
The Toffoli cost of preparing $\ket{\zeta}$ is
\begin{multline}
C_{\Toff}(\ket{\zeta})=C_{\Toff}(\UNIF(2\eta_{e},r))\\+C_{\Toff}(\QROM_{2\eta_{e}}),
\end{multline}
where the subscript on $\QROM$ denotes the number of coefficients it outputs. Specifically \cite{su2021,babbush2018}, 
\begin{align}
C_{\Toff}(\UNIF(2\eta_{e},r))&=3\lceil\log(2\eta_{e})\rceil+2b_{r}-9\\
C_{\Toff}(\QROM_{2\eta_{e}})&=2\eta_{e}-1,
\end{align}
giving a total cost of 
\begin{align}
C_{\anc}(\ket{\proxL_{\doublezeta}})
&=6\bits{2\eta_{e}}+3n_{\eta}+4\label{anc_cost_doublezeta}\\
C_{\Toff}(\ket{\proxL_{\doublezeta}})
&=4\eta_{e}+n_{\eta}+6\lceil\log(2\eta_{e})\rceil+4 b_{r}-21.
\label{eq:Toff_cost_doublezeta}
\end{align}

To reduce the cost of unpreparation, we again use measurement-based uncomputation to erase the $\QROM$, as we did in \cref{subsubsec: time-evo_PREP_T} for the mass state. Unpreparing the $\QROM$ by erasing costs $\Er(2\eta_{e})$ Toffoli gates, so that the total cost of unpreparing $\ket{\proxL_{\doublezeta}}$ is at most 
\begin{align}
C_{\Toff}(\bra{\proxL_{\doublezeta}})=2\Er(2\eta_{e})+6\lceil\log(2\eta_{e})\rceil+4b_{r}-18.
\end{align}

The total Toffoli cost to prepare and unprepare is then $\cO(\eta_{e})$ and therefore slightly cheaper than the preparation of the mass state, which is $\cO(\eta)$.

Our procedure prepares $\ket{\cL_{\doublezeta}}$ with probability $p_{\zeta}$, which is at least $1/2$. Because $\sum_{i=1}^{\eta}|\zeta_{i}|^2=\eta_{e}+\sum_{j=\eta_{e}+1}^{\eta_{n}}\zeta_{j}^2$ and, for charge-neutral systems, $\sum_{j=1}^{\eta_{}}|\zeta_{j}|=2\eta_{e}$, we can write
\begingroup
\allowdisplaybreaks
\begin{align}
p_{\zeta}&=\big|\langle\tilde\cL_{\doublezeta}|\cL_{\doublezeta}\rangle\big|
    =1-\frac{\sum_{i=1}^{\eta}|\zeta_{i}|^2}{\left(\sum_{i=1}^{\eta}|\zeta_{i}|\right)^2}\\
    &=1-\frac{\eta_{e}+\sum_{j=\eta_{e}+1}^{\eta_{n}}\zeta_{j}^2}{4\eta_{e}^2}\\
    &\ge 1-\frac{\eta_{e}+\left(\sum_{j=\eta_{e}+1}^{\eta_{n}}\zeta_{j}\right)^2}{4\eta_{e}^2}\\
    &=\frac34-\frac{1}{4\eta_{e}} \ge \frac12,
    \label{eq:p_zeta_lowerbound}
\end{align}
\endgroup
with equality holding for a one-electron system such as hydrogen. For large molecules ($\eta_{e}\rightarrow\infty$), the lower bound on $p_\zeta$ improves to $3/4$, providing us with a rigorous performance guarantee for our state preparation. 

The joint state $\ket{\proxL_{\bnu}}\otimes\ket{\proxL_{\doublezeta}}$ can be written as   
\begin{align}\label{eq:proxPREP_V}
\proxPREP_{V}\ket{0}=\sqrt{p_{\bnu}p_{\zeta}} \PREP_{V}\ket{0}+\sqrt{1-p_{\bnu}p_{\zeta}}\ket{\perp},
\end{align}
which has overlap $p_{\bnu}p_{\zeta}$ with $\PREP_{V}\ket{0}$. Here $\ket{\perp}$ is orthogonal to $\PREP_{V}\ket{0}$.
\Cref{eq:p_bnu_def,eq:p_zeta_lowerbound} show that $p_{\bnu}p_{\zeta}$ is lower-bounded by a constant close to $1/8$, with the actual value typically closer to $1/4$ for most molecules. Our approximate protocol $\proxPREP_{V}$ therefore has an appreciable probability of success. 

The total Toffoli cost of this protocol is 
\begin{align}\label{eq:PREP_V_Toff_cost}
C_{\Toff}\bigl(\proxPREP_{V}\bigr)&=C_{\Toff}(\ket{\proxL_{\doublezeta}})+C_{\Toff}(\ket{\proxL_{\bnu}})\\
&=4\eta_{e}+n_{\eta}+6\lceil\log(2\eta_{e})\rceil+4b_{r}-24\nonumber\\
&\quad +3n_{p}^2+11 n_{p}+4n_{\cM}(n_{p}+1),
\end{align}
while for the unpreparation we obtain 
\begin{align}\label{eq:UNPREP_V_Toff_cost}
C_{\Toff}\bigl(\proxPREP_{V}^{\dagger}\bigr)&=C_{\Toff}(\bra{\proxL_{\doublezeta}})+C_{\Toff}(\bra{\proxL_{\bnu}})\\
&=n_{\eta}+2\Er(2\eta_{e})+6\lceil\log(2\eta_{e})\rceil\nonumber\\&\quad +4b_{r}-19+4(n_{p}-1).
\end{align}

The ancilla cost is
\begin{align}
C_{\anc}(\proxPREP_{V})&=C_{\anc}(\ket{\tilde{\cL}_{\bnu}})+C_{\anc}(\ket{\tilde{\cL}_{\doublezeta}})\nonumber\\
&=3n_{p}^2+10n_{p}+6\bits{2\eta_{e}}+3n_{\eta}\nonumber\\
&\qquad+5n_{\cM}+4n_{\cM}n_{p}+14.
\label{eq:anc_cost_PREPV}
\end{align}

\subsection{\texorpdfstring{$\PREP_{H}$}{PREPH}} \label{subsec: time-evo_PREP_H}

Given the preceding implementations of $\proxPREP_{T}$ and $\proxPREP_{V}$, we construct $\proxPREP_{H}$ as 
\begin{align}
\proxPREP_{H}\ket{0}&= \ket{\proxL_{\theta}}_{a}\otimes \proxPREP_T \ket{0} \otimes \proxPREP_{V}\ket{0}.
\label{eq:proxPREP_H}
\end{align}
Here, the additional ancilla qubit $a$ is prepared as
\begin{align}
\ket{\proxL_{\theta}}_{a}=\cos(\tilde{\theta})\ket{0}_{a}+\sin(\tilde{\theta})\ket{1}_{a},
\end{align}
where $\theta$ denotes an $n_{\theta}$-bit approximation of $\theta$, such that $\lvert\theta-\tilde{\theta}\rvert\leq 2^{-n_{\theta}}$. Preparing and unpreparing this single-qubit state has a cost of
\begin{align}
C_{\anc}\bigl(R_{z}(\tilde{\theta})_{a}\bigr)&= n_{\theta}\label{eq:anc_cost_Rz}\\
C_{\Toff}\bigl(R_{z}(\tilde{\theta})_{a}\bigr)&=n_{\theta}-3,
\label{eq:R_z_Toff_cost}
\end{align}
where the ancillas store the phase gradient state \cite{su2021} that is used to catalytically synthesize the single-qubit rotation. For the unpreparation, we run the circuit in reverse, so that  $C_{\Toff}\bigl(R^{\dagger}_{z}(\tilde{\theta})_{a}\bigr)=C_{\Toff}\bigl(R_{z}(\tilde{\theta})_{a}\bigr)$ and likewise for the ancilla cost.  

Therefore, the cost of $\proxPREP_{H}$ is 
\begin{align}
C_{\anc}(\proxPREP_{H})=C_{\anc}(\proxPREP_{T})+C_{\anc}(\proxPREP_{V})\nonumber \\ +C_{\anc}\bigl(R_{z}(\tilde{\theta})_{a}\bigr)\label{eq:anc_cost_Prep_H}\\
C_{\Toff}(\proxPREP_{H})=C_{\Toff}(\proxPREP_{T})+C_{\Toff}(\proxPREP_{V})\nonumber \\ +C_{\Toff}\bigl(R_{z}(\tilde{\theta})_{a}\bigr)\label{eq:Toff_cost_Prep_H},
\end{align}
which can be converted into a constant-factor cost using \cref{eq:anc_cost_Rz,eq:anc_cost_PREPT,eq:anc_cost_PREPV} for the ancillas and \cref{eq:PREP_T_Toff_cost,eq:PREP_V_Toff_cost,eq:R_z_Toff_cost} for the Toffoli costs.
Correspondingly, for the unpreparation, we have 
\begin{multline}
C_{\Toff}\bigl(\proxPREP_{H}^{\dagger}\bigr)=C_{\Toff}\bigl(\proxPREP_{T}^{\dagger}\bigr)+C_{\Toff}\bigl(\proxPREP_{V}^{\dagger}\bigr) \\  +C_{\Toff}\big(R_{z}(\tilde{\theta})^{\dagger}_{a}\big), \label{eq:Toff_cost_UNPREP_H}
\end{multline}
which can be converted into constant factor costs using \cref{eq:UNPREP_T_Toff_cost,eq:UNPREP_V_Toff_cost}.

Asymptotically, the Toffoli cost of $\PREP_{H}$ is dominated by the preparation of the momentum state, 
\begin{align}
   C_{\Toff}(\PREP_{H})=\cO(n_{p}^2)=\cO\left((\log N)^2\right).
\end{align}

\subsection{\texorpdfstring{$\SEL_{T}$}{SELT}} \label{subsec: time-evo_SEL_T}
Our implementation of $\SEL_{T}$ follows the approach in~\cite{su2021}. The map
\begin{multline}
\SEL_{T}: \ket{b}_{b}\ket{j}_{c}\ket{w}_{d}\ket{r}_{e}\ket{s}_{f}\ket{\mathbf{p}_{j}}\\ 
\rightarrow (-1)^{b\cdot\left(p_{w,r}\cdot p_{w,s}\oplus 1\right)} \ket{b}_{b}\ket{j}_{c}\ket{w}_{d}\ket{r}_{e}\ket{s}_{f}\ket{\mathbf{p}_{j}},
\end{multline}
adds a phase to the data register controlled on ancilla registers $c,d,e,f$.

To reduce the need to control many operations off the LCU ancilla registers, we first swap the momentum register of the $j$th particle into an ancilla register of $3n_{p}$ qubits controlled on register $c$ in \cref{eq:mass_state} being in state $\ket{j}_{c}$. The Toffoli cost of the controlled $\SWAP$s is $3\eta n_{p}$ and the cost of the controls, which we implement via unary iteration \cite{babbush2018}, is $\eta-2$. However, we assign these costs to $\SEL_{H}$ later in \cref{subsec:SEL_H_implement}.

To apply $\SEL_{T}$, we copy the $w$th momentum component of particle $j$ that is stored in the wavefunction data register, up to the sign bit, into an ancilla register of size $n_{p}-1$. To do so, we control the copying off the register $d$ storing $\ket{w}$, which can be done with $3(n_{p}-1)$ Toffoli gates. Controlling off registers $e$ and $f$, we then copy the $r$th and the $s$th bits into two ancilla qubits, with a Toffoli cost of $n_{p}-1$ each. Then, controlled on these two and register $a$ in \cref{eq:proxPREP_H}, we apply a $(-Z)$-gate on register $b$ prepared in the state $\ket{+}$ as shown in \cref{eq:prept-decomp}. This costs one Toffoli gate. Within $\SEL_{H}$ (see \cref{subsec:SEL_H_implement}), the application of $\SEL_{T}$ is controlled on an ancilla qubit. This control is added to the already doubly controlled $(-Z)$-gate described above, which adds one additional Toffoli gate to the cost, giving
\begin{align}
C_{\anc}(\contr{\SEL_{T}})&=n_{p}+2\label{eq:anc_cost_SEL_T}\\
C_{\Toff}(\contr{\SEL_{T}})&=5(n_{p}-1)+2\label{eq:Toff_cost_SEL_T},
\end{align}
where the ancilla cost excludes the unary-iteration ancillas~\cite{babbush2018} used to control the swaps of the momenta. We include this additional ancilla cost later in \cref{subsec:SEL_H_implement}. There, we also account for the cost of inverting the control-$\SWAP$s between the data and ancilla registers, which we perform after $\SEL_{T}$.

\subsection{\texorpdfstring{$\SEL_{V}$}{SELV}} \label{subsubsec: time-evo_SEL_V}
We implement 
\begin{multline}
\SEL_{V}:\ket{b}_{b}\ket{\bnu}_{h}\ket{i}_{\ell}\ket{j}_{m}\ket{\mathbf{p}_{i}}\ket{\mathbf{q}_{j}}\rightarrow\\(-1)^{f(\mathbf{p},\mathbf{q},\boldsymbol{\nu},i,j)}\ket{b}\ket{\bnu}_{h}\ket{i}_{\ell}\ket{j}_{m}
\ket{\mathbf{p}_{i}+\bnu}\ket{\mathbf{q}_{j}-\bnu},
\end{multline}
with the Boolean function
\begin{multline}
    f(\mathbf{p},\mathbf{q},\boldsymbol{\nu},i,j)= b\left([(\mathbf{p}+\boldsymbol{\nu})\notin G]\lor[(\mathbf{q}-\boldsymbol{\nu})\notin G]\right) \\ \oplus \left([i< \eta_{e}+1]\oplus[j<\eta_{e}+1]\right).   
\end{multline}
We perform the arithmetic on the momenta as follows. First, controlled on register $\ell$ in \cref{eq:Coulomb_state}, we $\SWAP$ data register $\ket{\mathbf{p}_{i}}$ into an ancilla register of size $3n_{p}$, add $\bnu$ to it, and controlled-$\SWAP$ it back. We then subtract $\bnu$ from $\ket{\mathbf{q}_{j}}$ using the same procedure, but controlled on register $m$ in \cref{eq:Coulomb_state}. Since both have to be swapped back and forth, and since $i$ and $j$ range over all particles, the Toffoli cost of the controlled $\SWAP$s is $12\eta n_{p}$. However, we assign these costs to $\SEL_{H}$ later in \cref{subsec:SEL_H_implement}. To perform $\SEL_{V}$, we apply both a controlled phase and controlled additions on the data register. 

To apply the controlled phase $(-1)^{f(\mathbf{p},\mathbf{q},\boldsymbol{\nu},i,j)}$, apart from the last two terms that apply phases for electron charges, we follow the procedure for implementing the corresponding function in~\cite{su2021}. We perform the controlled phasing for the electronic charges via a control-$Z$ onto the register $b$ in state $\ket{+}_{b}$ as part of the unary iteration on the Coulomb state, which determines which registers are to be swapped into the ancilla register. Since controlled-$Z$ is a Clifford gate, there is no Toffoli cost for this modification.

The controlled addition of the momenta $\bnu$ into the momentum registers $\ket{\mathbf{p}_{i}}$ and $\ket{\mathbf{q}_{j}}$ is controlled by the ancilla register that the control circuit in $\SEL_{H}$ outputs on (see \cref{subsec:SEL_H_implement}). This requires one qubit fewer than in~\cite{su2021}. To add $\bnu$ into the momentum register, we first copy one component of $\bnu$ into an ancilla register of size $n_{p}$. For the addition of the momenta, we require $n_{p}+1$ ancilla qubits. In addition, we need 2 ancillas to deal with arithmetic overflow for every spatial component of the momentum, totaling 6 qubits. 

Overall,
\begin{align}
C_{\anc}\left(\contr{\SEL_{V}}\right)&=2n_{p} + 7\label{eq: SEL_V_anc_cost}\\
C_{\Toff}(\contr{\SEL_{V}})&=24 n_{p}\label{eq: SEL_V_Toff_cost}. 
\end{align}

\subsection{\texorpdfstring{$\SEL_{H}$}{SELH}}\label{subsec:SEL_H_implement}
Given the preceding implementations of $\SEL_{T}$ and $\SEL_{V}$, we implement $\SEL_{H}$ so that it selectively applies either $\SEL_{T}$, $\SEL_{V}$, or the identity, controlled on the values of a set of flag registers that indicate successful state preparation in the $\PREP_{H}$ register. 
The general form of $\SEL_{H}$ is 
\begin{multline}\label{eq:SEL_H_definition}
\sum_{x_{\eq}, x_{a},x_{g,k}}\ket{x_{\eq}, x_{a},x_{g,k}}\bra{x_{\eq}, x_{a},x_{g,k}}_{a,g,k} \\ \otimes \SEL_{(x_{\eq},x_{a},x_{g,k})},
\end{multline}
where $\SEL_{(x_{\eq},x_{a},x_{g,k})}\in\{\SEL_{T}, \SEL_{V}, I\}$, which is controlled by the values of the flag registers $(x_{\eq}, x_{a},x_{g,k})\in\{0,1\}^{3}$. Here, $\ket{x_{\eq}}$ is a register that flags the joint success of all uniform state preparations in $\PREP_{H}$, register $\ket{x_{g,k}}$ flags the success branch of $\proxPREP_{V}$, and register $\ket{x}_{a}$ is given in \cref{eq:control_state}. 

There are two possible choices of $\SEL_{(x_{\eq},x_{a},x_{gk})}$ for a given triple $(x_{\eq},x_{a},x_{g,k})$ of flag values, as detailed in \cref{app:SEL_H_implement}. For both choices, we can implement the associated controls to apply either $\SEL_{T}$ or $\SEL_{V}$ via one additional ancilla and one multi-qubit Toffoli gate and Clifford gates. In summary, the control circuit checks the success of the preparation of the uniform states in $\proxPREP_{H}$, the state of register $a$, and the success of $\proxPREP_{V}$ via controls and anti-controls on the appropriate flag registers. Conditioned on the success of these state preparations and on $\ket{x_{a}}$, we flip a qubit that controls the application of $\SEL_V$ and anti-controls the application of $\SEL_{T}$. This circuit costs at most $6$ Toffoli gates.  

We noted above the need to precede the applications of both $\SEL_{T}$ and $\SEL_{V}$ by controlled-swapping of either one or two momentum registers into ancillas and then back again, which has cumulative Toffoli cost $18\eta n_{p}$ for the controlled $\SWAP$s and $6\eta-12$ for the unary iteration that implements the controls on the index register of the particles~\cite{su2021}. The total Toffoli cost is therefore
\begin{align}
C_{\Toff}(\SEL_{H})&=6(C_{\Toff}(\textsc{cswaps})+C_{\Toff}\left(\mathrm{UnaryIt}\right))\nonumber\\&\quad +C_{\Toff}(\mathrm{controls})+C_{\Toff}(\textrm{ctrl-}\SEL_{T})\nonumber\\
&\quad +C_{\Toff}\left(\textrm{ctrl-}\SEL_{V}\right),
\label{eq:Toff_cost_SEL_H}
\end{align}
where $C_{\Toff}(\textsc{cswaps})=3\eta n_{p}$, $C_{\Toff}(\mathrm{UnaryIt})=\eta-2$, and  $C_{\Toff}(\mathrm{controls})=6$ for the controls in \cref{eq:SEL_H_definition} to implement $\SEL_{H}$. Therefore, the total cost of $\SEL_{H}$ is
\begin{align}
C_{\Toff}(\SEL_{H})=18\eta n_{p}+6\eta+5n_{p}+24n_{p}-9.
\label{eq:Toff_cost_SEL_H_const_factor}
\end{align}
The cost of the controlled version of $\SEL_{H}$ is
\begin{align}
C_{\Toff}(\textrm{ctrl-}\SEL_{H})=  C_{\Toff}(\SEL_{H})+1
\label{eq:ctrl_SEL_H_cost}
\end{align}
because we only need one more control on the multi-qubit Toffoli gate in the control circuit of $\SEL_{H}$ that conditions on the success of the uniform state preparations.

In terms of ancilla costs, we need $3n_{p}$ ancillas into which the value of the the $j$th momentum register is swapped, $n_{\eta}-1$ ancillas for the unary iteration, 5 ancillas for the controls in \cref{eq:SEL_H_definition}, as well as ancillas for $\contr{\SEL_{T}}$ and $\contr{\SEL_{V}}$,
\begin{align}
C_{\anc}(\SEL_{H})&=\max\{C_{\anc}(\contr{\SEL_{T}}),C_{\anc}(\contr{\SEL_{V}})\}\nonumber\\
&\quad{} +C_{\anc}(\textsc{cswaps})+C_{\anc}(\mathrm{UnaryIt})\nonumber\\
&\quad{}+C_{\anc}(\mathrm{controls})\\
&=5n_{p}+n_{\eta}+11. \label{eq:Anc_cost_SEL_H}
\end{align}

\subsection{Block-encoding \texorpdfstring{$U_{H}$}{UH}}\label{subsec:U_H}
We are now in position to assemble the block-encoding
\begin{align}
U_{H}=\proxPREP_{H}^{\dagger}\SEL_{H}\proxPREP_{H},
\label{eq:block_encoding_U_H}
\end{align}
which requires 
\begin{align}
C_{\anc}(U_{H})&=C_{\anc}(\proxPREP_{H})+C_{\anc}(\SEL_{H})\label{eq:anc_cost_U_H}\\
C_{\Toff}(U_{H})&=C_{\Toff}(\proxPREP_{H})+C_{\Toff}(\SEL_{H})\nonumber\\
&\qquad {} + C_{\Toff}(\proxPREP^{\dagger}_{H})\label{eq:Toff_cost_U_H},
\end{align}
with the ancilla subcosts given in \cref{eq:anc_cost_Prep_H,eq:Anc_cost_SEL_H} and the Toffoli subcosts given in \cref{eq:Toff_cost_Prep_H,eq:Toff_cost_SEL_H,eq:Toff_cost_UNPREP_H}, respectively. The expansion of $C_{\anc}(U_{H})$ into its components is summarized in \cref{app:qubit_cost}.

$U_{H}$ forms a $(\tilde{\lambda}_{H}, q_{H}, \epsilon_{H})$ block encoding with the three parameters given as follows. 
First, the LCU-norm is
\begin{align}
\tilde{\lambda}_{H}=\frac{1}{P_{\eq}}\max\left\{\lambda_{T}+\lambda_{V},\frac{\lambda_{V}}{p_{\bnu}p_{\zeta}}\right\},
\label{eq:lambda_H_tilde}
\end{align}
which we derive in \cref{app:SEL_H_implement}. Here, $\lambda_{T}$ and $\lambda_{V}$ are given in \cref{eq:lambda_T,eq:lambda_V} and the success probabilities $p_{\bnu}$ and $p_{\zeta}$ of preparing $\ket{\cL_{\nu}}$ and $\ket{\cL_{\doublezeta}}$ are defined in \cref{eq:p_bnu_def,eq:def_p_zeta}. $P_{\eq}$ is the probability of all uniform state preparations in $\proxPREP_{H}$ succeeding. These occur independently as subroutines in the preparations of $\ket{\cL_{w}}, \ket{\cL_{m}}$ and $\ket{\cL_{\doublezeta}}$. Therefore, $P_{\eq}=\Ps(3,8)\Ps(\eta,b_{r})\Ps(\eta,b_{r})^2$, where $\Ps(n,b_{r})$ for $n,b_{r}\in \mathbb{N}$ is the success probability of preparing a uniform superposition over $n$ basis states using rotations with $b_{r}$-bit precision~\cite{su2021}. For $b_{r}=8$, this probability is very close to one and is given in \cref{eq:prob_success_unif}. 
Second, the size of the block-encoding ancilla register is $q_{H}=C_{\anc}(U_{H})$.
And third, the block-encoding error is
\begin{align}
    \big\|H-\tilde{H}\big\|_{\infty}\leq \epsilon_{H},
\label{eq:Ham_blockenc_definition}
\end{align}
for the approximately encoded Hamiltonian $\tilde{H}=\tilde{\lambda}_{H}\bra{0^{q_{H}}}\proxPREP_{H}^{\dagger}\SEL_{H}\proxPREP_{H}\ket{0^{q_{H}}}$, as can be obtained from the definition of $\proxPREP_{H}$ in \cref{eq:proxPREP_H}.

The block-encoding error $\epsilon_{H}$ constitutes the error budget available for the block encoding. As we show in \cref{app:bound_H_blockenc}, it decomposes as 
\begin{align} \epsilon_{H}=\epsilon_{T}+\epsilon_{V}+\epsilon_{\theta}.
\label{eq:epsilon_H}
\end{align}
Here, the block-encoding errors of the kinetic and potential terms are defined via 
\begin{align}
    \| T-\tilde{T}\|_{\infty} &\leq \epsilon_{T}\label{eq:epsilon_T}\\
    \| V-\tilde{V}\|_{\infty} &\leq \epsilon_{V}\label{eq:epsilon_V},
\end{align}
where $\tilde{T}$ and $\tilde{V}$ are defined analogously to $\tilde{H}$ using the respective $\proxPREP$ and $\SEL$ routines. Finally, $\epsilon_{\theta}$ bounds the error in the block-encoding of $H$ due to the preparation of  $\ket{\proxL_{\theta}}$, which is a $\Delta\theta$-approximation of $\ket{\cL_{\theta}}=R_{z}(\theta)\ket{+}$, where 
\begin{align}
    \Delta\theta=|\theta-\tilde{\theta}|\leq2^{-n_{\theta}}
\end{align}
is the error incurred by performing the $n_{\theta}$-bit approximation $R_{z}(\tilde{\theta})$ of $R_{z}(\theta)$. In \cref{app:bound_H_blockenc}, we show that
\begin{align}   
\epsilon_{\theta}\leq 2\tilde{\lambda}_{H}\Delta\theta.
\label{eq:delta_theta}
\end{align}

Distributing the error budget set by $\epsilon_{H}$ across $\epsilon_{\theta}, \epsilon_{T}$, and $\epsilon_{V}$ in accordance with \cref{eq:epsilon_H} results in explicit parameter values
\begin{align}
    n_{\theta}&=\lceil\log(\tilde{\lambda}_{H}/\epsilon_{\theta})\rceil\\    
    \mu_{T}&=\ceil{\log(\lambda_{T}/\epsilon_{T})}\label{eq:mu_T}\\ 
    n_{\cM}&=\ceil{\log(\lambda_{V}r_{\bnu}/\epsilon_{V})},
\end{align}
which are used to give the costs for $\proxPREP_{H}$ in \cref{eq:Toff_cost_Prep_H}. Because $r_{\bnu}\leq 12$ (see \cref{app:bound_V}), $n_{\cM}=\cO\left(\bits{\lambda_{V}/\epsilon_{V}}\right)$.

\subsection{Qubiterate \texorpdfstring{$\cW$}{W}}\label{subsec:qubiterate}

Performing time evolution using QSP will require a qubiterate that we construct using the block-encoding $U_{H}$ and a reflection operator $\cR_{0}^{(\cW)}=I-2\dyad{0}{0}$, giving
\begin{align}
\widetilde{\cW}=\cR_{0}^{(\cW)}U_{H}=e^{-i\arccos(\tilde{H}/\tilde{\lambda}_{H})},
\label{eq:approx_qubiterate}
\end{align}
where $\cR_{0}^{(\cW)}$ acts only on a subset of ancilla qubits, as described in \cref{app:reflection_cost}.
We will also require $\widetilde{\cW}$ controlled off an ancilla qubit, which is done by adding controls for both $U_{H}$ and $\cR_{0}^{(\cW)}$. 

To perform $\textrm{ctrl-}\cR_{0}^{(\cW)}$, we apply a multi-controlled-$Z$ gate with anti-controls on the output register of $\proxPREP_{H}^{\dagger}$ and a target on the qubit that controls the application of $\widetilde{\cW}$. The Toffoli cost of implementing this gate using measurement-based uncomputation is $C_{\Toff}\big(\contr{\cR_{0}^{(\cW)}}\big)=\out(\PREP_{H}^{\dagger})-1$ and the number of required fresh ancillas is $\out(\PREP_{H}^{\dagger})-2$, where 
\begin{equation}\label{eq:Toff_cost_qubiterate_refl}
\out(\PREP_{H}^{\dagger})=n_{\eta}+6n_{p}+n_{\cM}+2\bits{2\eta_{e}}+11
\end{equation}
is the size of the output register of $\proxPREP_{H}^{\dagger}$, as derived in \cref{app:reflection_cost}.

The cost of applying $\contr{\proxW}$ is then
\begingroup
\allowdisplaybreaks
\begin{align}
C_{\anc}(\contr{\proxW})&=\max\{C_{\anc}(U_{H}), 2(\out(\PREP_{H}^{\dagger})-1)\}\label{eq:anc_cost_ctrl_W}\\
C_{\Toff}(\contr{\proxW})&= C_{\Toff}\left(\textrm{ctrl-}U_{H}\right) +C_{\Toff}(\textrm{ctrl-}\cR_{0}^{(\cW)})\label{eq:Toff_cost_ctrl_W}\\
C_{\Toff}(\textrm{ctrl-}U_{H})&=C_{\Toff}(\proxPREP_{H})+C_{\Toff}(\contr{\SEL_{H}})\nonumber\\
&\qquad {} +C_{\Toff}(\proxPREP^{\dagger}_{H})\label{eq:Toff_cost_ctrl_U_H},
\end{align}
\endgroup
where $C_{\anc}(U_{H})$ is given in \cref{eq:anc_cost_U_H} and where the controls on $\proxPREP$ and $\proxPREP^{\dagger}$ may be dropped in \cref{eq:Toff_cost_ctrl_U_H}.

\subsection{QSP}\label{subsec: time-evo_summary}
Finally, we construct $\tilde{U}_{\prop}$ by applying bidirectional QSP~\cite{BerryPRA24} to $\contr{\proxW}$, see \cref{fig:QSP_circuit}. 
The cost of doing so is
\begin{align}
C_{\anc}(\tilde{U}_{\mathrm{prop}})&=1+C_{\anc}(\tilde{R}_{i}(\tilde{\phi}_{i},\tilde{\gamma}_{i}))+C_{\anc}(\contr{\proxW})\label{eq:anc_cost_U_prop}\\
C_{\anc}(\tilde{R}_{i}(\tilde{\phi}_{i},\tilde{\gamma}_{i}))&=1\\
C_{\Toff}(\tilde{U}_{\mathrm{prop}})&=2C_{\Toff}(\textrm{ctrl-}\widetilde{\cW})+ \tilde{d}\, C_{\Toff}(\widetilde{\cW})\nonumber\\
&\quad {}+(\tilde{d}+1)C_{\Toff}(\tilde{R}_{i}(\tilde{\phi}_{j},\tilde{\gamma}_{j}))\label{eq:Toff_cost_U_prop}\\
C_{\Toff}(\tilde{R}(\tilde{\phi}_{i},\tilde{\gamma}_{i}))&= \tfrac{1}{2}( 0.56 \log
(1/\epsilon_{\rot}) + 5.3),\label{eq:Toff_cost_QSP_rotations}\\
\tilde{d}&=\tilde{\lambda}_{H}t+\log\left(1/\epsilon_{\tilde{d}}\right),\label{eq:degree_QSP}
\end{align}
where $\tilde{d}$ is the number of calls needed to $\proxW$ and where the ancilla and Toffoli costs of $\contr{\proxW}$ are given in \cref{eq:anc_cost_ctrl_W,eq:Toff_cost_ctrl_W}, respectively. 
$\tilde{R}(\tilde{\phi}_{i},\tilde{\gamma}_{i})$ are QSP rotations performed on the QSP control ancilla, as shown in \cref{fig:QSP_circuit}. 
The factor of $1/2$ in \cref{eq:Toff_cost_QSP_rotations} converts the known $T$-gate cost in parentheses~\cite{Kliuchnikov_2023} to Toffoli cost~\cite{su2021}.

The unitary $\tilde{U}_{\prop}$ approximates the true time evolution $U_{\prop}=e^{-iHt}$ up to error 
\begin{align}
    \bigl\| U_{\prop}-\tilde{U}_{\mathrm{prop}} \bigr\|_{\infty}\leq \epsilon_{\prop} \label{eq:epsilon_prop}.
\end{align}
This error is comprised of two contributions: the error $\epsilon_{H}$ in the Hamiltonian block encoding and the error $\epsilon_{\QSP}$ in the systhesis of the time evolution by QSP. Specifically, we show in \cref{app:bound_time_evo} that
\begin{align}
    \epsilon_{\prop} &\leq \bigl\| e^{-iHt}-e^{-i\tilde{H}t} \bigr\|_{\infty}+\bigl\| e^{-i\tilde{H}t}-\tilde{U}_{\prop} \bigr\|_{\infty}\\
    &\leq\epsilon_{H}t+\epsilon_{\QSP},
\end{align}
where 
$\epsilon_{H}\geq\Vert H-\tilde{H}\Vert_{\infty}$ is the Hamiltonian block-encoding error and the QSP error $\epsilon_{\QSP}$ is defined by 
\begin{align}
\big\Vert e^{-i\tilde{H}t}-\tilde{U}_{\prop}\big\Vert_{\infty}\leq \epsilon_{\QSP}.
\end{align}

The QSP error arises from the QSP implementation of the time evolution as the approximate $\tilde{U}_{\prop}=\tilde{f}_{\tilde{d}}(\proxW)$, where $\tilde{f}_{\tilde{d}}(\proxW)$ is a truncated Jacobi-Anger expansion of the function $f=\tilde{\lambda}_{H}t\cos(\cdot)$. As detailed in \cref{app:bound_time_evo}, two sources of error result in
\begin{align}
    \epsilon_{\QSP} &\le  \big\Vert e^{-i\tilde{H}t}-f_{\tilde{d}}(\proxW)\big\Vert_{\infty} + \big\Vert f_{\tilde{d}}(\proxW)-\tilde{U}_{\prop}\big\Vert_{\infty} \nonumber\\
    &\le \epsilon_{\tilde{d}}+(\tilde{d}+1)(\epsilon_{\rot}+\epsilon_{\phi}+\epsilon_{\gamma})\label{eq:epsilon_QSP}.
\end{align}
The first term accounts for the error from truncating the Jacobi-Anger expansion,
\begin{align}
    \big\Vert e^{-i\tilde{H}t}-f_{\tilde{d}}(\proxW)\big\Vert_{\infty}\leq \epsilon_{\tilde{d}}.
\end{align}
The second term arises because QSP implements $f_{\tildegre}$ only approximately as $\tilde{f}_{\tildegre}$, since both the synthesis of the rotations $R_{j}$ and the classical precomputations of the QSP angles $(\gamma_{j},\phi_{j})$ have finite precision, giving
\begin{align}
\big\Vert f_{\tilde{d}}(\proxW)-\tilde{f}_{\tilde{d}}(\proxW)\big\Vert_{\infty}\leq (\tilde{d}+1)(\epsilon_{\rot}+\epsilon_{\phi}+\epsilon_{\gamma}),
\end{align}
where $\epsilon_{\rot}$ is the error for synthesizing each single-qubit rotation $R(\phi_{i},\gamma_{i})$ and $\epsilon_{\phi}$ and $\epsilon_{\gamma}$ are the errors of each classically computed angle $\phi_{i}$ and $\gamma_{i}$, respectively.

The error bounds above allow us to budget $\epsilon_{H}$ and $\epsilon_{\QSP}$ using the constraints provided by 
the total time-evolution error $\epsilon_{\mathrm{prop}}$, the simulation time $t$, and the LCU-norm $\tilde{\lambda}_{H}$. The budget allocated to $\epsilon_{\QSP}$ then constrains the component errors $\epsilon_{\tildegre}$, $\epsilon_{\phi}$, $\epsilon_{\gamma}$, and $\epsilon_{\rot}$.

\begin{figure*}
\resizebox{.999\linewidth}{!}{
\begin{quantikz}[row sep={1cm, between origins},column sep=0.4cm]
\lstick{$\ket{+}_{\QSP}$}&&\ctrl{1}&\gate[1]{\tilde{R}_{0}(\tilde{\theta}_{0},\tilde{\phi}_{0})}&\octrl{1}&&\octrl{1}&\gate[1]{\tilde{R}_{1}(\tilde{\theta}_{1},\tilde{\phi}_{1})}& \ \ldots\ & \octrl{1} & &\octrl{1}&\gate[1]{\tilde{R}_{\tilde{d}}(\tilde{\theta}_{\tilde{d}},\tilde{\phi}_{\tilde{d}})}&\octrl{1}&\\
\lstick{$\ket{0}_{\PREP_{H}}$}&\qwbundle{}&\gate[2]{\proxW}&&\gate[1]{\cR_{0}^{(\cW)}}&\gate[2]{\proxW}&\gate[1]{\cR_{0}^{(\cW)}}&&\ \ldots\ & \gate[1]{\cR_{0}^{(\cW)}} & \gate[2]{\proxW}&\gate[1]{\cR_{0}^{(\cW)}}&&\gate[2]{\proxW^{\dagger}} &\\
\lstick{$|\tilde{\hat{\Psi}}(0)\rangle$}&\qwbundle{}&&&&&&&\ \ldots\ & & & & & &
\end{quantikz}
}
\caption{Circuit for the propagator $\tilde{U}_{\prop}$ via bidirectional QSP. }
\label{fig:QSP_circuit}
\end{figure*}
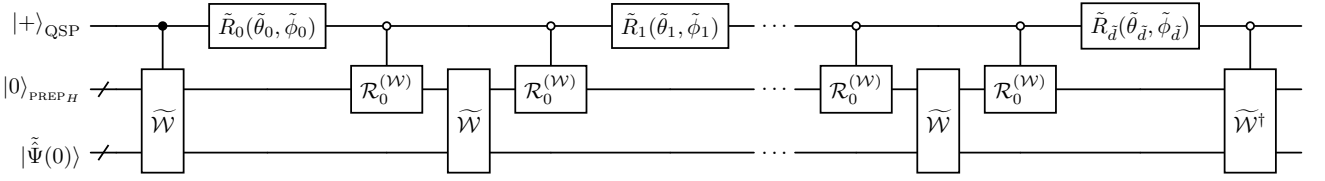

\subsection{Asymptotic cost of time evolution}

In summary, the asymptotic Toffoli cost of $\contr{\proxW}$ is 
\begin{align} C_{\Toff}\bigl(\contr{\proxW}\bigr)=\widetilde{\cO}(\eta),
\end{align}
with the exact expression given in \cref{tab:Gate_costs}.
The time evolution $\tilde{U}_{\mathrm{prop}}$ includes the additional QSP overhead, and its cost (\cref{eq:Toff_cost_U_prop}) has the asymptotic form
\begin{align}\label{eq:asympt_Toff_cost_U_prop}
 C_{\Toff}(\tilde{U}_{\mathrm{prop}})  =\widetilde{\cO}\left(\eta^2 \frac{N_{\grid}^{2/3}}{\Omega^{2/3}}t+\eta^3 \frac{N_{\grid}^{1/3}}{\Omega^{1/3}}t\right),
\end{align}
which is cubic in the number of particles $\eta$ and sublinear in the grid size $N_\mathrm{grid}$. Equivalently, the cost is quadratic in the momentum cutoff $K_{\max}=(N_{\grid}/\Omega)^{1/3}$.

The ancilla costs $C_{\anc}(\tilde{U}_{\prop})$ for time evolution are summarized in \cref{app:qubit_cost}.

\section{Measuring observables} \label{sect:measure}

The final step in our algorithm is to extract chemically relevant information from the time-evolved molecular wavefunction $\ket{\tilde{\hat{\Psi}}(t)}$. This information is obtained in the form of expectation values of observables such as bond lengths, reaction yields, or reaction rates. If we do not have access to the PWB representation of the observable $O$, we can first transform $\ket{\tilde{\hat{\Psi}}(t)}$ into the eigenbasis of $O$ before extracting the expectation value. To do so, we find $\ket{\tilde{\hat{\Psi}}(t)}= \tilde{U}_{B}\ket{\tilde{\hat{\Psi}}(t)}$, where $\tilde{U}_{B}$ is an $\epsilon_{B}$-approximation to the basis transformation $U_{B}$ from the PWB to the eigenbasis of $O$. In the chemically important case of $O$ being a real-space observable, such as a bond length or reaction yield, the basis-change is a QFT, $\ket{\tilde{\Psi}(t)}=\QFT ^{\dagger}\ket{\tilde{\hat{\Psi}}(t)}$. 
The objective is then to estimate the expectation value
\begin{align}
\langle{O}\rangle = \bra{\Psi(t)}O\ket{\Psi(t)}
\end{align}
of the observable $O$ to additive error $\epsilon$. This can be done by estimating $\bra{\tilde{\Psi}(t)}O\ket{\tilde{\Psi}(t)}$ using either of two schemes. 

The first, shot-noise-limited scheme is to directly sample bit-strings, representing eigenstates of $O$, from $\cO(\sigma^2/\epsilon^{2})$ copies of $\ket{\tilde{\Psi}(t)}$, where $\sigma^2=\langle{O^2}\rangle-\langle{O}\rangle^2$ is the variance of the observable. If $\sigma^2$ is unknown, one may use $\norm{O}^2$ as an upper bound. Then, $\langle{O}\rangle$ is estimated as the sample mean of the set of expectation values $\left\{\bra{x_{i}}O\ket{x_{i}}\right\}$, where $\ket{x_{i}}$ are the bit-strings sampled from $\ket{\tilde{\Psi}(t)}$.

The second scheme is to use quantum amplitude estimation (QAE) \cite{Brassard_2002}, a Heisenberg-limited approach. Doing so assumes access to an approximate unitary block-encoding $\tilde{U}_{O}$ of $O$, with rescaling factor $\lambda_{O}$. Using this block-encoding and unitaries to obtain $\ket{\tilde{\Psi}(t)}$, we construct a Grover-like walk iterate and estimate $\bra{\tilde{\Psi}(t)}O\ket{\tilde{\Psi}(t)}$ up to error $\epsilon$ using quantum phase estimation with $O\left(\lambda_{O}/\epsilon\right)$ queries to the iterate's controlled version, applied to $\ket{\tilde{\Psi}(t)}$.

Depending on $\epsilon$, $\sigma^2$, and the cost of block-encoding the observable $O$, either scheme may be more efficient. For low-precision calculations, sampling may be adequate. By contrast, for high-precision calculations, QAE asymptotically provides a quadratic improvement in the precision. The exact cross-over point in $\epsilon$ for which QAE is more efficient depends on the type of dynamics and the nature of the observable.
The choice of scheme therefore depends on a tradeoff between query complexity and cost per query for each scheme. The quadratically better query complexity for QAE comes at the cost of having to construct the iterate, which we now discuss.  

\subsection{Quantum amplitude estimation}
To estimate $\langle{O}\rangle$ via QAE, we use the scheme of~\cite{Rall_2020}. It requires a $(\lambda_{O},q_{O},\epsilon_{O})$-block-encoding of $O$, i.e., a unitary $\tilde{U}_O$ such that
\begin{align}
    \norm{O-\lambda_{O}\left(\bra{0^{q_{O}}}\otimes I\right) \tilde{U}_{O}\left(\ket{0^{q_{O}}}\otimes I\right) }_{\infty} \leq \epsilon_{O},
\end{align}
where $\lambda_{O}\geq \norm{O}_{\infty}$ is a rescaling factor and $q_{O}$ is the number of block-encoding ancillas, which depends on the observable and choice of block-encoding. 
Therefore,
\begin{align}
\left|\langle O \rangle -\lambda_{O}
\left(\bra{0^{q_{O}}}\otimes \bra{\tilde{\Psi}(t)}\right) \tilde{U}_{O}\left(\ket{0^{q_{O}}}\otimes \ket{\tilde{\Psi}(t)}\right)\right|\leq \epsilon_{O},
\end{align}
meaning we can estimate $\bra{\tilde{\Psi}(t)}O\ket{\tilde{\Psi}(t)}$ to within additive error $\epsilon_{O}$ by computing the overlap between $\tilde{U}_{O}\ket{0^{q_{O}}}\otimes \ket{\tilde{\Psi}(t)}$ and $\ket{0^{q_{O}}}\otimes \ket{\tilde{\Psi}(t)}$ and multiplying it by $\lambda_{O}$.

Estimating the overlap between two states is a standard QAE task~\cite{Brassard_2002}. QAE involves performing quantum phase estimation (QPE) on the walk unitary $\widetilde{W}= -\tilde{\cR}_{2}\tilde{\cR}_{1}$, where, in our case, the reflections are
\begin{align}
\cR_{1}&= I-2 \ket{\tilde{\Psi}(t)}\bra{\tilde{\Psi}(t)}\otimes \ket{0^{q_{O}}}\bra{0^{q_{O}}} \\ 
\cR_{2}&=I-2 \tilde{U}_{O}\left(\ket{\tilde{\Psi}(t)}\bra{\tilde{\Psi}(t)}\otimes \ket{0^{q_{O}}}\bra{0^{q_{O}}}\right) \tilde{U}_{O}^{\dagger}.
\end{align}
Given the block-encoding $\tilde{U}_{O}$ and the state $\ket{\tilde{\Psi}(t)}$ which is prepared by the unitary 
\begin{equation}
\tilde{U}=\tilde{U}_{B} \tilde{U}_{\prop}\tilde{U}_{\mathrm{ISP}},
\end{equation}
we construct
\begin{align}\label{eq:QAE_iterate}
\widetilde{W}=\underbrace{(\tilde{U}_{O}\tilde{U}\cR_{0}^{(2)}\tilde{U}^{\dagger}\tilde{U}_{O}^{\dagger})}_{\cR_{2}}\underbrace{(\tilde{U}\cR_{0}^{(1)}\tilde{U}^{\dagger})}_{\cR_{1}},
\end{align}
where $\cR_{0}^{(1)}=2\dyad{0}{0}-I$ is a reflection around the all-zeros state on the output register of $\tilde{U}^{\dagger}$ and $\cR_{0}^{(2)}$ is a reflection around the all-zeros state on the output register of $\tilde{U}^{\dagger}\tilde{U}_{O}^{\dagger}$. For a projector observable, $O^2=O$, the reflection $\cR_{1}$ can be implemented more efficiently as a $(-Z)$ gate controlled on the value of the ancilla qubit of the block-encoding of the observable (the flag qubit in \cref{fig:walk_operator_circuit}), which has no Toffoli cost.

Within the two-dimensional subspace spanned by
\begin{align}
\ket{\omega}&=\ket{\tilde{\Psi}(t)}\otimes\ket{0}\\
\ket{\omega^{\perp}}&=\frac{\tilde{U}_{O}\ket{\omega}-(1/\lambda_{O})O\ket{\omega}}{\sqrt{1-\langle O^2\rangle/\lambda_{O}^2}},
\end{align}
$\widetilde{W}$ acts as the rotation $e^{i\tilde{\theta} Y}$ and has eigenvectors 
\begin{align}
\ket{\omega^{\pm}}&=(\ket{\omega}\pm i\ket{\omega^{\perp}})/\sqrt{2}
\end{align}
with associated
eigenvalues $e^{\pm2i\tilde{\theta}}$, where 
\begin{equation}
    \tilde{\theta}=\arccos(\expval{\tilde{O}}{\tilde{\Psi}(t)}/\lambda_{O}),
\end{equation}
with $\tilde{O}=\lambda_{O}\left(\bra{0^{k}}\otimes I\right) \tilde{U}_{O}\left(\ket{0^{k}}\otimes I\right)$. By estimating the phase $\tilde{\theta}$ to precision $\epsilon_{\QAE}/\lambda_{0}$, we obtain an estimate $\widetilde{\langle \tilde{O}\rangle}$ of the expectation value that satisfies 
\begin{align}
    |\expval{O}{\tilde{\Psi}(t)}-\widetilde{\langle \tilde{O}\rangle}|\leq \epsilon_{\QAE}+\epsilon_{O}=\epsilon_{\meas}.
\end{align}
Using phase doubling~\cite{babbush2018}, this can be done using $2^{s-1}=\lambda_{O}/(2\epsilon_{\QAE})$ calls to the multiplexed walk operator 
\begin{align}
    \dyad{0}{0}\otimes \widetilde{W}^{\dagger}+\dyad{1}{1}\otimes \widetilde{W},
\end{align}
within QPE, where $s=\bits{\lambda_{O}/\epsilon_{\QAE}}$ is the size of the QPE ancilla register. 

The ancilla cost of QAE is then
\begingroup
\allowdisplaybreaks
\begin{align}
C_{\anc}(\textrm{QAE})&=s+C_{\anc}(\widetilde{W})\label{eq:anc_cost_QAE}\\
C_{\anc}(\widetilde{W})&=C_{\anc}(\cR_{2})\label{eq:anc_cost_W}\\
C_{\anc}(\cR_{2})&=\max\{C_{\anc}(U_O \tilde{U}),C_{\anc}(\cR_{0}^{(2)})\}\label{eq:anc_cost_R2}\\
C_{\anc}(\tilde{U}_{O}\tilde{U})&=C_{\anc}(\tilde{U}_{O})+C_{\anc}(\tilde{U})\\
C_{\anc}(\tilde{U})&=\max\{C_{\anc}(\tilde{U}_{\isp})-3\eta_{n}n_{\ext},\nonumber\\
&\qquad C_{\anc}(\tilde{U}_{\prop})\} + 3\eta_{n}n_{\ext}\\
C_{\anc}(\cR_{0}^{(2)})&=3(\eta_{e}n_{p}+\eta_{n}\bar{n}_{\isp})+q'_{O}-2,
\end{align}
\endgroup
where $q'_{O}\leq q_{O}$ is the size of the subset of observable block-encoding ancillas that needs to be reflected on and where $n_{\ext}=\bar{n}_{\isp}-n_{p}$. $C_{\anc}(\tilde{U}_{\isp})$ and $C_{\anc}(\tilde{U}_{\prop})$ are given in \cref{eq:anc_cost_isp_full,eq:anc_cost_U_prop}. $C_{\anc}(\tilde{U}_{O})$ depends on the chosen block-encoding and is therefore not broken down further. 
The Toffoli cost is 
\begin{align}
C_{\Toff}(\textrm{QAE})&=\frac{\lambda_{O}}{2\epsilon}C_{\Toff}(\widetilde{W})\label{eq:Toff_cost_QAE}\\
C_{\Toff}(\widetilde{W})&=C_{\Toff}(\cR_{2})+ C_{\Toff}(\cR_{1})\label{eq:Toff_cost_AA_iterate}\\
C_{\Toff}(\cR_{2})&=2(C_{\Toff}(\tilde{U}_{O})+C_{\Toff}(\tilde{U}))\nonumber\\
&\qquad +C_{\Toff}(\cR_{0}^{(2)})\\
C_{\Toff}(\cR_{0}^{(2)})&= 3(\eta n_{p}+\eta_{n}n_{\ext})+q'_{O}-1\\
C_{\Toff}(\cR_{1})&=2C_{\Toff}(\tilde{U})+C_{\Toff}(\cR_{0}^{(1)})\\
C_{\Toff}(\cR_{0}^{(1)})&= 3(\eta n_{p}+\eta_{n}n_{\ext})-1\\
C_{\Toff}(\tilde{U})&=C_{\Toff}(\tilde{U}_{B})+C_{\Toff}(\tilde{U}_{\prop})\nonumber\\
&\qquad +C_{\Toff}(\tilde{U}_{\isp})\label{eq:Toff_cost_U},
\end{align}
where $C_{\Toff}(\tilde{U}_{\isp})$ and $C_{\Toff}(\tilde{U}_{\prop})$ are given in \cref{eq:toff_cost_isp_full,eq:Toff_cost_U_prop}. In the above equations, we used $C_{\Toff}(\tilde{U})=C_{\Toff}(\tilde{U}^{\dagger})$ and assumed we uncompute $\tilde{U}_{O}$ via its standard inversion, so that $C_{\Toff}(\tilde{U}_{O})=C_{\Toff}(\tilde{U}_{O}^{\dagger})$.
The cost of $\tilde{U}_{B}$ depends on the observable and is assumed to be efficient, which holds for the $\QFT$. 
For a projector observable, $O^2=O$, one may simplify the construction of $\cR_{1}$, as we will see in \cref{sect:yield_protocol}.

\begin{figure*}
\resizebox{.999\linewidth}{!}{
\begin{quantikz}[row sep={1cm, between origins},column sep=0.4cm]
&&\ctrl{1}&\\
\lstick[1]{ext}&\qwbundle{}&\gate[3]{\widetilde{W}_{\Pi_{S}} \vphantom{U_{\Pi_{S}}^{\dagger}}}&\\
\lstick[1]{data}&\qwbundle{}&&\\
\lstick[1]{flag}&&&
\end{quantikz}
\,=\,
\begin{quantikz}[row sep={1cm, between origins},column sep=0.4cm]
&\ctrl{3}&&&&&\ctrl{1}&&&&&\\
&&&&&\gate[2]{\tilde{U}_{\mathrm{ISP}}^{\dagger}}&\gate[3]{\cR_{0}^{(\QAE)}\vphantom{U_{\Pi_{S}}^{\dagger}}}&\gate[2]{\tilde{U}_{\mathrm{ISP}}^{\vphantom{\dagger}}}&&&&\\
& &\gate[2]{U_{\Pi_{S}}^{\dagger}}&\gate[1]{\widetilde{\QFT}^{\vphantom{\dagger}}_{\vphantom{prop}}}&\gate[1]{\tilde{U}_{\mathrm{prop}}^{\dagger}}&&&&\gate[1]{\tilde{U}_{\mathrm{prop}}^{\vphantom{\dagger}}}&\gate[1]{\widetilde{\QFT}^{\dagger}_{\vphantom{prop}}}&\gate[2]{U_{\Pi_{S}}^{\phantom{\dagger}}}&\\
&\octrl{0}&&&&&&&&&&
\end{quantikz}
}
\caption{Circuit for the controlled iterate $\widetilde{W}_{\Pi_{S}}$ on which we perform QAE to estimate a quantum yield with respect to the real-space final state $\ket{\tilde{\Psi}(t)}$. The controls on $\tilde{U}_{\mathrm{prop}}$, $\tilde{U}_{\mathrm{ISP}}$, $\widetilde{\QFT}$ and $U_{\Pi_{S}}$ are dropped as depicted. Ancilla qubits are not shown, except the QAE ancilla ``flag'' and the exterior grid ancillas ``ext'', which continue throughout the circuit.}
    \label{fig:walk_operator_circuit}
\end{figure*}
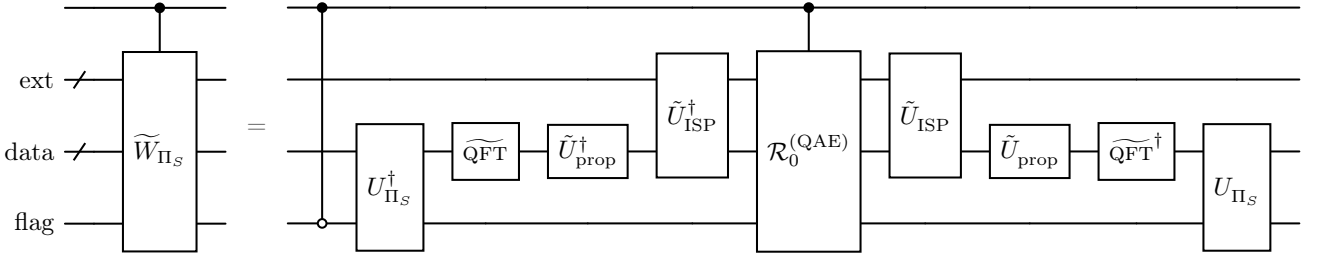

\begin{figure*}
\resizebox{.999\linewidth}{!}{
\begin{quantikz}[row sep={0.7cm, between origins}, column sep=0.5cm,,wire types={n,n,n,n,q,q,n}]
& & & & & \lstick[4]{$\ket{0^s}_{\mathrm{QAE}}$} & \gate[wires=4]{\Had^{\otimes s}} \setwiretype{q} & & & & & \ \ldots \  & \octrl{6} & &\octrl{6}  &\gate[4]{\widetilde{\QFT}^{\dagger}} & \meter{} \\ 
& & & & & \rstick{\hspace{2mm}$\vdots$}  \\
& & & & & & \setwiretype{q} & & \octrl{4} & &\octrl{4} & \ \ldots \   & & & & &  \meter{} \\
& & & & & & \setwiretype{q} & \ctrl{1} & & & &\ \ldots \   & & & & &  \meter{} \\[0.5cm]
\lstick[1]{$\ket{0}_{\mathrm{ext}}^{\otimes 3\eta_n n_\mathrm{ext}}$} & \qwbundle{}&  \gate[wires=2]{\tilde{U}_{\mathrm{ISP}}} & & & &  & \gate[3]{\widetilde{W}_{\Pi_{S}}\vphantom{\widetilde{W}_{\Pi_{S}}^{2^{s-2}}}} & & \gate[3]{\widetilde{W}_{\Pi_{S}}\vphantom{\widetilde{W}_{\Pi_{S}}^{2^{s-2}}}} & & \ \ldots \  & & \gate[3]{\widetilde{W}_{\Pi_{S}}^{2^{s-2}}} & & \\ 
\lstick[1]{$\ket{0}_{\mathrm{data}}^{\otimes C_\mathrm{data}}$} & \qwbundle{} & & \gate[1]{\tilde{U}_{\mathrm{prop}}^{\vphantom{\dagger}}} & \gate[1]{\widetilde{\QFT}^\dagger_{\vphantom{prop}}} & \gate[2]{U_{\Pi_{S}}} & & & & & &\ \ldots \ & & & &  \\ 
\lstick{$\ket{1}_{\mathrm{flag}}$} &\setwiretype{q} & & & & & & &\octrl{0}  &  &\octrl{0} & \ \ldots \   &\octrl{0} & & \octrl{0} &
\end{quantikz}
}
\caption{Circuit for the dynamics algorithm applied to yield estimation, composed of initial state preparation $\tilde{U}_{\mathrm{ISP}}$, dynamics $\tilde{U}_{\mathrm{prop}}$, and amplitude estimation via the QAE iterate $\widetilde{W}_{\Pi_{S}}$, implemented using QPE with phase doubling. Ancilla qubits are not shown, except the QAE ancilla ``flag'' and the exterior grid ancillas ``ext'', which continue throughout the circuit.}
    \label{fig:algo_circuit}
\end{figure*}
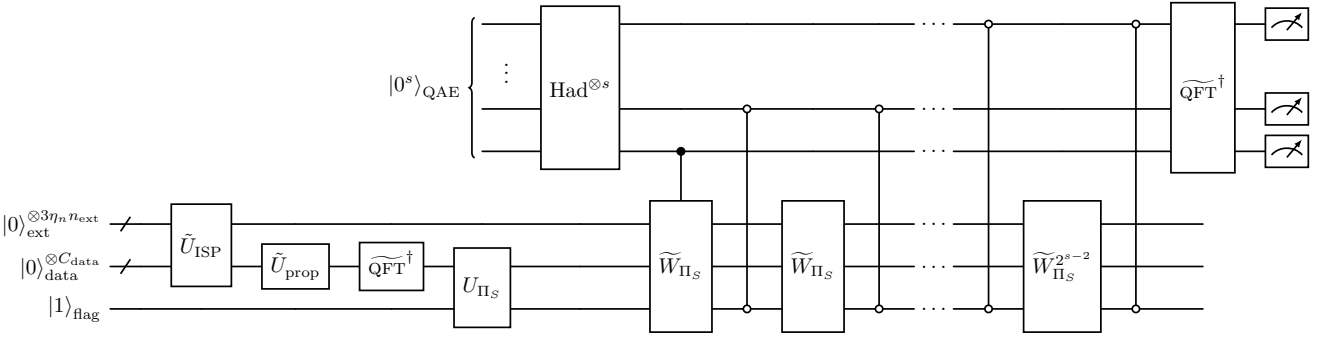

\subsection{Example: Reaction yields and rates}\label{sect:yield_protocol}
Key chemical observables are reaction (or product) yields, which are the probabilities that particular products are created in a reaction at a given time. They determine whether chemical reactions occur or not and, if computed as functions of time, how fast. This is quantified by reaction rates, which can be estimated by numerical differentiation of yields as functions of time or from fits to a time-series of yields.

Here, we show how to efficiently estimate product yields using our QAE scheme. We construct an explicit and cheap block-encoding of the yield observable with a Toffoli cost of $\cO(n_{p}^2)$ that can be exactly synthesized using standard quantum arithmetic circuits.

Chemical reaction products can be identified through reaction (or product) channels~\cite{tannor2007}, which are non-overlapping regions that cover the space of nuclear coordinates~$\mathbf{R}$, as in \cref{fig:reactions}.
A reaction channel $S$ can be assigned the projector observable
\begin{align}
    \Pi_{S}=\sum_{\mathbf{R}\in S} \ket{\mathbf{R}}\bra{\mathbf{R}}.
\end{align}
Because reaction channels partition the space of nuclear positions, these projectors form an orthogonal and complete set, $\Pi_{S}\Pi_{S'}=\delta_{SS'}\Pi_{S}$ and $\sum_{S} \Pi_{S}=I$. 
For the real-space wavefunction $\ket{\tilde{\Psi}(t)}=\QFT^{\dagger}\ket{\tilde{\hat{\Psi}}(t)}$, the yield for channel $S$ is the probability
\begin{align}
\langle \Pi_{S} \rangle = \bra{\tilde{\Psi}(t)}\Pi_{S}\ket{\tilde{\Psi}(t)} 
\end{align}
that the dynamics yields the associated outcome upon measuring the positions of the nuclei. The completeness of the projectors ensures that $\sum_S \langle \Pi_{S} \rangle=1$.  

We consider reaction channels to be determined by cutoffs in pairwise internuclear distances. This approach loses no generality because, especially at long times, chemical products can be determined by the distances between nuclei. We define reaction channel $S$ as a region of real-space nuclear coordinates~$\mathbf{R}$ determined by $B_j \le \eta_n(\eta_n-1)/2$ constraints $C_{j,k}(\mathbf{R})$ (with $1\le k \le B_j$) on distances between pairs of nuclei, which must be either smaller or larger than a cutoff $b_{j,k}$,
\begin{equation}
    C_{j,k}(\mathbf{R}) \in \{X_{j,k}(\mathbf{R}), \neg X_{j,k}(\mathbf{R})\},
\end{equation}
where
\begin{equation}
X_{j,k}(\mathbf{R}) = \mathbb{B}\left(\norm{\R_{\alpha_{j,k}}-\R_{\beta_{j,k}}}_2 > b_{j,k}\right),
\end{equation}
with $\mathbb{B}$ denoting the boolean truth value. That is, $C_{j,k}(\mathbf{R}) = X_{j,k}(\mathbf{R})$ if nuclei $\alpha_{j,k}$ and $\beta_{j,k}$ are required to be further than $b_{j,k}$ apart (corresponding to unbonded atoms) and $C_{j,k}(\mathbf{R}) = \neg X_{j,k}(\mathbf{R})$ otherwise (corresponding to bonded atoms). Reaction channel $S$ is then 
\begin{equation}\label{eq:react_chann_def}
    S=\{\mathbf{R} \mid I(\mathbf{R},S) = 1 \},
\end{equation}
where the indicator function $I(\mathbf{R},S)$ indicates whether $\mathbf{R}$ is in $S$,
\begin{equation}
    I(\mathbf{R},S)=\bigwedge_{k \le B_j} C_{j,k}(\mathbf{R}) = 
    \begin{cases}
        1, \text{ if } \mathbf{R} \in S, \\
        0, \text{ if } \mathbf{R} \notin S.
    \end{cases}
\end{equation}

To estimate the yield using our QAE measurement protocol, we define a $(1,1,0)$-block-encoding $U_{\Pi_{S}}$ of the observable $\Pi_{S}$ such that 
\begin{align}
    \Pi_{S}=\left(\bra{0}\otimes I\right)U_{\Pi_{S}}\left(\ket{0}\otimes I\right). 
\end{align}
This block-encoding is of the form
\begin{equation}
    U_{\Pi_{S}} = \sum_{\mathbf{R}} \ket{\mathbf{R}}\bra{\mathbf{R}}_{\data}\otimes X^{I(\mathbf{R},S)\oplus 1}_{\flag}, 
\end{equation}
where $X$ is the Pauli-$X$ gate and where $\oplus$ is logical \textsc{xor}.
$I(\mathbf{R},S)$ can be evaluated efficiently, since only a polynomial number of bond distances need to be checked ($B_{j}=\cO(\eta_n^2)$) and each inequality test takes time polynomial in $n_{p}$. $U_{\Pi_{S}}$ can then be exactly synthesized by a quantum circuit that computes the function 
\begin{align}
    \mathbb{B}\Bigl(\sum_{w\in\{x,y,z\}}(R_{\alpha_{j,k},w}-R_{\beta_{j,k},w})^2 > b_{j,k}^2\Bigr)
\end{align}
for relevant nuclear pairs and stores the outcome in the flag register. We do so using a quantum-arithmetic circuit that, in order, component-wise subtracts the position coordinates, sums the squares of the differences, and tests the inequality against the classical constant $b_{j,k}^2$. The flag register is then flipped when the conditions are not satisfied, which we do via a \textsc{not} gate controlled on the outcomes of all the inequality tests. We give an explicit construction in \cref{app:indicator_circuit}, which has cost 
\begin{align}
C_{\anc}(\Ind)&=3n_{p}^2-n_{p}\label{eq:anc_cost_U_Pi}\\
C_{\mathrm{Toff}}(\Ind)&=3B_j(2 n_p^2 + 2n_p-3)+3 n_{\nuc} (n_{p}-2)-1\label{eq:Toff_cost_U_Pi}
\end{align}
for checking the $B_{j}$ nuclear pairs, involving $n_{\nuc}$ nuclei. In the worst case, $B_{j}=\cO(\eta_{n}^2)$, which would give $C_{\Toff}(\Ind)=\cO(\eta_{n}^{2}n_{p}^2)$. In practice, $B_{j}$ is small and independent of system size, giving a small Toffoli cost of $C_{\Toff}(\Ind)=\cO(n_{p}^2)$. Hence, the cost of $\Ind$ is usually negligible compared to other parts of the algorithm.

Given the above $(1,1,0)$-block-encoding $U_{\Pi_{S}}$, we construct the QAE iterate $\widetilde{W}_{\Pi_{S}}$ for the special case of a projector observable such as a yield, where the subspace defined by $\Pi_{S}$ can be flagged using the block-encoding ancilla of $U_{\Pi_{S}}$. Using this flag qubit enables a simpler construction of the reflection $\cR_{1}$ as 
\begin{align}
\cR_{1}&= I-2I_{\data}\otimes \ket{0}\bra{0}_{\flag}=I\otimes (-Z). 
\end{align}
Then, given the unitary 
\begin{align}\label{eq:U}
\tilde{U}=\tilde{\QFT}^{\dagger} \tilde{U}_{\prop}\tilde{U}_{\mathrm{ISP}},
\end{align}
we construct the QAE iterate for yield estimation as
\begin{align}
\widetilde{W}_{\Pi_{S}}=(\Ind \tilde{U}\cR_{0}^{(\QAE)}\tilde{U}^{\dagger}\Ind^{\dagger})(I\otimes (-Z)),
\end{align}
where $\cR_{0}^{(\QAE)}= 2\ket{0}\bra{0}_{}\otimes \ket{0}\bra{0}_{\flag}-I$ is the reflection around the all-zeros state on the output register of $U_{\Pi_{S}}^{\dagger}\tilde{U}^{\dagger}$.
The circuit implementation of a controlled version of $\widetilde{W}_{\Pi_{S}}$ is shown in \cref{fig:walk_operator_circuit}. 

Given access to the state $\ket{\tilde{\Psi}(t)}$, the ancilla cost of estimating $\bra{\tilde{\Psi}|(t)} \Pi_{S} \ket{\tilde{\Psi}(t)}$ up to additive error $\epsilon$ is 
\begin{align}
C_{\anc}(\QAE(\Pi_{S}))&=s+C_{\anc}(\widetilde{W}_{\Pi_{S}})\\
C_{\anc}(\widetilde{W}_{\Pi_{S}})&=1+3\eta_{n}n_{\ext} + \max\{C_{\anc}(U_{\Pi_{S}})-1, \nonumber\\
&\qquad{} C_{\anc}(\tilde{U}_{\prop}), C_{\anc}(\tilde{U}_{\isp})-3\eta_{n}n_{\ext},\nonumber\\
&\qquad{} C_{\anc}(\cR_{0}^{(\QAE)})\}\\
C_{\anc}(\cR_{0}^{(\QAE)})&=3(\eta_{e}n_{p}+\eta_{n}\bar{n}_{\isp})-1,
\end{align}
with $C_{\anc}(U_{\Pi_{S}})$ given in \cref{eq:anc_cost_U_Pi}, $C_{\anc}(\tilde{U}_{\prop})$ given in \cref{eq:anc_cost_U_prop} and $C_{\anc}(\tilde{U}_{\isp})$ given in \cref{eq:anc_cost_isp_full}. 
In $C_{\anc}(\widetilde{W}_{\Pi_{S}})$, the first two terms---$1+3\eta_{n}n_{\ext}$---account for the QPE flag ancilla and the exterior grid ancillas, which continue throughout the circuit.

The Toffoli costs are 
\begin{align}
C_{\Toff}(\textrm{QAE}(\Pi_{S}))&= C_{\Toff}(\widetilde{W}_{\Pi_{S}})/2\epsilon \label{eq:Toff_cost_yield_estimation}\\
C_{\Toff}(\widetilde{W}_{\Pi_{S}})&=2(C_{\Toff}(U_{\Pi_{S}})+C_{\Toff}(\tilde{U}))\nonumber\\
&\qquad {} +C_{\Toff}(\cR_{0}^{(\QAE)})\\
C_{\Toff}(\tilde{U})&=C_{\Toff}(\widetilde{\QFT})+C_{\Toff}(\tilde{U}_{\prop})\nonumber\\
&\qquad {} +C_{\Toff}(\tilde{U}_{\isp})\\
C_{\Toff}(\cR_{0}^{(\QAE)})&=3\eta_{e}n_{p}+3\eta_{n}\bar{n}_{\isp},\label{eq:Toff_cost_R0_QAE}
\end{align}
where $C_{\Toff}(U_{\Pi_{S}})$ is given in~\cref{eq:Toff_cost_U_Pi} and~\cite{nam2020approximate}
\begin{multline}\label{eq:Toff_cost_QFT}
C_{\Toff}(\widetilde{\QFT}) = 4 n(\log (n/\epsilon_{\QFT})-2)\\+0.6 \log \left(n\log\left({n/\epsilon_{\QFT}}\right)/\epsilon_{\QFT}\right).
\end{multline}
This protocol, shown in \cref{fig:algo_circuit}, computes the yield of a single reaction channel. To compute all yields, one can either run a separate computation for each channel or use more sophisticated methods for extracting multiple expectation values~\cite{Huggins_2022}, which offer a quadratic improvement of the cost in the number of observables compared to running QAE for each channel separately.

\renewcommand{\arraystretch}{1.5}
\begin{table*}
	\centering
	\begin{longtable}
	    {p{1.7cm}p{5.5cm}p{9.4cm}}
		\toprule
		Subroutine & Description & Cost (Toffoli gates) \\
		\midrule
        $\widetilde{\textsc{asp}}^{(x)}$ & Prepare arbitrary state of $D^{(x)}$ computational basis states, \cref{eq:ctoff_asp} &   $2^{5/2}(1 + \sqrt{2}) D^{(x)}(b_{\textsc{asp}}^{(x)} + 1)^{1/2} + 2 \log(D^{(x)})(b_{\textsc{asp}}^{(x)} - 4)$ \\ 
		$\mathrm{SoSlat}^{(x)}$    & Prepare linear combination of $D^{(x)}$ configurations in ONB, \cref{eq:prep_2nd_t}   & $D^{(x)}(2 \log D^{(x)} + 3)$ \\
        \textsc{onb2mob} & Transform from ONB to MOB, \cref{eq:cost_onb2mob} &  $N_{\mathrm{MOB}} (2 \eta_e  + \lceil \log (\eta_e + 1) \rceil + 
        \eta_e \lceil \mathrm{log}(N_{\mathrm{MOB}})\rceil -4 )$
        \\
        \textsc{asym}     & Antisymmetrize electrons, \cref{eq:Antisym}   & $2(\eta_e-1)(\log \bar{n}_p + 1) + \bar{n}_p \log \bar{n}_p (1 + \log \bar{n}_p )(6\log \bar{n}_p + n_p + 1)/4$ \\
        $(\widetilde{W}^{(e)})^{\otimes\eta_e}$     & Transform electronic states from MOB to PWB, \cref{eq:Elec_Gates} &  $\eta_e N_{\mathrm{MOB}} n_p +2 \eta_e \sum_{a=1}^{N_{\mathrm{MOB}}} \sum_{i=1}^{n_p} \big( 32(1+\sqrt{2})(b_{\mathrm{rot}}^{(e)}+1)^{1/2}\cdot {m}_{ai}^{(e)} (\bar{{m}}_{ai}^{(e)})^{1/2}
        +(8 b_{\mathrm{rot}}^{(e)}-15) {m}_{ai}^{(e)} \log (2 \bar{{m}}_{ai}^{(e)}) \big)$\\
            \textsc{onb2smb} & Transform from ONB to SMB, \cref{eq:cost_onv2sm} &  $N_{\mathrm{vib}} N_{\mathrm{SMB}} ( \lceil \log N_{\mathrm{SMB}} \rceil -2 )$\\
        $\widetilde{W}^{(n)}$  & Transform nuclear states from MOB to PWB, \cref{eq:one_vib_to_PWB_gates,eq:full_vib_to_PWB_gates} & $\sum_{i=1}^{3\eta_n}\big(N_{\mathrm{SMB}} n_{\mathrm{ISP}} +2\sum_{\mu=0}^{N_{\mathrm{SMB}}-1}\sum_{j=1}^{n_{\mathrm{ISP}}} \big( 32(1+\sqrt{2})(b_{\mathrm{rot}}^{(n)}+1)^{1/2}\cdot m_{i \mu j}^{(n)} (\bar{m}_{i \mu j}^{(n)} )^{1/2} 
        +(8 b_{\mathrm{rot}}^{(n)}-15) m_{i \mu j}^{(n)} \log (2 \bar{m}_{i \mu j}^{(n)})\big)\big)$ \\
        $\tilde{U}_{\mathbf{A}^{- \top}}^{\textsc{lct}}$    & Lin.\ coord.\ transformation, \cref{eq:LCT_resources_c}   & $\frac{9}{2}\eta_n^2(8 \bar{n}_{\mathrm{ISP}}^2 + 39\bar{n}_{\mathrm{ISP}} - 8)
    - \frac{3}{2}\eta_n(8\bar{n}_{\mathrm{ISP}}^2 + 35\bar{n}_{\mathrm{ISP}} -8) - \bar{n}_{\mathrm{ISP}}  $\\
        $\widetilde{\textsc{pk}}$      & Phase kickback, \cref{eq:cost_PK}   & $3\eta_n (4 \bar{n}_{\mathrm{ISP}} b_{\mathrm{grad}} + b_{\mathrm{grad}} - 2\bar{n}_{\mathrm{ISP}}) +  b_{\mathrm{grad}} (1.149 (\log(b_{\mathrm{grad}}\epsilon_{\mathrm{PK}}^{-1})) + 9.2)/4$ \\
        $\textsc{tc2sm}^{\otimes 3\eta_{n}}_{\bar{n}_{\mathrm{ISP}}}$    & Convert from two’s
        complement to signed magnitude, \cref{eq:cost_tc2sm} & $3 \eta_n (\bar{n}_{\mathrm{ISP}}-2)$\\
        $\proxPREP_H$      & Prepare state encoding Hamiltonian coefficents, \cref{eq:Toff_cost_Prep_H}      & $n_{\theta}+\eta+4\eta_{e}+\mu_{T}+ 5n_{\eta}+6\lceil\log(2\eta_{e})\rceil+4b_{r}+3n_{p}^2+13n_{p}+4n_{\cM}(n_{p}+1) -6$ \\
        $\proxPREP^{\dagger}_H$      & Invert $\proxPREP_H$, \cref{eq:Toff_cost_UNPREP_H}     & $   n_{\theta}+\Er(\eta)+\mu_{T}+5n_{\eta}+6n_{p}+2\Er(2\eta_{e})+6\lceil\log(2\eta_{e})\rceil+4b_{r}-9  $ \\
        $\contr{\SEL_{H}}$      & Controlled application of Hamiltonian terms, \cref{eq:Toff_cost_SEL_H,eq:ctrl_SEL_H_cost}   & $18\eta n_{p}+6\eta+5n_{p}+24n_{p}-8$  \\
        ctrl-$\cR_{0}^{(\cW)}$      & Controlled reflection around all-zeroes on $\proxPREP_{H}^{\dagger}$ register, \cref{eq:Toff_cost_qubiterate_refl}   & $n_{\eta}+6n_{p}
        +n_{\cM}+2\bits{2\eta_{e}}+10$  \\
        $R(\phi_{j},\gamma_{j})$    & QSP single-qubit rotations, \cref{eq:Toff_cost_QSP_rotations}   & $ \tfrac{1}{2}( 0.56 \log
(1/\epsilon_{\rot}) + 5.3)$ \\
        $U_{\Pi_{S}}$    & Indicator function for yields, \cref{eq:Toff_cost_U_Pi}    & $3B_j(2 n_p^2 + 2n_p-3)+3n_{\nuc}(n_{p}-2)-1 $     \\
        $\cR_{0}^{(\QAE)}$& Controlled reflection around all-zeroes
        on wavefunction register, \cref{eq:Toff_cost_R0_QAE}& $3\eta_{e}n_{p}+3\eta_{n}\bar{n}_{\isp}$ \\
        $\QFT$     & QFT on $n$ qubits, \cref{eq:Toff_cost_QFT}& $4 n(\log (n/\epsilon_{\QFT})-2)+0.6 \log (n\log({n/\epsilon_{\QFT}})/\epsilon_{\QFT})$ 
        \\
        \midrule
        $\tilde{U}_{\mathrm{ISP}}^{(e)'}$      & Prepare electrons in MPS,  \cref{eq:UISP_e_prime}  & $C_{\mathrm{Toff}}(\textsc{onb2mob})+C_{\mathrm{Toff}} (\textsc{asym})+C_{\mathrm{Toff}}  ((\widetilde{W}^{(e)})^{\otimes\eta_e})$ \\
        $\tilde{U}^{(n)'}_{\mathrm{ISP}}$    & Prepare nuclei in MPS and transform to Cartesian coordinates, \cref{eq:UISP_n_prime}  & $  C_{\mathrm{Toff}}(\textsc{onb2smb})+C_{\mathrm{Toff}}(\widetilde{W}^{(n)}) +C_{\mathrm{Toff}}(\tilde{U}_{\mathbf{A}^{- \top}}^{\textsc{lct}})+C_{\mathrm{Toff}}(\widetilde{\textsc{pk}}) + C_{\mathrm{Toff}}(\textsc{tc2sm}^{\otimes 3\eta_{n}}_{\bar{n}_{\mathrm{ISP}}}) $     \\
         $\tilde{U}_{\mathrm{ISP}}$    & Prepare initial state,~\cref{eq:toff_cost_isp_full} & $C_{\mathrm{Toff}}(\tilde{U}_{\mathrm{ISP}}^{(e)'}) + C_{\mathrm{Toff}}(\tilde{U}_{\mathrm{ISP}}^{(n)'}) +  \nonumber
     C_{\mathrm{Toff}}(\widetilde{\textsc{asp}}^{(en)}) + C_{\mathrm{Toff}}(\mathrm{SoSlat}^{(en)}) $     \\
        ctrl-$\proxW$   & Controlled qubiterate, \cref{eq:Toff_cost_ctrl_W}  & $C_{\Toff}(\contr{\SEL_{H}})+C_{\Toff}(\proxPREP_{H})+C_{\Toff}(\proxPREP_{H}^{\dagger})+C_{\Toff}(\contr{\cR_{0}^{(\cW)}})$    \\
	$\tilde{U}_{\mathrm{prop}}$     & Time evolution under $H$, \cref{eq:Toff_cost_U_prop}       &     $ \tilde{d} C_{\Toff}(\textrm{ctrl-}\proxW)+(\tilde{d}+1)C_{\Toff}(R(\phi_{j},\gamma_{j}))$      \\
        $ \textrm{QAE} $      & Quantum amplitude estimation for yields, \cref{eq:Toff_cost_yield_estimation}    &$\lambda_{O}\epsilon^{-1}\big(C_{\Toff}(\cR_{0}^{(\QAE)})+2C_{\Toff}(\tilde{U}_{\mathrm{ISP}})+2C_{\Toff}(\tilde{U}_{\mathrm{prop}})
    +2C_{\Toff}(U_{\Pi_{S}})+2C_{\Toff}(\QFT)\big)$ \\
        \midrule
        $\cA$     & Calculate quantum yield, \cref{eq:full_cost_algorithm}   & $C_{\Toff}(\tilde{U}_{\mathrm{ISP}})+C_{\Toff}(\tilde{U}_{\mathrm{prop}}) + C_{\Toff}(\QFT) + C_{\Toff}(\textrm{QAE})$\\
		\bottomrule
	\end{longtable}
	\caption{
	Constant-factor Toffoli costs for all subroutines of the dynamics algorithm $\mathcal{A}$ when used to evaluate a quantum yield.}
	\label{tab:Gate_costs}
\end{table*}

\section{Total cost and error budget}\label{sect:errors}

We have presented complete descriptions of all stages of our algorithm from ISP to dynamics to measurement. For each stage, we have bounded the relevant sources of error and given asymptotic as well as constant-factor costs in terms of the relevant parameters. Here, we briefly summarize these costs and give a bound on the error of the expectation value of the final observable, $\epsilon=|\widetilde{\langle \tilde{O}\rangle}-\langle O \rangle|$, in terms of errors for all stages. 

\subsection{Algorithm costs}
The space costs of the algorithm consist of the number of logical qubits to store the molecular wavefunction and of the number of ancilla qubits. The memory cost of storing the wavefunction is 
\begin{align}
    C_{\data}(\cA)&=3\eta n_{p}+\eta_{e},
\end{align}
as given in \cref{eq:cdata}. The ancilla cost is the maximum number of ancillas required at any stage of the algorithm, 
\begin{align}
C_{\anc}(\cA)= C_{\anc}(\mathrm{QAE}),
\end{align}
where $ C_{\anc}(\mathrm{QAE})$ is given in \cref{eq:anc_cost_QAE}.
The time cost is given by the number of Toffoli gates required,
\begin{align} \label{eq:full_cost_algorithm}
    C_{\Toff}(\cA)&=C_{\Toff}(\tilde{U})+C_{\Toff}(\mathrm{QAE}),
\end{align}
where $C_{\Toff}(\tilde{U})$ and $C_{\Toff}(\mathrm{QAE})$ are given in \cref{eq:Toff_cost_U,eq:Toff_cost_QAE}. The constant-factor expansion of $C_{\Toff}(\cA)$ in terms of the algorithmic subroutines is given in \cref{tab:Gate_costs}.

\subsection{Error budget}

In our algorithm, the precise computational cost can be given for every step, given a desired error $\epsilon$ in $\langle O\rangle=\bra{\Psi(t)}O\ket{\Psi(t)}$, which depends on the accumulated errors in all stages of the algorithm. In \cref{app:yield_error}, we show that our algorithm can estimate $\langle O\rangle$ of any observable with a $\left(\lambda_{O}, q_{O}, \epsilon_{O}\right)$-block encoding, where  $\lambda_{O}\geq\norm{O}_{\infty}$, to an error
\begin{align}\label{eq:error_split}
    \epsilon= 2\lambda_{O}(\epsilon_{\mathrm{ISP}}+\epsilon_{\mathrm{prop}}+\epsilon_{B})+\epsilon_\mathrm{meas},
\end{align}
where $\epsilon_{\mathrm{ISP}}$, $\epsilon_{\mathrm{prop}}$, $\epsilon_{B}$, and $\epsilon_{\mathrm{meas}}$ are the subroutine errors of, respectively, ISP, time evolution, basis transformation, and measurement.

Given a target $\epsilon$, to calculate the total cost of the algorithm using \cref{tab:Gate_costs}, one must assign an appropriate error budget, i.e., choose values of $\epsilon_{\mathrm{ISP}}$, $\epsilon_{\mathrm{prop}}$, $\epsilon_{B}$, and $\epsilon_{\mathrm{meas}}$ that satisfy
\begin{align}\label{eq:expect_val_error_bound}
    2\lambda_{O}(\epsilon_{\mathrm{ISP}}+\epsilon_{\mathrm{prop}}+\epsilon_{B})+\epsilon_\mathrm{meas}\le \epsilon.
\end{align}
To minimize the cost of the algorithm, it is in general advisable to allocate most of the error budget to $\epsilon_\mathrm{meas}$, because of the query complexity of the QAE. Moreover, the costs of $U_{\mathrm{ISP}}$ and $U_{\mathrm{prop}}$ are only logarithmically sensitive to their errors, see \cref{eq:asympt_Toff_cost_ISP,eq:asympt_Toff_cost_U_prop}. In general, the allocation of the error budget is an optimization problem whose more detailed solution we leave for future work.

\subsection{Overall asymptotic scaling}

The asymptotic Toffoli cost of our algorithm to estimate the expectation value of a unit-norm observable (such as a yield) is 
\begin{equation}
C_{\Toff}(\cA)=\widetilde{\cO}\left(
( \lambda_{H}t\eta
+ \eta_{n}^2
+ \eta_e N_{\mathrm{MOB}}
+ \eta_n N_{\mathrm{SMB}}
)/\epsilon\right),
\end{equation}
where (see \cref{eq:lambda_T,eq:lambda_V})
\begin{equation}
    \lambda_{H}=\cO\left( \eta \frac{N_{\grid}^{2/3}}{\Omega^{2/3}}+\eta^{2} \frac{N_{\grid}^{1/3}}{\Omega^{1/3}}  \right)
\end{equation} 
and the factors $N_{\mathrm{MOB}}$ and $N_{\mathrm{SMB}}$ are lower-bounded by $\eta_{e}$ and $\eta_{n}$, respectively. The notation $\widetilde{\cO}(\cdot)$ suppresses polylogarithmic factors. 

Hence, this work is the first end-to-end algorithm for quantum simulation of chemistry with a complexity sublinear in the grid size $N_\mathrm{grid}$, scaling as $\cO( N_\mathrm{grid}^{2/3} )$. We achieve this end-to-end scaling by combining our exponentially faster method for ISP with the qubitization-based approach to dynamics. Low scaling in $N_\mathrm{grid}$ is critical for first-quantized simulation because $N_\mathrm{grid} \gg \eta$. 

Dropping the subdominant terms, we can write
\begin{align}
    C_{\Toff}(\cA) &= \widetilde{\cO}(\eta \lambda_{H} t/\epsilon) \\
    &= \widetilde{\cO}\big((\eta^2 K_\mathrm{max}^2 + \eta^3 K_\mathrm{max})t/\epsilon\big)\label{eq:cost-eta-Kmax},
\end{align}
where $K_\mathrm{max}=\pi (N_\mathrm{grid}/\Omega)^{1/3}$.
This scaling can be rewritten in terms of the minimal set of input parameters $\{\eta,t,\epsilon \}$. As discussed in \cref{sect:grid}, resolving arbitrary molecular wavefunctions is expected to require $K_{\max}=\cO(\epsilon^{-1/3})$; in that case, we obtain 
\begin{align}
C_{\Toff}(\cA)=\widetilde{\cO}\big(\eta^3t/\epsilon^{4/3}+\eta^2t/\epsilon^{5/3}\big).   \label{eq:complexity_full_alg}
\end{align}
However, if the assumptions given in \cref{sect:grid} are met and the $K_{\max}$ of \cref{eq:k_max_asymp} is sufficient, then $K_{\max}=\cO(\sqrt{\log(1/\epsilon)})$, giving the improved scaling of 
$C_{\Toff}(\cA)= \widetilde{\cO}\left(\eta^3t/\epsilon\right)$.

\begin{figure*}\label{fig:dissociation}
\centering
    \includegraphics[width=\textwidth]{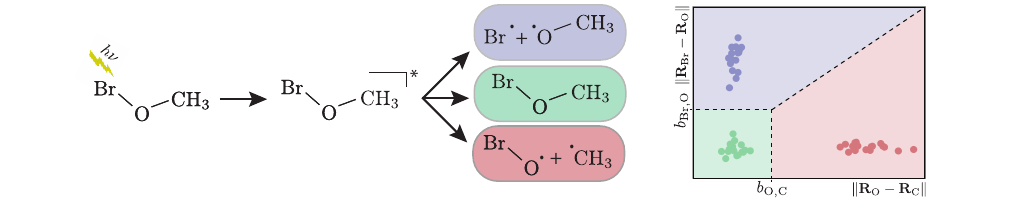}
    \caption{Example of a photodissociation reaction that can be simulated by our algorithm. A molecule of methyl hypobromite, \ce{CH3OBr}, is prepared in a photoexcited state and time-evolved. The nuclear positions $\mathbf{R}$ can be measured to give the bond lengths $\norm{\mathbf{R}_{\mathrm{Br}}-\mathbf{R}_{\mathrm{O}}}$ and $\norm{\mathbf{R}_{\mathrm{O}}-\mathbf{R}_{\mathrm{C}}}$, which determine which of the three color-coded reaction channels the measured nuclear configuration belongs to.}
    \label{fig:reactions}
\end{figure*}

\section{Application to photochemistry}\label{sect:photochemistry}

Fully quantum-dynamical simulations are most important for chemical reactions involving strong correlations between the nuclei and the electrons, a common occurrence in photochemistry. For this reason, we analyze the cost of our approach for simulating photochemical processes, with a focus on the particularly challenging computational problem of photodissociation (or photolysis), i.e., the photoinduced splitting of a molecule into fragments. A typical photochemical process begins by photoexcitation, i.e., a molecule in the electronic ground state absorbing a photon and being promoted to an excited state. The large amount of absorbed energy induces molecular dynamics, which can result in the molecule relaxing to a lower state or dissociating. If the molecule has dissociated, the relevant observable is the photodissociation quantum yield, the probability of a given bond breaking after photon absorption. 

In particular, we focus on atmospheric photochemistry as a  promising application. Most of the Earth's atmospheric chemistry is driven by photochemical processes, including those involving greenhouse gases, ozone depletion, and smog. Understanding their rates, yields, and mechanisms would improve our ability to reduce pollution and construct more accurate climate models~\cite{krol1997implications}. 

Simulating atmospheric photochemistry is likely to be a promising problem for quantum computers to achieve a computational advantage. The widespread breakdown of the BO approximation means that numerically exact, fully quantum calculations of quantum yields on classical computers are prohibitively expensive for all but the smallest molecules~\cite{cigrang2024modeling}. By contrast, on a quantum computer, atmospheric photochemical reactions are ideally suited for our algorithm because they occupy a favorable regime for each of the three parameters that determine the scaling~\cref{eq:complexity_full_alg}. First, atmospheric molecules are small (small $\eta$), while still being too large to be classically tractable in most cases. Second, photoinduced dynamics is ultrafast (small $t$), often occurring within tens to hundreds of femtoseconds~\cite{turro2009principles}. And third, yields and other dynamical observables can usually be computed with relatively high errors (large $\epsilon$, of as much as tens of percentage points)~\cite{sukhodolov2016evaluation} while still being useful for studying photochemistry.

The molecules chosen as examples (\cref{table_resource}) are all involved in important atmospheric processes, but their full quantum simulation is classical intractable. 

Methane (\ce{CH4}) is a major greenhouse gas that is depleted from the atmosphere by photodissociation after absorption of UV radiation in the upper atmosphere~\cite{winterstein2021methane}. The dominant reaction pathway is
\begin{equation}
\ce{CH4^* -> CH3 +H},
\end{equation}
where the star indicates that the molecule is in an excited state. Photodissociation of methane proceeds through a manifold of closely spaced excited states, involving strong nonadiabatic couplings, which makes its full molecular dynamics intractable on classical computers, despite its small size~\cite{wang2024unraveling}.

Three of the molecules photodissociate to form reactive oxygen species that work as cleaning agents in the atmosphere, removing pollutants and greenhouse gases such as methane and carbon monoxide. The Criegee intermediate \ce{CH2OO}~\cite{esposito2021photodissociation,Osborn2015,long2016atmospheric,khan2018criegee} commonly photodissociates by breaking the bond between the oxygen atoms,
\begin{equation}
\ce{CH2OO^* -> H2CO +O},
\end{equation}
The photodissociation of peroxynitric acid (\ce{HNO4})~\cite{wei2014,shchepanovska2021,ma2017,roehl2002} and \ce{C5}-hydroperoxyaldehyde (\ce{C5}-HPALD)~\cite{prlj2020,marsili2022theoretical,wolfe2012photolysis} are important sources of hydroxyl radicals (\ce{OH^.}):
\begin{align}
\ce{HNO4^* &-> NO3^. + OH^.},\\
\ce{C5H8O4^* &-> C5H7O3^. + OH^.},
\end{align}
where the dot indicates a radical. Both \ce{O} and \ce{OH^.} are reactive species that degrade atmospheric pollutants.

Our test set also includes four halogenated compounds. Through photodissociation, these species release halogen radicals that deplete the ozone layer and lead to the formation of long-lived greenhouse gases.
The primary photodissociation channel of trifluoroacetic acid (\ce{CF3CO2H}) is
\begin{equation}
\ce{CF3CO2H^* -> CF3^. + COOH^.},   
\end{equation}
where trifluoromethyl radicals (\ce{CF3^.}) are precursors to greenhouse gasses such as \ce{CF3H} (HFC-23)~\cite{mearns1963photolysis}.
Photodissociation of methyl hypobromite (\ce{CH3OBr})~\cite{stojanovic2015photochemistry},
\begin{equation}
\label{eq:bromine}
\ce{CH3OBr^* -> CH3O^. + Br^.},
\end{equation}
and bromoacetaldehyde ($\ce{BrCH2CHO}$)~\cite{sadiek2021ab},
\begin{equation}
\ce{BrCH2CHO^* -> CH2CHO^. + Br^.},
\end{equation}
can produce bromine radicals that contribute to ozone depletion as well as DNA damage. Similarly damaging are chlorine radicals released through the photodissociation of refrigerants such as HCFC-132b~\cite{rodrigues2016uv},
\begin{equation}
\ce{CF2ClCH2Cl^* -> CF2ClCH2^. + Cl^.}.
\end{equation}

The final molecule is cyclobutanone (\ce{C4H6O}), which can photodissociate into the fragments
\begin{equation}
\ce{C4H6O^* -> C3H6 + CO}.
\end{equation}
Accurately simulating the photodissociation of cyclobutanone, a small, organic molecule, challenges mature classical approaches to chemical dynamics~\cite{mukherjee2024prediction,miller2024ultrafast,peng2024photodissociation,green2025imaging}.

\renewcommand{\arraystretch}{1.2}
\begin{table*}\label{table_resource}
\centering 
    \begin{tabular}{
    p{1.9cm}
    p{0.6cm}
    p{0.8cm}
    p{1.08cm}
    p{1.08cm}
    p{1.10cm}
    p{1.10cm}
    p{1.10cm}
    p{1.2cm}
    p{1.4cm}
    p{1.2cm}
    p{1.4cm} 
    }
    \toprule
    & & & & & \multicolumn{3}{c}{Logical qubits} & \multicolumn{4}{c}{Toffoli gates ($C_\mathrm{Toff}$)} \\
    \cmidrule(r){6-8} \cmidrule(r){9-12}
    Molecule  & $\eta$ & $L$ $(a_0)$ & $\Delta L$ $(10^{-2} a_0)$ &$\lambda_{H}$ $(10^{5} E_\mathrm{h})$ & State storage ($C_\mathrm{data}$)& Ancilla qubits ($C_\mathrm{anc}$) & Total & Nuclear state prep. $(10^{9})$ & Electronic state prep. $(10^{9})$ & Time evolution $(10^{12})$ & Quantum yield using QAE $(10^{12})$ \\
    \midrule
        \ce{CH4} & \num{15} & \num{392} & \num{4.78} & \num{0.85} & \num{585} & \num{3189}  & \num{3774}  & \num{2.79} & \num{0.10}& \num{1.89} &  \num{32.1}\\
    \ce{CH2OO} & \num{29} & \num{298} & \num{3.64} & \num{4.09} & \num{1131} & \num{3385}& \num{4516} & 3.32 & \num{0.42}& \num{18.9} &  \num{322}\\
    \ce{C4H6O} & 49 & \num{595} & 3.63 & \num{7.98} & \num{2058} & \num{3811} &\num{5869}  & \num{7.39} & \num{0.98} & \num{60.1} & \num{1002}\\
    \ce{HNO4} & \num{46} & \num{276}& \num{3.37} & \num{9.60} & \num{1794} & \num{3542}  & \num{5336} & 3.84 & \num{0.56} & \num{64.0} &  \num{1009} \\
    \ce{CF3CO2H} & 64 & 473 & \num{2.89} & \num{20.1} &  \num{2688} & \num{3904} & \num{6592} & \num{4.88} & \num{1.69} & \num{191} & \num{3240}\\
    C\textsubscript{5}-HPALD & 78 & \num{1066} & \num{3.25} & \num{19.9} & \num{3510} &\num{4289} & \num{7799} &  \num{9.87} & \num{2.19} & \num{239} & \num{4070}\\
    HCFC-132b & 74 & \num{520} & 1.59 & \num{64.4} & \num{3330} &\num{4243} & \num{7573} & \num{4.72} & \num{3.43} & \num{742} & \num{12600}\\

    \ce{CH3OBr} & 58 & \num{523} & \num{0.80} & \num{150} & \num{2496} & \num{4512} & \num{7296} & \num{4.12} & \num{1.25} & \num{1350} &  \num{22900} \\

    \ce{BrCH2CHO} & 65 & \num{509} & \num{0.78} & \num{181} & \num{3120} & \num{4578} & \num{7698} & \num{4.10} & \num{1.70} & \num{1810} & \num{30700} \\
    \bottomrule
    \end{tabular}
\caption{Resource estimates for simulating photodissociation yields in molecules whose full quantum dynamics is classically intractable. $\eta$ is the total number of particles (electrons and nuclei), $L$ is the length of one side of the computational cell, $\Delta L = L/N$ is the spacing of the grid points in real space, and $\lambda_H$ is the block-encoding parameter. We give both space and time resources for our algorithm, with the qubit counts divided into qubits to store the state ($C_\mathrm{data}$) and the ancillas ($C_\mathrm{anc}$). The Toffoli gate counts for time evolution indicate one dynamics run for simulation time $t=\SI{30}{fs}$, and the QAE counts give the cost of calculating the quantum yield to accuracy $\epsilon=0.095$ (involving multiple simulations of the dynamics).}
\end{table*}

\subsection{Resource estimation}

To model photoexcitation, we use the Condon approximation, which assumes that the electronic transition from ground to excited state is fast compared to nuclear dynamics~\cite{condon1926theory}. The Condon approximation is typically accurate for small molecules excited by ultraviolet/visible (UV-Vis) light~\cite{tannor2007,smith1968}. Then, the state of the nuclei can be treated as unchanged during the electronic transition and the initial photoexcited state at $t=0$ is the product state
\begin{equation}
    |\Psi(0)\rangle = |\Psi^{(n)}\rangle |\Psi^{(e){*}}\rangle,
\end{equation}
where $|\Psi^{(e){*}}\rangle$ is an electronic excited state. We model $|\Psi^{(e){*}}\rangle$  as the state obtained by a single electron being promoted from the highest occupied molecular orbital (HOMO) to the lowest unoccupied molecular orbital (LUMO). For the nuclear state $|\Psi^{(n)}\rangle$, we assume the molecule is in the ground vibrational state of its electronic ground state (i.e., a multivariate Gaussian) and model it using a VHP. This assumption is valid because most vibrational modes have minimal excited-state populations at atmospherically relevant temperatures.

We estimate the resources needed for calculating the yield $\bra{\Psi(t)}\Pi_{S}\ket{\Psi(t)}$ up to an additive error $\epsilon = 0.095$. This error budget is divided between $\epsilon_{\mathrm{ISP}}$, $\epsilon_{\mathrm{prop}}$ and $\epsilon_\mathrm{QAE}$ as in \cref{eq:error_split}. The errors are allocated as described in \cref{sect:errors}, with the majority allocated to $\epsilon_\mathrm{QAE}=0.0625$ and the majority of the remainder to $\epsilon_{\mathrm{ISP}} = 0.015$, leaving $\epsilon_{\mathrm{prop}}=0.00125$. We allocate more error to $\epsilon_{\mathrm{ISP}}$ than to $\epsilon_{\mathrm{prop}}$ because the cost is more sensitive to $\epsilon_{\mathrm{ISP}}$; in particular, a larger $\epsilon_{\mathrm{ISP}}$ permits a smaller momentum cutoff $K_\mathrm{max}$, which reduces the number of qubits required and the costs of time evolution. Finally, $\epsilon_{\mathrm{ISP}}$ is distributed among the nuclear and electronic errors so that the bounds in \cref{eq:error_isp_sep} are satisfied. 

To gather the input data that specifies $|\Psi^{(n)}\rangle$, we perform a geometry optimization and a Hessian calculation using second-order M\o ller-Plesset perturbation theory~(MP2) with the cc-PVDZ basis set, yielding the equilibrium geometry $\mathbf{R}^{(0)}$ and ground-state Hessian $\mathbf{He}_1$. We use the Psi4 software package~\cite{psi4}. The Hessian is diagonalized to obtain the frequencies $\omega_i$ and the matrix $\mathbf{O}$. This information is used to compute $\mathbf{d}$ and $\mathbf{A}$ in \cref{eq:Qbar_def}, and the frequencies are rescaled by $\omega_i \rightarrow d_{ii}^2\omega_i$. To ensure that the rotational states (see \cref{eq:trans_rot_wvfxn}) only have support over a small region of the simulation grid, we set $\Upsilon_\tau = \max_i \omega_i$ for all $\tau$. For the translational states, we set $\Gamma_\tau = \min_i \omega_i$ for all $\tau$. Because $|\Psi^{(n)}\rangle$ is a multivariate Gaussian, we prepare it using the SSCT approach in \cref{sect:LCT}.   

With the nuclear states known, $L$ is selected as the smallest value that satisfies \cref{eq:eps_shear_bound}. With $L$ decided, $K_{i\mu}^{(n)}$ and $M_{i\mu}$ are solved for with \cref{eq:K_nuc,eq:m_nuc_bound}.

For $|\Psi^{(e){*}}\rangle$, the parameters that specify the molecular orbitals $\chi_a$ (see \cref{sect:input_data_elec}) are found using a restricted Hartree-Fock (RHF) calculation with the 3-21G basis set with the nuclei fixed at $\mathbf{R}^{(0)}$ using the PySCF software package~\cite{pyscf}. The $K^{(e)}_a$ and $M_a$ are determined numerically using the approach from~\cite{huggins2024}, as discussed in \cref{sect:preproc_mo}; in particular, $K^{(e)}_a$ is increased until the error from the resulting approximation to $|\Psi^{(e)*}\rangle$ satisfies the target error.

With $K^{(n)}_{i\mu}$ and $K^{(e)}_a$ determined, $K_{\mathrm{max}}$, $N$, and $n_p$ are computed. We obtain $n_{\mathrm{pad}}$ using \cref{eq:npad_shear}. Finally, we estimate the error $\epsilon_{\mathrm{trim}}$ from trimming the nuclear wavefunction using \cref{eq:MC-error}. Based on $N_{\mathrm{MC}}=10^9$ Monte Carlo samples, $\epsilon_{\mathrm{trim}} \leq 10^{-4}$ with confidence $0.99999$. We therefore take $\epsilon_{\mathrm{trim}}=10^{-4}$.

For consistent comparison, we estimate the cost of calculating each photodissociation yield at $t=\SI{30}{fs}$ after photoexcitation. This $t$ is chosen to exceed the inverse of typical atmospheric photodissociation rates~\cite{prlj2020}. In practice, the required simulation time would not be known a~priori and the simulation would be repeated with longer durations to construct a time series.

To calculate yields, photodissociation reaction channels are characterized as described in \cref{sect:yield_protocol} and illustrated in \cref{fig:dissociation}. In the following, we use \ce{CH3OBr} as the illustrative molecule, but identical methods were applied to the others.
As shown in \cref{fig:dissociation}, three reaction channels are important for \ce{CH3OBr}, each corresponding to a different photodissociation outcome. Formation of \ce{Br^.} according to \cref{eq:bromine} corresponds to the reaction channel characterized by the inequalities
\begin{align}\label{eq:bond_length1}
\norm{\mathbf{R}_{\mathrm{Br}}-\mathbf{R}_{\mathrm{O}}}_2\geq b_{\mathrm{Br}\mathrm{O}}\\
\norm{\mathbf{R}_{\mathrm{O}}-\mathbf{R}_{\mathrm{C}}}_2\leq b_{\mathrm{O}\mathrm{C}},\label{eq:bond_length2}
\end{align}
where $b_{\mathrm{Br}\mathrm{O}}$ and $b_{\mathrm{O}\mathrm{C}}$ are the bond-dissociation cutoff distances, which we set to be \SI{0.4}{\bohr} larger than the corresponding equilibrium bond lengths.

We estimate the cost of measuring the yield $\bra{\Psi(t)}\Pi_{S}\ket{\Psi(t)}$ using the protocol in \cref{sect:yield_protocol}.
For \ce{CH3OBr}, the indicator function $I(\R,S)$ for the desired photodissociation channel is
\begin{multline}
I(\R,S)= \mathbb{B}\left(\norm{\R_{\mathrm{Br}}-\R_{\mathrm{O}}}_2 \geq b_{\mathrm{Br}\mathrm{O}}\right) \\ \wedge \mathbb{B}\left(
\norm{\R_{\mathrm{O}}-\R_{\mathrm{C}}}_2 \leq b_{\mathrm{O}\mathrm{C}}\right).
\end{multline}
We implement the corresponding block-encoding $U_{\Pi_{S}}$ and perform phase estimation on the iterate $\widetilde{W}_{\Pi_{S}}$.

\subsection{Results}

\Cref{table_resource} provides resource estimates for computing the photodissociation yield of the molecules above. Of the molecules considered, the total number of particles $\eta$ (electrons and nuclei) ranges from 29 to 78. The molecules require 3774--7698 logical qubits and \num{3.2e13}--\num{3.07e16} Toffoli gates to calculate the yields. 

As expected, molecules with more particles require more logical qubits, both for state storage ($C_\mathrm{data}$) and for ancillas ($C_\mathrm{anc}$). However, $C_\mathrm{anc}$ is consistently high across all systems, increasing slowly increase with $\eta$. The high $C_\mathrm{anc}$ arise primarily from time evolution, because qubitization requires many ancilla qubits.

To satisfy ISP error bounds, $\Delta L = O(\SI{0.01}{\bohr})$ is typically needed. Features of nuclear wavefunctions are typically much narrower than electronic length scales and requires a smaller $\Delta L$ to resolve. For example, in the ground state of a \ce{C-H} vibration with frequency \SI{3000}{cm^{-1}}, the root-mean-square displacement is \SI{0.15}{\bohr}. In general, grid-based numerical simulations require spacings an order of magnitude smaller than the narrowest features that need to be resolved, which is consistent with our result of $\Delta L = O(\SI{0.01}{\bohr})$. 
\Cref{table_resource} also shows that $\Delta L$ decreases with the mass of the heaviest nucleus, from at least \SI{2.9e-2}{\bohr} for molecules containing only first- and second-row elements to \SI{0.8e-2}{\bohr} for molecules containing bromine. For a given spring constant, heavier nuclei move less than lighter ones, meaning that their wavefunctions are more sharply peaked, requiring a smaller $\Delta L$ to resolve.
By contrast, prior estimates of necessary grid spacings for nuclear dynamics were unrealistically large, including $\Delta L=\SI{0.962}{\bohr}$~\cite{pocrinic2026}.

\section{Discussion} \label{sect:discussion}

We have developed an end-to-end algorithm for simulating full dynamics under the molecular Hamiltonian. The algorithm incorporates three ingredients that have not previously been combined in a single algorithm: bounded-error preparation of general electronic and nuclear initial states in first quantization, including non-separable ones; time evolution under the molecular Hamiltonian; and optimal measurement of dynamical observables, such as photodissociation yields. Because all sources of error are controlled, the final error in the observable can be made arbitrarily small at a cost that scales linearly in the inverse error. These are the first rigorous error bounds for a chemical-dynamics algorithm, whether on classical or quantum computers.

Our algorithm offers an exponential improvement over the best classical methods for chemical dynamics. Our algorithm treats the full dynamics of both electrons and nuclei without the BO approximation or pseudopotentials, block‑encodes the molecular Hamiltonian, and discretizes all degrees of freedom on a common plane‑wave grid. This yields a largely black‑box procedure for real‑time chemical dynamics, in contrast to most classical approaches. In the worst case, state‑of‑the‑art classical wavepacket techniques scale exponentially with system size~\cite{arteaga2024strong,worth2008using,wang2015multilayer,meng2013multilayer,xie2015full,schulze2016multi,cigrang2025roadmap,prlj2025bestpracticesnonadiabaticmolecular}, whereas our quantum algorithm scales polynomially in the number of particles, the time, and the inverse target accuracy~\cref{eq:complexity_full_alg}.

Small system sizes, short timescales, and low accuracy requirements make photochemistry a compelling early application for quantum computing. We show that we can simulate the time evolution of classically intractable reactions using $O(10^{11})$ Toffoli gates for the entire reaction duration. By contrast, da Jornada et al.~\cite{Jornada2025} report costs of $O(10^{11})$ Toffoli gates per femtosecond of time evolution, which amounts to $O(10^{15})$ Toffoli gates for the multi-picosecond catalytic timescales relevant for their system of choice. We also show that we can simulate dynamical observables of interest, namely yields, using $O(10^{13})$--$O(10^{16})$ Toffoli gates and $O(10^4)$ logical qubits. This is comparable to the proposal to run fusion stopping-power calculations using $O(10^{13})$--$O(10^{17})$ Toffoli gates and $O(10^{4})$--$O(10^{5})$ logical qubits~\cite{rubin2024}. 

We anticipate order-of-magnitude reductions in cost from future algorithmic improvements, as has been the experience with quantum algorithms for electronic structure due to improvements in block-encoding techniques, synthesis, and circuit compilation~\cite{lin2020near,low2025}.
For example, we expect that costs could be reduced by allowing different particles to occupy different grids, with finer grids for heavier nuclei and coarser grids for lighter particles, especially electrons. Doing so would significantly reduce $\lambda_H$ and the overall cost. Similarly, our approach of modeling rotational states with narrow Gaussians requires a fine grid, which could be mitigated through a more sophisticated approach.

In addition, we expect that, in practice, the cost of our algorithm will be significantly smaller than the provable error bounds we have given, which rely on worst-case assumptions and overestimate the required resources. Experience from classical dynamics and quantum simulation shows that actual errors at a given discretization are typically far smaller than the worst-case guarantees~\cite{miller2025phase2,hahn2024}. Therefore, in practice, users will tune grid spacings and other parameters until observables are numerically converged, rather than until a worst-case analytic error budget is met.

Our algorithm could extend straightforwardly to bimolecular and more complex processes. Two or more reactants could be initialized in spatially separated regions of the grid, each prepared in its own internal quantum state and with a specified relative momentum. The same approach to time evolution and measurement then applies, including calculating yields.

Our photochemical examples identify a regime of chemical simulation that satisfy three criteria for early quantum advantage: classical hardness, quantum easiness (small molecules, short timescales, and low required error), and broad societal importance. As quantum processors mature and anticipated algorithmic improvements drive costs down, we expect that accurate simulations of chemical dynamics will be among the earliest applications of fault-tolerant quantum computers.

\section*{Acknowledgments}
We thank Ryan Babbush, Maria Kieferov\'a, and Nicholas Rubin for valuable discussions. We were supported by Google Quantum AI, by the Australian Research Council (FT230100653, DP220101602), and by the Sydney Quantum Academy.
DWB worked on this project under a sponsored research agreement with Google Quantum AI.

\bibliography{refs}

\appendix

\onecolumngrid

\section{Galerkin discretization}
\label{app:Galerkin_review}
Here, we review the plane-wave-basis Galerkin discretization of the molecular Hamiltonian $H=T+V$ of \cref{eq:mol-ham}. In the plane-wave basis of \cref{sect:isp},
\begin{align}
T &=\sum_{j=1}^{\eta} \sum_{\mathbf{p}, \mathbf{q}=1}^N T_{\mathbf{p} \mathbf{q}}^{\left(m_{j}\right)}\ket{\mathbf{p}}\bra{\mathbf{q}}_{j} \\ 
V&=\frac{1}{2} \sum_{i \neq j=1}^{\eta} \sum_{\mathbf{p}, \mathbf{q}, \mathbf{r}, \mathbf{s}=1}^N V_{\mathbf{p} \mathbf{q} \mathbf{r} \mathbf{s}}^{\left(\zeta_{i}, \zeta_{j}\right)}\ket{\mathbf{p}}\bra{\mathbf{s}}_{i}\otimes \ket{\mathbf{q}}\bra{\mathbf{r}}_{j} .
\end{align}
The one- and two-electron integrals have convenient closed forms~\cite{martin2004}:
\begingroup
\allowdisplaybreaks
\begin{align}
T_{\mathbf{p}, \mathbf{q}}^{(m)} &= \int d\mathbf{r}\, \phi_{\mathbf{p}}^*(\mathbf{r}) \left(-\frac{\nabla^2}{2 m}\right) \phi_{\mathbf{q}}(\mathbf{r}) \\ 
&= \frac{\left\|\mathbf{k}_{\mathbf{q}}\right\|^2}{2m} \int d\mathbf{r}\, e^{i\left(\mathbf{k}_{\mathbf{q}} - \mathbf{k}_{\mathbf{p}}\right) \cdot \mathbf{r}} 
= \delta_{\mathbf{p}, \mathbf{q}} \frac{\left\|\mathbf{k}_{\mathbf{p}}\right\|^2}{2m},\\
V_{\mathbf{p} \mathbf{q} \mathbf{r} \mathbf{s}}^{(\zeta_{i}\zeta_{j})} &= \int d\mathbf{r}_1\, d\mathbf{r}_2 
\phi_{\mathbf{p}}^*(\mathbf{r}_1)\, \phi_{\mathbf{q}}^*(\mathbf{r}_2)
\left( \frac{\zeta_{i} \zeta_{j}}{\left\|\mathbf{r}_1 - \mathbf{r}_2\right\|} \right) 
\phi_{\mathbf{r}}(\mathbf{r}_2)\, \phi_{\mathbf{s}}(\mathbf{r}_1) \\ 
&= \frac{4 \pi \zeta_{i} \zeta_{j}}{\Omega \left\|\mathbf{k}_{\mathbf{p}} - \mathbf{k}_{\mathbf{s}}\right\|^2}
\int d\mathbf{r}_2 e^{i\left(\mathbf{k}_{\mathbf{r}} - \mathbf{k}_{\mathbf{q}} - (\mathbf{k}_{\mathbf{p}} - \mathbf{k}_{\mathbf{s}})\right) \cdot \mathbf{r}_2} \\ 
&= \frac{4 \pi \zeta_{i} \zeta_{j}}{\Omega} 
\frac{\delta_{\mathbf{k}_{\mathbf{p}} - \mathbf{k}_{\mathbf{s}}, \mathbf{k}_{\mathbf{r}} - \mathbf{k}_{\mathbf{q}}}}{\left\|\mathbf{k}_{\boldsymbol{\nu}}\right\|^2}
\end{align}
\endgroup
where $\mathbf{k}_{\boldsymbol{\nu}} = \mathbf{k}_{\mathbf{p}} - \mathbf{k}_{\mathbf{s}} = \mathbf{k}_{\mathbf{r}} - \mathbf{k}_{\mathbf{q}} \neq 0$.
Combining these, we obtain \cref{eq:Galerkin_Ham}.

\section{Efficient normal-coordinate preprocessing} \label{app:efficient_nc_preproc}
In this appendix, we prove \cref{lem:pwb,lem:mps,lem:twos_complement}, which collectively establish that a given SM can be efficiently projected onto a PWB and encoded as an MPS using a two's-complement representation. In \cref{app:pwb_proj_sm}, we prove \cref{lem:pwb}, which demonstrates that a given SM can be efficiently projected onto a PWB. In \cref{app:mps_encod_sm}, we prove \cref{lem:mps}, which shows that the approximation from \cref{lem:pwb} has an efficient approximation as a polynomial that can be encoded exactly as an MPS. In \cref{app:twos_complement_mps}, we prove \cref{lem:twos_complement}, which shows that the resulting MPS can be efficiently expressed using a two's-complement representation.

Throughout, we use $\sm$ to denote an SM for a vibrational, translational, or rotational degree of freedom. In the case of translational and rotational states, the indices $i\mu$ are substituted with $\tau$. 

\subsection{PWB projection of SMs}
\label{app:pwb_proj_sm}
Here, we prove \cref{lem:pwb}, which demonstrates that $\sm$ can be efficiently projected onto a PWB. The proofs here and in \cref{app:mps_encod_sm} are similar to the one in Appendix~B of \cite{huggins2024}, used to prove \cref{eq:Kelec,eq:M_elec}. The results of \cite{huggins2024} apply to the present case with only minor modifications because the general strategy in \cite{huggins2024} is to first express the Gaussian primitives that constitute an MO in terms of Hermite-Gaussians and then use the properties of these functions to prove bounds, which aligns with our expansion of SMs in terms of Hermite-Gaussian primitives. Therefore, instead of proving \cref{lem:pwb,lem:mps} from scratch, we highlight which parts in Appendix B of \cite{huggins2024} need to be modified to apply to the present case.

\begin{lemma} \label{lem:pwb}
Let $\hatsm$ be an SM defined in \cref{eq:project_pwb},
$\SmDelT\in(0,1)$, and $\SmMom\in\mathbb{R}$ such that
\begin{equation} 
    \SmMom \geq \sqrt{2\omega_i} \sqrt{2\ln((\SmDelT )^{-1}) + \ln\left(\tfrac{1}{\sqrt{2}} + \tfrac{2\sqrt{\pi}}{L \sqrt \omega_i}\right) + (N_{hg}-1)\ln(4(N_{hg} -1)) + 5.4} \, .
\end{equation} 
Then there exists an approximation $\HatSmAprx$ to $\hatsm$ given by replacing $\mathbb{K}$ with $\mathbb{K}_{\mathrm{cut}}$ as defined in \cref{eq:mom-cut-set}, where $\SmMom$ is the momentum cutoff, such that the error $\SmEpsT=D(\hatsm, \HatSmAprx)$ is bounded by $\SmEpsT \leq \SmDelT$.
\end{lemma}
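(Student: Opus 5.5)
The approach is to reduce the trace-distance error to a Gaussian tail sum in momentum space. Write $\hatsm$ through its continuous Fourier coefficients. Because $\sm=\sum_{\nu<N_{hg}}c^{(n)}_{i\mu\nu}\ho$ and Hermite--Gaussian functions are eigenfunctions of the Fourier transform, the coefficient at momentum $k$ is, up to a phase $(-i)^\nu$ and factors of $L^{-1/2}$,
\begin{equation*}
\hat\Phi(k)=\sum_{\nu} c^{(n)}_{i\mu\nu}(-i)^\nu \frac{1}{\sqrt{2^\nu\nu!}}\Big(\frac{1}{\pi\omega_i}\Big)^{1/4}H_\nu\big(k/\sqrt{\omega_i}\big)e^{-k^2/(2\omega_i)} .
\end{equation*}
This is a polynomial of degree at most $N_{hg}-1$ times a Gaussian of width $\sqrt{\omega_i}$. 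This mirrors Appendix~B of \cite{huggins2024}, with the Gaussian primitives there replaced directly by Hermite--Gaussians, so no intermediate re-expansion is needed.

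First, for pure states, $D(\hatsm,\HatSmAprx)=\sqrt{1-|\langle\hatsm|\HatSmAprx\rangle|^2}$. Since $\HatSmAprx$ is the normalized projection of $\hatsm$ onto $\mathbb{K}_{\mathrm{cut}}$, the overlap squared equals $1-\tau$, where $\tau=\sum_{|k|>\SmMom}|\hat\Phi(k)|^2/\sum_{k\in\mathbb K}|\hat\Phi(k)|^2$. Hence $\SmEpsT=\sqrt{\tau}$, and it suffices to show $\tau\le(\SmDelT)^2$. The denominator (the squared normalization constant) is bounded below by a constant: this is where an assumption like $\SmnormAprx\ge 2/3$ enters, or else a Riemann-sum comparison with $\int|\hat\Phi|^2=1$ scaled by $L/2\pi$.

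Second, bound the numerator. Use $\|c^{(n)}_{i\mu}\|_2=1$ (orthonormality of the $\ho$) and Cauchy--Schwarz to get $|\hat\Phi(k)|^2\le \sum_\nu |\ho$-transform$(k)|^2$. Then apply a Hermite bound of the kind used in \cite{huggins2024}, namely $|H_\nu(x)|^2/(2^\nu\nu!)\lesssim (4\nu)^{\nu}\max(1,x^{2\nu})$ or a Cramér-type bound, to absorb the polynomial into the Gaussian at the cost of the factor $e^{(N_{hg}-1)\ln(4(N_{hg}-1))}$, with a halved exponent $e^{-k^2/(2\omega_i)}\cdot e^{-k^2/(2\omega_i)}$ used to dominate $x^{2\nu}$. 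The remaining lattice sum $\sum_{|k|>K,\,k\in\frac{2\pi}{L}\mathbb Z}e^{-k^2/(2\omega_i)}$ is bounded by its first term plus an integral, $\le e^{-K^2/(2\omega_i)}\big(1+\tfrac{L}{2\pi}\sqrt{2\pi\omega_i}\,\cdot\big)$. After dividing by the normalization, which carries a factor $L\sqrt{\omega_i}$, this produces exactly the $\ln\!\big(\tfrac1{\sqrt2}+\tfrac{2\sqrt\pi}{L\sqrt{\omega_i}}\big)$ term.

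Finally, collect the factors: $\tau\le \exp\!\big(-\SmMom^2/(2\omega_i)+\ln(\tfrac1{\sqrt2}+\tfrac{2\sqrt\pi}{L\sqrt{\omega_i}})+(N_{hg}-1)\ln(4(N_{hg}-1))+5.4\big)$, with the numerical constant $5.4$ absorbing $\pi$ factors and the normalization lower bound. Setting the exponent at most $2\ln\SmDelT$ gives the stated condition on $\SmMom$. The main obstacle is the second step: the Hermite polynomial bound must be applied carefully, splitting the Gaussian so that the polynomial growth is uniformly dominated on $|k|>\SmMom$, and the constants must be tracked to land on $5.4$. I would reuse the corresponding lemma from \cite{huggins2024} essentially verbatim, with $\gamma_\mu$ replaced by $\omega_i/2$ and $l_{\max}$ replaced by $N_{hg}-1$.
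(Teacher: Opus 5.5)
Your proposal is essentially the paper's proof: expand $\Phi_{i\mu}$ in Hermite--Gaussians and use that they are Fourier eigenfunctions to land on Eq.~(B6) of \cite{huggins2024} with $\gamma\to\omega_i/2$, $\ell\to N_{hg}-1$; then rerun Lemma~4 there under $\SmnormAprx\ge 2/3$, with the first-term-plus-integral lattice bound giving the $\ln\bigl(\tfrac{1}{\sqrt2}+\tfrac{2\sqrt\pi}{L\sqrt{\omega_i}}\bigr)$ term as you describe. The one step that is not verbatim is that the powers of two no longer cancel, so the paper bounds the cross-term sum by $\bigl(\sum_n 2^{n/2}/\sqrt{n!}\bigr)^2\le 97/2$, and $5.4$ is $\ln 97-\ln(4/9)$ rounded up rather than $\pi$ factors; your Cauchy--Schwarz step with $\|c^{(n)}_{i\mu}\|_2=1$ replaces that double sum by a tighter single sum, but you still need the sharper Hermite estimate of \cite{huggins2024}, because a Cram\'er bound gives no tail decay and your $(4\nu)^{\nu}\max(1,x^{2\nu})$ combined with the halving trick costs about $2\nu\ln\nu$, which exceeds the allowed $(N_{hg}-1)\ln(4(N_{hg}-1))$ for large $N_{hg}$.
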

\begin{proof}
We consider an SM 
\begin{equation}
    \sm(Q) = \sum_{\nu=0}^{N_{hg}-1} \SMTensor \frac{1}{\sqrt{2^{\nu} \nu!}}\left(\frac{\omega_i}{\pi }\right)^{1 / 4} e^{-\omega_i Q^2/2} H_{\nu}\left(\sqrt{\omega_i} Q \right),
\end{equation}
whose projection onto a PWB is given in \cref{eq:project_pwb}. We expand $\sm(Q)$ in terms of Hermite-Gaussian functions
\begin{equation}
\psi_\nu(x)=c_\nu e^{-x^2 / 2} H_\nu(x),
\end{equation}
where $H_n(x)$ is the $n$th Hermite polynomial and the normalization constant is 
$c_n = \left(2^n n! \sqrt{\pi} \right)^{-1/2}$,
as
\begin{equation}
    \sm(Q) = \omega_i^{1/4}\sum_{\nu=0}^{N_{hg}-1} \SMTensor \psi_{\nu}(\sqrt{\omega_i}Q).
\end{equation}
To evaluate \cref{eq:project_pwb}, we must compute integrals of the form
\begin{equation}
    \omega_i^{1/4} \sum_{\nu=0}^{N_{hg}-1} \SMTensor \int_{-\infty}^{\infty}du \, \varphi_k^{*}(u)  \psi_{\nu}(\sqrt{\omega_i} u),
\end{equation}
where we use the approximation that $L$ may be taken to infinity. This integral can be evaluated using the change of variables $x = \sqrt{\omega_i}u$ and the fact that Hermite-Gaussian functions are eigenfunctions of the Fourier transform,
\begin{equation}
    \omega_i^{1/4} \sum_{\nu=0}^{N_{hg}-1} \SMTensor \int_{-\infty}^{\infty} \frac{dx}{\sqrt{\omega_i}}\, \varphi_k^{*} \left(\frac{x}{\sqrt{\omega_i}} \right)  \psi_{\nu}(x) = \left(\frac{2 \pi}{L} \right)^{1/2}\frac{1}{\omega_i^{1/4}} \sum_{\nu=0}^{N_{hg}-1} (-i)^{\nu} \SMTensor \, \psi_{\nu}\left(\frac{k}{\sqrt{\omega_i}} \right).
\end{equation} 
We use this result to write \cref{eq:project_pwb} as
\begin{equation} \label{eq:point_of_contact}
    \hatsm(Q) = \frac{\sqrt{2\pi}}{\SmnormAprx \omega_i^{1/4} \sqrt{L}} \sum_{k \in \mathbb{K}} \sum_{\nu=0}^{N_{hg}-1} (-i)^{\nu}  \SMTensor \, \psi_{\nu}\left(\frac{k}{\sqrt{\omega_i}} \right) \phi_k(Q).
\end{equation}
With this expression, we make contact with \cite{huggins2024} by noticing that \cref{eq:point_of_contact} is the same as Eq.~(B6) of \cite{huggins2024} after making the substitutions $\gamma \rightarrow \omega_i/2$ and $\ell \rightarrow N_{hg}-1$. We define the approximation to $\hatsm$ as
\begin{equation} 
    \HatSmAprx(Q) = \frac{\sqrt{2\pi}}{\SmnormAprx \NormT \omega_i^{1/4} \sqrt{L}} \sum_{k \in \mathbb{K}_{\mathrm{cut}}} \sum_{\nu=0}^{N_{hg}-1} (-i)^{\nu}  \SMTensor \, \psi_{\nu}\left(\frac{k}{\sqrt{\omega_i}} \right)\phi_k(Q), \label{eq:SM_approx_pwb}
\end{equation}
where $\NormT$ is a normalization constant that ensures $\Vert \HatSmAprx \Vert_2 = 1$.

The manipulations to be performed to arrive at the desired result are then almost identical to those used to prove Lemma~4 in Appendix~B of~\cite{huggins2024}. Analogous to Eq.~(B45) in~\cite{huggins2024}, we define
\begin{equation}
R=\sum_{k \in \mathbb{K}:|k|>\SmMom}\left|\psi_n\left(\frac{k}{\sqrt{\omega_i}}\right) \psi_m\left(\frac{k}{\sqrt{\omega_i}}\right)\right|.
\end{equation}
We then apply the manipulations in Appendix~B of \cite{huggins2024} to the present case until Eq.~(B58) of \cite{huggins2024}. The analogous expression to Eq.~(B58) of \cite{huggins2024} is 
\begin{equation}
R \leq  \frac{L \big(\frac{2 \pi}{L}+\sqrt{\pi \omega_i/2}\big)}{\pi^{3 / 2}} 
e^{-\left(\SmMom\right)^2/2 \omega_i}\left(\frac{\SmMom}{\sqrt{\omega_i}}\right)^{2 l} \frac{2^{n/2}}{\sqrt{n!}} \frac{2^{m/2}}{\sqrt{m!}}.
\end{equation}
The main difference is that the powers of two do not cancel as they do in Eq.~(B58) of \cite{huggins2024}. Hence, the summation of Eq.~(B61) of~\cite{huggins2024} becomes in the present case:
\begin{equation}
\begin{aligned} \label{eq:bound_sum}
\sum_{n =0}^N \sum_{m =0}^N \frac{2^{n/2}}{\sqrt{n!}} \frac{2^{m/2}}{\sqrt{m!}} =\left(\sum_{n=0}^N \frac{2^{n/2}}{\sqrt{n!}}\right)^2 
 &\leq\left(\sum_{n=0}^{\infty} \frac{2^{n/2}}{\sqrt{n!}}\right)^2  
\leq \frac{97}{2} .
\end{aligned}
\end{equation}
With this result in hand, the remainder of the proof is the same as in \cite{huggins2024} through to the end of of Lemma~4. Following these steps, we find that for a given $\SmDelT$ and $\SmMom$ that satisfies
\begin{equation} \label{eq:k_nuc_full}
    \SmMom \geq \sqrt{2\omega_i} \sqrt{2\ln((\SmDelT)^{-1}) + \ln(97) + \ln\left(\tfrac{1}{\sqrt{2}} + \tfrac{2\sqrt{\pi}}{L \sqrt \omega_i}\right) - \ln( (\SmnormAprx)^2) + (N_{hg}-1)\ln(4(N_{hg} -1))} ,
\end{equation}
the approximation error $\SmEpsT = D(\hatsm,\HatSmAprx)$ is bounded by 
\begin{equation}
    \SmEpsT \leq \SmDelT,
\end{equation}
meaning that SMs can be efficiently projected onto a PWB. We assume that $\SmnormAprx \geq 2/3$ so that we can group $\ln((\SmnormAprx )^2)$ with $\ln(97)$ to give a cleaner bound,
\begin{equation} 
    \SmMom \geq \sqrt{2\omega_i} \sqrt{2\ln((\SmDelT )^{-1}) + \ln\left(\tfrac{1}{\sqrt{2}} + \tfrac{2\sqrt{\pi}}{L \sqrt \omega_i}\right) + (N_{hg}-1)\ln(4(N_{hg} -1)) + 5.4} 
\end{equation} 
because $\ln(97) - \ln((2/3)^2) < 5.4$. The assumption that $\SmnormAprx \geq 2/3$ is satisfied when our assumptions about the integration bounds and $\sm$ decaying to zero sufficiently fast hold, because then $\SmnormAprx \approx 1$.
\end{proof}

\subsection{MPS encoding of SMs}
\label{app:mps_encod_sm}
Here, we prove \cref{lem:mps}, which demonstrate that the PWB approximation to $\sm$ resulting from \cref{lem:pwb} has an efficient approximation as a polynomial that can be encoded exactly as an MPS.

\begin{lemma} \label{lem:mps}
Let $\HatSmAprx$ be given as in \cref{eq:SM_approx_pwb}, $\SmDelP\in(0,1)$, and $M_{i\mu}\in\mathbb{Z}$ satisfy all of the following conditions:
\begingroup
\allowdisplaybreaks
\begin{align} 
\SmPolyDeg & \geq 2, \label{eq:Nuc_M_cond} \\
\SmPolyDeg & \geq \frac{e \SmMom}{\sqrt{2\omega_i}}\left(\frac{e \SmMom}{\sqrt{2\omega_i}}+\sqrt{2 N_{hg}-1}\right), \label{eq:Nuc_M_cond-2}\\
\SmPolyDeg & \geq \frac{1}{\ln 2}\Bigg(4 \ln \left(  (\SmDelP)^{-1} \right) -2 \ln \left(\SmnormAprx\right)-2 \ln \left(\NormT\right)+\ln \Bigg(\frac{\SmMom}{\sqrt{\omega_i}}+\frac{\pi}{L \sqrt{\omega_i}}\Bigg)+\ln (N_{hg})\Bigg)+5. \label{eq:Nuc_M_cond-3}
\end{align}
\endgroup
Then there exists a degree-$(M_{i\mu}-1)$ polynomial approximation
$\SmPolyAprx$ to $\HatSmAprx$,
\begin{equation}
\SmPolyAprx =\sum_{k \in \mathbb{K}_{\mathrm{cut}}} p(k) \varphi_k(x),
\end{equation}
where $\big\Vert \SmPolyAprx \big\Vert^2_2 =1$, $\varphi_k$ is the $k$th PWB basis function, and $p(k)$ is a degree-$(M_{i\mu}-1)$ polynomial. The error of the approximation, $
    \SmEpsP = D(\HatSmAprx,\SmPolyAprx) $,
is bounded by 
\begin{equation} \label{eq:poly_bound}
 \SmEpsP \leq \SmDelP.
\end{equation}
\end{lemma}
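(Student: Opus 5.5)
The plan is to follow the same reduction strategy as in the proof of \cref{lem:pwb}. We reuse the polynomial-approximation argument from Lemma~5 in Appendix~B of \cite{huggins2024} (the result behind \cref{eq:M_elec}), and we only identify where the SM setting changes constants. The starting point is \cref{eq:SM_approx_pwb}. The amplitude of plane wave $\varphi_k$ in $\HatSmAprx$ is, up to the prefactor $\sqrt{2\pi}/(\SmnormAprx\NormT\omega_i^{1/4}\sqrt{L})$, the function
\begin{equation}
f(k)=\sum_{\nu=0}^{N_{hg}-1}(-i)^\nu \SMTensor\, c_\nu\, e^{-k^2/2\omega_i}H_\nu\big(k/\sqrt{\omega_i}\big),
\end{equation}
evaluated on the finite set $\mathbb{K}_{\mathrm{cut}}$. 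Under the substitutions $\gamma\to\omega_i/2$ and $\ell\to N_{hg}-1$, this matches Eq.~(B6) of \cite{huggins2024}. The Hermite factor is already a polynomial of degree at most $N_{hg}-1$, so the only non-polynomial part is the Gaussian $e^{-k^2/2\omega_i}$ on $|k|\le\SmMom$.

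First, I will replace $e^{-x}$, with $x=k^2/2\omega_i\in[0,(\SmMom)^2/2\omega_i]$, by its truncated Taylor series. Multiplying by the Hermite polynomials then gives a polynomial $p(k)$ of degree $M_{i\mu}-1$. The Taylor remainder is bounded by $x^{r}/r!$, and Stirling's formula gives $x^r/r!\le (ex/r)^r$. Condition \cref{eq:Nuc_M_cond-2} is exactly what makes $e\,\SmMom/\sqrt{2\omega_i}$ small relative to $\sqrt{M_{i\mu}}$. After accounting for the degree $N_{hg}-1$ consumed by the Hermite factor (this is the role of the $\sqrt{2N_{hg}-1}$ term), the ratio is at most $1/2$, so the pointwise error decays like $2^{-M_{i\mu}}$. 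To control the Hermite factor uniformly on the cutoff interval, I will use the standard bound $|c_\nu e^{-x^2/2}H_\nu(x)|\le \pi^{-1/4}$, or the coarser growth bound used in \cite{huggins2024}. The normalization $\sum_\nu|\SMTensor|^2=1$ together with Cauchy--Schwarz then supplies the factor $N_{hg}$.

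Next, I will convert the pointwise bound into a $2$-norm bound by summing over the at most $L\SmMom/\pi+1$ grid points in $\mathbb{K}_{\mathrm{cut}}$. This produces the factor $(\SmMom+\pi/L)/\sqrt{\omega_i}$ after the $\omega_i^{-1/2}$ from the prefactor is absorbed. The result is an unnormalized error $\Vert \HatSmAprx-\tilde p\Vert_2\lesssim 2^{-M_{i\mu}/2}\,(\SmnormAprx\NormT)^{-1}\sqrt{N_{hg}(\SmMom+\pi/L)/\sqrt{\omega_i}}$. Finally, I will renormalize to obtain $\SmPolyAprx$ and use the relation between $\Vert u-v\Vert_2$ and the trace distance $D$ (as in \cite{huggins2024}, this costs a square root). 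This is why $\ln((\SmDelP)^{-1})$ enters \cref{eq:Nuc_M_cond-3} with coefficient $4$. Taking logarithms then reproduces \cref{eq:Nuc_M_cond-3}, and \cref{eq:Nuc_M_cond} guarantees that the Stirling and degree-counting steps are valid.

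I expect the main obstacle to be the bookkeeping of the Hermite-polynomial factor. Unlike Gaussian primitives in \cite{huggins2024}, the $2^{\nu/2}/\sqrt{\nu!}$ factors do not cancel here, as was already seen in \cref{eq:bound_sum}. I would first try to absorb them using orthonormality of the $\psi_\nu$ and $|\psi_\nu|\le\pi^{-1/4}$, which avoids these factors entirely and should leave only the clean $\ln N_{hg}$ term. If that fails, I would bound their sum by the constant $97/2$ and fold it into the additive constant $5$.
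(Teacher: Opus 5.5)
\textbf{Gap: the polynomial-approximation step.} Your second half matches what the paper does: a uniform pointwise bound, Cauchy--Schwarz over $\nu$, a sum over the at most $LK^{(n)}_{i\mu}/\pi+1$ points of $\mathbb{K}_{\mathrm{cut}}$, renormalization, and a square root when passing to $D$. This is essentially the Lemma~5 manipulation of \cite{huggins2024} that the paper reuses. The gap is in your first step, the construction of the polynomial, and it is not just bookkeeping.

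Write $C=K^{(n)}_{i\mu}/\sqrt{\omega_i}$ and $x=k/\sqrt{\omega_i}\in[-C,C]$. Keeping $r$ Taylor terms of $e^{-y}$ with $y=x^2/2$, and multiplying by $H_\nu$, costs degree $2(r-1)+N_{hg}-1$. So $r\lesssim M_{i\mu}/2$, and the ratio in your remainder bound $(eC^2/2r)^r$ is at least about $eC^2/M_{i\mu}$.
\begin{itemize}
\item When \cref{eq:Nuc_M_cond-2} is the binding hypothesis, $M_{i\mu}\approx e^2C^2/2$ and this ratio is $2/e$, not $1/2$. You only get decay $(2/e)^{M_{i\mu}/2}$.
\item Reaching a pointwise error $\lambda$ then needs $M_{i\mu}\gtrsim 2\ln(1/\lambda)/\ln(e/2)\approx 6.5\ln(1/\lambda)$. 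But \cref{eq:Nuc_M_cond-3} only guarantees $M_{i\mu}\ge 2\log_2(1/\lambda)\approx 2.9\ln(1/\lambda)$. When the two hypotheses coincide, your bound is only $\lambda^{\log_2(e/2)}\approx\lambda^{0.44}$.
\item The $\sqrt{2N_{hg}-1}$ term cannot repair this. It is of size $C\sqrt{N_{hg}}$ and does not account for the Hermite degrees.
\item Your error term is $c_\nu H_\nu(x)\,[e^{-x^2/2}-T_r(x^2/2)]$, which contains the bare Hermite polynomial. The bound $|\psi_\nu|\le\pi^{-1/4}$ does not control it; it only gives $|c_\nu H_\nu(x)|\le\pi^{-1/4}e^{x^2/2}$. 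That is a factor $e^{C^2/2}$, exponentially large in the cutoff, with no counterpart in \cref{eq:Nuc_M_cond-3}.
\end{itemize}
So this route proves a lemma with different, worse conditions, not the stated one. Your pointwise rate $2^{-M_{i\mu}}$ and 2-norm rate $2^{-M_{i\mu}/2}$ are also inconsistent. The coefficient $4$ on $\ln((\delta^{(n,p)}_{i\mu})^{-1})$ is $2\times 2$: one factor from a $2^{-M_{i\mu}/2}$ pointwise rate and one from $\lambda\propto(\delta^{(n,p)}_{i\mu})^2$.

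\textbf{The missing idea.} Approximate each Hermite--Gaussian $\psi_\nu$ as a whole, so the Gaussian keeps taming $H_\nu$. The paper does this by quoting Lemma~9 of \cite{huggins2024}: if $M\ge\frac{eC}{\sqrt2}\big(\frac{eC}{\sqrt2}+\sqrt{2n+1}\big)$, $M\ge 2\ln(\lambda_{\mathrm{cheb}}^{-1})/\ln 2$ and $M\ge 1$, then there is a degree-$(M-1)$ polynomial with $|\psi_n-p|\le\lambda_{\mathrm{cheb}}$ on $[-C,C]$.
\begin{itemize}
\item Taking $n\le N_{hg}-1$ gives \cref{eq:Nuc_M_cond-2}. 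Here $\sqrt{2n+1}$ reflects the intrinsic scale of $\psi_n$, not a degree count.
\item These conditions are Eq.~(B95) of \cite{huggins2024} under $\gamma\to\omega_i/2$ and $\ell\to N_{hg}-1$. So Lemma~5 applies unchanged and requires $\lambda_{\mathrm{cheb}}\le\frac14(\delta^{(n,p)}_{i\mu})^2\,\widetilde{\mathcal N}^{(n)}_{i\mu}\mathcal N^{(t)}_{i\mu}\big(2(C+\pi/(L\sqrt{\omega_i}))N_{hg}\big)^{-1/2}$.
\item Substituting into $M_{i\mu}\ge 2\log_2(1/\lambda_{\mathrm{cheb}})$ gives \cref{eq:Nuc_M_cond-3}, with $5=2\log_2 4+1$.
\end{itemize}
The $2^{\nu/2}/\sqrt{\nu!}$ factors of \cref{eq:bound_sum} belong to the truncation step of \cref{lem:pwb} and never appear here. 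The obstacle you anticipated is therefore not the real one.
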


\begin{proof}
Our proof of \cref{lem:mps} is analogous to Lemma 5 in Appendix B of \cite{huggins2024}. We start by using Lemma~9 from Appendix~B of \cite{huggins2024}, which states:
``Let $\psi_n(x)$ denote the $n$th Hermite-Gaussian function. Let $C, \lambda_{\rm cheb}\in \mathbb{R}_{>0}$. For any  $\SmPolyDeg \in \mathbb{Z}$  such that $\SmPolyDeg$ satisfies the three conditions
\begin{align}
\SmPolyDeg &\geq \frac{e C}{\sqrt{2}}\left(\frac{e C}{\sqrt{2}}+\sqrt{2 n+1}\right)\\
\SmPolyDeg &\geq \frac{2 \ln \lambda_{\rm cheb}^{-1}}{\ln 2}\\
\SmPolyDeg &\geq 1,
\end{align}
there exists a degree $(\SmPolyDeg-1)$ polynomial $p(x)$ such that $\left|\psi_n(x)-p(x)\right| \leq \lambda_{\rm cheb}$
for all $x \in [-C,C]$.''
Using this lemma, 
\begin{equation}
\left|\psi_\nu\left(\frac{k}{\sqrt{\omega_i}}\right)-p_\nu\left(\frac{k}{\sqrt{ \omega_i}}\right)\right| \leq \lambda_{\rm cheb},
\end{equation}
for any $k \in \big[-\SmMom, \SmMom\big]$ so long as all of these conditions are satisfied: 
\begingroup
\allowdisplaybreaks
\begin{align} \label{eq:point_of_contact_2}
& \SmPolyDeg \geq \frac{e \SmMom}{ \sqrt{2 \omega_i}}\left(\frac{e \SmMom}{ \sqrt{2 \omega_i}}+\sqrt{2 N_{hg}-1}\right) \\
& \SmPolyDeg \geq \frac{2 \ln \lambda_{\rm cheb}^{-1}}{\ln 2} \\
& \SmPolyDeg \geq 2.
\end{align}
\endgroup
These inequalities are the same as Eq.~(B95) of~\cite{huggins2024} after making the substitutions $\gamma \rightarrow \omega_i/2$ and $\ell \rightarrow N_{hg}-1$. Then, we can use the manipulations in Lemma~5 of Appendix~B of \cite{huggins2024} without modification. Following these steps through Eq.~(B114) of~\cite{huggins2024}, we obtain
\begin{equation}
\lambda_{\rm cheb} \leq \frac14 \big(\SmDelP \big)^2 \SmnormAprx \NormT
\left(2\left(\frac{\SmMom}{\sqrt{\omega_i}}+\frac{\pi}{L \sqrt{\omega_i}}\right)N_{hg}\right)^{-1/2}.
\end{equation}
Substituting this expression into the inequalities, we arrive at \cref{eq:Nuc_M_cond,eq:Nuc_M_cond-2,eq:Nuc_M_cond-3}.
\end{proof}

To arrive at \cref{eq:m_nuc_bound} of the main text, we let $\SmDelT = \SmDelP =\SmDelC/2$. By performing similar manipulations to Eqs.~(B20)--(B26) of \cite{huggins2024}, we combine all three conditions on $\SmPolyDeg$ into the more stringent condition,
\begin{equation} \label{eq:stringent_m}
    \SmPolyDeg \geq e^2 \big(\SmMom\big)^2 /\omega_i.
\end{equation}
Thus, if \cref{eq:stringent_m} is satisfied, then all three inequalities in \cref{eq:Nuc_M_cond,eq:Nuc_M_cond-2,eq:Nuc_M_cond-3} are satisfied.

The combined error from approximating $\hatsm$ as a PWB projection and approximating the projection as a polynomial is
\begin{align}
    \SmEpsC &= D(\hatsm,\SmPolyAprx)\leq \SmEpsT + \SmEpsP \leq \SmDelC,
\end{align}
where we used the triangle inequality. The resulting quantum state in the PWB,
\begin{equation} \label{eq:mps_encoding}
\left|\SmPolyAprx\right\rangle= \sum_{k \in \mathbb{K}_{\mathrm{cut}}}p(k)\left|\frac{kL}{2\pi}\right\rangle,
\end{equation}
can be efficiently encoded exactly as an MPS with bond dimension at most $2\SmPolyDeg + 3$ (Corollary~7 of \cite{huggins2024}). 
Because the MPS encoding is exact, the overall error of approximating $\hatsm$ as an MPS is also bounded by $\SmDelC$.

\subsection{Two's complement MPS representation}
\label{app:twos_complement_mps}
Here, we prove \cref{lem:twos_complement}, which establishes that there is an efficient MPS representation of polynomials using a two's-complement representation of integers. This result complements the efficient MPS representation of polynomials using signed-magnitude representation~\cite{huggins2024}.

\begin{lemma}
	\label{lem:twos_complement}
	Let 
    \begin{equation}
		\ket{\psi} = \sum_{k \in G} p(k) \ket{k}_{\mathrm{TC}},
	\end{equation}
    where $p(k)$ is a polynomial of degree $d$ and $G = \left[-2^{n-1}, 2^{n-1}-1\right] \cap \mathbb{Z}$. The integers are represented in the computational basis using a two's-complement representation. Then, $\ket{\psi}$ admits an exact MPS representation with bond dimension at most $M=2d+4$.
\end{lemma}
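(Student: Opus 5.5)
In two's complement, an $n$-bit string $s_1 s_2\cdots s_n$ (with $s_1$ the most significant bit) encodes the integer
\begin{equation}
k = -2^{n-1}s_1 + \sum_{j=2}^{n} 2^{n-j} s_j .
\end{equation}
This is an affine function of the bits, exactly as in the unsigned representation, except that the leading bit carries a negative weight. The plan is the standard "polynomial of a bit-linear form" MPS construction. First I build an MPS for $p(x+c)$ as a sum over bits. Then I handle the special leading bit, either by splitting on $s_1$ or by absorbing its sign into the first tensor. The bond-dimension count then comes out slightly above $d+1$, comfortably within $2d+4$.

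\textbf{Step 1: unsigned core.} Write $x = \sum_{j=2}^n 2^{n-j}s_j \in [0,2^{n-1}-1]$. For a fixed shift $c$, expand
\begin{equation}
p(c+x)=\sum_{m=0}^{d} a_m (c+x)^m .
\end{equation}
Now use the transfer-matrix trick. Define the vector $v(y)=(1,y,y^2,\dots,y^d)^\top$. For each bit position $j$ and value $s_j$, the binomial theorem gives an upper-triangular $(d+1)\times(d+1)$ matrix $B_j^{s_j}$ satisfying
\begin{equation}
v\bigl(y+2^{n-j}s_j\bigr)=B_j^{s_j}\,v(y).
\end{equation}
Chaining these from $j=n$ up to $j=2$, and closing the chain with the boundary vectors $v(c)$ on one side and $(a_0,\dots,a_d)$ on the other, gives a bond-dimension-$(d+1)$ MPS for $p(c+x)$ on qubits $2,\dots,n$.

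\textbf{Step 2: leading bit.} The full state is
\begin{equation}
\ket{\psi}=\ket{0}\otimes\sum_{x} p(x)\ket{x} \;+\; \ket{1}\otimes\sum_{x} p(x-2^{n-1})\ket{x}.
\end{equation}
Both branches are polynomials of the same degree $d$ in $x$. The simplest route is to take $c=-2^{n-1}s_1$ as the initial boundary vector, so that the first site tensor maps $s_1\mapsto v(-2^{n-1}s_1)$. This is already a valid rank-2 tensor with bond dimension $d+1$, so no doubling of the bond dimension is needed at all.

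If one instead mimics the construction of~\cite{huggins2024} literally, treating the two branches as a direct sum, the bond dimension becomes at most $2(d+1)$. Adding a couple of boundary or padding dimensions, as in their signed-magnitude argument (Corollary~7 of~\cite{huggins2024}, which gives $2M+3$), still yields at most $2d+4$. Either way, the bound stated in the lemma follows, and I would present the first, tighter construction and remark that it implies the stated bound.

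\textbf{Main obstacle.} The only delicate point is bookkeeping: the MPS convention in \cref{eq:mps} (rank-2 end tensors, with the bit order matching the register) must agree with the two's-complement bit significance, and the sign weight $-2^{n-1}$ must be absorbed correctly into the first tensor. The remaining steps are routine binomial algebra and should go through directly.
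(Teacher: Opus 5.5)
Your proof is correct, and it takes a different and tighter route than the paper. The paper splits on the sign bit, $\ket{\psi}=\ket{0}\otimes\ket{\psi_{\ge 0}}+\ket{1}\otimes\ket{\psi_{<0}}$. It then invokes the unsigned-integer result of \cite{huggins2024} as a black box, giving bond dimension $d+2$ for each branch. Prepending the sign qubit and taking the direct sum yields $2d+4$.

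You instead observe that the two's-complement value $k=-2^{n-1}s_1+\sum_{j\ge 2}2^{n-j}s_j$ is still an affine function of the bits. So the binomial translation matrices $B_j^{s_j}$ apply uniformly, and the negative weight of the leading bit is absorbed into the boundary tensor $s_1\mapsto v(-2^{n-1}s_1)$. Because the $B_j^{s_j}$ are all translation operators, they commute, so ordering and placement conventions are harmless. This gives bond dimension $d+1$ with no doubling, and the argument is self-contained.

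The paper's route buys modularity by reusing an existing lemma verbatim, at the price of a bound that is loose by roughly a factor of two. Your sharper bound would matter downstream, since the unitary-synthesis Toffoli cost in \cref{eq:one_vib_to_PWB_gates} grows like $m^{3/2}$ in the bond dimension $m$.

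One correction to your side remark. The direct-sum fallback already gives $2(d+1)\le 2d+4$ on its own, so the appeal to Corollary~7 of \cite{huggins2024} is unnecessary. Read with $M=d+1$ (degree plus one, as in the paper), its $2M+3=2d+5$ would actually exceed the target, so drop that sentence.
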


\begin{proof}
	In an $n$-qubit two's-complement representation, the most significant bit, $s_{n-1}$, acts as the sign bit.
	The integer $k$ represented by the bitstring $s_{n-1} \dots s_0$ is then
	\begin{equation}
		k = -s_{n-1} 2^{n-1} + y,
	\end{equation}
    where $y=\sum_{j=0}^{n-2} s_j 2^j$ is the integer value of the last $n-1$ bits.
	We can partition $\ket{\psi}$ based on the value of $s_{n-1}$,
	\begin{equation}
		\ket{\psi} = \ket{0} \otimes \ket{\psi_{\ge 0}} + \ket{1} \otimes \ket{\psi_{< 0}}.
	\end{equation}
	When $s_{n-1}=0$ (non-negative integers), $k = y$ and $\ket{\psi_{\ge 0}}$ encodes the degree-$d$ polynomial $f(y) = p(y)$ in an unsigned integer representation. When $s_{n-1}=1$ (negative integers), $k = -2^{n-1} + y$ and therefore $\ket{\psi_{< 0}}$ encodes the function $g(y) = p(y - 2^{n-1})$ in an unsigned integer representation, which is also a polynomial of degree $d$ in $y$.

	As explained in~\cite{huggins2024}, the tensorization of a degree-$d$ polynomial on a grid of unsigned integers can be represented by an MPS with bond dimension $d+2$.
	Therefore, both $\ket{\psi_{\ge 0}}$ and $\ket{\psi_{< 0}}$ admit such representations.
	Prepending these matrix product states with a qubit for the sign bit in the state $\ket{0}$ or $\ket{1}$ yields new MPSs with the same maximum bond dimension.
	The full state $\ket{\psi}$ is a direct sum of the two branches, which can be represented by an MPS whose bond dimension is at most the sum of the bond dimensions of each branch.
	Thus, the total bond dimension $M$ satisfies
	$M \le (d+2) + (d+2) = 2d+4$.
\end{proof}

\section{Implementing \texorpdfstring{$\tilde{U}_{\mathbf{A}}^{- \top}$}{UA-T}} \label{app:LCT}
In this appendix, we provide details for implementing $\tilde{U}_{\mathbf{A}}^{- \top}$, and bound the resulting errors when the initial state is a multivariate Gaussian. In \cref{app:implementing_A}, we provide an algorithm to approximate $U_{\mathbf{A}^{-\top}}^{\textsc{lct}}$ along with resource estimates. In \cref{app:nm_algo_grid_pad}, we derive a lower bound on $n_{\mathrm{pad}}$. Finally, in \cref{app:lct_ssct_error}, we bound the error from approximately implementing  $U_{\mathbf{A}^{-\top}}^{\textsc{lct}}$ and $U_{\mathbf{A}^{-\top}}^{\textsc{ssct}}$  when the nuclear state is a Gaussian.
 
\subsection{Implementing \texorpdfstring{$\tilde{U}_{\mathbf{A}^{-\top}}^{\textsc{lct}}$}{UT-1}} 
\label{app:implementing_A}
In this section, we give details and resource estimates for implementing $\tilde{U}_{\mathbf{A}^{-\top}}^{\textsc{lct}}$. Throughout, we use two's complement to represent integers.

The total cost to implement $\tilde{U}_{\mathbf{A}^{-\top}}^{\textsc{lct}}$ is
\begin{align}
    C_{\mathrm{anc}}(\tilde{U}_{\mathbf{A}^{-\top}}^{\textsc{lct}}) &= \mathrm{max} \{C_{\mathrm{anc}}(U_{\mathbf{Y}}), C_{\mathrm{anc}}(\tilde{U}_{2\mathrm{D}\mathbf{S}}), C_{\mathrm{anc}}(U_{\mathbf{J}}), C_{\mathrm{anc}}(U_{\mathbf{S}}) \} \nonumber \\
    & = 4 \bar{n}_{\mathrm{ISP}} -3\\
    C_{\mathrm{Toff}}(\tilde{U}_{\mathbf{A}^{-\top}}^{\textsc{lct}})& = C_{\mathrm{Toff}}(U_{\mathbf{Y}}) + N_G(3 C_{\mathrm{Toff}}(U_{2\mathrm{D} \mathbf{S}}) + C_{\mathrm{Toff}}(U_\mathbf{J})) +  C_{\mathrm{Toff}}(U_{\mathbf{S}})  \nonumber \\
    &  = \tfrac{9}{2}\eta_n^2(8 \bar{n}_{\mathrm{ISP}}^2 + 39\bar{n}_{\mathrm{ISP}} - 8)  - \tfrac{3}{2}\eta_n(8\bar{n}_{\mathrm{ISP}}^2 + 35\bar{n}_{\mathrm{ISP}} -8) - \bar{n}_{\mathrm{ISP}} 
\end{align}
where $U_{2\mathrm{D}\mathbf{S}}$ refers to either  $U_{\mathbf{S}_1(i,j,\varphi_{ij})}$ or $U_{\mathbf{S}_2(i,j,\varphi_{ij})}$, $U_{\mathbf{J}}$ is any $U_{\mathbf{J}(i,j,\theta_{ij})}$, and $U_{\mathbf{S}}$ is any $3\eta_n$-dimensional shear. We derive the costs of the individual unitaries below.

Let $\tilde{U}_{\mathbf{S}}$ be the unitary that implements a $3\eta_n$-dimensional shear, as described in \cref{eq:approx_SL}. $\tilde{U}_{\mathbf{S}}$ can be either a lower or upper shear. We implement $\tilde{U}_{\mathbf{S}}$ by repeating the classically controlled multiplication (CCM) subroutine~\cite{bauer2021}, which performs the transformation
\begin{align} \label{eq:CCM_trans}
    b_{ij}\cdots |n_i\rangle \cdots |n_j\rangle \cdots &\rightarrow b_{ij}\cdots |n_i\rangle_{\mathrm{exten}} \cdots |n_j\rangle_{\mathrm{exten}} \cdots \\ 
    &\rightarrow b_{ij}\cdots |(n_i + b_{ij}n_j) \, \mathrm{cmod}\, E\rangle_{\mathrm{exten}} \cdots |n_j\rangle_{\mathrm{exten}},
\end{align}
where $E = \bar{N}_{\mathrm{ISP}}/2$, $\bar{N}_{\mathrm{ISP}} = 2^{\bar{n}_{\mathrm{ISP}}}$, $\bar{n}_{\mathrm{ISP}} = n_{\mathrm{ISP}} + n_{\mathrm{pad}}$, and $n_{\mathrm{pad}}$ is the number of qubits that $\mathcal{B}_{\mathrm{ISP}}$ is padded by. 
The values $b_{ij}$ are supplied as classical inputs and consist of $\bar{n}_{\mathrm{ISP}}$ bits to represent the integer part and $r$ decimal bits. The registers $|n_i\rangle_{\mathrm{exten}}$ and $|n_j\rangle_{\mathrm{exten}}$ are obtained by extending $|n_i\rangle$ and $|n_j\rangle$ with $r$ additional bits to accommodate fixed-point arithmetic \cite{bauer2021}. After rounding, these additional $r$ bits are uncomputed and discarded. 

Because $b_{ij}$ is truncated to finite precision when loaded onto the quantum computer, $b_{ij} n_j$ can round to a different value than what would be computed using the exact value of $b_{ij}$, introducing an error. By selecting $r = \bar{n}_{\mathrm{ISP}}-1$, we ensure that this error only occurs for grid points at the vertices of $\bar{\mathcal{B}}_{\mathrm{ISP}}$, and that $b_{ij} n_j$ rounds to within $\pm 1$ grid point of the value obtained using the exact representation of $b_{ij}$ \cite{bauer2021}. We use this choice of $r$ throughout.

A single instance of the CCM subroutine loops over $\bar{n}_{\mathrm{ISP}}$ controlled additions of two $\bar{n}_{\mathrm{ISP}} + r$ bit values followed by a loop of depth $r$ in which the $l$th iteration implements controlled addition acting on two $l$-bit values. The cost of one CCM is
\begin{align}
     C_{\mathrm{anc}}(\textsc{ccm}) &=\bar{n}_{\mathrm{ISP}} + 3r  & C_{\mathrm{Toff}}(\textsc{ccm}) &= \bar{n}_{\mathrm{ISP}}^2 + 3\bar{n}_{\mathrm{ISP}} + r \nonumber\\
     &= 4\bar{n}_{\mathrm{ISP}} - 3&
    & = \bar{n}_{\mathrm{ISP}}^2 + 4\bar{n}_{\mathrm{ISP}} -1.
\end{align}

To implement $\tilde{U}_{\mathbf{S}}$, $3\eta_n -i$ CCM operations are applied to the $i$th register, followed by a rounding operation, then $3\eta_n -i$ inverse CCM operations. We perform rounding using a controlled adder \cite{bauer2021}, which uses $2\bar{n}_{\mathrm{ISP}}$ Toffoli gates and $\bar{n}_{\mathrm{ISP}}$ ancillas. Repeating this routine over $3\eta_n -1$ registers yields the following resource costs:
\begin{align}
     C_{\mathrm{anc}}(\tilde{U}_{\mathbf{S}}) &= C_{\mathrm{anc}}(\textsc{ccm}) 
     & C_{\mathrm{Toff}}(\tilde{U}_{\mathbf{S}}) &= \sum_{i=1}^{3\eta_n} (3\eta_n -i)\cdot2C_{\mathrm{Toff}}(\textsc{ccm})  + (3\eta_n -1)\cdot 2\bar{n}_{\mathrm{ISP}} \\
     & = 4 \bar{n}_{\mathrm{ISP}} - 3 &
    &=9\eta_n^2(\bar{n}^2_{\mathrm{ISP}} + 4\bar{n}_{\mathrm{ISP}} -1 ) 
    - 3\eta_n(\bar{n}_{\mathrm{ISP}}^2 + 2\bar{n}_{\mathrm{ISP}} -1) - 2\bar{n}_{\mathrm{ISP}}.
\end{align}

$\tilde{U}_{2 \mathrm{D}\mathbf{S}}$ is given by either $\tilde{U}_{\mathbf{S}_1(i,j,\varphi_{ij})}$ or $\tilde{U}_{\mathbf{S}_2(i,j,\varphi_{ij})}$. These unitaries transform the $i$th and $j$th subregisters by
\begin{align}
    \tilde{U}_{\mathbf{S}_1(i,j,\varphi_{ij})} \cdots |n_i\rangle \cdots |n_j\rangle \cdots &=  \cdots |n_i \oplus R(t_{ij} n_j  \, \mathrm{cmod}\, E)\rangle \cdots |n_j\rangle \cdots\\
    \tilde{U}_{\mathbf{S}_2(i,j,\varphi_{ij})} \cdots |n_i\rangle \cdots |n_j\rangle \cdots &= \, \cdots |n_i\rangle \cdots |n_j \oplus R(-s_{ji} n_i \,\mathrm{cmod}\, E)\rangle \cdots .
\end{align}
Thus, $\tilde{U}_{2\mathrm{D}\mathbf{S}}$ is implemented by first applying a CCM operation, followed by a rounding operation, then an inverse CCM. We perform rounding using a controlled adder that uses $2\bar{n}_{\mathrm{ISP}}$ Toffoli gates and $\bar{n}_{\mathrm{ISP}}$ ancillas \cite{bauer2021}. The $b_{ij}$ in \cref{eq:CCM_trans} are substituted with the appropriate $t_{ij}$ and $s_{ij}$ computed during the Givens decomposition. The cost of a single approximate 2D shearing transformation is then
\begin{align}
    C_{\mathrm{anc}}(\tilde{U}_{2\mathrm{D}\mathbf{S}}) &= C_{\mathrm{anc}}(\textsc{ccm}) & C_{\mathrm{Toff}}(\tilde{U}_{2\mathrm{D}\mathbf{S}}) &= 2C_{\mathrm{Toff}}(\textsc{ccm}) + 2\bar{n}_{\mathrm{ISP}} \nonumber \\
    &= 4\bar{n}_{\mathrm{ISP}} - 3 & &= 2\bar{n}_{\mathrm{ISP}}^2 + 10\bar{n}_{\mathrm{ISP}} - 2.
\end{align}

When $\det(\mathbf{X})= + 1$, $U_{\mathbf{Y}}$ is the identity and no gates are applied. Otherwise, $U_{\mathbf{Y}}$ flips the sign of the first coordinate,
\begin{equation}
    U_{\mathbf{Y}}|n_1\rangle |n_2\rangle \dots |n_{3\eta_n}\rangle = \lvert-n_1\rangle |n_2\rangle \dots |n_{3\eta_n}\rangle.
\end{equation}
$U_{\mathbf{Y}}$ is implemented in two's complement by inverting all $\bar{n}_{\mathrm{ISP}}$ bits of $|n_1\rangle$, adding one, and discarding the carry bit. Inverting the bits can be done with $\bar{n}_{\mathrm{ISP}}$ \textsc{not} gates. Adding two $\bar{n}_{\mathrm{ISP}}$-bit numbers uses $\bar{n}_{\mathrm{ISP}}$ Toffoli gates and $2\bar{n}_{\mathrm{ISP}}$ ancillas~\cite{gidney2018}, being $\bar{n}_{\mathrm{ISP}}$ ancillas to store the number being added and another $\bar{n}_{\mathrm{ISP}}$ to perform addition. The resource requirements for $U_{\mathbf{Y}}$ are therefore
\begin{align}
    C_{\mathrm{anc}}(U_{\mathbf{Y}}) &= 2\bar{n}_{\mathrm{ISP}} &
    C_{\mathrm{Toff}}(U_{\mathbf{Y}}) &= \bar{n}_{\mathrm{ISP}}.
\end{align}

The action of $U_{\mathbf{J}(i,j,\theta_{ij})}$ is
\begin{equation}
   U_{\mathbf{J}(i,j,\theta_{ij})} \cdots |n_i\rangle \cdots |n_j\rangle \cdots = \cdots |\mathrm{sgn}(\theta_{ij})  n_j\rangle \cdots \lvert- \mathrm{sgn}(\theta_{ij})  n_i\rangle \cdots.
\end{equation}
Implementing $U_{\mathbf{J}(i,j,\theta_{ij})}$ consists of swapping $|n_i\rangle$ and $|n_j\rangle$, and flipping the sign of one of the registers. The swap can be performed with $3\bar{n}_{\mathrm{ISP}}$ \textsc{cnot} gates and no ancillas. Inverting the sign uses the same resources as $U_{\mathbf{Y}}$, giving:
\begin{align}
    C_{\mathrm{anc}}(U_{\mathbf{J}(i,j,\theta_{ij})}) &= 2\bar{n}_{\mathrm{ISP}} &
    C_{\mathrm{Toff}}(U_{\mathbf{J}(i,j,\theta_{ij})}) &= \bar{n}_{\mathrm{ISP}}. 
\end{align}

\subsection{Bounding \texorpdfstring{$n_{\mathrm{pad}}$}{npad}} \label{app:nm_algo_grid_pad}
Here, we derive lower bounds on $n_{\mathrm{pad}}$ that ensures zero error from modular addition for both $\tilde{U}_{\mathbf{A}^{-\top}}^{\textsc{lct}}$ and $\tilde{U}_{\mathbf{A}^{-\top}}^{\textsc{ssct}}$.

\subsubsection{Bounding \texorpdfstring{$n_{\mathrm{pad}}$}{npad} for \texorpdfstring{$\tilde{U}_{\mathbf{A}^{-\top}}^{\textsc{lct}}$}{U-LCT}}
Here, we derive a bound on $n_{\mathrm{pad}}$ for $\tilde{U}_{\mathbf{A}^{-\top}}^{\textsc{lct}}$. To do so, we examine how the initial grid $\mathcal{B}_{\mathrm{ISP}} = [- {N_{\mathrm{ISP}}}/{2},{N_{\mathrm{ISP}}}/{2}-1]^{3\eta_n}\cap \mathbb{Z}^{3\eta_n}$ given in \cref{sect:LCT} transforms under $\tilde{U}_{\mathbf{A}^{-\top}}^{\textsc{lct}}$.

To begin, we write $\tilde{U}_{\mathbf{A}^{-\top}}^{\textsc{lct}}$ as
\begin{equation}
   \tilde{U}_{\mathbf{A}^{-\top}}^{\textsc{lct}} = \Bigg( \prod_{l=1}^{\beta+1} \tilde{U}_{\mathbf{F}_l} \Bigg),
\end{equation}
where $\tilde{U}_{\mathbf{F}_l}$ is $l$th sequential unitary of \cref{eq:ortho_seq_shear}, $\mathbf{F}_l$ is the matrix associated with  $\tilde{U}_{\mathbf{F}_l}$, and $\beta = 18 \eta_n^2 - 6\eta_n +1$ is the number of unitaries in \cref{eq:ortho_seq_shear} not including  $\tilde{U}_{\mathbf{L}}$. We also define the
padded grid 
\begin{equation} \label{eq:grid_isp_bar}
    \BBar =  \big[- \tfrac12{\NBar}, \tfrac12 {\NBar} - 1\big]^{3\eta_n}\cap \mathbb{Z}^{3\eta_n},
\end{equation}    
where $\NBar = 2^{\bar{n}_{\mathrm{ISP}}}$ and  $\bar{n}_{\mathrm{ISP}} = n_{\mathrm{ISP}} + n_{\mathrm{pad}}$.

For all $\mathbf{n} \in \mathcal{B}_{\mathrm{ISP}}$ to remain within $\bar{\mathcal{B}}_{\mathrm{ISP}}$ throughout each step of the $\tilde{U}_{\mathbf{A}^{-\top}}^{\textsc{lct}}$, the following must be true:
\begin{equation} \label{eq:npad_condition_1}
  \max_{1 \leq P \leq \beta + 1} \max_{\mathbf{n} \in \mathcal{B}_{\mathrm{ISP}}}\left\Vert R(\mathbf{F}_P \cdots R(\mathbf{F}_{2} \,R(\mathbf{F}_1 \mathbf{n} )) \cdots )  \right\Vert_{\infty} \leq \tfrac12 {\bar{N}_{\mathrm{ISP}}}.
\end{equation}
Letting $\mathbf{a}_P =  R(\mathbf{F}_{P-1} \cdots R(\mathbf{F}_3\, R(\mathbf{F}_{2} \,R(\mathbf{F}_1 \mathbf{n} ))) \cdots )$ allows us to express \cref{eq:npad_condition_1} as
 \begin{equation} \label{eq:npad_condition_2}
  \max_{1 \leq P \leq \beta+1} \max_{\mathbf{n} \in \mathcal{B}_{\mathrm{ISP}}}\left\Vert R(\mathbf{F}_P\mathbf{a}_{P})\right\Vert_{\infty} \leq \tfrac12 {\bar{N}_{\mathrm{ISP}}}.
\end{equation}
We can also write as $\mathbf{F}_{P} \mathbf{a}_P = R(\mathbf{F}_{P} \mathbf{a}_P) + \mathrm{Fr}(\mathbf{F}_{P} \mathbf{a}_P)$, where we define the fractional part $\mathrm{Fr}(\mathbf{v}) = \mathbf{v} - R(\mathbf{v})$, whose components obey $\big|[\mathrm{Fr}(\mathbf{v})]_i\big| \leq \tfrac12$. Then, 
 \begin{align} 
   \max_{1 \leq P \leq \beta + 1} \max_{\mathbf{n} \in \mathcal{B}_{\mathrm{ISP}}}\left\Vert \mathbf{F}_P\mathbf{a}_P - \mathrm{Fr}(\mathbf{F}_P\mathbf{a}_{P})  \right\Vert_{\infty}  &\leq   \max_{1 \leq P \leq \beta + 1} \max_{\mathbf{n} \in \mathcal{B}_{\mathrm{ISP}}} \big(\left\Vert \mathbf{F}_P\mathbf{a}_{P}\Vert_{\infty} +  \Vert \mathrm{Fr}(\mathbf{F}_P\mathbf{a}_{P})  \right\Vert_{\infty} \big) \label{eq:manip1} \\
   &\leq \max_{1 \leq P \leq \beta + 1} \max_{\mathbf{n} \in \mathcal{B}_{\mathrm{ISP}}}\Vert \mathbf{F}_P\mathbf{a}_{P}\Vert_{\infty}  + \tfrac12.\label{eq:manip2}
\end{align}

Next, we substitute in the definition of $\mathbf{a}_P$ and repeat the same set of steps. In particular, 
\begin{align}
    \max_{1 \leq P \leq \beta + 1} \max_{\mathbf{n} \in \mathcal{B}_{\mathrm{ISP}}}\Vert \mathbf{F}_P\mathbf{a}_{P}\Vert_{\infty}  + \tfrac12& = \max_{1 \leq P \leq \beta + 1} \max_{\mathbf{n} \in \mathcal{B}_{\mathrm{ISP}}}\Vert \mathbf{F}_P R(\mathbf{F}_{P-1} \mathbf{a}_{P-1})  \Vert_{\infty}  + \tfrac12 \\
    & = \max_{1 \leq P \leq \beta + 1} \max_{\mathbf{n} \in \mathcal{B}_{\mathrm{ISP}}}\Vert \mathbf{F}_P (\mathbf{F}_{P-1} \mathbf{a}_{P-1} - \mathrm{Fr}(\mathbf{F}_{P-1} \mathbf{a}_{P-1})) \Vert_{\infty}  + \tfrac12 \\
    & \leq \max_{1 \leq P \leq \beta + 1} \max_{\mathbf{n} \in \mathcal{B}_{\mathrm{ISP}}} \big(\Vert \mathbf{F}_P \mathbf{F}_{P-1} \mathbf{a}_{P-1} \Vert_{\infty} + \Vert \mathbf{F}_{P} \mathrm{Fr}(\mathbf{F}_{P-1} \mathbf{a}_{P-1}) \Vert_{\infty} \big) + \tfrac12 \\ 
    & \leq \max_{1 \leq P \leq \beta + 1} \Big(\max_{\mathbf{n} \in \mathcal{B}_{\mathrm{ISP}}}\Vert \mathbf{F}_P \mathbf{F}_{P-1} \mathbf{a}_{P-1} \Vert_{\infty} + \max_{\mathbf{m} \in \mathcal{S}} \Vert \mathbf{F}_{P} \mathbf{m} \Vert_{\infty}\Big) + \tfrac12,
\end{align}
where $\mathcal{S} = [-\tfrac12,\tfrac12]^{3\eta_n}$.
Repeating this procedure $P-1$ times yields the condition
\begin{align}
    \max_{1 \leq P \leq \beta+1} \max_{\mathbf{n} \in \mathcal{B}_{\mathrm{ISP}}}\left\Vert R(\mathbf{F}_P\mathbf{a}_{P})\right\Vert_{\infty} & \leq \max_{1 \leq P \leq \beta + 1}\bigg( \max_{\mathbf{n}\in \mathcal{B}_{\mathrm{ISP}}}  \Vert \mathcal{F}_{P,1} \mathbf{n}\Vert_{\infty}  + \max_{\mathbf{m} \in \mathcal{S}} \sum_{l=2}^{P}\Vert \mathcal{F}_{P,l} \mathbf{m}\Vert_{\infty} \bigg)
    + \tfrac12 \\
    & \leq  \max_{1 \leq P \leq \beta + 1}\max_{\mathbf{n}\in \mathcal{B}_{\mathrm{ISP}}}  \Vert \mathcal{F}_{P,1} \mathbf{n}\Vert_{\infty}  + \max_{\mathbf{m} \in \mathcal{S}} \sum_{l=2}^{\beta + 1}\Vert \mathcal{F}_{\beta+1,l} \mathbf{m}\Vert_{\infty} + \tfrac12 \label{eq:npad_middle_equation} \\ 
    & \leq \frac{N_{\mathrm{ISP}}}{2} \max_{1 \leq P \leq \beta+1} \Vert \mathcal{F}_{P,1} \Vert_{\infty} + \tfrac12 \sum_{l=2}^{\beta+1}\Vert \mathcal{F}_{\beta+1,l}\Vert_{\infty} + \tfrac12 \label{eq:npad_condition_3} 
\end{align}
where $\mathcal{F}_{P,l}$ is the partial product
\begin{equation}
   \mathcal{F}_{P,l} = \mathbf{F}_P \mathbf{F}_{P-1} \cdots \mathbf{F}_{l+1} \mathbf{F}_l.
\end{equation}

Using \cref{eq:npad_condition_2} and solving for $n_{\mathrm{pad}}$, we find
\begin{equation} \label{eq:npad_upper_bound_1}
    n_{\mathrm{pad}} \geq \left\lceil \log \left(N_{\mathrm{ISP}} \max_{1 \leq P \leq \beta + 1} \Vert \mathcal{F}_{P,1} \Vert_{\infty} + \sum_{l=2}^{\beta+1}\Vert \mathcal{F}_{\beta+1,l}\Vert_{\infty} + 1  \right) \right \rceil - n_{\mathrm{ISP}},
\end{equation}
all parts of which can be calculated classically.

A simpler, but looser, bound on $n_{\mathrm{pad}}$ can be obtained from \cref{eq:npad_middle_equation} and bounding the infinity norms in terms of two-norms. The first term in \cref{eq:npad_middle_equation} is then
\begingroup
\allowdisplaybreaks
\begin{align}
   \max_{1 \leq P \leq \beta + 1}\max_{\mathbf{n} \in \mathcal{B}_{\mathrm{ISP}}}\left\Vert \mathcal{F}_{P,1} \mathbf{n} \right\Vert_{\infty} &\leq   \max_{2 \leq P \leq \beta + 1}\max_{\mathbf{n} \in \mathcal{B}_{\mathrm{ISP}}}\left\Vert \mathcal{F}_{P,2} \Vert_{\infty} \Vert\mathbf{L}\Vert_{\infty} \Vert\mathbf{n} \right\Vert_{\infty} \\
   &\leq   \max_{2 \leq P \leq \beta + 1} \sqrt{3\eta_n}  \Vert \mathcal{F}_{P,2} \Vert_{2} \Vert\mathbf{L}\Vert_{\infty} \max_{\mathbf{n} \in \mathcal{B}_{\mathrm{ISP}}}\Vert\mathbf{n} \Vert_{\infty} \\
    & =  \sqrt{3\eta_n} \Vert\mathbf{L}\Vert_{\infty} \frac{N_{\mathrm{ISP}}}{2} \max_{2 \leq P \leq \beta + 1} \Vert \mathcal{F}_{P,2} \Vert_{2},
\end{align}
\endgroup
where we used $\mathbf{F}_1 = \mathbf{L}$. 
To evaluate the maximum over $P$ of $\Vert \mathcal{F}_{P,2} \Vert_{2}$, we observe that, for $P\neq \beta+1$, if $(P-1) \operatorname{mod}4 \in \{0,3\}$, then $\mathcal{F}_{P,2}$ is an orthogonal matrix and $\Vert \mathcal{F}_{P,2} \Vert_{2} = 1$. If $P= \beta+1$, $\mathcal{F}_{P,2}$ is an orthogonal matrix multiplied by a reflection, and $\Vert \mathcal{F}_{P,2} \Vert_{2} = 1$.
If  $(P-1) \operatorname{mod}4 \in \{1,2\}$, then $\mathcal{F}_{P,2}$ is an orthogonal matrix multiplied by one or two shearing transformations, respectively. For $P\operatorname{mod}4 = 1$, 
\begin{equation} \label{eq:FP_2norm}
    \Vert \mathcal{F}_{P,2}\Vert_{2} =  \Vert \mathbf{F}_P\mathcal{F}_{P-1,2}  \Vert_{2} \leq  \Vert \mathbf{F}_{P} \Vert_{2},
\end{equation}
and for $P\, \mathrm{mod}\,4 = 2$, 
\begin{equation} \label{eq:FPFPm1_2norm}
   \Vert \mathcal{F}_{P,2}  \Vert_{2} = \Vert \mathbf{F}_{P}\mathbf{F}_{P-1}\mathcal{F}_{P-2,2} \Vert_{2} \leq \Vert \mathbf{F}_{P} \mathbf{F}_{P-1} \Vert_{2}.
\end{equation}
We can evaluate the two-norms using $\Vert A \Vert_2 = \sqrt{\lambda_{\max}(A^\dagger A)}$,
where $\lambda_{\mathrm{max}}(\cdot)$ denotes the maximum eigenvalue. Then,
\begingroup
\allowdisplaybreaks
\begin{align}
    \lambda_{\mathrm{max}}(\mathbf{F}_P^{\top} \mathbf{F}_P) &= \max_{y\in \{-1,+1 \}} \left\{1 + \tfrac{1}{2}t_{ij}\left(t_{ij}+ y \sqrt{4 + t_{ij}^2} \right) \right\} \\
    &\leq 1 + \tfrac{1}{2}(1 + \sqrt{5} ) \\
    \lambda_{\mathrm{max}}(\mathbf{F}_{P-1}^{\top}\mathbf{F}_{P}^{\top} \mathbf{F}_{P}\mathbf{F}_{P-1}) &= \max_{y\in \{-1, +1 \} } \left\{1 + \tfrac{1}{2}\left( t_{ij}^2 + y  \sqrt{(2 + t_{ij}^2)^2 - 4(c_{ij} + s_{ij} t_{ij})^2} \right)    \right\} \\
    &\leq 1 + \tfrac{1}{2}(1 + \sqrt{5} ), 
\end{align}
\endgroup
where matrix elements $t_{ij}$ and $s_{ij}$ are given in \cref{eq:s1_def,eq:s2_def}, respectively. Thus, for both $P \operatorname{mod}4 \in \{1,2\}$, 
\begin{equation}
    \Vert \mathcal{F}_{P,2}  \Vert_{2} \leq \sqrt{1 + \tfrac{1}{2}(1 + \sqrt{5} )} < 1.619,
\end{equation}
which establishes 
\begin{equation}
    \max_{1 \leq P \leq \beta + 1}\max_{\mathbf{n} \in \mathcal{B}_{\mathrm{ISP}}}\left\Vert \mathcal{F}_{P,1} \mathbf{n} \right\Vert_{\infty} \leq \tfrac12 {1.619\sqrt{3 \eta_n}} N_{\mathrm{ISP}} \Vert \mathbf{L}  \Vert_{\infty}. \label{eq:clean_bound_term1}
\end{equation}

To bound the second term in \cref{eq:npad_middle_equation}, we first write
\begin{align}
    \max_{\mathbf{m} \in \mathcal{S}} \Vert \mathcal{F}_{\beta+1,l} \mathbf{m}\Vert_{\infty} &\leq     \max_{\mathbf{m} \in \mathcal{S}} \Vert \mathcal{F}_{\beta+1,l} \Vert_{\infty} \Vert \mathbf{m}\Vert_{\infty} \\
    &\leq \max_{\mathbf{m} \in \mathcal{S}}  \sqrt{3\eta_n}\, \Vert \mathcal{F}_{\beta+1,l} \Vert_{2} \Vert \mathbf{m}\Vert_{\infty}\\
    &= \tfrac{1}{2}  \sqrt{3\eta_n} \Vert \mathcal{F}_{\beta+1,l} \Vert_{2}.
\end{align}
As was the case for the first term, $\mathcal{F}_{\beta+1, l}$ is either an orthogonal matrix with $\Vert \mathcal{F}_{P,l} \Vert_{2} = 1$, or an orthogonal matrix multiplied by a single shear or a product of shears with $\Vert \mathcal{F}_{P,l} \Vert_{2} \leq 1.619 $. This inequality is easily shown using the same approach as in \cref{eq:FP_2norm,eq:FPFPm1_2norm}. This establishes the bound
\begin{equation}
    \max_{\mathbf{m} \in \mathcal{S}} \sum_{l=2}^{\beta+1}\Vert \mathcal{F}_{\beta+1,l} \mathbf{m}\Vert_{\infty} \leq \tfrac12 \beta 1.619 \sqrt{3 \eta_n}. \label{eq:clean_bound_term2}
\end{equation}

Finally, using \cref{eq:clean_bound_term1,eq:clean_bound_term2} in \cref{eq:npad_condition_3} leads to the condition
\begin{equation}
       1.619 \sqrt{3 \eta_n} ( N_{\mathrm{ISP}}\Vert \mathbf{L}\Vert_{\infty} +\beta)+1  \leq \bar{N}_{\mathrm{ISP}}. 
\end{equation}
Solving for $n_{\mathrm{pad}}$ gives
\begin{equation}
    n_{\mathrm{pad}} \geq \left\lceil \log \left(1.619\sqrt{3 \eta_n} ( N_{\mathrm{ISP}}\Vert \mathbf{L}\Vert_{\infty} +\beta)+1\right) \right \rceil - n_{\mathrm{ISP}}.
\end{equation}

\subsubsection{Bounding \texorpdfstring{$n_{\mathrm{pad}}$}{npad} for \texorpdfstring{$\tilde{U}_{\mathbf{A}^{-\top}}^{\textsc{ssct}}$}{U-SSCT}}
Here, we derive a bound on $n_{\mathrm{pad}}$ for $\tilde{U}_{\mathbf{A}^{-\top}}^{\textsc{ssct}}$. In this case, $n_{\mathrm{pad}}$ is bounded by
\begin{equation}
    \max_{\mathbf{n} \in \mathcal{B}_{\mathrm{ISP}}} \Vert R(\mathbf{L}^{-\top} \mathbf{n}) \Vert_{\infty} \leq \tfrac12 {\bar{N}_{\mathrm{ISP}}}. \label{eq:npad_ssct}
\end{equation}
Using the same approach as in \cref{eq:manip1,eq:manip2}, we find
\begin{align}
    \max_{\mathbf{n} \in \mathcal{B}_{\mathrm{ISP}}} \Vert R(\mathbf{L}^{-\top} \mathbf{n}) \Vert_{\infty}  &\leq \max_{\mathbf{n} \in \mathcal{B}_{\mathrm{ISP}}}  \Vert \mathbf{L}^{-\top} \mathbf{n} \Vert_{\infty} + \tfrac{1}{2} \\
    &\leq \tfrac12 N_{\mathrm{ISP}} \Vert \mathbf{L}^{-\top}\Vert_{\infty} + \tfrac{1}{2}.
\end{align}
Using the above bound with \cref{eq:npad_ssct} gives
\begin{equation}
    n_{\mathrm{pad}} \geq \lceil\log(N_{\mathrm{ISP}} \Vert \mathbf{L}^{-\top}\Vert_{\infty} + 1) \rceil - n_{\mathrm{ISP}}.
\end{equation}

\subsection{Bounding \texorpdfstring{$\tilde{U}_{\mathbf{A}^{-\top}}^{\textsc{lct}}$}{U-LCT} error}
\label{app:lct_ssct_error}
In this section, we bound the error $\epsilon_{\mathrm{LCT}}$ from approximating $U_{\mathbf{A}^{-\top}}^{\textsc{lct}}$ and show that $\epsilon_{\mathrm{LCT}} = \mathcal{O}(\eta_n^2 \Delta)$.
    
Let $|\Xi\rangle$ be a Gaussian initial state defined on $\bar{\mathcal{B}}_{\mathrm{ISP}}$,
\begin{equation} \label{eq:xi_gauss}
    |\Xi\rangle = \frac{1}{\mathcal{N}_{\mathrm{ISP}}} \sum_{\mathbf{n}\in \mathcal{B}_{\mathrm{ISP}}}   e^{-\mathbf{n}^{\top} \boldsymbol{\Sigma}\mathbf{n}/2}|\mathbf{n}\rangle,\quad\, \boldsymbol{\Sigma} = \Delta^2 \boldsymbol{\Sigma}',\quad\, \boldsymbol{\Sigma}' = \mathrm{diag}(\Sigma'_1, \cdots, \Sigma'_{3\eta_n}), \quad \mathrm{and} \quad \Sigma'_i \in \mathbb{R}_{>0}.
\end{equation}
Initially, the nuclear state only has support on $\mathcal{B}_{\mathrm{ISP}}$, i.e., the amplitudes on $\bar{\mathcal{B}}_{\mathrm{ISP}} \setminus \mathcal{B}_{\mathrm{ISP}} $ are zero. The LCT error is
\begin{align}
    \epsilon_{\mathrm{LCT}} &= D(U_{\mathbf{A}^{-\top}}^{\textsc{lct}} |\Xi\rangle, \tilde{U}_{\mathbf{A}^{-\top}}^{\textsc{lct}}|\Xi\rangle  ) \\
    & \leq \epsilon_{\mathrm{shear}} + \epsilon_{\mathrm{ortho}},
\end{align}
where
\begin{align}
    \epsilon_{\mathrm{shear}} &= D(U_{\mathbf{L}}|\Xi\rangle,  \tilde{U}_{\mathbf{L}}|\Xi\rangle) \label{eq:eps_shear_def}\\
    \epsilon_{\mathrm{ortho}} &= D(U_{\mathbf{X}} U_{\mathbf{L}}|\Xi\rangle, \tilde{U}_{\mathbf{X}} U_{\mathbf{L}}|\Xi\rangle) \\
    \tilde{U}_{\mathbf{X}}& = \bigg( \prod_{l=2}^{\beta+1} \tilde{U}_{\mathbf{F}_l} \bigg) \\
     U_{\mathbf{X}} &= \bigg( \prod_{l=2}^{\beta+1} U_{\mathbf{F}_l} \bigg)
\end{align}
and where $\mathbf{F}_{l}$ is the $l$th sequential unitary on the right-hand side of \cref{eq:U_AT_exact}. 

We apply the triangle inequality to further decompose $\epsilon_{\mathrm{ortho}}$ into a sum of errors resulting from 2D shears,
\begin{align} \label{eq:ortho_bound}
    \epsilon_{\mathrm{ortho}} &\leq \sum_{P=2}^{\beta} \epsilon^{(\mathrm{2D-shear})}_P,\\
\label{eq:2d_shear_eps_def}
   \epsilon^{(\mathrm{2D-shear})}_P &= D\big( U_{\mathbf{F}_P} |\Xi_{P-1} \rangle ,\, \tilde{U}_{\mathbf{F}_P}|\Xi_{P-1}\rangle \big)\\
   \label{eq:psi_mult_shear}
    |\Xi_{P}\rangle & = \prod_{l=2}^{P}U_{\mathbf{F}_l} U_{\mathbf{L}}|\Xi\rangle.
\end{align}
Because the reflection and $\pm \pi/2$ rotations are performed exactly, $\epsilon^{(\mathrm{2D-shear})}_P$ is non-zero only when $U_{\mathbf{F}_P}$ and $\tilde{U}_{\mathbf{F}_P}$ correspond to 2D shears. Hence, for $P\geq2$ and $(P-1) \operatorname{mod} 4 = 0$, $ \epsilon_{P}^{(\mathrm{2D-shear})} = 0$. Additionally, $ \epsilon_{\beta+1}^{(\mathrm{2D-shear})} = 0$.

From \cref{lem:shear_error,lem:ortho}, we have
 \begin{align}
        \epsilon_{\mathrm{shear}} & \leq   2^{1/2} \Big(1 -  e^{- \Delta^2 3 \eta_n \lambda_{\mathrm{max}}(\boldsymbol{\Lambda}')  }  \Big)^{1/2}\\
        \epsilon_{P}^{(\mathrm{2D-shear})} & \leq  2^{1/2}\left(1  -  e^{- \Delta^2 [\boldsymbol{\Lambda}_P']_{k_P,k_P}}\right)^{1/2},
    \end{align}
where $\lambda_{\mathrm{max}}(\cdot)$ is the largest eigenvalue of the input matrix, and
\begingroup
\allowdisplaybreaks
\begin{align}
    \boldsymbol{\Lambda}' & = \mathbf{L}^{- \top} \boldsymbol{\Sigma}' \mathbf{L}^{-1}\\
  \boldsymbol{\Lambda}_P' & = (\bar{\mathcal{F}}_P)^{\top} \mathbf{L}^{- \top} \boldsymbol{\Sigma}' \mathbf{L}^{-1} \bar{\mathcal{F}}_P \label{eq:lambda_prime_P}\\
  \bar{\mathcal{F}}_P & = \prod_{l=P}^{2}\mathbf{F}_l^{-1} \\ 
    k_P & = \mathrm{index}(\mathbf{F}_P) =
    \begin{cases}
        i & \mathrm{if}\,\,\mathbf{F}_P = \mathbf{S}_1(i,j,\varphi_{ij})\\
        j & \mathrm{if}\,\,\mathbf{F}_P = \mathbf{S}_2(i,j,\varphi_{ij}).
    \end{cases}
    \label{eq:k_index}
\end{align}
\endgroup
We determine the asymptotic scalings by noting that $(1-e^{-x^2})^{1/2}\leq x$ for all $x\geq 0$. Therefore, 
\begin{align}
        \epsilon_{\mathrm{shear}} &\leq 2^{1/2} \Delta \left(3\eta_n\lambda_{\mathrm{max}}(\boldsymbol{\Lambda}')  \right)^{1/2} = \mathcal{O}( \eta_n^{1/2} \Delta ) \\
        \epsilon_{P}^{\mathrm{2D-shear}} & \leq 2^{1/2} \Delta  \left( [\boldsymbol{\Lambda}_P']_{k_P,k_P} \right)^{1 / 2}= \mathcal{O}( \Delta ).
\end{align}
Because there are $\beta = \mathcal{O}(\eta_n^2)$ 2D shearing terms in $\epsilon_{\mathrm{LCT}}$, 
\begin{equation}
    \epsilon_{\mathrm{LCT}} = \mathcal{O}(\eta_n^2 \Delta).
\end{equation}

\begin{lemma} \label{lem:shear_error}
    Let $\mathbf{S}$ be a shear and $\boldsymbol{\Lambda}' = \mathbf{S}^{- \top} \boldsymbol{\Sigma}' \mathbf{S}^{-1}$, with $\boldsymbol{\Sigma}'$ given in \cref{eq:xi_gauss}. Then, for $|\Xi\rangle$ also in \cref{eq:xi_gauss},
    \begin{equation}
        \epsilon_{\mathrm{shear}} = D( U_{\mathbf{S}}  |\Xi\rangle ,\, \tilde{U}_{\mathbf{S}}  |\Xi\rangle ) \leq   2^{1/2} \Big(1 -  e^{- \Delta^2 3 \eta_n \lambda_{\mathrm{max}}(\boldsymbol{\Lambda}')  }  \Big)^{1/2}.
    \end{equation}
\end{lemma}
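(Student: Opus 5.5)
We will express both states in the computational basis and reduce the trace distance to an overlap of two Gaussian-amplitude vectors on the grid. Write $|\Xi\rangle=\mathcal{N}_{\mathrm{ISP}}^{-1}\sum_{\mathbf{n}} g(\mathbf{n})|\mathbf{n}\rangle$ with $g(\mathbf{n})=e^{-\mathbf{n}^\top\boldsymbol{\Sigma}\mathbf{n}/2}$. The ideal state then has amplitudes $g(\mathbf{S}^{-1}\mathbf{n})$, as in \cref{eq:amp_shear}. By \cref{eq:inverse_S}, the approximate state has amplitudes $g(S^{-1}(\mathbf{n}))$.

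The first step is to establish that $S^{-1}(\mathbf{n})=\mathbf{S}^{-1}(\mathbf{n}-\boldsymbol{\delta}(\mathbf{n}))$ with $\|\boldsymbol{\delta}\|_\infty\le \tfrac12$. Inverting the iterative rounding procedure \cref{eq:n_i_expression} row by row shows that each component is altered by at most a rounding residual. The cmod wrap can be ignored because the padding condition \cref{eq:npad_shear} keeps all support points inside $\bar{\mathcal{B}}_{\mathrm{ISP}}$.

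The second step uses $1-|\langle a|b\rangle|^2\le 2(1-\mathrm{Re}\langle a|b\rangle)$, which gives $D\le\sqrt2\,(1-\mathrm{Re}\langle a|b\rangle)^{1/2}$. It therefore suffices to show $\langle a|b\rangle\ge e^{-\Delta^2 3\eta_n\lambda_{\max}(\boldsymbol{\Lambda}')}$.

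For the overlap, both vectors have the same norm $\mathcal{N}_{\mathrm{ISP}}$, since $S$ is a bijection and $|\det\mathbf S|=1$; this identification up to lattice-sum equivalence should be checked. The overlap is
\begin{equation*}
\mathcal{N}_{\mathrm{ISP}}^{-2}\sum_{\mathbf{n}} e^{-\frac12\mathbf{n}^\top\boldsymbol{\Lambda}\mathbf{n}}\,e^{-\frac12(\mathbf{n}-\boldsymbol\delta)^\top\boldsymbol{\Lambda}(\mathbf{n}-\boldsymbol\delta)},\qquad \boldsymbol\Lambda=\Delta^2\boldsymbol\Lambda'.
\end{equation*}
We will write the second exponent as $\mathbf{n}^\top\boldsymbol\Lambda\mathbf n-2\boldsymbol\delta^\top\boldsymbol\Lambda\mathbf n+\boldsymbol\delta^\top\boldsymbol\Lambda\boldsymbol\delta$. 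The constant term is bounded by $\boldsymbol\delta^\top\boldsymbol\Lambda\boldsymbol\delta\le\lambda_{\max}(\boldsymbol\Lambda)\|\boldsymbol\delta\|_2^2\le \Delta^2\lambda_{\max}(\boldsymbol\Lambda')\,3\eta_n/4$.

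The cross term $\boldsymbol\delta^\top\boldsymbol\Lambda\mathbf n$ is the main obstacle, because $\boldsymbol\delta$ depends on $\mathbf n$ and the term has no sign. We will try two routes. The first is to use $e^{-x}\ge 1-x$ together with Cauchy–Schwarz in the $\boldsymbol\Lambda$-inner product, so that $|\boldsymbol\delta^\top\boldsymbol\Lambda\mathbf n|\le \|\boldsymbol\delta\|_{\boldsymbol\Lambda}\|\mathbf n\|_{\boldsymbol\Lambda}$, and then bound the Gaussian-weighted average of $\|\mathbf n\|_{\boldsymbol\Lambda}$ by its second moment, which is $O(\eta_n)$ in normalized units. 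The second, cleaner route is to apply the AM–GM bound $e^{-(A+B)/2}$ pointwise with $\boldsymbol\Lambda$-convexity, giving
\begin{equation*}
\tfrac12\big[\mathbf n^\top\boldsymbol\Lambda\mathbf n+(\mathbf n-\boldsymbol\delta)^\top\boldsymbol\Lambda(\mathbf n-\boldsymbol\delta)\big]\le \mathbf n^\top\boldsymbol\Lambda\mathbf n+\boldsymbol\delta^\top\boldsymbol\Lambda\boldsymbol\delta\cdot c.
\end{equation*}
Absorbing the extra factor via $(\mathbf n-\tfrac12\boldsymbol\delta)$ recentring then leaves a uniform multiplicative factor of at least $e^{-\Delta^2 3\eta_n\lambda_{\max}(\boldsymbol\Lambda')}$, where the constant $3$ comes from the generous choice $\|\boldsymbol\delta\|_2^2\cdot 4\le 3\eta_n$ after accounting for both cross and constant terms. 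Under either route, the normalization sum cancels against $\mathcal N_{\mathrm{ISP}}^{2}$, yielding the claimed bound.
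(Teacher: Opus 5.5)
The reduction to the overlap matches the paper: writing $S^{-1}(\mathbf n)=\mathbf S^{-1}(\mathbf n-\boldsymbol\delta(\mathbf n))$ with $\|\boldsymbol\delta\|_\infty\le\tfrac12$, and using $D\le\sqrt2\,(1-\mathrm{Re}\langle a|b\rangle)^{1/2}$. There is a genuine gap, though, at the step you flag as the main obstacle. Neither route resolves the cross term, and that term is the whole content of the lemma.

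\textbf{Route~1 fails.} It discards the sign of the cross term. You have $\boldsymbol\delta^\top\boldsymbol\Lambda\boldsymbol\delta\le\tfrac{3\eta_n}{4}\Delta^2\lambda_{\max}(\boldsymbol\Lambda')$ and the Gaussian average $\langle\mathbf n^\top\boldsymbol\Lambda\mathbf n\rangle\approx 3\eta_n/2$. With these, Cauchy--Schwarz only gives $1-\langle a|b\rangle\lesssim\tfrac{3\eta_n}{2\sqrt2}\,\Delta\,\lambda_{\max}(\boldsymbol\Lambda')^{1/2}$. That is first order in $\Delta$, so you get $\epsilon_{\mathrm{shear}}=O(\Delta^{1/2})$. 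The lemma claims $\approx(6\eta_n\lambda_{\max}(\boldsymbol\Lambda'))^{1/2}\Delta$.

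\textbf{Route~2's displayed inequality is false pointwise.} Its left side minus $\mathbf n^\top\boldsymbol\Lambda\mathbf n$ equals $-\boldsymbol\delta^\top\boldsymbol\Lambda\mathbf n+\tfrac12\boldsymbol\delta^\top\boldsymbol\Lambda\boldsymbol\delta$, which is unbounded in $\mathbf n$. Repairing it with Young's inequality inflates the coefficient of $\mathbf n^\top\boldsymbol\Lambda\mathbf n$ and again costs a factor $1-\Theta(\Delta)$.

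\textbf{The recentring does not close the gap.} The identity $\tfrac12[\cdots]=(\mathbf n-\tfrac12\boldsymbol\delta)^\top\boldsymbol\Lambda(\mathbf n-\tfrac12\boldsymbol\delta)+\tfrac14\boldsymbol\delta^\top\boldsymbol\Lambda\boldsymbol\delta$ is exact. But the leftover sum $\sum_{\mathbf n}e^{-(\mathbf n-\boldsymbol\delta(\mathbf n)/2)^\top\boldsymbol\Lambda(\mathbf n-\boldsymbol\delta(\mathbf n)/2)}$ is not a lattice translate of the normalization sum, because the shift is fractional and varies with $\mathbf n$. For a general $\mathbf n$-dependent shift of this size, it can fall short of $\mathcal N_{\mathrm{ISP}}^2$ by a relative amount $\Theta(\Delta)$. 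Saying it ``cancels'' hides exactly the first-order term. The slack in the constant ($\tfrac14\boldsymbol\delta^\top\boldsymbol\Lambda\boldsymbol\delta\le\tfrac{3\eta_n}{16}\lambda_{\max}(\boldsymbol\Lambda)$ versus the stated $3\eta_n\lambda_{\max}(\boldsymbol\Lambda)$) cannot absorb a $\Theta(\Delta)$ loss as $\Delta\to0$.

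\textbf{How the paper handles it.} Before completing the square, it replaces $\boldsymbol\delta$ by a fixed vector $\mathbf c\in\{\pm2\}^{3\eta_n}$. Since $\mathbf c/2$ is an integer vector, $\sum_{\mathbf n}e^{-(\mathbf n+\mathbf c/2)^\top\boldsymbol\Lambda(\mathbf n+\mathbf c/2)}$ is a genuine shift of the normalization sum. The overlap becomes $e^{-\mathbf c^\top\boldsymbol\Lambda\mathbf c/4}$, and $\|\mathbf c\|_2^2/4=3\eta_n$ is where the constant comes from. The paper justifies the replacement by treating $\boldsymbol\delta$ as a fixed shift of a Gaussian.

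\textbf{How to make your recentring rigorous.} Use the triangular structure of the rounding. For a lower shear, $\delta_k$ depends only on $n_1,\dots,n_{k-1}$. Writing $\Sigma_k$ for the diagonal entries of $\boldsymbol\Sigma$, this gives $(\mathbf n-\tfrac12\boldsymbol\delta)^\top\boldsymbol\Lambda(\mathbf n-\tfrac12\boldsymbol\delta)=\sum_k\Sigma_k[\mathbf S^{-1}(\mathbf n-\tfrac12\boldsymbol\delta)]_k^2=\sum_k\Sigma_k\,(n_k-a_k(n_1,\dots,n_{k-1}))^2$. First pull out the pointwise bound $e^{-\boldsymbol\delta^\top\boldsymbol\Lambda\boldsymbol\delta/4}\ge e^{-3\eta_n\lambda_{\max}(\boldsymbol\Lambda)/16}$. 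Then sum over $n_{3\eta_n}$ first, then $n_{3\eta_n-1}$, and so on. Each step is a one-dimensional Gaussian lattice sum whose shift does not depend on its own summation variable. It therefore equals the unshifted sum up to exponentially small Poisson-summation corrections. This proves the lemma with room to spare. It also settles the normalization of the ideal state, which you left to be checked. For an upper shear, reverse the order.
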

\begin{proof}
    We expand the definition of $\epsilon_{\mathrm{shear}}$:
\begingroup
\allowdisplaybreaks
\begin{align} \label{eq:eps_shear}
    \epsilon_{\mathrm{shear}} & =  D( U_{\mathbf{S}}  |\Xi\rangle ,\, \tilde{U}_{\mathbf{S}}  |\Xi\rangle ) \\
    &\leq \big\Vert U_{\mathbf{S}}  |\Xi\rangle - \tilde{U}_{\mathbf{S}}  |\Xi\rangle \big\Vert_2\\
    &= \bigg\Vert \frac{1}{\mathcal{N}_{\mathrm{ISP}}} \sum_{\mathbf{n}\in \bar{\mathcal{B}}_{\mathrm{ISP}}}  e^{-(\mathbf{S}^{-1} \mathbf{n})^{\top} \boldsymbol{\Sigma} \mathbf{S}^{-1}\mathbf{n}/2}| \mathbf{n}\rangle - \frac{1}{\mathcal{N}_{\mathrm{ISP}}} \sum_{\mathbf{n}\in \bar{\mathcal{B}}_{\mathrm{ISP}}}  e^{- (S^{-1}(\mathbf{n}))^{\top} \boldsymbol{\Sigma} S^{-1}(\mathbf{n})/2}  |\mathbf{n} \rangle \bigg\Vert_2\\
    & = \bigg( \frac{1}{\mathcal{N}_{\mathrm{ISP}}^2} \sum_{\mathbf{n}\in \bar{\mathcal{B}}_{\mathrm{ISP}}}  e^{-(\mathbf{S}^{-1} \mathbf{n})^{\top} \boldsymbol{\Sigma} \mathbf{S}^{-1}\mathbf{n}} +   \frac{1}{\mathcal{N}_{\mathrm{ISP}}^2} \sum_{\mathbf{n}\in \bar{\mathcal{B}}_{\mathrm{ISP}}}  e^{- (S^{-1}(\mathbf{n}))^{\top} \boldsymbol{\Sigma} S^{-1}(\mathbf{n})}  - \nonumber\\ 
    & \quad\,\quad\,\quad\,\quad\,\quad\,\quad\,\quad\,\quad\,  -\frac{2}{\mathcal{N}_{\mathrm{ISP}}^2} \sum_{\mathbf{n}\in \bar{\mathcal{B}}_{\mathrm{ISP}}}  e^{-(\mathbf{S}^{-1} \mathbf{n})^{\top} \boldsymbol{\Sigma} \mathbf{S}^{-1}\mathbf{n}/2}  e^{- (S^{-1}(\mathbf{n}))^{\top} \boldsymbol{\Sigma} S^{-1}(\mathbf{n})/2} \bigg)^{1/2}\\
    & = 2^{1/2}\bigg( 1 - \frac{1}{\mathcal{N}_{\mathrm{ISP}}^2} \sum_{\mathbf{n}\in \bar{\mathcal{B}}_{\mathrm{ISP}}}  e^{-(\mathbf{S}^{-1} \mathbf{n})^{\top} \boldsymbol{\Sigma} \mathbf{S}^{-1}\mathbf{n}/2}  e^{- (S^{-1}(\mathbf{n}))^{\top} \boldsymbol{\Sigma} S^{-1}(\mathbf{n})/2}   \bigg)^{1/2} \label{eq:shear_error_last_line},
\end{align}
\endgroup
where $\tilde{U}_{\mathbf{S}}$ is applied as in \cref{eq:inverse_S}. In the final step, we use the fact that shears preserve normalization, meaning that the
first two terms in the penultimate line sum to 1.

Let the difference between $\mathbf{S}^{-1} \mathbf{n}$ and $S^{-1}(\mathbf{n})$ be
\begin{equation} \label{eq:p_def}
    \mathbf{p} = \mathbf{S}^{-1} \mathbf{n} - S^{-1}(\mathbf{n}) \Leftrightarrow  S^{-1}(\mathbf{n}) = \mathbf{S}^{-1} \mathbf{n} -  \mathbf{p}.
\end{equation}
We can gain an intuition for $\mathbf{p}$ by writing out the system of equations given by $\mathbf{m} = S^{-1}(\mathbf{n})$. For a lower shear $\mathbf{L}$,  \cref{eq:n_i_expression} gives:
\begin{align}
    m_1 &= n_1 \nonumber\\
    m_2 & = n_2 - R(b_{21}m_1) &\, &=  n_2 - (b_{21}m_1) + \delta_2  \nonumber\\
    m_3 & = n_3 - R(b_{31}m_1 + b_{32}m_2)  &\, &=  n_3 - (b_{31}m_1 + b_{32}m_2 ) + \delta_3 \nonumber \\
        &\vdotswithin{=}  & \, &  \vdotswithin{=} 
\end{align}
where $\delta_i = (b_{i1}m_1 + b_{i2}m_2 + \ldots + b_{ii-1}m_{i-1}) - R(b_{i1}m_1 + b_{i2}m_2 + \ldots +b_{ii-1}m_{i-1})$, and $|\delta_i| \leq 1/2$. Let $\mathbf{s}_{\mathrm{L}} = \mathbf{L} - I$ and $\boldsymbol{\delta} = (\delta_1,\delta_2, \dots, \delta_{3\eta_n})$. The above system of equations can be expressed as 
\begin{align}
    \mathbf{m} = \mathbf{n} -\mathbf{s}_{\mathrm{L}}\mathbf{m} + \boldsymbol{\delta} &\Rightarrow \mathbf{L}\mathbf{m} = \mathbf{n} + \boldsymbol{\delta} \\
    & \Rightarrow \mathbf{m} = \mathbf{L}^{-1} \mathbf{n} + \mathbf{L}^{-1}\boldsymbol{\delta}\\
    & \Rightarrow S^{-1}(\mathbf{n}) = \mathbf{L}^{-1} \mathbf{n} + \mathbf{L}^{-1}\boldsymbol{\delta}.
\end{align}
Substituting this expression into the definition of $\mathbf{p}$, we find
\begin{align}
    \mathbf{p} &= \mathbf{L}^{-1} \mathbf{n} - S^{-1}(\mathbf{n}) = -\mathbf{L}^{-1}\boldsymbol{\delta} \label{eq:p_error_expression}.
\end{align}
Thus, $\mathbf{p}$ is $-\boldsymbol{\delta}$ transformed by $\mathbf{L}^{-1}$. For an upper shear $\mathbf{U}$, the inverse is described after \cref{eq:n_i_expression}. Repeating the steps above for $\mathbf{U}$, we find $\mathbf{p} = -\mathbf{U}^{-1} \boldsymbol{\delta}$. Thus, for any shear, $\mathbf{p} =- \mathbf{S}^{-1} \boldsymbol{\delta}$. Substituting this into \cref{eq:p_def} gives
\begin{equation}
    S^{-1}(\mathbf{n}) = \mathbf{S}^{-1}( \mathbf{n} +\boldsymbol{\delta}).
    \label{eq:S_inv_simplified}
\end{equation}

Substituting \cref{eq:S_inv_simplified} into \cref{eq:shear_error_last_line} yields
\begingroup
\allowdisplaybreaks
\begin{align}
    \epsilon_{\mathrm{shear}} & = 2^{1/2}\bigg( 1 - \frac{1}{\mathcal{N}_{\mathrm{ISP}}^2} \sum_{\mathbf{n}\in \bar{\mathcal{B}}_{\mathrm{ISP}}}  e^{-(\mathbf{S}^{-1} \mathbf{n})^{\top} \boldsymbol{\Sigma} \mathbf{S}^{-1}\mathbf{n}/2}  e^{- (\mathbf{S}^{-1}\mathbf{n} + \mathbf{S}^{-1} \boldsymbol{\delta})^{\top} \boldsymbol{\Sigma}( \mathbf{S}^{-1}\mathbf{n} + \mathbf{S}^{-1}\boldsymbol{\delta})/2}   \bigg)^{1/2}\\
    & = 2^{1/2}\bigg( 1 - \frac{1}{\mathcal{N}_{\mathrm{ISP}}^2} \sum_{\mathbf{n}\in \bar{\mathcal{B}}_{\mathrm{ISP}}}  e^{- \mathbf{n}^{\top} \boldsymbol{\Lambda} \mathbf{n}/2}  e^{- (\mathbf{n} +  \boldsymbol{\delta})^{\top} \boldsymbol{\Lambda}( \mathbf{n} +  \boldsymbol{\delta})/2}   \bigg)^{1/2}\\
    & \leq 2^{1/2}\bigg( 1 - \frac{1}{\mathcal{N}_{\mathrm{ISP}}^2} \sum_{\mathbf{n}\in \bar{\mathcal{B}}_{\mathrm{ISP}}}  e^{- \mathbf{n}^{\top} \boldsymbol{\Lambda} \mathbf{n}/2}  e^{- (\mathbf{n} + \mathbf{c})^{\top} \boldsymbol{\Lambda}( \mathbf{n} + \mathbf{c})/2}   \bigg)^{1/2},
\end{align}
\endgroup
where $\boldsymbol{\Lambda} = \mathbf{S}^{- \top} \boldsymbol{\Sigma}\mathbf{S}^{-1}$, $\mathbf{c} = 2 \times (c_1, \ldots, c_{3\eta_n})$, and $c_i\in\{-1,1\}$. The sum in the second line is the overlap between two Gaussians, one with a mean of zero and the other with a mean of $-\boldsymbol{\delta}$. The final inequality follows because the overlap strictly decreases with $|\delta_i|$ and $|c_i| > |\delta_i|$. In replacing $\boldsymbol{\delta}$ by $\mathbf{c}$, we assume that $c_i$ are chosen to minimize this overlap. However, we will see in a moment that it is not necessary to specify the signs of these elements.

Completing the square in the exponential yields 
\begin{align} \label{eq:shear_bound_2}
     2^{1/2}\bigg( 1 - \frac{e^{- \mathbf{c}^{\top} \boldsymbol{\Lambda} \mathbf{c}/4} }{\mathcal{N}_{\mathrm{ISP}}^2} \sum_{\mathbf{n}\in \bar{\mathcal{B}}_{\mathrm{ISP}}}   e^{- (\mathbf{n} + {\mathbf{c}}/{2})^{\top} \boldsymbol{\Lambda}( \mathbf{n} + {\mathbf{c}}/{2})}   \bigg)^{1/2} = 2^{1/2}\big( 1 - e^{- \mathbf{c}^{\top} \boldsymbol{\Lambda} \mathbf{c}/4}   \big)^{1/2}.
\end{align}
The last step follows by recognizing that $\mathbf{n}+\mathbf{c}/2$ always aligns with a grid point. Because $\mathbf{c}$ is small, the wavefunction remains far from the edges of $\bar{\mathcal{B}}_{\mathrm{ISP}}$ and the sum
 remains normalized, yielding $\mathcal{N}^2_{\mathrm{ISP}}$. 
 
We can upper bound \cref{eq:shear_bound_2} by noting that
\begin{equation}
     \mathbf{c}^{\top}
\boldsymbol{\Lambda}\mathbf{c} \leq \lambda_{\mathrm{max}}(\boldsymbol{\Lambda}) \Vert \mathbf{c} \Vert_2^2 \leq 4\cdot3\eta_n \lambda_{\mathrm{max}}(\boldsymbol{\Lambda}),
\end{equation}
 which uses the fact that $\boldsymbol{\Lambda}$ is a positive-definite matrix. Finally, 
\begin{align}
    \epsilon_{\mathrm{shear}} &\leq 2^{1/2} \big(1 -  e^{- \mathbf{c}^{\top} \boldsymbol{\Lambda} \mathbf{c} / 4}  \big)^{1/2} \\
    &\leq 2^{1/2} \big(1 -  e^{- 3 \eta_n \lambda_{\mathrm{max}}(\boldsymbol{\Lambda})  }  \big)^{1/2} \\
    & =  2^{1/2} \big(1 -  e^{- \Delta^2 3\eta_n \lambda_{\mathrm{max}}(\boldsymbol{\Lambda}')  }  \big)^{1/2}.
\end{align}
\end{proof}

\begin{lemma} \label{lem:ortho}
Let 
\begin{equation}
    |\Xi_{P}\rangle = \frac{1}{\mathcal{N}_{\mathrm{ISP}}}\sum_{\mathbf{n} \in \bar{\mathcal{B}}_{\mathrm{ISP}}} e^{-  \mathbf{n}^{\top} \boldsymbol{\Lambda}_P \mathbf{n}/2} |\mathbf{n}\rangle,
\end{equation}
where $\boldsymbol{\Lambda}_P = \Delta^2 \boldsymbol{\Lambda}_P'$ and $\boldsymbol{\Lambda}_P'$ is defined in \cref{eq:lambda_prime_P}. Then, the 2D-shearing error obeys the bound
\begin{equation}
    \epsilon_{P}^{(\mathrm{2D-shear})} =D\big( U_{\mathbf{F}_P}  |\Xi_{P-1}\rangle , \tilde{U}_{\mathbf{F}_P}  |\Xi_{P-1}\rangle \big)\leq  2^{1/2}\big(1  -  e^{- \Delta^2 [\boldsymbol{\Lambda}_P']_{k_P,k_P}}\big)^{1/2},\label{eq:lemma5-result}
\end{equation}
where $k_P$ is defined in \cref{eq:k_index}.
\end{lemma}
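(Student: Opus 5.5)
The proof follows the template of \cref{lem:shear_error}, with the single $3\eta_n$-dimensional shear replaced by a 2D shear $\mathbf{F}_P$. The key difference is that a 2D shear alters only one coordinate. We treat $|\Xi_{P-1}\rangle$ as a (sheared) Gaussian whose quadratic form is $\Delta^2$ times a positive-definite matrix, obtained by propagating $\boldsymbol{\Sigma}'$ through $\mathbf{L}^{-1}$ and the exact inverses of $\mathbf{F}_2,\dots,\mathbf{F}_{P-1}$.

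First, bound the trace distance by the $\ell^2$ distance, $D\le\Vert U_{\mathbf{F}_P}|\Xi_{P-1}\rangle-\tilde U_{\mathbf{F}_P}|\Xi_{P-1}\rangle\Vert_2$. Expanding exactly as in \cref{eq:shear_error_last_line} gives $2^{1/2}(1-\text{overlap})^{1/2}$, where the overlap is $\mathcal{N}_{\mathrm{ISP}}^{-2}\sum_{\mathbf{n}}$ of a product of the ideal and rounded Gaussian amplitudes. This step uses that the exact shear preserves normalization on the grid, since the support stays inside $\bar{\mathcal{B}}_{\mathrm{ISP}}$ by the choice of $n_{\mathrm{pad}}$ in \cref{app:nm_algo_grid_pad}.

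Second, characterize the rounded inverse map for a 2D shear. For $\mathbf{F}_P=\mathbf{S}_1(i,j,\varphi_{ij})$, only coordinate $i$ is updated, via $n_i\leftarrow n_i\oplus R(t_{ij}n_j)$. The argument leading to \cref{eq:S_inv_simplified} then gives $F_P^{-1}(\mathbf{n})=\mathbf{F}_P^{-1}(\mathbf{n}+\boldsymbol{\delta})$, where $\boldsymbol{\delta}=\delta\,\mathbf{e}_{k_P}$ with $|\delta|\le 1/2$ is supported only on the updated index $k_P$ ($k_P=i$ for $\mathbf{S}_1$ and $k_P=j$ for $\mathbf{S}_2$). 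Substituting this into the overlap expresses both amplitudes as Gaussians in $\mathbf{n}$ with the common matrix $\boldsymbol{\Lambda}_P=\Delta^2\boldsymbol{\Lambda}_P'$, one displaced by $\boldsymbol{\delta}$.

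Third, replace $\delta$ by a worst-case integer displacement $\mathbf{c}=2c\,\mathbf{e}_{k_P}$ with $|c|=1$, which can only decrease the overlap. Completing the square as in \cref{eq:shear_bound_2} and using that $\mathbf{n}+\mathbf{c}/2$ lies on the grid then gives overlap $=e^{-\mathbf{c}^\top\boldsymbol{\Lambda}_P\mathbf{c}/4}=e^{-[\boldsymbol{\Lambda}_P]_{k_P,k_P}}$. This yields \cref{eq:lemma5-result} directly, with no $3\eta_n$ factor because $\mathbf{c}$ has a single nonzero component.

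The main obstacle is the bookkeeping of which matrix plays the role of $\boldsymbol{\Lambda}$. One must check that, after composing the exact inverses of the earlier steps (including the exact reflections and $\pm\pi/2$ rotations) and of $\mathbf{F}_P$ itself, the relevant quadratic form is exactly $\bar{\mathcal{F}}_P^\top\mathbf{L}^{-\top}\boldsymbol{\Sigma}'\mathbf{L}^{-1}\bar{\mathcal{F}}_P$ as in \cref{eq:lambda_prime_P}, evaluated in the coordinates where $\boldsymbol{\delta}$ lives. This needs care because the displacement enters before $\mathbf{F}_P^{-1}$ is applied.

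The remaining steps carry over from \cref{lem:shear_error}: the monotonicity argument for replacing $\delta$ by $\mathbf{c}$, and the normalization of the shifted sum, which rests on the Gaussian decaying far from the grid edge.
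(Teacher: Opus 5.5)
Your proposal is correct and follows the paper's proof essentially step for step. Both bound the trace distance by the $\ell^2$ distance, rerun the computation of \cref{lem:shear_error} with $|\Xi_{P-1}\rangle$ and $\mathbf{F}_P$, note that the rounding residual $\boldsymbol{\delta}$ is supported only on index $k_P$, replace it by $\mathbf{c}=2\mathbf{e}_{k_P}$, and complete the square to obtain the overlap $e^{-[\boldsymbol{\Lambda}_P]_{k_P,k_P}}$. The bookkeeping you flag as the main obstacle is harmless: an elementary shear fixes the basis vector of the coordinate it updates, so $\mathbf{F}_P^{-1}\mathbf{e}_{k_P}=\mathbf{e}_{k_P}$ and $[\boldsymbol{\Lambda}_P]_{k_P,k_P}=[\boldsymbol{\Lambda}_{P-1}]_{k_P,k_P}$, whichever side of $\mathbf{F}_P^{-1}$ the displacement sits on; the only heuristic step, shared with the paper, is the monotonicity replacement, which treats the $\mathbf{n}$-dependent residual $\boldsymbol{\delta}(\mathbf{n})$ as if it were a fixed shift.
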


\begin{proof}
By substituting $|\Xi_{P-1}\rangle$ into the definition $\epsilon_P^{(\mathrm{2D-shear})}$, we have 
\begin{equation} \label{eq:eps_2dshear}
    \epsilon^{(\mathrm{2D-shear})}_P = D\big( U_{\mathbf{F}_P}  |\Xi_{P-1}\rangle , \tilde{U}_{\mathbf{F}_P}  |\Xi_{P-1}\rangle \big) \leq \big\lVert U_{\mathbf{F}_P}|\Xi_{P-1}\rangle - \tilde{U}_{\mathbf{F}_P}|\Xi_{P-1}\rangle\big\rVert_2.
\end{equation}
From here, we follow the same steps as in \cref{lem:shear_error}, except that $|\Xi\rangle$ is replaced with $|\Xi_{P-1}\rangle$ and $U_{\mathbf{S}}$ and $\tilde{U}_{\mathbf{S}}$ are replaced by $U_{\mathbf{F}_{P}}$ and $\tilde{U}_{\mathbf{F}_{P}}$, respectively. For a given $\mathbf{n}$, $\mathbf{F}_{P}^{-1}$ only transforms a single element of the vector whose index is $k_P = \mathrm{index}(\mathbf{F}_P)$. Hence, the analogous expression to \cref{eq:p_error_expression} is $\mathbf{p} = -\mathbf{F}_P^{-1}\boldsymbol{\delta }$ where $\boldsymbol{\delta} = (0,0, \dots, \delta_{k_P}, \dots 0)$, and $|\delta_i| \leq 1/2$ denotes the error of the transformed vector element.

Let $\mathbf{c} = 2(0,0,\ldots, 1_{k_P},\ldots, 0)$. Then, 
\begingroup
\allowdisplaybreaks
\begin{align}
     \epsilon_P^{\mathrm{(2D - shear)}} &\leq  2^{1/2} \bigg(1 -  \frac{1}{\mathcal{N}_{\mathrm{ISP}}^2} \sum_{\mathbf{n} \in \bar{\mathcal{B}}_{\mathrm{ISP} }}  e^{- \mathbf{n}^{\top} \boldsymbol{\Lambda}_{P} \mathbf{n}/2}\,  e^{- (\mathbf{n} + \mathbf{c})^{\top} \boldsymbol{\Lambda}_{P}( \mathbf{n} + \mathbf{c})/2}\bigg)^{1/2} \\
    & = 2^{1/2}\bigg(1 -  \frac{e^{- \mathbf{c}^{\top} \boldsymbol{\Lambda}_{P} \mathbf{c}/4}}{\mathcal{N}_{\mathrm{ISP}}^2} \sum_{\mathbf{n} \in \bar{\mathcal{B}}_{\mathrm{ISP} }}   e^{- (\mathbf{n} + \frac{\mathbf{c}}{2})^{\top} \boldsymbol{\Lambda}_{P}( \mathbf{n} + \frac{\mathbf{c}}{2})}\bigg)^{1/2}\\
    &=  2^{1/2}\big(1 -  e^{- \mathbf{c}^{\top} \boldsymbol{\Lambda}_{P} \mathbf{c}/4} \big)^{1/2}. 
\end{align}
\endgroup
Finally, using $\mathbf{c}^{\top} \boldsymbol{\Lambda}_P \mathbf{c} = 4[\boldsymbol{\Lambda}_P]_{k_P k_P}$, we arrive at \cref{eq:lemma5-result}.
\end{proof}

\section{Implementing \textsc{pk}} \label{app:nm_algo_pkb}
Here, we implement \textsc{pk} using a phase-kickback (PK) subroutine \cite{cleve_1998} and estimate the resources. For a given state
\begin{equation}
    |\hat{\Psi}\rangle = \sum_{\mathbf{n} \in \bar{\mathcal{B}}_{\mathrm{ISP}}} \hat{\Psi}_{\mathbf{n}}|\mathbf{n}\rangle,
\end{equation}
the PK unitary performs the transformation
\begin{equation} \label{eq:pkb}
    \textsc{pk}|\hat{\Psi}\rangle = \sum_{\mathbf{n} \in \bar{\mathcal{B}}_{\mathrm{ISP}}} e^{-i \Delta (\mathbf{R}^{(0)})^{\top} \mathbf{n} } \hat{\Psi}_{\mathbf{n}}|\mathbf{n}\rangle,
\end{equation}
where $\Delta = 2\pi / L$. We implement \textsc{pk} using the decomposition 
\begin{equation}
   \textsc{pk}  = \left(\bigotimes_{i=1}^{3\eta_n} \textsc{pk}_i \right)U_{\mathrm{grad}},
\end{equation}
where $U_{\mathrm{grad}}$ prepares the phase-gradient state $|\psi_{\mathrm{grad}}\rangle$,
\begin{equation}
    U_{\mathrm{grad}} |0\rangle =
    |\psi_{\mathrm{grad}}\rangle = \sum_{l=0}^{N_{\mathrm{grad}}-1} \frac{e^{2 \pi i l /N_{\mathrm{grad}}}}{\sqrt{N_{\mathrm{grad}}}} |l\rangle,
\end{equation}
and $\textsc{pk}_i$ acts on the $i$th register and $|\psi_{\mathrm{grad}}\rangle$ to perform the transformation
\begin{equation}
   \textsc{pk}_i |\hat{\Psi}\rangle |\psi_{\mathrm{grad}}\rangle =\sum_{\mathbf{n} \in \bar{\mathcal{B}}_{\mathrm{ISP}}} e^{-i \Delta n_i R_i^{(0)}} \hat{\Psi}_{\mathbf{n}}|\mathbf{n}\rangle |\psi_{\mathrm{grad}}\rangle.
\end{equation}
Here, $R_i^{(0)}= z_i L /N_{\mathrm{grad}}$ is the $i$th component of $\mathbf{R}^{(0)}$, $z_i \in \left[-{N_{\mathrm{grad}}}/{2}, {N_{\mathrm{grad}}}/{2}-1 \right]\cap \mathbb{Z}$,
$N_{\mathrm{grad}} = 2^{b_{\mathrm{grad}}}$, and $b_{\mathrm{grad}}$ is the number of qubits used to represent $R_{i}^{(0)}$. 
We assume that $b_{\mathrm{grad}}$ has been chosen large enough that the finite-precision value $R_i^{(0)}$ can be expressed exactly.

Each $\textsc{pk}_i$ is implemented as shown in \cref{fig:pkb}. First, we use $U_{z_i}$ to load $z_i$ into an empty ancilla register,
\begin{equation}
    U_{z_i}|\hat{\Psi}\rangle|\psi_{\mathrm{grad}}\rangle |0\rangle = |\hat{\Psi}\rangle|\psi_{\mathrm{grad}}\rangle |z_i\rangle.
\end{equation}
Then, the value $n_i$ in the $i$th register of $|\hat{\Psi}\rangle$ and the $z_i$ are multiplied and the result stored in an empty register, 
\begin{equation}
    \textsc{mult} \sum_{\mathbf{n} \in \bar{\mathcal{B}}_{\mathrm{ISP}}} \hat{\Psi}_{\mathbf{n}}|\mathbf{n} \rangle |\psi_{\mathrm{grad}}\rangle |z_i\rangle|0\rangle  = \sum_{\mathbf{n} \in \bar{\mathcal{B}}_{\mathrm{ISP}}} \hat{\Psi}_{\mathbf{n}}|\mathbf{n} \rangle |\psi_{\mathrm{grad}}\rangle  |z_i\rangle |n_i \cdot z_i\rangle.
\end{equation}
The product $n_i \cdot z_i$ is added to $|\psi\rangle_{\mathrm{grad}}$, which is an eigenstate of addition, resulting in the phase kickback:
\begingroup
\allowdisplaybreaks
\begin{align}
    \textsc{add}\sum_{\mathbf{n} \in \bar{\mathcal{B}}_{\mathrm{ISP}}} \hat{\Psi}_{\mathbf{n}}|\mathbf{n} \rangle \sum_{l=0}^{N_{\mathrm{grad}}-1} \frac{e^{2 \pi i l /N_{\mathrm{grad}}}}{\sqrt{N_{\mathrm{grad}}}}  |l\rangle |z_i\rangle |n_i \cdot z_i\rangle &= \sum_{\mathbf{n} \in \bar{\mathcal{B}}_{\mathrm{ISP}}} \hat{\Psi}_{\mathbf{n}}|\mathbf{n} \rangle \sum_{l=0}^{N_{\mathrm{grad}}-1}\frac{e^{2 \pi i l /N_{\mathrm{grad}}}}{\sqrt{N_{\mathrm{grad}}}}  |l \oplus n_i \cdot z_i\rangle|z_i\rangle |n_i \cdot z_i\rangle \\
    &= \sum_{\mathbf{n} \in \bar{\mathcal{B}}_{\mathrm{ISP}}} \hat{\Psi}_{\mathbf{n}} e^{-2\pi i n_i z_i/ N_{\mathrm{grad}}} |\mathbf{n}\rangle \sum_{l=0}^{N_{\mathrm{grad}}-1} \frac{e^{2 \pi i l /N_{\mathrm{grad}}}}{\sqrt{N_{\mathrm{grad}}}} |l \rangle |z_i\rangle |n_i \cdot z_i\rangle \nonumber \\
    &= \sum_{\mathbf{n} \in \bar{\mathcal{B}}_{\mathrm{ISP}}} \hat{\Psi}_{\mathbf{n}} e^{-i \Delta n_i R_i^{(0)}}|\mathbf{n}\rangle |\psi_{\mathrm{grad}}\rangle |z_i\rangle|n_i \cdot z_i\rangle,
\end{align}
\endgroup
where $\oplus$ is addition $\operatorname{cmod}N_{\mathrm{grad}}/2$.
Finally, $\textsc{mult}^{\dagger}$ and $U_{z_i}^{\dagger}$ are uncomputed, giving \cref{eq:pkb}.

\begin{figure*}
\centering
    \includegraphics[width=\textwidth]{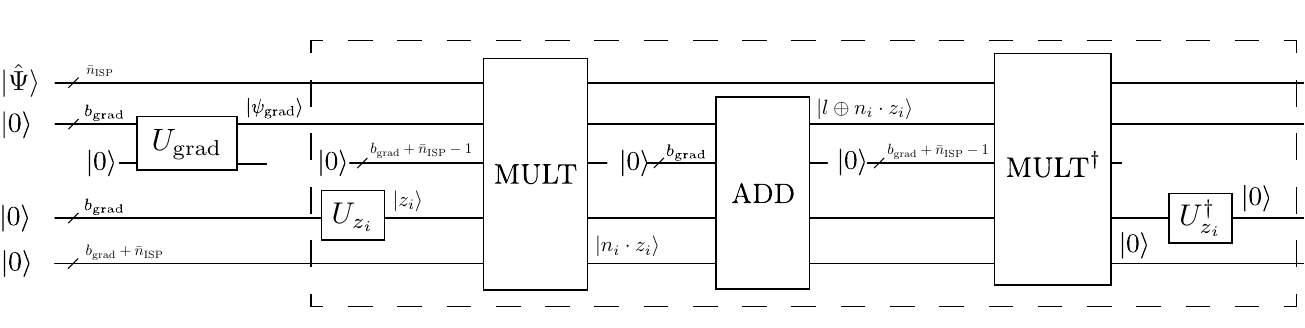}
    \vspace{-5mm}
    \caption{$\text{\textsc{pk}}_i$ circuit. The top wire encodes the $i$th sub-register of $|\hat{\Psi}\rangle$. The second wire stores the phase-gradient state which is prepared by the block $U_{\mathrm{grad}}$. The third wire contains ancilla qubits that are reused throughout the circuit. The fourth wire encodes the phase $z_i$ that we wish to kickback. The block $\textsc{mult}$ multiplies $|z_i\rangle$ by $|n_i\rangle$ and stores the result in the register corresponding to the fifth wire. This result is added $\operatorname{cmod} N_{\mathrm{grad}}/2$ to the phase-gradient state register. Finally, the state in the fifth register is uncomputed by performing $\textsc{mult}^{\dagger}$.}
    \label{fig:pkb}
\end{figure*}

We implement an approximation to \textsc{pk},
\begin{equation}
    \widetilde{\textsc{pk}} = \left(\bigotimes_{i=1}^{3\eta_n} \textsc{pk}_i \right)\tilde{U}_{\mathrm{grad}},
\end{equation}
where $\tilde{U}_{\mathrm{grad}}$ is an approximation to $U_{\mathrm{grad}}$ and all other unitaries are exact. 
We synthesize $\tilde{U}_{\mathrm{grad}}$ by applying Hadamard gates followed by the appropriate $Z$ rotation to each of the $b_{\mathrm{grad}}$ qubits of the phase-gradient register~\cite{nam2020approximate}. Each $Z$ rotation is approximated to within error $\epsilon_{\mathrm{Z}}$ using a repeat-until-success (RUS) algorithm~\cite{bocharov_2015}. 

The resulting error in $\widetilde{\textsc{pk}}$ is 
\begin{align} \label{eq:eps_pk}
    \epsilon_{\textsc{pk}} & = D(\textsc{pk}|\hat{\Psi}\rangle |0\rangle, \widetilde{\textsc{pk}}|\hat{\Psi}\rangle |0\rangle) \\
    &=D(U_{\mathrm{grad}}|\hat{\Psi}\rangle |0\rangle, \tilde{U}_{\mathrm{grad}}|\hat{\Psi}\rangle |0 \rangle) \\
    &\leq  b_{\mathrm{grad}} \epsilon_{Z} .
\end{align}
This bound results from $\tilde{U}_{\mathrm{grad}}$ calling the RUS algorithm $b_{\mathrm{grad}}$ times, introducing an error $\epsilon_Z$ with each call. 
To compile each $Z$ rotation to within error $\epsilon_{\mathrm{PK}} /b_{\mathrm{grad}}$, the expected number of Toffoli gates needed is $(1.149( \log(b_{\mathrm{grad}}) + \log(\epsilon_{\mathrm{PK}}^{-1}) ) + 9.2)/4$, along with a single ancilla qubit~\cite{bocharov_2015}. Therefore, the cost to synthesize $\tilde{U}_{\mathrm{grad}}$ to within $\epsilon_{\mathrm{PK}}$ is 
\begin{align}
    C_{\anc}(\tilde{U}_{\mathrm{grad}}) &= b_{\mathrm{grad}} + 1 \\
    C_{\Toff}(\tilde{U}_{\mathrm{grad}}) &= b_{\mathrm{grad}}(1.149( \log(b_{\mathrm{grad}}) + \log(\epsilon_{\mathrm{PK}}^{-1}) ) + 9.2)/4.
\end{align}

Implementing $U_{z_i}$ can be done with $b_{\mathrm{grad}}$ \textsc{not} gates and no Toffoli gates, as well as
\begin{equation}
    C_{\anc}(U_{z_i}) = b_{\mathrm{grad}}
\end{equation}
ancillas to store the result. Implementing $U_{z_i}^{\dagger}$ uses the same resources.

We can multiply an $\bar{n}_{\mathrm{ISP}}$-bit number with a $b_{\mathrm{grad}}$-bit number using $2 \bar{n}_{\mathrm{ISP}} b_{\mathrm{grad}} - \bar{n}_{\mathrm{ISP}}$ \cite{su2021} Toffoli gates.  We require $b_{\mathrm{grad}} + \bar{n}_{\mathrm{ISP}} -1$ ancilla qubits to perform multiplication and another $b_{\mathrm{grad}} + \bar{n}_{\mathrm{ISP}}$ to store the result. Hence,
 \begin{align}
     C_{\anc}(\textsc{mult}) & = 2b_{\mathrm{grad}} + 2\bar{n}_{\mathrm{ISP}} -1\\
      C_{\Toff}(\textsc{mult}) & = 2 \bar{n}_{\mathrm{ISP}} b_{\mathrm{grad}} - \bar{n}_{\mathrm{ISP}}.
 \end{align}
 The resources needed to perform $\textsc{mult}^{\dagger}$ are the same as for $\textsc{mult}$.
 
Modular addition can be performed using $b_{\mathrm{grad}}$ Toffoli gates and ancillas \cite{gidney2018},
  \begin{align}
     C_{\anc}(\textsc{add}) & = b_{\mathrm{grad}} \\
      C_{\Toff}(\textsc{add}) & = b_{\mathrm{grad}}.
 \end{align}

To implement $\widetilde{\textsc{pk}}$, $|\psi_{\mathrm{grad}}\rangle$ is prepared once and reused for each $\textsc{pk}_i$. PK is performed on $3\eta_n$ registers, each time calling $U_{z_i}$, \textsc{mult}, \textsc{add}, $\textsc{mult}^{\dagger}$, and $U_{z_i}^{\dagger}$. The cost to do so is
 \begin{align}
     C_{\anc}(\widetilde{\textsc{pk}}) &= C_{\anc}(\tilde{U}_{\mathrm{grad}}) + C_{\anc}(U_{z_i}) + C_{\anc}(\textsc{mult})  \\
     & = 4 b_{\mathrm{grad}} + 2\bar{n}_{\mathrm{ISP}} -1 \\
     C_{\Toff}(\widetilde{\textsc{pk}}) &=   C_{\Toff}(\tilde{U}_{\mathrm{grad}}) + 3\eta_n(2C_{\Toff}(\textsc{mult}) + C_{\Toff}(\textsc{add}))  \\
     & = 3\eta_n (4 \bar{n}_{\mathrm{ISP}} b_{\mathrm{grad}} + b_{\mathrm{grad}} - 2\bar{n}_{\mathrm{ISP}}) +  b_{\mathrm{grad}} (1.149 (\log(b_{\mathrm{grad}}\epsilon_{\mathrm{PK}}^{-1})) + 9.2)/4,
 \end{align}
where the ancilla qubits used during $\textsc{mult}$ are reused for $\textsc{add}$.

\section{ISP error bound}\label{app:isp_error_breakdown}
In this appendix, we derive a bound on the total error $\epsilon_{\mathrm{ISP}}$ from ISP. We begin by discussing separable states, where the main challenge is to bound the error from preparing the initial nuclear state. Afterwards, we bound the error from preparing non-separable initial states.  

\subsection{Separable initial states}
\label{app:isp_error_sep}
Here, we bound the error from preparing separable initial states.
  Let $\tilde{U}_{\mathrm{ISP}}^{(\mathrm{sep})}$ be the approximation to $U_{\mathrm{ISP}}^{(\mathrm{sep})}$ implemented on the quantum computer such that $\tilde{U}_{\mathrm{ISP}}^{(\mathrm{sep})}|0\rangle = |\tilde{\hat{\Psi}}\rangle_{\mathrm{MPS}}$. The ISP error is
\begin{align}
    \epsilon_{\mathrm{ISP}}^{(\mathrm{sep})} = D(\Pi_{\mathrm{int}}' U_{\mathrm{ISP}}^{(\mathrm{sep})}|0\rangle, \tilde{U}_{\mathrm{ISP}}^{(\mathrm{sep})}|0\rangle),
\end{align}
where $\Pi_{\mathrm{int}}' = \Pi_{\mathrm{int}}/\mathcal{N}_{\mathrm{int}}$, and $\Pi_{\mathrm{int}}$ and $\mathcal{N}_{\mathrm{int}}$ are defined in \cref{subsubsec:ISP_trimming}.

We can bound $\epsilon_{\mathrm{ISP}}^{(\mathrm{sep})}$ in terms of the electronic and nuclear ISP errors. To see this, we write
\begin{align}
    U_{\mathrm{ISP}}^{(\mathrm{sep})} &=U_{\mathrm{ISP}}^{(e)}U_{\mathrm{ISP}}^{(n)} \quad \mathrm{where} \quad  U_{\mathrm{ISP}}^{(e)}|0\rangle = |\hat{\Psi}^{(e)}\rangle \quad\mathrm{and}  \quad U_{\mathrm{ISP}}^{(n)}|0\rangle = |\hat{\Psi}^{(n)}\rangle\\
    \tilde{U}_{\mathrm{ISP}}^{(\mathrm{sep})} &=\tilde{U}_{\mathrm{ISP}}^{(e)} \tilde{U}_{\mathrm{ISP}}^{(n)} \quad \mathrm{where} \quad  \tilde{U}_{\mathrm{ISP}}^{(e)}|0\rangle = |\tilde{\hat{\Psi}}^{(e)}\rangle_{\mathrm{MPS}} \quad\mathrm{and}  \quad \tilde{U}_{\mathrm{ISP}}^{(n)}|0\rangle = |\tilde{\hat{\Psi}}^{(n)}\rangle_{\mathrm{MPS}}.
\end{align}
Then, by the triangle inequality, 
\begin{align}
    \epsilon_{\mathrm{ISP}}^{(\mathrm{sep})} &\leq D(U_{\mathrm{ISP}}^{(e)}|0\rangle, \tilde{U}_{\mathrm{ISP}}^{(e)}|0\rangle) + D(\Pi_{\mathrm{int}}^{(n)} U_{\mathrm{ISP}}^{(n)}|0\rangle, \tilde{U}_{\mathrm{ISP}}^{(n)}|0\rangle) \\
    &= \epsilon_{\mathrm{ISP}}^{(e)} + \epsilon_{\mathrm{ISP}}^{(n)},
\end{align}
where $\Pi_{\mathrm{int}}^{(n)} = (\sum_{\mathbf{n} \in \mathcal{B}_{\mathrm{int}}} |\mathbf{n}\rangle \langle \mathbf{n|})/\mathcal{N}_{\mathrm{int}}$.

We now derive a bound on $\epsilon_{\mathrm{ISP}}^{(n)}$ because $\epsilon_{\mathrm{ISP}}^{(e)}$ is bounded in \cref{eq:elec_error_bound}. To begin, we expand the ISP unitaries,
\begin{align}
    U_{\mathrm{ISP}}^{(n)} &= \textsc{tc2sm} \cdot \textsc{pk} \cdot U_{\mathbf{A}^{- \top}} \cdot W^{(n)} \cdot \textsc{onb2smb} \cdot \mathrm{SoSlat}^{(n)} \cdot \textsc{asp}^{(n)} \label{eq:U_isp_n} \\
    \tilde{U}_{\mathrm{ISP}}^{(n)} &= \textsc{tc2sm} \cdot\widetilde{\textsc{pk}} \cdot \tilde{U}_{\mathbf{A}^{- \top}} \cdot \widetilde{W}^{(n)} \cdot \textsc{onb2smb} \cdot \mathrm{SoSlat}^{(n)} \cdot \widetilde{\textsc{asp}}^{(n)} \label{eq:U_isp_n_approx} ,
\end{align}
where
\begin{equation}
    W^{(n)} = \bigotimes_{i=1}^{3\eta_n}W_i^{(n)}, \quad \mathrm{where} \quad
    W_i^{(n)} = \prod_{\mu=0}^{N_{\mathrm{SMB}} -1} I - 2|w_{\mu}^{(n)}\rangle \langle w_{\mu}^{(n)}|_i, \quad\, \mathrm{and} \quad |w_{\mu}^{(n)}\rangle_i = \frac{1}{\sqrt{2}}\left(|1\rangle|\mu\rangle - |0\rangle|\hatsmQ\rangle \right).
\end{equation}
All other operators in \cref{eq:U_isp_n,eq:U_isp_n_approx} are defined in \cref{sect:isp_nuc}. 

Our first step is to isolate the trimming error $\epsilon_{\mathrm{trim}}$. Using the triangle inequality, we have
\begin{align}
    \epsilon_{\mathrm{ISP}}^{(n)}& \leq   D \big( \textsc{pk} \cdot  U_{\mathbf{A}^{- \top}} W^{(n)} \textsc{onb2smb} \cdot \mathrm{SoSlat}^{(n)} \cdot \textsc{asp}^{(n)}  |0\rangle , \nonumber \\
     & \quad\,\quad\,\quad\,\quad\,\quad\,\quad\,\quad\,\quad\,\quad\,\quad\,  \widetilde{\textsc{pk}} \cdot \tilde{U}_{\mathbf{A}^{- \top}} \widetilde{W}^{(n)} \textsc{onb2smb} \cdot \mathrm{SoSlat}^{(n)} \cdot \widetilde{\textsc{asp}}^{(n)}|0\rangle  \big)  \nonumber + \epsilon_{\mathrm{trim}},
\end{align}
where
\begin{align}
    \epsilon_{\mathrm{trim}} =  D \big(\Pi_{\mathrm{int}}^{(n)} U_{\mathrm{ISP}}^{(n)}|0\rangle, U_{\mathrm{ISP}}^{(n)}|0\rangle\big),
\end{align}
and we used the fact that $\textsc{tc2sm}$ is exact and does not contribute an error.

Next, we isolate the error resulting from $\widetilde{\textsc{asp}}^{(n)}$,
\begin{equation}
    \epsilon_{\textsc{asp}}^{(n)} = D(\textsc{asp}^{(n)}|0\rangle, \widetilde{\textsc{asp}}^{(n)}|0\rangle  ).
\end{equation}
Using the triangle inequality, 
\begin{align}
    \epsilon_{\mathrm{ISP}}^{(n)}& \leq   D \big(\textsc{pk} \cdot  U_{\mathbf{A}^{- \top}} W^{(n)} \textsc{onb2smb} \cdot \mathrm{SoSlat}^{(n)} \cdot \textsc{asp}^{(n)}  |0\rangle , \nonumber \\
     & \quad\,\quad\,\quad\,\quad\,\quad\,\quad\,\quad\,\quad\,\quad\,\quad\, \widetilde{\textsc{pk}} \cdot \tilde{U}_{\mathbf{A}^{- \top}} \widetilde{W}^{(n)} \textsc{onb2smb} \cdot \mathrm{SoSlat}^{(n)} \cdot \textsc{asp}^{(n)}|0\rangle  \big)  \nonumber \\
      & =  D \big( \textsc{pk} \cdot  U_{\mathbf{A}^{- \top}} W^{(n)} |\Psi^{(n)}_{\bar{\mathbf{Q}}}\rangle_{\mathrm{SMB}}, \,\, \widetilde{\textsc{pk}} \cdot \tilde{U}_{\mathbf{A}^{- \top}} \widetilde{W}^{(n)} |\Psi^{(n)}_{\bar{\mathbf{Q}}}\rangle_{\mathrm{SMB}} \big)  + \epsilon_{\textsc{asp}}^{(n)} + \epsilon_{\mathrm{trim}},
\end{align}
where $ |\Psi^{(n)}_{\bar{\mathbf{Q}}}\rangle_{\mathrm{SMB}} = \textsc{onb2smb}  \cdot \mathrm{SoSlat}^{(n)} \cdot  \textsc{asp}^{(n)} |0\rangle$.

Next, we isolate the errors from preparing the nuclear state in a PWB. Let $X^{(n)}$ be the unitary 
\begin{equation}
    X^{(n)} = \bigotimes_{i=1}^{3\eta_n}X_i^{(n)}, \quad \mathrm{where} \quad
    X_i^{(n)} = \prod_{\mu=0}^{N_{\mathrm{SMB}} -1} I - 2|x_{\mu}^{(n)}\rangle \langle x_{\mu}^{(n)}|_i, \quad\, \mathrm{and} \quad |x_{\mu}^{(n)}\rangle_i = \frac{1}{\sqrt{2}}\left(|1\rangle|\mu\rangle - |0\rangle|\hatsmQAprx\rangle \right).
\end{equation}
We also define $X^{(n,\mathrm{poly})}$ analogously to $X^{(n)}$, except $|\hatsmQAprx\rangle$ is replaced with $|\hatsmQAprx^{(\mathrm{poly})}\rangle$.
The difference between $W^{(n)}$, $X^{(n)}$, $X^{(n,\mathrm{poly})}$, and $\widetilde{W}^{(n)}$ is that they prepare nuclear states in terms of increasingly more approximate SMs: $|\hatsmQ\rangle$, $|\hatsmQAprx\rangle$, $|\hatsmQAprx^{(\mathrm{poly})}\rangle$, and $|\hatsmQAprx\rangle_{\mathrm{MPS}}$, respectively. Using the triangle inequality,
\begin{align}
    \epsilon_{\mathrm{ISP}}^{(n)} & \leq  D\big(\textsc{pk} \cdot U_\mathbf{A^{- \top}} W^{(n)} |\Psi^{(n)}_{\bar{\mathbf{Q}}}\rangle_{\mathrm{SMB}},\, \textsc{pk} \cdot U_{\mathbf{A}^{- \top}} X^{(n)} |\Psi^{(n)}_{\bar{\mathbf{Q}}}\rangle_{\mathrm{SMB}}\big) + \nonumber \\
    & \qquad{} + D \big(\textsc{pk} \cdot U_{\mathbf{A}^{- \top}} X^{(n)} |\Psi^{(n)}_{\bar{\mathbf{Q}}}\rangle_{\mathrm{SMB}}, \, \widetilde{\textsc{pk}} \cdot \tilde{U}_{\mathbf{A}^{- \top}} \widetilde{W}^{(n)} |\Psi^{(n)}_{\bar{\mathbf{Q}}} \rangle _{\mathrm{SMB}}\big) + \epsilon_{\textsc{asp}}^{(n)} + \epsilon_{\mathrm{trim}} \\
    & = \epsilon^{(n,t)} + T + \epsilon_{\textsc{asp}}^{(n)} + \epsilon_{\mathrm{trim}} \label{eq:init_n_bound_1},
\end{align}
where
\begin{align}
    \epsilon^{(n,t)} &= D(X^{(n)}|\Psi^{(n)}_{\bar{\mathbf{Q}}}\rangle_{\mathrm{SMB}},\, W^{(n)}|\Psi^{(n)}_{\bar{\mathbf{Q}}}\rangle_{\mathrm{SMB}} )\\
    T &= D \big( \textsc{pk} \cdot U_{\mathbf{A}^{- \top}} X^{(n)} |\Psi^{(n)}_{\bar{\mathbf{Q}}} \rangle_{\mathrm{SMB}}, \widetilde{\textsc{pk}} \cdot \tilde{U}_{\mathbf{A}^{- \top}}  \widetilde{W}^{(n)} |\Psi^{(n)}_{\bar{\mathbf{Q}}}\rangle_{\mathrm{SMB}}\big).
\end{align}

To bound $\epsilon^{(n,t)}$, we write
\begin{align}
    \epsilon^{(n,t)} & \leq  \big\Vert X^{(n)}- W^{(n)} \big\Vert_{\infty} \\
    & \leq \sum_{i=1}^{3\eta_n} \big\Vert X^{(n)}_i  -  W^{(n)}_i \big\Vert_{\infty}\\
    &\leq 2 \sum_{i=1}^{3\eta_n}\sum_{\mu = 0}^{N_{\mathrm{SMB}}-1} \left\Vert |x_{\mu}^{(n)}\rangle \langle x_{\mu}^{(n)} |_i - |w_{\mu}^{(n)}\rangle \langle w_{\mu}^{(n)} |_i  \right\Vert_{\infty} \\
    & \leq 4 \sum_{i=1}^{3\eta_n}\sum_{\mu = 0}^{N_{\mathrm{SMB}}-1} D\big(|x_{\mu}^{(n)} \rangle_i, |w_{\mu}^{(n)} \rangle_i\big),
\end{align}
where the last line follows because $\lVert x-y \rVert_\infty \le 2 D(x,y)$. We evaluate the last line using the definitions of $|x_{\mu}^{(n)}\rangle_i$ and $|w_{\mu}^{(n)}\rangle_i$,
\begin{equation}
    D\left(|x_{\mu}^{(n)}\rangle_i, |w_{\mu}^{(n)}\rangle_i \right) = \sqrt{1 - \left|  {_i}\langle x_{\mu}^{(n)}|w_{\mu}^{(n)}\rangle_i \right|^2} = \frac{1}{\sqrt{2}} \sqrt{1 - \left|  \langle \tilde{\hat{\Phi}}_{i \mu, \bar{\mathbf{Q}}}^{(n)}|\hat{\Phi}^{(n)}_{i \mu, \bar{\mathbf{Q}}}\rangle \right|^2} = \frac{1}{\sqrt{2}}D\left(|\tilde{\hat{\Phi}}^{(n)}_{i \mu, \bar{\mathbf{Q}}}\rangle, |\hat{\Phi}^{(n)}_{i \mu, \bar{\mathbf{Q}}}\rangle \right),
\end{equation}
giving
\begin{align}
    \epsilon^{(n,t)} &\leq 2^{3/2}\sum_{i=1}^{3\eta_n} \sum_{\mu=0}^{N_{\mathrm{SMB}}-1} D\left(|\hatsmQAprx\rangle ,|\hatsm\rangle \right)\\
     &= 2^{3/2}\sum_{i=1}^{3\eta_n} \sum_{\mu=0}^{N_{\mathrm{SMB}}-1} \epsilon^{(n,t)}_{i \mu} 
\end{align}
where $ \epsilon^{(n,t)}_{i \mu}$ is the truncation error defined in \cref{eq:nuc_trunc_error}. 

The remaining term in \cref{eq:init_n_bound_1} is bounded by 
\begin{multline}
    T \leq  D\big(\textsc{pk} \cdot U_{\mathbf{A}^{- \top}}  X^{(n)} |\Psi^{(n)}_{\bar{\mathbf{Q}}}\rangle_{\mathrm{SMB}}, \,\widetilde{\textsc{pk}} \cdot \tilde{U}_{\mathbf{A}^{- \top}}  X^{(n)}  |\Psi^{(n)}_{\bar{\mathbf{Q}}}\rangle_{\mathrm{SMB}}\big) \\
    +  D \big(\widetilde{\textsc{pk}} \cdot \tilde{U}_{\mathbf{A}^{- \top}}  X^{(n)} |\Psi^{(n)}_{\bar{\mathbf{Q}}}\rangle_{\mathrm{SMB}}, \widetilde{\textsc{pk}} \cdot \tilde{U}_{\mathbf{A}^{- \top}}  \widetilde{W}^{(n)}  |\Psi^{(n)}_{\bar{\mathbf{Q}}}\rangle_{\mathrm{SMB}} \big) \label{eq:big_error_expresion}.
\end{multline}
The second term in \cref{eq:big_error_expresion} is bounded by 
\begin{align}
    D \big(\widetilde{\textsc{pk}} \cdot \tilde{U}_{\mathbf{A}^{- \top}}  X^{(n)} |\Psi^{(n)}_{\bar{\mathbf{Q}}}\rangle_{\mathrm{SMB}}, \widetilde{\textsc{pk}} \cdot \tilde{U}_{\mathbf{A}^{- \top}}  \widetilde{W}^{(n)}  |\Psi^{(n)}_{\bar{\mathbf{Q}}}\rangle_{\mathrm{SMB}} \big)
    & \leq  \epsilon^{(n,p)} + \epsilon^{(n,q)},
\end{align}
where 
\begin{align}
    \epsilon^{(n,p)} &= D\big( X^{(n)}|\Psi^{(n)}_{\bar{\mathbf{Q}}}\rangle_{\mathrm{SMB}},\, X^{(n, \mathrm{poly})}|\Psi^{(n)}_{\bar{\mathbf{Q}}}\rangle_{\mathrm{SMB}} \big), \\
    \epsilon^{(n,q)} &= D\big( X^{(n, \mathrm{poly})}|\Psi^{(n)}_{\bar{\mathbf{Q}}}\rangle_{\mathrm{SMB}},\, \widetilde{W}^{(n)}|\Psi^{(n)}_{\bar{\mathbf{Q}}}\rangle_{\mathrm{SMB}}  \big).
\end{align}

We can bound $\epsilon^{(n,p)}$ and  $\epsilon^{(n,q)}$ using the same approach that we used to bound $\epsilon^{(n,t)}$, finding
\begin{align}
    \epsilon^{(n,p)} &\leq 2^{3/2}\sum_{i=1}^{3\eta_n} \sum_{\mu=0}^{N_{\mathrm{SMB}}-1} \epsilon^{(n,p)}_{i \mu}\quad \text{and} \quad
    \epsilon^{(n,q)} \leq 2^{3/2}\sum_{i=1}^{3\eta_n} \sum_{\mu=0}^{N_{\mathrm{SMB}}-1} \epsilon^{(n,q)}_{i \mu},
\end{align}
where $\epsilon^{(n,p)}_{i \mu}$ is the polynomial approximation error in \cref{eq:nuc_poly_approx} and $\epsilon^{(n,q)}_{i \mu} = D\big(|\hatsmQ\rangle_{\mathrm{MPS}} ,|\hatsmQAprx\rangle_{\mathrm{MPS}} \big)$. As was the case for $\epsilon^{(e,q)}_a$, each $\epsilon^{(n,q)}_{i \mu} $ is bounded by 
\begin{equation}
    \epsilon^{(n,q)}_{i \mu} < 2^{\frac{7}{2}-b_{\mathrm{rot}}
    }\sum_{i=1}^{n_{\mathrm{ISP}}}m_{i\mu j}^{(n)}\log(2 \bar{m}_{i \mu j}^{(n)} ).
\end{equation}

Collecting all error terms into one expression yields
\begin{equation}
    \epsilon_{\mathrm{ISP}}^{(n)} \leq \epsilon_{\textsc{asp}}^{(n)} + \epsilon_{\mathrm{trim}} + \epsilon^{(n,q)} + \epsilon^{(n,c)} + 
    D\big(\textsc{pk} \cdot U_{\mathbf{A}^{- \top}}  |\tilde{\hat{\Psi}}^{(n)}_{\bar{\mathbf{Q}}}\rangle , \, \widetilde{\textsc{pk}} \cdot \tilde{U}_{\mathbf{A}^{- \top}}  |\tilde{\hat{\Psi}}^{(n)}_{\bar{\mathbf{Q}}}\rangle  \big)
\end{equation}
where $\epsilon^{(n,c)} = \epsilon^{(n,t)} + \epsilon^{(n,p)}$, and $|\tilde{\hat{\Psi}}^{(n)}_{\bar{\mathbf{Q}}}\rangle = X^{(n)}|\Psi_{\bar{\mathbf{Q}}}^{(n)} \rangle_{\mathrm{SMB}}$. The remaining sources of error come from transforming from normal coordinates to Cartesian coordinates and trimming the grid of final state. Assuming that the LCT algorithm is used, from \cref{sect:LCT}, we have
\begin{equation}
    U_{\mathbf{A}^{- \top}} =  U_{\bar{\mathbf{O}}^{\top}} U_{\mathbf{S}_{\mathrm{L}}} \quad \mathrm{and}\quad \tilde{U}_{\mathbf{A}^{- \top}} =  \tilde{U}_{\bar{\mathbf{O}}^{\top}} \tilde{U}_{\mathbf{S}_{\mathrm{L}}},
\end{equation}
so that
\begin{align}
    D\big(\textsc{pk} \cdot U_{\bar{\mathbf{O}}^{\top}} U_{\mathbf{S}_{\mathrm{L}}}|\tilde{\hat{\Psi}}^{(n)}_{\bar{\mathbf{Q}}}\rangle,\, \widetilde{\textsc{pk}} \cdot \tilde{U}_{\bar{\mathbf{O}}^{\top}} \tilde{U}_{\mathbf{S}_{\mathrm{L}}}|\tilde{\hat{\Psi}}^{(n)}_{\bar{\mathbf{Q}}}\rangle\big) \leq \epsilon_{\mathrm{shear}} + \epsilon_{\mathrm{ortho}} + \epsilon_{\textsc{pk}},
\end{align}
where we used the triangle inequality to decompose the error into
\begin{align} 
    \epsilon_{\mathrm{shear}} &= D \big( U_{\mathbf{S}_{\mathrm{L}}}|\tilde{\hat{\Psi}}^{(n)}_{\bar{\mathbf{Q}}}\rangle , \tilde{U}_{\mathbf{S}_{\mathrm{L}}}|\tilde{\hat{\Psi}}^{(n)}_{\bar{\mathbf{Q}}}\rangle \big), \label{eq:eps_shear_nuc} \\
    \epsilon_{\mathrm{ortho}} &= D \big( U_{\bar{\mathbf{O}}^{\top}} U_{\mathbf{S}_{\mathrm{L}}}|\tilde{\hat{\Psi}}^{(n)}_{\bar{\mathbf{Q}}}\rangle ,\,  \tilde{U}_{\bar{\mathbf{O}}^{\top}} U_{\mathbf{S}_{\mathrm{L}}}|\tilde{\hat{\Psi}}^{(n)}_{\bar{\mathbf{Q}}}\rangle \big),\label{eq:eps_ortho_nuc} \\
    \epsilon_{\textsc{pk}} & =D \big(\textsc{pk} \cdot U_{\bar{\mathbf{O}}^{\top}}U_{\mathbf{S}_{\mathrm{L}}} 
    |\tilde{\hat{\Psi}}^{(n)}_{\bar{\mathbf{Q}}}\rangle,\,\widetilde{\textsc{pk}} \cdot  U_{\bar{\mathbf{O}}^{\top}} U_{\mathbf{S}_{\mathrm{L}}} |\tilde{\hat{\Psi}}^{(n)}_{\bar{\mathbf{Q}}}\rangle \big), \label{eq:eps_pk_nuc}
\end{align}

In \cref{app:lct_ssct_error}, we showed that, for an orthogonal matrix decomposed into a sequence of 2D shears, 
\begin{equation}
    \epsilon_{\mathrm{ortho}} \leq \sum_{P=2}^{\beta+1} \epsilon^{(\mathrm{2D-shear})}_P,
\end{equation}
where $\epsilon^{(\mathrm{2D-shear})}_P$ is the error from approximating a 2D shear defined in \cref{eq:2d_shear_eps_def}.

Overall, the bound on  $\epsilon_{\mathrm{ISP}}^{(n)}$ in terms of the fundamental errors is
\begin{equation}
    \epsilon_{\mathrm{ISP}}^{(n)} \leq \epsilon_{\textsc{asp}}^{(n)} + 2^{3/2}\sum_{i=1}^{3\eta_n} \sum_{\mu=0}^{N_{\mathrm{SMB}}-1} (\epsilon^{(n,c)}_{i \mu} + \epsilon^{(n,q)}_{i \mu}) + \epsilon_{\mathrm{shear}} + \sum_{P=2}^{\beta+1} \epsilon^{(\mathrm{2D-shear})}_P + \epsilon_{\textsc{pk}} + \epsilon_{\mathrm{trim}} ,
\end{equation}
where $\epsilon^{(n,c)}_{i \mu} = \epsilon^{(n,t)}_{i \mu} + \epsilon^{(n,p)}_{i \mu}$.

\subsection{Non-separable initial states}
In this section, we bound the ISP error from preparing non-separable initial states,
\begin{equation}
    \epsilon_{\mathrm{ISP}} = D\big( \Pi_{\mathrm{int}}' U_{\mathrm{ISP}}|0\rangle ,\widetilde{U}_{\mathrm{ISP}} |0\rangle\big),
\end{equation}
where $\widetilde{U}_{\mathrm{ISP}}$ is defined in \cref{eq:non_sep_uinit_approx}, and 
\begin{equation} 
    U_{\mathrm{ISP}} = (U_{\mathrm{ISP}}^{(e)'} \otimes U_{\mathrm{ISP}}^{(n)'}) \big(\mathrm{SoSlat}^{(en)}  \textsc{asp}^{(en)}\big),
\end{equation}
with
\begin{align}
    U_{\mathrm{ISP}}^{(e)'} &= (W^{(e)})^{\otimes 3\eta_e} \textsc{asym}\; \textsc{onb2mob}\\
    U_{\mathrm{ISP}}^{(n)'} & = \textsc{tc2sm}_{\bar{n}_{\mathrm{ISP}}}^{\otimes 3\eta_n}\, \textsc{pk} \,U_{\mathbf{A}^{-\top}} \, W^{(n)} \textsc{onb2smb}.
\end{align}

We begin by using the triangle inequality to isolate the trimming error $\epsilon_{\mathrm{trim}}$,
\begin{align}
    \epsilon_{\mathrm{ISP}} \leq & D\big(  \big(U_{\mathrm{ISP}}^{(e)'} \otimes U_{\mathrm{ISP}}^{(n)'}\big) \mathrm{SoSlat}^{(en)} \cdot \textsc{asp}^{(en)} |0\rangle ,\big(\widetilde{U}_{\mathrm{ISP}}^{(e)'} \otimes \widetilde{U}_{\mathrm{ISP}}^{(n)'} \big) \mathrm{SoSlat}^{(en)} \cdot \widetilde{\textsc{asp}}^{(en)}  |0\rangle\big) + \epsilon_{\textsc{trim}} 
\end{align}
where
\begin{equation}
    \epsilon_{\mathrm{trim}} = D\big(\Pi_{\mathrm{int}}' \big(U_{\mathrm{ISP}}^{(e)'} \otimes U_{\mathrm{ISP}}^{(n)'}\big) \mathrm{SoSlat}^{(en)} \cdot \textsc{asp}^{(en)} |0\rangle, \big(U_{\mathrm{ISP}}^{(e)'} \otimes U_{\mathrm{ISP}}^{(n)'}\big) \mathrm{SoSlat}^{(en)} \cdot \textsc{asp}^{(en)} |0\rangle \big).
\end{equation}

Next, we isolate the error from $\widetilde{\textsc{asp}}^{(en)}$,
\begin{equation}
    \epsilon_{\mathrm{ASP}}^{(e n)} = D\big(\textsc{asp}^{(e n)}|0\rangle, \widetilde{\textsc{asp}}^{(e n)}|0\rangle\big).
\end{equation}
Using the triangle inequality,
\begin{align}
    \epsilon_{\mathrm{ISP}} 
    \leq & D\big(  \big(U_{\mathrm{ISP}}^{(e)'} \otimes U_{\mathrm{ISP}}^{(n)'}\big) \mathrm{SoSlat}^{(en)} \cdot \textsc{asp}^{(en)} |0\rangle ,\big(\widetilde{U}_{\mathrm{ISP}}^{(e)'} \otimes \widetilde{U}_{\mathrm{ISP}}^{(n)'} \big) \mathrm{SoSlat}^{(en)} \cdot \textsc{asp}^{(en)}  |0\rangle\big) + \epsilon_{\mathrm{trim}} +  \epsilon_{\textsc{asp}}^{(en)}.
\end{align}

To further isolate error terms, we let
\begin{align}
    |\Psi\rangle_{\textsc{mosmb}} &= (\textsc{asym} \otimes I^{(n)} )(\textsc{onb2mob} \otimes \textsc{onb2smb} ) \mathrm{SoSlat}^{(en)}\cdot \textsc{asp}^{(en)} |0\rangle\\
    \mathcal{A} & = I^{(e)} \otimes ( \textsc{pk} \cdot U_{\mathbf{A}^{- \top}} )  \\
    \tilde{\mathcal{A}} & =  I^{(e)} \otimes (\widetilde{\textsc{pk}} \cdot \tilde{U}_{\mathbf{A}^{- \top}} ).
\end{align}
With these definitions, 
\begin{align}
    \epsilon_{\mathrm{ISP}} &\leq  D\big(\mathcal{A} \big( (W^{(e)})^{\otimes \eta_e} \otimes W^{(n)} \big) |\Psi\rangle_{\textsc{mosmb}},\, \tilde{\mathcal{A}} \big( (\widetilde{W}^{(e)})^{\otimes \eta_e} \otimes \widetilde{W}^{(n)} \big) |\Psi\rangle_{\textsc{mosmb}} \big) + \epsilon_{\mathrm{ASP}}^{(e n)} \\ 
    &\leq    T_1 + T_2 + \epsilon_{\mathrm{ASP}}^{(e n)} + \epsilon_{\mathrm{trim}},
\end{align}
where 
\begin{align}
    T_1 & = D\big(\mathcal{A} \big( (W^{(e)})^{\otimes \eta_e} \otimes W^{(n)}  \big) |\Psi\rangle_{\textsc{mosmb}},\, \tilde{\mathcal{A}} \big( (W^{(e)})^{\otimes \eta_e} \otimes X^{(n)}  \big) |\Psi\rangle_{\textsc{mosmb}} \big) \\
    T_2 & = D\big(\widetilde{\mathcal{A}} \big( (W^{(e)})^{\otimes \eta_e} \otimes X^{(n)} \big) |\Psi\rangle_{\textsc{mosmb}},\, \tilde{\mathcal{A}} \big( (\widetilde{W}^{(e)})^{\otimes \eta_e} \otimes \widetilde{W}^{(n)} \big) |\Psi\rangle_{\textsc{mosmb}} \big).
\end{align}

Focusing on $T_2$, we isolate two contributions:
\begin{align}
    T_2 & \leq D\big( \big(W^{(e)})^{\otimes \eta_e} \otimes X^{(n)} \big) |\Psi\rangle_{\textsc{mosmb}},\,  \big( (\widetilde{W}^{(e)})^{\otimes \eta_e} \otimes \widetilde{W}^{(n)}  \big) |\Psi\rangle_{\textsc{mosmb}} \big) \\
    & \leq  T_2^{(e)} + T_{2}^{(n)},
\end{align}
where
\begin{align}
    T_2^{(e)} & = D\big( \big(W^{(e)})^{\otimes \eta_e} \otimes X^{(n)} \big) |\Psi\rangle_{\textsc{mosmb}},\,  \big( (\widetilde{W}^{(e)})^{\otimes \eta_e} \otimes X^{(n)}  \big) |\Psi\rangle_{\textsc{mosmb}} \big) \\
    T_2^{(n)} & = D\big( \big(\widetilde{W}^{(e)})^{\otimes \eta_e} \otimes X^{(n)}  \big) |\Psi\rangle_{\textsc{mosmb}},\,  \big( (\widetilde{W}^{(e)})^{\otimes \eta_e} \otimes \widetilde{W}^{(n)} \big) |\Psi\rangle_{\textsc{mosmb}} \big).
\end{align}
To bound $T_2^{(e)}$, we recognize that \cite{huggins2024}
\begin{align}
     T_2^{(e)} &\leq \big\Vert (W^{(e)})^{\otimes \eta_e} \otimes X^{(n)}   - (\widetilde{W}^{(e)})^{\otimes \eta_e} \otimes X^{(n)} \big\Vert_{\infty} \\
      &\leq \big\Vert (W^{(e)})^{\otimes \eta_e}  - (\widetilde{W}^{(e)})^{\otimes \eta_e} \big\Vert_{\infty}\\
      & \leq 2^{3/2} \eta_e \sum_{a=1}^{N_{\mathrm{MOB}}} \epsilon_a^{(e)}.
\end{align}
For $T_{2}^{(n)}$, we write
\begin{equation}
    T_{2}^{(n)} \leq T_{2}^{(n,p)} + T_{2}^{(n,q)},
\end{equation}
where
\begin{align}
    T_2^{(n,p)} & = D\big( \big(\widetilde{W}^{(e)})^{\otimes \eta_e} \otimes X^{(n)}  \big) |\Psi\rangle_{\textsc{mosmb}},\,  \big( (\widetilde{W}^{(e)})^{\otimes \eta_e} \otimes X^{(n,\mathrm{poly})}  \big) |\Psi\rangle_{\textsc{mosmb}}\big)\\
     T_2^{(n,q)} & = D\big( \big(\widetilde{W}^{(e)})^{\otimes \eta_e} \otimes X^{(n,\mathrm{poly})}  \big) |\Psi\rangle_{\textsc{mosmb}},\,  \big( (\widetilde{W}^{(e)})^{\otimes \eta_e} \otimes \widetilde{W}^{(n)}  \big) |\Psi\rangle_{\textsc{mosmb}} \big).
\end{align}
Using the results derived in \cref{app:isp_error_sep},
\begin{align}
     T_2^{(n,p)} \leq & \big\Vert \big(\widetilde{W}^{(e)})^{\otimes \eta_e} \otimes X^{(n)}  \big)  -  \big( (\widetilde{W}^{(e)})^{\otimes \eta_e} \otimes X^{(n,\mathrm{poly})}  \big) \big\Vert_{\infty} \\
     \leq & \big\Vert   X^{(n)}   -  X^{(n,\mathrm{poly})} \big\Vert_{\infty} \\
     \leq & \epsilon^{(n,p)}\\
     T_2^{(n,q)} \leq & \big\Vert \big(\widetilde{W}^{(e)})^{\otimes \eta_e} \otimes X^{(n,\mathrm{poly})} \big)  -  \big( (\widetilde{W}^{(e)})^{\otimes \eta_e} \otimes \widetilde{W}^{(n)}  \big) \big\Vert_{\infty} \\
     \leq & \big\Vert   X^{(n,\mathrm{poly})}   -  \widetilde{W}^{(n)} \big\Vert_{\infty} \\
     \leq & \epsilon^{(n,q)}.
\end{align}

So far, we have shown that $\epsilon_{\mathrm{ISP}}$ is upper bounded by 
\begin{equation}
     \epsilon_{\mathrm{ISP}} \leq T_1 + \epsilon^{(e)} + \epsilon^{(n,p)} + \epsilon^{(n,q)} + \epsilon_{\mathrm{ASP}}^{(e n)} + \epsilon_{\mathrm{trim}},
\end{equation}
where $\epsilon^{(e)} = 2^{3/2} \eta_e \sum_{a=1}^{N_{\mathrm{MOB}}} \epsilon_a^{(e)}$. To bound $T_1$, we write
\begin{equation}
    T_1 \leq T^{(n,t)}_1 + T^{(\mathcal{A})}_1,
\end{equation}
where
\begin{align}
     T_1^{(n,t)} & = D\big(\mathcal{A} \big( (W^{(e)})^{\otimes \eta_e} \otimes W^{(n)} \big) |\Psi\rangle_{\textsc{mosmb}},\, \mathcal{A} \big( (W^{(e)})^{\otimes \eta_e} \otimes X^{(n)}  \big) |\Psi\rangle_{\textsc{mosmb}}\big)\\
     T_1^{(\mathcal{A})} & = D\big(\mathcal{A} \big( (W^{(e)})^{\otimes \eta_e} \otimes X^{(n)}  \big) |\Psi\rangle_{\textsc{mosmb}},\, \tilde{\mathcal{A}} \big( (W^{(e)})^{\otimes \eta_e} \otimes X^{(n)}  \big) |\Psi\rangle_{\textsc{mosmb}} \big).
\end{align}
For $T_1^{(n,t)}$, we have
\begin{align}
    T_1^{(n,t)} \leq & \big\Vert \big((W^{(e)})^{\otimes \eta_e} \otimes W^{(n)} \big)  -  \big( (W^{(e)})^{\otimes \eta_e} \otimes X^{(n)}  \big) \big\Vert_{\infty} \\
    \leq & \big\Vert W^{(n)} - X^{(n)} \big\Vert_{\infty} \\
    \leq & \epsilon^{(n,t)},
\end{align}
which uses the results from \cref{app:isp_error_sep}. For $T^{(\mathcal{A})}$, we write
\begingroup
\allowdisplaybreaks
\begin{align}
    T_{1}^{(\mathcal{A})} \leq & \big\Vert  (\mathcal{A}  - \tilde{\mathcal{A}})\big((W^{(e)})^{\otimes \eta_e} \otimes X^{(n)}  \big) \big) (\textsc{asym} \otimes I^{(n)} )(\textsc{onb2mob} \otimes \textsc{onb2smb} ) \sum_{I,J} C_{IJ}^{(en)}|I\rangle|J\rangle  \big\Vert_2 \\
    \leq & \sum_{I,J}\big\lvert C_{IJ}^{(en)} \big\rvert \cdot \big\Vert  (\mathcal{A}  - \tilde{\mathcal{A}})\big((W^{(e)})^{\otimes \eta_e} \otimes X^{(n)}  \big) \big) (\textsc{asym} \otimes I^{(n)} )(\textsc{onb2mob} \otimes \textsc{onb2smb} )|I\rangle|J\rangle   \big\Vert_2 \\
    \leq & \sum_{I,J}\big\lvert C_{IJ}^{(en)} \big\rvert \cdot \big\Vert  (\mathcal{A}'  - \tilde{\mathcal{A}}') X^{(n)} \textsc{onb2smb} |J\rangle  \big\Vert_2, 
\end{align}
\endgroup
where
\begin{align}
    \mathcal{A}' & = \textsc{pk} \cdot U_{\mathbf{A}^{- \top}}  \\
    \tilde{\mathcal{A}}' & = \widetilde{\textsc{pk}} \cdot \widetilde{U}_{\mathbf{A}^{- \top}} .
\end{align}
We can bound $T_1^{(\mathcal{A})}$ by defining analogous expressions to \cref{eq:eps_shear_nuc,eq:eps_ortho_nuc,eq:eps_pk_nuc}:
\begingroup
\allowdisplaybreaks
\begin{align} 
    \epsilon_{\mathrm{shear}}' &= \big\Vert U_{\mathbf{S}_{\mathrm{L}}}|\tilde{\hat{\Psi}}^{(n)}_{J,\bar{\mathbf{Q}}}\rangle - \widetilde{U}_{\mathbf{S}_{\mathrm{L}}}|\tilde{\hat{\Psi}}^{(n)}_{J, \bar{\mathbf{Q}}}\rangle \big\Vert_2,  \label{eq:eps_shear_nuc_prime}  \\
    \epsilon_{\mathrm{ortho}}' &= \big\Vert U_{\bar{\mathbf{O}}^{\top}} U_{\mathbf{S}_{\mathrm{L}}}|\tilde{\hat{\Psi}}^{(n)}_{J,\bar{\mathbf{Q}}}\rangle -  \widetilde{U}_{\bar{\mathbf{O}}^{\top}} U_{\mathbf{S}_{\mathrm{L}}}|\tilde{\hat{\Psi}}^{(n)}_{J,\bar{\mathbf{Q}}}\rangle \big\Vert_2,  \label{eq:eps_ortho_nuc_prime} \\
    \epsilon_{\textsc{pk}}' & = \big\Vert \textsc{pk} \cdot U_{\bar{\mathbf{O}}^{\top}}U_{\mathbf{S}_{\mathrm{L}}} 
    |\tilde{\hat{\Psi}}^{(n)}_{J,\bar{\mathbf{Q}}}\rangle - \widetilde{\textsc{pk}} \cdot  U_{\bar{\mathbf{O}}^{\top}} U_{\mathbf{S}_{\mathrm{L}}}|\tilde{\hat{\Psi}}^{(n)}_{J,\bar{\mathbf{Q}}}\rangle \big\Vert_2, \label{eq:eps_pk_nuc_prime},
\end{align}
\endgroup
where $|\tilde{\hat{\Psi}}^{(n)}_{J,\bar{\mathbf{Q}}}\rangle = X^{(n)} \textsc{onb2smb} |J\rangle$ and we assume the LCT algorithm is used. Then,
\begin{equation}
    T_1^{(\mathcal{A})} \leq \sum_{I,J}\big\lvert C_{IJ}^{(en)} \big\rvert \cdot (\epsilon_{\mathrm{shear}}' + \epsilon_{\mathrm{ortho}}' + \epsilon_{\textsc{pk}}' ).
\end{equation}
Crucially, all of the upper bounds that we derive for \cref{eq:eps_shear_nuc,eq:eps_ortho_nuc,eq:eps_pk_nuc} equally apply to \cref{eq:eps_shear_nuc_prime,eq:eps_ortho_nuc_prime,eq:eps_pk_nuc_prime}. This is because our first step in bounding \cref{eq:eps_shear_nuc,eq:eps_ortho_nuc,eq:eps_pk_nuc} is always to bound the trace distance in terms of the two-norm, then derive a bound on the two-norm. 

Finally, the error of non-separable ISP obeys 
\begin{multline}
    \epsilon_{\mathrm{ISP}} \leq \epsilon_{\textsc{asp}}^{(en)} + \epsilon_{\mathrm{trim}} + 2^{3/2} \eta_e \sum_{a=1}^{N_{\mathrm{MOB}}}\epsilon_{a}^{(e)}+  2^{3/2}\sum_{i=1}^{3\eta_n} \sum_{\mu=0}^{N_{\mathrm{SMB}}-1} \big(\epsilon^{(n,c)}_{i \mu} + \epsilon^{(n,q)}_{i \mu}\big) + {}\\
    \sum_{I,J}\big\lvert C_{IJ}^{(en)} \big\rvert \cdot \big( \epsilon_{\mathrm{shear}}' + \sum_{P=2}^{\beta+1} \epsilon'^{(\mathrm{2D-shear})}_P + \epsilon_{\textsc{pk}}' \big),
\end{multline}
where $\epsilon'^{(\mathrm{2D-shear})}_P = \big\Vert U_{\mathbf{F}_P}|\Xi_{P-1}\rangle -  \widetilde{U}_{\mathbf{F}_P}|\Xi_{P-1}\rangle  \big\Vert_2$
and $|\Xi_{P-1}\rangle$ is defined in \cref{eq:psi_mult_shear}.

\section{Coulomb-state preparation}\label{app:Coulomb_state_prep}
Here, we give details on the preparation of the charge state in \cref{eq:charge_state_nuc_elec}, which forms the backbone of the Coulomb state preparation protocol in \cref{subsec:time-evo_PREP_V}. Our approach to preparing the charge state (up to a junk register) mirrors the strategy in \cite{su2021} for preparing a state that only encodes the nuclear charges (see \cref{eq:su-charge-state}). 

To prepare the charge state 
\begin{align}
\ket{\zeta}=\frac{1}{\sqrt{\sum_{j=1}^{\eta}|\zeta_{j}|}}\sum_{j=1}^{\eta}\sqrt{|\zeta_{j}|}\ket{j},
\end{align}
we use one instance of $\QROM$ and an ancillary state 
\begin{align}
\ket{\textrm{unif}}_{\anc}=\UNIF(2\eta_{e},b_{r})\ket{0}_{\anc},
\end{align}
given by a uniform superposition over $2\eta_{e}$ basis states. For a charge neutral system, $2\eta_{e}=\sum_{j=1}^{\eta}|\zeta_{j}|$, giving 
\begin{align}
    \ket{\textrm{unif}}_{\anc}=\frac{1}{\sqrt{\sum_{j=1}^{\eta}|\zeta_{j}|}}\sum_{x=0}^{2\eta_{e}-1}\ket{x}_{\anc}.
\end{align}
We then partition the index set $\cX=\{1,2,\ldots,2\eta_{e}\}$ into subsets of consecutive integers $X_{j}\subset\cX$ for $j\in\{1,2,\ldots,\eta\}$, with $|X_{j}|=|\zeta_{j}|$, where $|X_{j}|$ denotes the cardinality of the set $X_{j}$. Thus, each subset of indices $X_{j}$ is associated with particle $j$ of electric charge $\zeta_{j}$. The state $\ket{\textrm{unif}}_{\anc}$ may then be rewritten as 
\begin{align}
\ket{\textrm{unif}}_{\anc}=\frac{1}{\sqrt{\sum_{i=1}^{\eta}|\zeta_{i}|}}\sum_{j=1}^{\eta}\sqrt{|\zeta_{j}|}\ket{X_{j}},
\end{align}
where 
\begin{align}
\ket{X_{j}}=\frac{1}{\sqrt{|\zeta_{j}|}}\sum_{x\in X_{j}}\ket{x}_{\anc}
\end{align}
is the uniform state over the $j$th index subset.
We then construct a $\QROM_{\anc,m}$ that performs unary iteration on the register of the ancillary state $\ket{\textrm{unif}}_{\anc}$ and performs fanouts onto the $n_{\eta}$-sized register $m$, such that for any index value $x\in X_{j}$,
\begin{align}\label{eq:QROM_Coulomb_def}
\QROM_{\anc,m}\left(\ket{x}_{\anc}\ket{0}_m\right)=\ket{x}_{\anc}\ket{j}_{m},
\end{align}
where register $m$ now contains the exact $n_{\eta}$-bit binary representation of index $j\in\{1,2,\ldots,\eta\}$.
We then apply this $\QROM$ onto 
$\ket{\textrm{unif}}_{\anc}\ket{0}_m$, giving
\begin{align}\QROM_{\anc,m}\left(\ket{\textrm{unif}}_{\anc}\ket{0}_m\right)=
\frac{1}{\sqrt{\sum_{i=1}^{\eta}|\zeta_{i}|}}\sum_{j=1}^{\eta}\sqrt{\zeta_{j}}\,\QROM_{\anc,m}\left(\ket{X_{j}}_{\anc}\ket{0}_{m}\right),
\end{align}
which, using \cref{eq:QROM_Coulomb_def}, yields 
\begin{align}\label{eq:charge_junk_state}
\ket{\zeta_{\junk}}_{m,\anc}=\frac{1}{\sqrt{\sum_{i=1}^{\eta}|\zeta_{i}|}}\sum_{j=1}^{\eta}\sqrt{\zeta_{j}}\,\ket{j}_{m}\ket{X_{j}}_{\anc},
\end{align}
where we have swapped the ordering of the registers. We have thus prepared $\ket{\zeta}$ on register $m$ entangled with some junk on register $\anc$, which is sufficient for our LCU-based block-encoding of $V$ \cite{babbush2018}.

\section{Ancilla costs for time evolution}\label{app:qubit_cost}
Here, we determine the number of ancilla qubits required to prepare the state
\begin{align}
\proxPREP_{H}\ket{0}=\ket{+}_{b}\otimes \ket{\proxL_{m}}_{c}\otimes  \ket{\cL_{w}}_{d} \otimes \ket{\cL_{r,s}}_{e,f} \ket{\tilde{\cL}_{\bnu}}_{g,h}\otimes \ket{\tilde{\cL}_{\zeta\zeta}}_{k,\ell,m}.
\end{align}
The numbers of ancilla qubits used to prepare and represent the component states in $\proxPREP_{H}\ket{0}$ are 
\begin{enumerate}
\item $C_{\anc}(\ket{\proxL_{\theta}}_{a})=1$ (ignoring the $n_{\theta}$ ancillas for the rotation synthesis, which we include in \cref{eq:n_grad}),
\item $C_{\anc}(\ket{+}_{b})=1$,
\item $C_{\anc}(\ket{\proxL_{m}}_{c})=4n_{\eta}+3\mu_{T}+1$,
\item $C_{\anc}(\ket{\cL_{w}}_{d})=4+2$, including the ancillas of the inequality test,
\item$C_{\anc}(\ket{\cL_{r,s}}_{e,f})=2n_{p}$,
\item $C_{\anc}(\ket{\proxL_{\bnu}}_{g,h})=3n_{p}^2+10n_{p}+5n_{\cM}+4n_{\cM}n_{p}+10$,
\item $C_{\anc}(\ket{\proxL_{\doublezeta}}_{k,\ell,m})=6\bits{2\eta_{e}}+3n_{\eta}+4$.
\end{enumerate}

We also need to provide a phase-gradient state that is catalytically used to synthesize the single-qubit rotation $R_{z}(\tilde{\theta})$ that prepares the weighting state $\ket{\proxL_{\theta}}_{a}$ as well as the $\tildegre+1$ single-qubit QSP rotations $R(\phi_{i},\gamma_{i})$. The number of ancilla qubits needed for storing this phase-gradient state is therefore
\begin{align}
n_{\textrm{grad}}= \max\left\{\bits{{\lambda_{H}}/{\epsilon_{\theta}}},  \bits{{1}/{\epsilon_{\rot}}}\right\}.
\label{eq:n_grad}
\end{align}

\subsection{Costing \texorpdfstring{$\cR_{0}^{(\cW)}$}{cR0W} } \label{app:reflection_cost}
To give the cost of $\cR_{0}^{(\cW)}$ in \cref{eq:approx_qubiterate}, we need to account for the size of the register that this reflection acts on. Since $\cR_{0}^{(\cW)}$ acts on the output of $\proxPREP_{H}^{\dagger}$, we only need to consider the qubits used within the routine $\proxPREP_{H}$ that are not erased via measurement-based uncomputation. The number $\out(\PREP_{H}^{\dagger})$ of output qubits then determines both the Toffoli and ancilla costs of $\cR_{0}^{(\cW)}$. We assume the reuse of ancillas from $U_{H}$ (see \cref{eq:block_encoding_U_H}) to perform $\cR_{0}^{(\cW)}$ because it is performed after $\proxPREP_{H}^{\dagger}$.

Hence, the registers that we need to anti-control on to implement $\cR_{0}^{(\cW)}$ are 
\begin{enumerate}
\item register $a$ of size $1$,
\item register $b$ of size $1$,
\item register $c$ of size $n_{\eta}$ and 1 uniform-state-preparation rotation qubit,
\item register $w$ of size $2$ and 1 ancilla being rotated in its preparation,
\item registers $r$ and $s$ of combined size $2n_{p}$ and 1 catalytically used ancilla qubit for the controlled Hadamard gates,
\item some registers from the momentum state, totaling $4n_p+n_{\cM}+3$ qubits,
\item the two uniform-state-preparation registers on the ancillas which are each of size $\bits{2\eta_{e}}$, along with 1 rotation qubits for each.
\end{enumerate}
Thus, the total number of ancillas that are output by $\PREP_{H}^{\dagger}$ and that $\cR_{0}^{(\cW)}$ is anti-controlled on is 
\begin{align}
\out(\PREP_{H}^{\dagger})=n_{\eta}+6n_{p}+n_{\cM}+2\bits{2\eta_{e}}+11.
\end{align}
We implement $\cR_{0}^{(\cW)}$ via a multi-controlled-$Z$ gate that is uncomputed via measurement-based uncomputation. Therefore, the ancilla and Toffoli costs are 
\begin{align}
C_{\anc}\big(\contr{\cR^{(\cW)}_{0}}\big)&=\out(\PREP_{H}^{\dagger})-2\\
C_{\Toff}\big(\contr{\cR^{(\cW)}_{0}}\big)&=\out(\PREP_{H}^{\dagger})-1.
\end{align}

\section{Selection between \texorpdfstring{$T$}{T} and \texorpdfstring{$V$}{V}}\label{app:SEL_H_implement}

Our implementation of $\SEL_{H}$ is an adaptation of the the approach in \cite{su2021}. In \cref{subsec:SEL_H_implement}, we stated that the choice of whether to apply $\SEL_{T}$ or $\SEL_{V}$ is conditioned on the value of qubits flagging success or failure of the state-preparation routines $\ket{\proxL_{\bnu}}=\widetilde{\PREP}_{\boldsymbol{\nu}}\ket{0}$ and $\ket{\proxL_{\doublezeta}}=\widetilde{\PREP}_{\zeta}\ket{0}$ that comprise $\widetilde{\PREP}_{V}$.
We may write the probabilistic preparation of the momentum state as 
\begin{align}
\widetilde{\PREP}_{\boldsymbol{\nu}}\ket{0}_{g}\ket{0}_{h}=\sqrt{p_{\bnu}}\ket{0}_{g}\otimes\PREP_{\bnu}\ket{0}_{h}+\sqrt{1-p_{\bnu}}\ket{1}_{g}\otimes\PREP_{\bnu}^{\perp}\ket{0}_{h},
\end{align}
where success is indicated by the flag qubit $g$ being in the state $\ket{0}_{g}$. Similarly, for the charge state,
\begin{align}
\widetilde{\PREP}_{\zeta}\ket{0}_{k}\ket{0}_{\ell,m}=\sqrt{p_{\zeta}}\ket{0}_{k}\otimes \PREP_{\zeta}\ket{0}_{\ell,m}+\sqrt{1-p_{\zeta}}\ket{1}_{k}\otimes \PREP_{\zeta}^{\perp}\ket{0}_{\ell,m},
\end{align}
where success is indicated by the flag qubit $k$ being in the state $\ket{0}_{k}$.
Together with the state $\ket{\cL_{\theta}}=R_{z}(\theta)\ket{+}_{a}$ on register $a$, the flag registers $g$ and $k$ are used to control whether we apply $\SEL_{T}$ or $\SEL_{V}$. Our implementation of $\SEL_{H}$ is then of the form 
\begin{align}
\SEL_{H}=\sum_{x_{a},x_{g},x_{k}\in\{0,1\}}\ket{x_{a},x_{g},x_{k}}\bra{x_{a},x_{g},x_{k}}_{a,g,k}\otimes \SEL_{(x_{a},x_{g},x_{k})},
\label{eq:SEL_H_appendix}
\end{align}
where $\SEL_{(x_{a}, x_{g}, x_{k})}\in\{\SEL_{T}, \SEL_{V}, I\}$. 

We can follow the two possible strategies identified in~\cite{su2021}---called the $\OR$ and $\AND$ strategies---for assigning each $\SEL_{(x_{a}, x_{g}, x_{k})}$ to an element of $\{\SEL_{T}, \SEL_{V}, I\}$. 
We show below that our approach always succeeds provided we use the $\textrm{OR}$ strategy when $\frac{\lambda_{T}}{\lambda_{T}+\lambda_{V}}\ge 1-p_{\nu}p_{\zeta}$ and the $\textrm{AND}$ strategy otherwise. 

We recall from \cref{eq:block_simple} that the unitary block-encoding of $H$ with LCU norm $\lambda_{H}$ is defined using 
\begin{align}
    H/\lambda_{H}= \bra{0}\PREP_{H}^{\dagger}\SEL_{H}\PREP_{H}\ket{0},  
\end{align}
but that, in practice, it is implemented using the approximate preparation routine
\begin{align}
    \widetilde{\PREP}_{H}=R_{z}(\tilde{\theta})_{a} \otimes \proxPREP_{T}\otimes\widetilde{\PREP}_{V}.
    \label{eq:theta_state_app}
\end{align}
Using an ansatz for $\SEL_{H}$ that only selects $\SEL_{T}$ for register $a$ in state $\ket{0}_{a}$,
\begin{align}
    \SEL_{H}= \dyad{0}{0} \otimes \SEL_{T}\otimes I+ \dyad{1}{1}\otimes I \otimes \SEL_{V},
\end{align}
and performing the block-encoding, leads to a weighted sum of $T$ and $V$ block-encodings,
\begin{align}
\bra{0}\widetilde{\PREP}_{H}^{\dagger}\SEL_{H}\widetilde{\PREP}_{H}\ket{0}&=w(T)\bra{0}\proxPREP_{T}^{\dagger}\SEL_{T}
\proxPREP_{T}\ket{0}+ w(V)\bra{0}\widetilde{\PREP}_{V}^{\dagger}\SEL_{V}\widetilde{\PREP}_{V}\ket{0}\\
& =w(T) \frac{T}{\lambda_{T}}+ w(V)\frac{V}{\lambda_{V}}.
\end{align}
The weights $w(T)$ and $w(V)$ depend on the precise form of $\SEL_{H}$, i.e., on the choices of $\SEL_{(x_{a},x_{g},x_{k})}$. 

We can infer the weights for any set of choices by writing out the states of the flag registers. After suppressing all registers apart from the flags, the possible states are
\begin{align}
\big(\cos \theta\ket{0}_{a}+ \sin \theta\ket{1}_{a}\big)\otimes  \big(\sqrt{p_{\bnu}}\ket{0}_{g}+\sqrt{1-p_{\bnu}}\ket{1}_{g}\big)\otimes \big(\sqrt{p_{\zeta}}\ket{0}_{k}+\sqrt{1-p_{\zeta}}\ket{1}_{k}\big).
\label{eq:flag_amplitudes}
\end{align}
We then write $w(T)$ and $w(V)$ as products of the squares of these coefficients, depending on the flag values. Specifically, we assign weights $w\left(\ket{x_{a}}\right), w\left(\ket{x_{g}}\right), w\left(\ket{x_{k}}\right)$ to values of the flag registers such that  
\begin{align}
    w(\ket{0}_{a}) & =\cos^2 \theta & w(\ket{1}_{a})&= \sin^2\theta\\
    w(\ket{0}_{g})&=p_{\bnu} & w(\ket{1}_{g})& = (1-p_{\bnu})\\
    w(\ket{0}_{k})&= p_{\zeta} & w(\ket{1}_{k})&=(1-p_{\zeta}).
\end{align}
Moreover, we define the sets 
\begin{align}
    S_{T}&=\left\{ \ket{x_{a}}\ket{x_{g}}\ket{x_{k}}\mid\SEL_{(x_{a},x_{g},x_{k})}=\SEL_{T} \right\}\\
    S_{V}&=\left\{ \ket{x_{a}}\ket{x_{g}}\ket{x_{k}}\mid\SEL_{(x_{a},x_{g},x_{k})}=\SEL_{V} \right\}\\
    S_{I}&=\left\{ \ket{x_{a}}\ket{x_{g}}\ket{x_{k}}\mid\SEL_{(x_{a},x_{g},x_{k})}=I \right\},
\end{align}
which denote the collections of flag values for which $\SEL_{T}$, $\SEL_{V}$, or the identity are applied. For each element, we define the weight \begin{align}
 w\left(\ket{x_{a}}\ket{x_{g}}\ket{x_{k}}\right)=w\left(\ket{x_{a}}\right) w\left(\ket{x_{g}}\right) w\left(\ket{x_{k}}\right).
\end{align}
We can then write
\begin{align}\label{eq:weight_defs1}
w(T)&=\sum_{\ket{x_{a}}\ket{x_{g}}\ket{x_{k}}\in S_{T}} w\left(\ket{x_{a}}\right) w\left(\ket{x_{g}}\right) w\left(\ket{x_{k}}\right)\\\label{eq:weight_defs2}
w(V)&=\sum_{\ket{x_{a}}\ket{x_{g}}\ket{x_{k}}\in S_{V}} w\left(\ket{x_{a}}\right) w\left(\ket{x_{g}}\right) w\left(\ket{x_{k}}\right).
\end{align}
Using \cref{eq:weight_defs1,eq:weight_defs2}, we determine $w(T)$ and $w(V)$ for different implementations of $\SEL_{H}$. One condition that must always be met is that $\SEL_{V}$ is applied only if both the momentum state and the charge state were prepared successfully, i.e., when 
$\ket{x_{g}}=\ket{0}_{g}$ and $\ket{x_{k}}=\ket{0}_{k}$.

\subsection{\texorpdfstring{$\OR$}{OR} strategy}
We now  determine the weights $w(T)^{\OR}$ and $w(V)^{\OR}$ for the implementation of $\SEL_{H}$ in the $\OR$ case. 

We apply $\SEL_{V}$ for the set of flag values
\begin{align}
    S_{V}^{\OR}=\{\ket{1}_{a} \ket{0}_{g} \ket{0}_{k}\},
\end{align}
which, in the implementation of $\SEL_{H}$, corresponds to including a term
\begin{align}
    \dyad{1}{1}_{a}\otimes \dyad{0}{0}_{g} \otimes \dyad{0}{0}_{k} \otimes \SEL_{V}.
\end{align}
The corresponding weight is
\begin{align}
    w(V)^{\OR}=p_{\bnu}p_{\zeta} \sin^2\theta.
\end{align}

We apply $\SEL_{T}$ for the complementary set of flag values 
\begin{align}
    S_{T}^{\OR}=\{\ket{0}_{a}\ket{x_{g}}_{g}\ket{x_{k}}_{k},\ket{1}_{a}\ket{1}_{g}\ket{x_{k}}_{k},\ket{1}_{a}\ket{0}_{g}\ket{1}_{k}\},
\end{align}
where $x_{g},x_{k}\in\{0,1\}$. This corresponds to including the terms 
\begin{align}
(\dyad{0}{0}_{a}\otimes I_{g} \otimes I_{k} 
+\dyad{1}{1}_{a}\otimes \dyad{1}{1}_{g} \otimes I_{k}
+\dyad{1}{1}_{a}\otimes \dyad{0}{0}_{g} \otimes \dyad{1}{1}_{k})\otimes \SEL_{T}.
\end{align}
The corresponding weight is
\begin{align}
    w(T)^{\OR}&=\cos^2\theta+\left((1-p_{\bnu})+p_{\bnu}(1-p_{\zeta})\right)\sin^2\theta\\
    &=1-p_{\bnu}p_{\zeta}\sin^{2}\theta,
\end{align}
satisfying $w(T)^{\OR}+w(V)^{\OR}=1$.

To ensure a correct block-encoding, it is necessary that 
\begin{align}
\label{eq:kinetic_weight}
    w(T)^{\OR}&=\frac{\lambda_{T}}{\lambda_{T}+\lambda_{V}}\\
    1-p_{\bnu}p_{\zeta}\sin^{2}(\theta)&= \frac{\lambda_{T}}{\lambda_{T}+\lambda_{V}}.  
\end{align}
A solution for $\theta$ then exists in the $\OR$ case whenever 
\begin{align}
1-p_{\bnu}p_{\zeta}\leq\frac{\lambda_{T}}{\lambda_{T}+\lambda_{V}}.
\end{align}
For future reference, we note that, in the $\OR$ case, $p_{\bnu}p_{\zeta}\geq 1- \frac{\lambda_{T}}{\lambda_{T}+\lambda_{V}} = \frac{\lambda_{V}}{\lambda_{T}+\lambda_{V}}$ and therefore
\begin{align}
    \lambda_{H}^{\OR}=\lambda_{T}+\lambda_{V} \ge \frac{\lambda_{V}}{p_{\bnu}p_{\zeta}}.\label{eq:lambdaH-OR}
\end{align}

\subsection{\texorpdfstring{$\AND$}{AND} strategy} 
We determine $w(T)^{\AND}$ and $w(V)^{\AND}$ analogously. 

We apply $\SEL_{T}$ for the set of flag values
\begin{align}
    S_{T}^{\AND}=\{(\ket{0}_{a}\ket{1}_{g}\ket{x_{k}}_{k}),(\ket{0}_{a}\ket{0}_{g}\ket{1}_{k})\},
\end{align}
where $x_{k}\in\{0,1\}$, which corresponds to the terms
\begin{align}
&(\dyad{0}{0}_{a}\otimes \dyad{1}{1}_{g} \otimes I_{k} 
+\dyad{0}{0}_{a}\otimes \dyad{0}{0}_{g} \otimes \dyad{1}{1}_{k})\otimes \SEL_{T}
\end{align}
and giving
\begin{align}
    w(T)^{\AND}&=\left((1-p_{\nu})+p_{\nu}\left(1-p_{\zeta}\right)\right)\cos^2\theta\\
    &=(1-p_{\bnu}p_{\zeta})\cos^{2}\theta.
\end{align}

We apply $\SEL_{V}$ for the set of flag values 
\begin{align}
S_{V}^{\AND}=\{(\ket{x_{a}}_{a}\ket{0}_{g}\ket{0}_{k})\},
\end{align}
where $x_{a}\in\{0,1\}$, corresponding to
\begin{align}
    I_{a}\otimes \dyad{0}{0}_{g} \otimes \dyad{0}{0}_{k} \otimes \SEL_{V}
\end{align}
and
\begin{align}
    w(V)=p_{\bnu}p_{\zeta}.
\end{align}

For the remaining flag values, we apply the identity. Those are 
\begin{align}
    S_{I}^{\AND}=\{\ket{1}_{a}\ket{1}_{g}\ket{x_{k}}_{k}, \ket{1}_{a}\ket{0}_{g}\ket{1}_{k}\},
\end{align}
where $x_{k}\in\{0,1\}$,meaning that, for the $\AND$ case, $\SEL_{H}$ will include terms
\begin{align}
&(\dyad{1}{1}_{a}\otimes \dyad{1}{1}_{g} \otimes I_{k}
+\dyad{1}{1}_{a}\otimes \dyad{0}{0}_{g} \otimes \dyad{1}{1}_{k})\otimes I.
\end{align}
The additional weight is
\begin{align}
    w(I)^{\AND}&=(1-p_{\bnu}p_{\zeta})\sin^{2}\theta,
\end{align}
which obeys $w(T)^{\AND}+w(V)^{\AND}+w(I)^{\AND}=1$.

These three terms in the implementation of $\SEL_{H}$ now cause the block-encoding of $H$ to be of the form 
\begin{align}
    w(T)^{\AND}\frac{T}{\lambda_{T}}+ w(V)^{\AND}\frac{V}{\lambda_{V}}+w(I)^{\AND}I.
\end{align}

Nevertheless, it is still necessary that the relative weights obey 
\begin{align}
    \frac{w(T)^{\AND}}{w(T)^{\AND}+w(V)^{\AND}} &=\frac{\lambda_{T}}{\lambda_{T}+\lambda_{V}}\\\label{eq:rel_weight_and}
    \frac{(1-p_{\bnu}p_{\zeta})\cos^{2}\theta}{\cos^2\theta+p_{\bnu}p_{\zeta}\sin^2\theta}&=\frac{\lambda_{T}}{\lambda_{T}+\lambda_{V}}.
\end{align}
Because the left-hand side is at most $1-p_{\bnu}p_{\zeta}$, it follows that, in the $\AND$ case, there exists a solution for $\theta$ if
\begin{align}
1-p_{\bnu}p_{\zeta}\geq\frac{\lambda_{T}}{\lambda_{T}+\lambda_{V}},
\end{align}
which is the reverse situation to the $\OR$ case, where $1-p_{\bnu}p_{\zeta}\leq\frac{\lambda_{T}}{\lambda_{T}+\lambda_{V}}$.

To determine $\lambda^{\AND}_{H}$, we use the constraint  
\begin{align}
\frac{\lambda_{V}}{\lambda_{H}^{\AND}}=\frac{w(V)^{\AND}}{w(T)^{\AND}+w(V)^{\AND}+w(I)^{\AND}}=p_{\bnu}p_{\zeta}.
\end{align} 
Therefore,
\begin{align}
\lambda_{H}^{\AND}=\frac{\lambda_{V}}{p_{\bnu}p_{\zeta}}.\label{eq:lambdaHAND1}
\end{align}
Because $1-p_{\bnu}p_{\zeta}\leq\frac{\lambda_{T}}{\lambda_{T}+\lambda_{V}}$, it follows from $1-\frac{\lambda_{T}}{\lambda_{T}+\lambda_{V}}=\frac{\lambda_{V}}{\lambda_{T}+\lambda_{V}}$ that 
\begin{align}
    \lambda_{H}^{\AND}\geq \lambda_{T}+\lambda_{V}.\label{eq:lambdaHAND2}
\end{align}

\subsection{Combined results}

Overall, a valid solution for $\theta$ always exists, either by using the $\OR$ strategy when $1-p_{\bnu}p_{\zeta}\leq\frac{\lambda_{T}}{\lambda_{T}+\lambda_{V}}$ or the $\AND$ strategy otherwise.

We also provide an expression for $\lambda_{H}$ that encompasses both $\AND$ and $\OR$ cases. From \cref{eq:lambdaH-OR,eq:lambdaHAND1,eq:lambdaHAND2} and from accounting for the failure probability of uniform state preparations in $\proxPREP_{H}$, it follows that \begin{align}\label{eq:proxLambda_H_app}
\tilde{\lambda}_{H}=\frac{1}{P_{\eq}}\max\left\{\lambda_{T}+\lambda_{V},\frac{\lambda_{V}}{p_{\bnu}p_{\zeta}}\right\},
\end{align}
where $P_{\eq}$ is the probability of all uniform state preparations in $\proxPREP_{H}$ succeeding. These uniform state preparations occur as subroutines in the preparations of $\ket{\cL_{w}}, \ket{\cL_{m}}$, and $\ket{\cL_{\doublezeta}}$ respectively. Therefore,
\begin{equation}
    P_{\eq}=\Ps(3,8)\Ps(\eta,b_{r})\Ps(\eta,b_{r})^2,
\end{equation}
where $\Ps(n,b_{r})$ is the success probability of preparing a uniform superposition over $n$ basis states using rotations with $b_{r}$-bit precision. This probability is \cite{su2021}
\begin{align}
\Ps(n,b_{r})&=x\big(\big(1+\left(2-4x\right)\sin^2\theta(n,b_r)\big)^2+\sin^2(2\theta(n,b_r))\big)
\label{eq:prob_success_unif}\\
\theta(n,b_{r})&=y^{-1}
\operatorname{round}\!\big(
y \arcsin\!\big((4x)^{-1/2}\big)
\big),
\end{align}
where $x=n\,2^{-\lceil \log n \rceil}$ and $y=2^{b_r}/2\pi$.

\section{Error bounds on time evolution}
\label{app:bound_time_evo}
We bound the error on time evolution in four steps. In \cref{lem:epsilon_prop_bound}, we  first give a bound on the full time evolution error $\epsilon_{\prop}$ in terms of $\epsilon_\mathrm{QSP}$ and $\epsilon_{H}$. Then, to bound the block-encoding error $\epsilon_{H}$, we bound the block-encoding errors $\epsilon_{T}$ and $\epsilon_{V}$ in \cref{lem:epsilon_T_bound,lem:epsilon_V_bound}, respectively. We conclude with
a bound on $\epsilon_{H}$ in \cref{prop:epsilon_H_bound}.

\subsection{Total error}\label{app:QSP_error_bound_proof}
\begin{lemma}  \label{lem:epsilon_prop_bound}
Let $f_{\tilde{d}}$ be the degree-$\tilde{d}$ truncation of the Jacobi-Anger expansion \cite{Abramowitz1964}, such that $\big\lVert e^{-i\tilde{H}t}-\left(\bra{+}_{\anc}\otimes I\right)f_{\tilde{d}}(\proxW)\left(\ket{+}_{\anc}\otimes I\right)\big\rVert_{\infty}\leq \epsilon_{\tilde{d}}$, where $\proxW=e^{-i\arccos(\tilde{H}/\tilde{\lambda}_{H})}$ (see \cref{eq:approx_qubiterate,eq:Ham_blockenc_definition} for definitions). Further, let $\tilde{f}_{\tildegre}$ be an approximation to  $f_{\tildegre}$ implemented via QSP with rotation angles computed up to error $\epsilon_{\phi}$ and $\epsilon_{\gamma}$ and rotations synthesized up to error $\epsilon_{\rot}$. Then, for $\tilde{U}_{\prop}=\left(\bra{+}_{\anc}\otimes I\right)\tilde{f}_{\tildegre}(\proxW)\left(\ket{+}_{\anc}\otimes I\right)$,
\begin{align}
    \big\lVert e^{-iHt}-\tilde{U}_{\prop}\big\rVert_{\infty}\leq \epsilon_{\prop}=t \lVert H-\tilde{H}\rVert_{\infty}+\epsilon_{\tildegre}+(\tildegre+1)\left(\epsilon_{\phi}+\epsilon_{\gamma}+\epsilon_{\rot}\right).
\end{align} 
\end{lemma}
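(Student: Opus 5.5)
The plan is to split the error by the triangle inequality into three pieces, each controlled by one hypothesis of \cref{lem:epsilon_prop_bound}:
\begin{align}
\big\lVert e^{-iHt}-\tilde{U}_{\prop}\big\rVert_{\infty} \leq{}& \big\lVert e^{-iHt}-e^{-i\tilde{H}t}\big\rVert_{\infty} + \big\lVert e^{-i\tilde{H}t}-(\bra{+}\otimes I)f_{\tilde{d}}(\proxW)(\ket{+}\otimes I)\big\rVert_{\infty} \nonumber\\
&+ \big\lVert (\bra{+}\otimes I)\big(f_{\tilde{d}}(\proxW)-\tilde{f}_{\tilde{d}}(\proxW)\big)(\ket{+}\otimes I)\big\rVert_{\infty}.
\end{align}
The middle term is at most $\epsilon_{\tilde{d}}$ directly by hypothesis. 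For the last term, projecting onto $\ket{+}$ is a contraction, so it is bounded by $\lVert f_{\tilde{d}}(\proxW)-\tilde{f}_{\tilde{d}}(\proxW)\rVert_{\infty}$.

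For the first term, I would use the standard Duhamel identity
\begin{equation}
e^{-iHt}-e^{-i\tilde{H}t}=-i\int_0^t e^{-iH(t-s)}(H-\tilde{H})e^{-i\tilde{H}s}\,ds .
\end{equation}
This requires $\tilde{H}$ to be Hermitian. That holds because $\tilde{H}$ is a symmetric block $\bra{0}\proxPREP_H^{\dagger}\SEL_H\proxPREP_H\ket{0}$ with $\SEL_H$ self-inverse and Hermitian. Unitary invariance of the operator norm then gives the bound $t\lVert H-\tilde{H}\rVert_{\infty}$.

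For the QSP term, I write $f_{\tilde{d}}(\proxW)$ as the ideal product of $\tilde{d}+1$ single-qubit rotations $R_j(\phi_j,\gamma_j)$ on the QSP ancilla, interleaved with controlled applications of $\proxW$ or $\proxW^{\dagger}$ and the reflections, as in \cref{fig:QSP_circuit}. The implemented $\tilde{f}_{\tilde{d}}(\proxW)$ is the same circuit with each $R_j$ replaced by $\tilde{R}_j(\tilde{\phi}_j,\tilde{\gamma}_j)$. A telescoping (hybrid) argument over the $\tilde{d}+1$ rotation slots, with all other factors unitary, gives
\begin{equation}
\big\lVert f_{\tilde{d}}(\proxW)-\tilde{f}_{\tilde{d}}(\proxW)\big\rVert_{\infty}\leq\sum_{j=0}^{\tilde{d}}\lVert R_j(\phi_j,\gamma_j)-\tilde{R}_j(\tilde{\phi}_j,\tilde{\gamma}_j)\rVert_{\infty}.
\end{equation}
Each summand splits, again by the triangle inequality, into the synthesis error $\lVert R_j(\tilde{\phi}_j,\tilde{\gamma}_j)-\tilde{R}_j\rVert\leq\epsilon_{\rot}$ and the angle error $\lVert R_j(\phi_j,\gamma_j)-R_j(\tilde{\phi}_j,\tilde{\gamma}_j)\rVert$.

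The main obstacle is this angle-error term. I expect to need the convention that $\epsilon_{\phi}$ and $\epsilon_{\gamma}$ denote the induced operator-norm errors. Alternatively, I would use that $R(\phi,\gamma)$ is a product of exponentials $e^{i\phi\sigma}$, whose angle derivatives have norm at most $1$; then $\lVert e^{i\phi\sigma}-e^{i\tilde{\phi}\sigma}\rVert\leq|\phi-\tilde{\phi}|$, and splitting the $\phi$ and $\gamma$ changes gives at most $\epsilon_{\phi}+\epsilon_{\gamma}$. Summing over the $\tilde{d}+1$ slots and collecting the three pieces yields the stated $\epsilon_{\prop}$.
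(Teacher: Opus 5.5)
Your proposal is correct and follows the paper's proof. Both use the same three-term triangle inequality, separating the Hamiltonian-perturbation term $\lVert e^{-iHt}-e^{-i\tilde{H}t}\rVert_{\infty}\leq t\lVert H-\tilde{H}\rVert_{\infty}$, the Jacobi--Anger truncation error $\epsilon_{\tilde{d}}$, and the QSP angle and synthesis error; the paper simply asserts these three bounds, so your Duhamel argument and your telescoping over the $\tilde{d}+1$ rotation slots fill in details it omits (your count of $\tilde{d}+1$ matches the statement, whereas the paper's proof line writes $\tilde{d}$).
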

\begin{proof}
In this proof, we abuse notation to write $\tilde{f}_{\tildegre}(\proxW)$ in place of  $\left(\bra{+}_{\anc}\otimes I\right)\tilde{f}_{\tildegre}(\proxW)\left(\ket{+}_{\anc}\otimes I\right)$ and likewise for $f_{\tildegre}$. 
We then upper bound the error on $\big\lVert e^{-iHt}-\tilde{f}_{\tildegre}(\proxW)\big\rVert_{\infty}$ in three steps via the triangle inequality,
\begin{align}
\big\lVert e^{-iHt}- \tilde{f}_{\tildegre}(\proxW) \big\rVert_{\infty}&\leq \big\lVert e^{-iHt}-e^{-i\tilde{H}t}\big\rVert_{\infty}
+ \big\lVert  e^{-i\tilde{H}t}-f_{\tildegre}(\proxW) \big\rVert _{\infty}
+\big\lVert  f_{\tildegre}(\proxW)- \tilf_{\tildegre}(\proxW)  \big\rVert_{\infty}\\
&\leq t \lVert H-\tilde{H}\rVert_{\infty}+\epsilon_{\tildegre}+\tildegre(\epsilon_{\phi}+\epsilon_{\gamma}+\epsilon_{\rot})\\
&=t\epsilon_{H}+\epsilon_{\QSP},
\end{align}
where we have used the definitions in \cref{eq:Ham_blockenc_definition,eq:epsilon_QSP}.
\end{proof}

\subsection{Block-encoding error: Kinetic energy}\label{app: bound_T}
Here, we prove a bound on the error $\lVert T-\tilde{T}\rVert_{\infty}\leq\epsilon_{T}$ for $\tilde{T}=\lambda_{T}\bra{0}\proxblock{T}\ket{0}$.
\begin{lemma}\label{lem: bound_T}
Let $\mu_{T}$ denote the number of bits used to approximate the coefficients of $\ket{\cL_{m}}$. Then, the unitary $\proxblock{T}$ is a $\left(\lambda_{T}, q_{T},\epsilon_{T}\right)$-block-encoding of $T$, with
\begin{align}
    \big\lVert T-\lambda_{T}\bra{0^{q_{T}}}\proxblock{T}\ket{0^{q_{T}}}\big\rVert_{\infty}\leq \epsilon_{T}= \frac{\lambda_{T}}{2^{\mu_{T}}},
\end{align}
where $q_{T}=n_{\eta}+2n_{p}+5$ and  $\lambda_{T}=\frac{6\pi^2}{\Omega^{2/3}} (2^{n_{p}-1}-1)^2 \sum_{j=1}^{\eta} m_{j}^{-1}$.
\label{lem:epsilon_T_bound}
\end{lemma}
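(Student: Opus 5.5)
The plan is to compute the block-encoded operator $\tilde{T}=\lambda_{T}\bra{0}\proxblock{T}\ket{0}$ explicitly and compare it term by term with the LCU decomposition in \cref{eq:LCU_decomp_kin}. Because $\PREP_{T}\ket{0}$ factorizes as in \cref{eq:prept-decomp}, and only the mass factor $\ket{\tilde{\cL}_{m}}$ is approximated, the exact factors $\ket{+}_b$, $\ket{\cL_w}$, and $\ket{\cL_{r,s}}$ (whose amplitudes are reproduced exactly by the procedures of \cite{su2021}, conditioned on the flagged success that is absorbed into the $P_{\eq}$ normalization elsewhere) contribute identically to $T$ and $\tilde{T}$. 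Summing over $b,w,r,s$ with $\SEL_{T}$ then gives, for each particle $j$, the diagonal operator $\lambda_{T} m_j^{-1}\lambda_m^{-1} K_j$, where $K_j=\sum_{\mathbf{p}}\frac{\|\mathbf{k_p}\|^2}{2}\lambda_T^{-1}\lambda_m\ldots$. More cleanly, I will write $T=\sum_j m_j^{-1} T_j$ with $T_j=\sum_{\mathbf{p}}\tfrac12\|\mathbf{k_p}\|^2\ket{\mathbf{p}}\bra{\mathbf{p}}_j$, and show that $\tilde{T}=\sum_j \tilde{m}_j^{-1}\frac{\lambda_m}{\tilde{\lambda}_m} T_j$ with $\tilde{\lambda}_m=\sum_j\tilde{m}_j^{-1}$. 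The junk register from alias sampling is harmless: it is diagonal in $j$, and the subsequent unpreparation (including the measurement-based erasure of the $\QROM$) returns it to $\ket{0}$ on the relevant branch, so the block picks up exactly the weights $|\langle j|\tilde{\cL}_m\rangle|^2$.

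Next, I will bound the difference. Since the $T_j$ act on distinct particle registers and are positive, $\|T-\tilde{T}\|_\infty\le \sum_j |m_j^{-1}-\hat{m}_j^{-1}|\,\|T_j\|_\infty$, where $\hat{m}_j^{-1}$ is the effective coefficient. Each $\|T_j\|_\infty\le \lambda_T/\lambda_m$, because the LCU sum for one particle has total weight $\lambda_T m_j^{-1}/\lambda_m$. With the normalization chosen so that the block-encoding norm stays $\lambda_T$, the effective coefficients sum to $\lambda_m$. The alias-sampling guarantee, $\max_j|m_j^{-1}-\tilde{m}_j^{-1}|\le 2^{-\mu_T}\lambda_m/\eta$, then gives $\sum_j|\cdot|\le 2^{-\mu_T}\lambda_m$. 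Multiplying by $\lambda_T/\lambda_m$ yields $\epsilon_T=\lambda_T 2^{-\mu_T}$. The qubit count $q_T=n_\eta+2n_p+5$ is bookkeeping over registers $b,c,d,e,f$ plus flag and rotation qubits.

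I expect the main obstacle to be the renormalization step. The alias-sampling error is stated for the unnormalized coefficients, while the prepared state is normalized by $\tilde{\lambda}_m$. I will handle this either by noting that alias sampling preserves the total $\sum_j\tilde{m}_j^{-1}=\lambda_m$ exactly, because the keep/alt probabilities redistribute a fixed uniform mass, or, failing that, by absorbing the factor $\lambda_m/\tilde{\lambda}_m$ into the bound via $|\tilde{\lambda}_m-\lambda_m|\le 2^{-\mu_T}\lambda_m$. The second route may cost a constant factor of 2, which I would then check against the stated bound.
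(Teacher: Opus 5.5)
Your proposal is correct and follows essentially the paper's argument: replace $m_j$ by $\tilde m_j$ in the LCU of \cref{eq:LCU_decomp_kin}, bound $\lVert T-\tilde T\rVert_\infty$ by the coefficient differences weighted by the per-particle LCU weight $6\pi^2(2^{n_p-1}-1)^2/\Omega^{2/3}=\lambda_T/\lambda_m$, and then apply the alias-sampling guarantee. Of your two routes for the renormalization, the first is the one the paper implicitly relies on to get exactly $\lambda_T/2^{\mu_T}$: alias sampling outputs an exactly normalized distribution, so $\sum_j \tilde m_j^{-1}=\lambda_m$. The second route would only give a bound of about $2\lambda_T/2^{\mu_T}$.
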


\begin{proof}
\Cref{eq:LCU_decomp_kin} gives the LCU decomposition of the kinetic term as 
\begin{align}
T&=\sum_{\ell_{T}}\alpha_{\ell_{T}}H_{\ell_{T}}=\sum_{j=1}^{\eta} \sum_{w\in \{x,y,z\}}\sum_{r,s=0}^{n_{p}-2}\sum_{b\in \{0,1\}} \frac{\pi^2 2^{r+s}}{m_{j}\Omega^{2/3}}H_{\ell_{T}},  
\label{eq:LCU_kin_app}
\end{align}
where $\alpha_{\ell_{T}}=\pi^2 2^{r+s}/m_{j}\Omega^{2/3}$ and where $H_{\ell_{T}}=\sum_{\mathbf{p}\in G} (-1)^{b\cdot\left(p_{w,r}\cdot p_{w,s}\oplus 1\right)}\ket{\mathbf{p}}\bra{\mathbf{p}}_{j}$ is unitary. We let
\begin{align}
 \tilde{T}=\sum_{j=1}^{\eta} \sum_{w\in \{x,y,z\}}\sum_{r,s=0}^{n_{p}-2}\sum_{b\in \{0,1\}} \frac{\pi^2 2^{r+s}}{\Omega^{2/3}\tilde{m}_{j}}H_{\ell_{T}}.  
\end{align}
Then,
\begingroup
\allowdisplaybreaks
\begin{align}
 \lVert T-\tilde{T}\rVert_{\infty}&=\bigg\lVert\sum_{\ell_{T}}\alpha_{\ell_{T}}H_{\ell_{T}}-\sum_{\ell_{T}}\tilde{\alpha}_{\ell_{T}}H_{\ell_{T}}\bigg\rVert_{\infty}\\
    &\leq \bigg| \sum_{\ell_{T}}\left(\alpha_{\ell}-\tilde{\alpha}_{\ell}\right)\bigg| \\
    &\leq 6\pi^{2}\frac{(2^{n_{p}-1}-1)}{\Omega^{2/3}}\Bigg| \sum_{j=1}^{\eta}\left(\frac{1}{m_{j}}-\frac{1}{\tilde{m}_{j}}\right)\Bigg|\\
    &\leq 6\pi^{2}\frac{(2^{n_{p}-1}-1)}{\Omega^{2/3}}\eta\max_{j}\bigg|\frac{1}{m_{j}}-\frac{1}{\tilde{m}_{j}}\bigg|\\
    &\leq 6\pi^{2}\frac{(2^{n_{p}-1}-1)}{\Omega^{2/3}}2^{-\mu_{T}}\sum_{j=1}^{\eta}\frac{1}{m_{j}}\\
    &=\frac{\lambda_{T}}{2^{\mu_{T}}},
\end{align}
\endgroup
where in the first step we are summing over the indices shown in \cref{eq:LCU_kin_app} and 
where the penultimate step follows from the coherent-alias-sampling protocol preparing the mass state $\ket{\cL_{\mass}}$, which guarantees that \cite{babbush2018} 
\begin{align}
\max_{j}\left|\frac{1}{m_{j}}-\frac{1}{\tilde{m}_{j}}\right|\leq \frac{2^{-\mu_{T}}}{\eta}\sum_{j=1}^{\eta}\frac{1}{m_{j}}.
\end{align}    

The number of ancillas $q_{T}$ that are not reset to zero in $\proxblock{T}$ can be obtained from \cref{app:reflection_cost} as  
$q_{T}=n_{\eta}+2n_{p}+5$.
\end{proof}

\subsection{Block-encoding error: Potential energy}\label{app:bound_V}
Here, we prove a bound on the error $\lVert V-\tilde{V}\rVert_{\infty}\leq\epsilon_{V}$ for $\tilde{V}=\lambda_{V}\bra{0}\proxblock{V}\ket{0}$.

\begin{lemma}\label{lem: bound_V}
Let $n_{\cM}$ be the number of bits used to approximate the coefficients of the momentum state $\ket{\cL_{\nu}}$ and let $r_{\bnu}=\frac{4}{\lambda_{\bnu}}(7\times2^{n_{p}+1}-9n_{p}-11-3\times 2^{-n_{p}})$, where $\lambda_{\bnu}=\sum_{\boldsymbol{\nu} \in G_0} \|\boldsymbol{\nu}\|^{-2}_{2}$. Then, the unitary $\proxblock{V}$ is a $\left(\lambda_{V}, q_{V},\epsilon_{V}\right)$-block-encoding of $V$, with
\begin{align}
    \big\lVert V-\lambda_{V}\bra{0^{q_{V}}}\proxblock{V}\ket{0^{q_{V}}}\big\rVert_{\infty}\leq \epsilon_V = \lambda_{V}\frac{r_{\bnu}}{2^{n_{\cM}}} ,
\end{align}
where $q_{V}=4n_{p}+2\bits{2\eta_{e}}+n_{\cM}+5$ and $\lambda_{V}=\lambda_{\bnu}(2\pi \Omega^{1/3})^{-1} {\sum_{i\neq j=1}^{\eta} |\zeta_{i}||\zeta_{j}|} $.
\label{lem:epsilon_V_bound}
\end{lemma}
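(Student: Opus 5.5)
The plan mirrors the proof of \cref{lem:epsilon_T_bound}. First I will write $\tilde{V}=\lambda_{V}\bra{0}\proxblock{V}\ket{0}$ explicitly. Conditioned on the flag qubits $g,k$ indicating success, the Coulomb state $\ket{\cL_{\zeta\zeta}}$ is prepared exactly, since the charges are integers and the QROM with the uniform superposition gives exact amplitudes. The only approximation therefore sits in the momentum state $\ket{\proxL_{\bnu}}$: the ideal weight $1/\|\bnu\|_2^2$ is replaced by $\lceil \cM(2^{\mu-2}/\|\bnu\|_2)^2\rceil/(\cM 2^{2\mu-4})$ for $\bnu\in B_\mu$.

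From this, $\tilde{V}$ has the same LCU form as \cref{eq:LCU_decomp_pot}, with coefficients $\tilde{\alpha}_{\bnu,i,j}$ obtained by that replacement and renormalized by the success probability. The renormalization is absorbed into the overall scale, as in \cref{eq:lambda_H_tilde}. Each $H_{(b,i,j,\bnu)}$ is unitary, so the triangle inequality gives
\begin{equation}
\|V-\tilde{V}\|_\infty \le \frac{\sum_{i\ne j}|\zeta_i||\zeta_j|}{2\pi\Omega^{1/3}}\sum_{\bnu\in G_0}\left|\frac{1}{\|\bnu\|_2^2}-\tilde{w}_{\bnu}\right|,
\end{equation}
where $\tilde{w}_{\bnu}$ is the effective approximate weight. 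Rounding up by the ceiling incurs at most $1/\cM$ per term inside the ceiling, so each term is bounded by $2^{4-2\mu}/\cM$ for $\bnu\in B_\mu$. Factoring out $\lambda_V$ then gives
\begin{equation}
\|V-\tilde V\|_\infty\le \lambda_V\frac{1}{\lambda_{\bnu}\cM}\sum_{\mu=2}^{n_p+1}|B_\mu|\,2^{4-2\mu}.
\end{equation}

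Next I count $|B_\mu|$, the number of integer points in the cube of half-width $2^{\mu-1}$ minus the inner cube of half-width $2^{\mu-2}$. This is roughly $(2^{\mu}-1)^3-(2^{\mu-1}-1)^3$, or the exact expression under the signed-magnitude grid convention. Multiplying by $2^{4-2\mu}$ and summing the resulting geometric series over $\mu=2,\dots,n_p+1$ should produce exactly $4(7\cdot 2^{n_p+1}-9n_p-11-3\cdot 2^{-n_p})$. With $\cM=2^{n_{\cM}}$ this yields $\epsilon_V=\lambda_V r_{\bnu}/2^{n_{\cM}}$.

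I expect the main obstacle to be this counting step. Getting the constants to match the stated $r_{\bnu}$ exactly requires the precise definition of $B_\mu$ together with the careful per-term error accounting of \cite{su2021}, possibly including the interaction between the success-probability normalization and the errors. If the naive count does not match, I would follow \cite{su2021}'s analysis of the momentum-state error, with the potential prefactor replaced by $\sum_{i\ne j}|\zeta_i||\zeta_j|$.

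Finally, I will read off $q_V$ from \cref{app:reflection_cost} by counting the non-erased output registers of $\proxPREP_V$. These are the $4n_p+n_{\cM}+3$ momentum-state qubits plus the two uniform-state registers of size $\bits{2\eta_e}$ with their rotation qubits, which gives $4n_p+2\bits{2\eta_e}+n_{\cM}+5$.
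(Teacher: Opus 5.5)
Your proposal is correct and follows the paper's proof. Both arguments bound $\|V-\tilde V\|_\infty$ by the triangle inequality over the unitary terms $H_{(b,i,j,\bnu)}$, reduce it to the summed rounding error $\sum_{\bnu}\bigl|1/\|\bnu\|_2^2-\tilde w_{\bnu}\bigr|$ over the shells $B_\mu$, and read $q_V$ off the non-erased output registers of $\proxPREP_V$.

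The only difference is that the paper cites Eq.~(113) of \cite{su2021} for that sum, whereas you derive it. Your count closes exactly: with $a=2^{\mu-1}$, $|B_\mu|\,2^{4-2\mu}=4(7a-9+3/a)$, and summing over $\mu=2,\dots,n_p+1$ gives $4(7\times 2^{n_p+1}-9n_p-11-3\times 2^{-n_p})=\lambda_{\bnu}r_{\bnu}$.
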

\begin{proof}

\Cref{eq:LCU_decomp_pot} gives the LCU decomposition of the potential term as 
\begin{align}
V&=\sum_{\ell_{V}}\alpha_{\ell_{V}}H_{\ell_{V}}=\sum_{\bnu\in G_{0}}\sum_{i\neq j=1}^{\eta}\sum_{b\in\{0,1\}}\frac{ |\zeta_{i}||\zeta_{j}|}{2\pi\Omega^{1/3} \norm{\bnu}^{2}_2}H_{(b,i,j,\bnu)},  
\label{eq:LCU_pot_app}
\end{align}
where $\alpha_{\ell_{V}}=|\zeta_{i}||\zeta_{j}|/2\pi\Omega^{1/3} \norm{\bnu}^{2}_2$ and $H_{\ell_{V}}=\sum_{\mathbf{p},\mathbf{q}\in G}\left(-1\right)^{f(\mathbf{p},\mathbf{q},\boldsymbol{\nu},i,j)}\ket{\mathbf{p}+\boldsymbol{\nu}}\bra{\mathbf{p}}_{i} \ket{\mathbf{q}-\boldsymbol{\nu}}\bra{\mathbf{q}}_{j}$ are unitary. We let
\begin{align}
 \tilde{V}=\sum_{\ell_{V}}\alpha_{\ell_{V}}H_{\ell_{V}}=\sum_{\bnu\in G_{0}}\sum_{i\neq j=1}^{\eta}\sum_{b\in\{0,1\}}\frac{\pi |\zeta_{i}||\zeta_{j}|}{2\pi\Omega^{1/3} \norm{\tilde{\bnu}}^{2}_2}H_{(b,i,j,\bnu)}
\end{align}
Then,
\begin{align}
\lVert V-\tilde{V}\rVert_{\infty}&\leq \frac{\sum_{i\neq j=1}^{\eta}|\zeta_{i}||\zeta_{j}|}{2\pi\Omega^{1/3}}\sum_{\bnu \in G_{0}}\left|\frac{1}{\norm{\bnu}_{2}^{2}}-\frac{1}{\norm{\tilde{\bnu}}_{2}^{2}}\right|\\
        &\leq \frac{\sum_{i\neq j=1}^{\eta}|\zeta_{i}||\zeta_{j}|}{2\pi\Omega^{1/3}}\sum_{r=2}^{n_{p}+1}\sum_{\bnu\in B_{\mu}}  \left|\frac{1}{\norm{\bnu}_{2}^{2}}-\frac{1}{\norm{\tilde{\bnu}}_{2}^{2}}\right|\\
        &\leq \frac{\sum_{i\neq j=1}^{\eta}|\zeta_{i}||\zeta_{j}|}{2\pi\Omega^{1/3}}\frac{4}{2^{n_{\cM}}}\left(7\times2^{n_{p}+1}-9n_{p}-11-3\times 2^{-n_{p}}\right)\\
        &=\lambda_{V}\frac{r_{\bnu}}{2^{n_{\cM}}},
\end{align}
where in the penultimate line we used Eq. (113) in \cite{su2021}, and in the last line we used the definitions of $r_{\bnu}$ and $\lambda_{V}$.

The number of ancillas $q_{V}$ that are not reset to zero in $\proxblock{V}$ can be obtained from \cref{app:reflection_cost} as  
$q_{V}=4n_{p}+2\bits{2\eta_{e}}+n_{\cM}+5$.
\end{proof}

\newpage
\begin{lemma}\label{lem:r_bnu_bound}
$r_{\bnu}=\dfrac{4}{\lambda_{\bnu}}(7\times2^{n_{p}+1}-9n_{p}-11-3\times 2^{-n_{p}})\leq 12.$
\label{lemma:bound_r_bnu}
\end{lemma}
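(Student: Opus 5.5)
The statement is equivalent to $7\times2^{n_p+1}-9n_p-11-3\times2^{-n_p}\le 3\lambda_{\bnu}$. It therefore suffices to lower-bound $\lambda_{\bnu}=\sum_{\bnu\in G_0}\|\bnu\|_2^{-2}$ by something of order $2^{n_p}$ with a constant large enough to beat the leading term $14\cdot2^{n_p}$, since $3\lambda_{\bnu}$ must dominate it. Dropping the negative terms, I would aim to show $\lambda_{\bnu}\ge \tfrac{14}{3}\,2^{n_p}-c$ for a suitable $c$ (and all $n_p\ge 2$ or so), and then check the remaining lower-order terms.

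The natural route is the shell decomposition already used for the momentum state. Write $G_0=\bigsqcup_{\mu=2}^{n_p+1}B_\mu$, where $B_\mu$ consists of the points with $2^{\mu-2}\le\|\bnu\|_\infty<2^{\mu-1}$. On $B_\mu$ we have $\|\bnu\|_2^2\le 3\|\bnu\|_\infty^2<3\cdot 4^{\mu-1}$. The cardinality is $|B_\mu|=(2^{\mu}-1)^3-(2^{\mu-1}-1)^3$, which is roughly $\tfrac78\,8^{\mu}$. Hence
\[
\lambda_{\bnu}\ge\sum_{\mu=2}^{n_p+1}\frac{(2^{\mu}-1)^3-(2^{\mu-1}-1)^3}{3\cdot 4^{\mu-1}} .
\]
This is a finite geometric-type sum that can be evaluated in closed form. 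Its leading behaviour is $\sum_\mu \tfrac{7}{6}\,2^{\mu}\approx\tfrac73\,2^{n_p+2}=\tfrac{28}{3}\,2^{n_p}$, which comfortably exceeds the $\tfrac{14}{3}\,2^{n_p}$ that is needed.

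The steps are as follows.
\begin{enumerate}
\item Expand the cubes and sum exactly to obtain an explicit lower bound of the form $a\,2^{n_p}+b\,n_p+c+d\,2^{-n_p}$.
\item Multiply by $3$ and subtract $7\times 2^{n_p+1}-9n_p-11-3\times2^{-n_p}$.
\item Verify that the difference is nonnegative for all admissible $n_p$. The leading coefficient is positive, roughly $28-14$, so the inequality holds for large $n_p$ automatically. The small cases ($n_p=1,2,3$) need a direct check, possibly using the exact value of $\lambda_{\bnu}$ computed by hand for tiny grids.
\end{enumerate}

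One caveat: the grid $G$ here is $[-(N-1)/2,(N-1)/2]^3$ with $N=2^{n_p}-1$. The outermost shell may therefore be truncated relative to the $B_\mu$ definition, so I would restrict to the shells that are fully contained, which gives a slightly smaller but still valid lower bound.

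The main obstacle is the small-$n_p$ regime, together with making sure the crude bound $\|\bnu\|_2^2\le3\|\bnu\|_\infty^2$ does not lose too much for the innermost shells. The first thing I would try, if the crude bound fails at small $n_p$, is to treat the innermost shell ($\|\bnu\|_\infty=1$, which has 26 points contributing $6+12/2+8/3=44/3$) exactly, and apply the crude bound only to the outer shells.
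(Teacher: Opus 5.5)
Your bound $\lambda_{\bnu}\ge\sum_{\mu=2}^{n_p+1}|B_\mu|/(3\cdot4^{\mu-1})$ is exactly the paper's argument, and evaluating it exactly over all shells finishes the lemma immediately. However, your estimate of its size is off by a factor of two. In fact $\sum_{\mu=2}^{n_p+1}\tfrac76\,2^{\mu}\approx\tfrac76\,2^{n_p+2}=\tfrac{14}{3}\,2^{n_p}$, not $\tfrac{28}{3}\,2^{n_p}$, and there is no slack at any order. With $x=2^{\mu-1}$ one has $|B_\mu|=(2x-1)^3-(x-1)^3=7x^3-9x^2+3x$, so the $\mu$th term is $\tfrac73x-3+x^{-1}$. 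Summing over $x=2,4,\dots,2^{n_p}$ gives
\[
\lambda_{\bnu}\ \ge\ \tfrac73\bigl(2^{n_p+1}-2\bigr)-3n_p+1-2^{-n_p}
=\tfrac13\bigl(7\times2^{n_p+1}-9n_p-11-3\times2^{-n_p}\bigr).
\]
The numerator of $r_{\bnu}$ is precisely three times this crude bound, which is where the constant $12$ comes from. So the difference in your Step~2 is identically zero: no small-$n_p$ checks are needed, and the exact treatment of the innermost shell is superfluous. Your preliminary aim of proving $\lambda_{\bnu}\ge\tfrac{14}{3}\,2^{n_p}$ after discarding the negative terms would also fail with this bound. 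The lower-order terms have to be kept, because they cancel exactly.

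The genuine danger is your truncation caveat. The margin is zero and the shell contributions grow like $2^{\mu}$, so the outermost shell $B_{n_p+1}$ alone supplies $\tfrac73\,2^{n_p}-3+2^{-n_p}$, about half the bound. If you drop it, the crude bound gives only $3\lambda_{\bnu}\ge 7\times2^{n_p}-9n_p-2-6\times2^{-n_p}$. That is short of the required numerator by $7\times2^{n_p}-9+3\times2^{-n_p}$, and since this shortfall grows like $2^{n_p}$, no $O(1)$ gain such as the exact innermost shell can rescue it for large $n_p$. The argument requires $\bnu$ to range over all of $\bigsqcup_{\mu=2}^{n_p+1}B_\mu=\{\bnu\neq 0:\|\bnu\|_\infty\le 2^{n_p}-1\}$. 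This is the momentum-transfer grid on which the momentum state and the numerator of $r_{\bnu}$ are defined; under that reading nothing is truncated and you must not drop any shell. If you instead read $G_0=G\setminus\{0\}$ literally with $N=2^{n_p}-1$, then $\|\bnu\|_\infty\le2^{n_p-1}-1$. In that case $B_{n_p+1}$ is not partially truncated but entirely absent, and the estimate $\|\bnu\|_2^2\le3\|\bnu\|_\infty^2$ is too weak to reach $12$; you would need a sharper lower bound on $\lambda_{\bnu}$.
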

\begin{proof}
Let $B_{\mu}=\{\bnu \in G_{0}\mid (\norm{\bnu}_{\infty}< 2^{\mu-1})\wedge(\bigvee_{w\in\{x,y,z\}}(|\bnu_{w}|\geq 2^{\mu-2}))\}$. Then, we have
\begin{align}
\lambda_{\bnu} = \sum_{\boldsymbol{\nu} \in G_0} \frac{1}{\|\boldsymbol{\nu}\|_{2}^{2}}
&= \sum_{\mu=2}^{n_p+1} \sum_{\boldsymbol{\nu} \in B_\mu} 
\frac{1}{|\nu_x|^2 + |\nu_y|^2 + |\nu_z|^2}.
\end{align}
Since for any $\bnu\in B_{\mu}$ it holds that $|\nu_{x}|,|\nu_{y}|,|\nu_{z}|\leq 2^{\mu-1}$, it follows that
\begingroup
\allowdisplaybreaks
\begin{align}
\lambda_{\bnu} &\ge \sum_{\mu=2}^{n_p+1} \sum_{\boldsymbol{\nu} \in B_\mu} 
\frac{1}{3 \cdot 2^{2\mu-2}}  =\frac{4}{3} \sum_{\mu=2}^{n_p+1} \sum_{\boldsymbol{\nu} \in B_\mu} \frac{1}{2^{2\mu}}\\ 
&= \frac{4}{3} \sum_{\mu=2}^{n_p+1} \left( (2^\mu - 1)^3 - (2^{\mu - 1} - 1)^3 \right) \frac{1}{2^{2\mu}} \\
&= \frac{1}{3} \left( 7 \times 2^{n_p + 1}- {9n_p -1 1} - 3 \times 2^{-n_p} \right),
\end{align}
\endgroup
from which our bound follows. 
\end{proof}

\subsection{Error bound on full block-encoding}\label{app:bound_H_blockenc}

We now prove a bound on the error $\lVert H-\tilde{H}\rVert_{\infty}\leq\epsilon_{H}$ for $\tilde{H}=\tilde{\lambda}_{H}\bra{0}\proxblock{H}\ket{0}$.
\begin{proposition}
The unitary $\proxblock{H}$ is a $(\tilde{\lambda}_{H}, q_{H}, \epsilon_{H})$-block encoding of $H$ with 
\begin{align}\label{eq:epsilon_H_expr}
\big\lVert H-\tilde{\lambda}_{H}\bra{0^{q_{H}}}\proxblock{H}\ket{0^{q_{H}}}\big\rVert_{\infty}\leq \epsilon_{H} = \epsilon_{T}+\epsilon_{V}+2\tilde{\lambda}_{H}\Delta\theta,
\end{align}
where $q_{H}=n_{\eta}+6n_{p}+n_{\cM}+2\bits{2\eta_{e}}+11$ and $\Delta \theta = |\theta-\tilde\theta|$, as defined in \cref{eq:delta_theta}.
\label{prop:epsilon_H_bound}
\end{proposition}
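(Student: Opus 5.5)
We bound $\lVert H-\tilde H\rVert_\infty$ with two triangle inequalities. The first compares $H$ to an intermediate operator $H'$ that uses the approximate kinetic and potential block encodings $\tilde T,\tilde V$ but the exact angle $\theta$. The second compares $H'$ to $\tilde H$, which uses the rounded angle $\tilde\theta$. Concretely, define
\begin{align}
H'=\tilde\lambda_H\bra{0}\PREP_{H,\theta}^{\prime\dagger}\SEL_H\PREP'_{H,\theta}\ket{0},
\end{align}
where $\PREP'_{H,\theta}$ is $\proxPREP_H$ with $R_z(\tilde\theta)$ replaced by $R_z(\theta)$. Then
\begin{align}
\lVert H-\tilde H\rVert_\infty\le \lVert H-H'\rVert_\infty+\lVert H'-\tilde H\rVert_\infty .
\end{align}

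For the first term, we use the flag-weight bookkeeping of \cref{app:SEL_H_implement}. With the exact $\theta$, which is chosen via the $\OR$ or $\AND$ strategy, projecting onto the all-zero ancilla state gives
\begin{align}
H'=\tilde\lambda_H\Big(w(T)\,\tilde T/\lambda_T+w(V)\,\tilde V/\lambda_V+w(I)\,I\Big).
\end{align}
Here the relative weights are fixed so that $\tilde\lambda_H w(T)/\lambda_T=1=\tilde\lambda_H w(V)/\lambda_V$, after absorbing $P_{\eq}$ into $\tilde\lambda_H$ as in \cref{eq:proxLambda_H_app}. In the $\AND$ case the identity term carries a known weight and contributes only a constant energy shift. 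I would either remove it by redefining $H$ up to this known offset, or note that it cancels when the same construction is applied to the exact $T,V$. The same substitution with the exact $T$ and $V$ reproduces $H$, so
\begin{align}
\lVert H-H'\rVert_\infty\le \lVert T-\tilde T\rVert_\infty+\lVert V-\tilde V\rVert_\infty\le\epsilon_T+\epsilon_V
\end{align}
by \cref{lem:epsilon_T_bound,lem:epsilon_V_bound}.

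For the second term, only the single-qubit state on register $a$ changes. Write $\tilde H=\tilde\lambda_H\bra{\Phi_{\tilde\theta}}\SEL_H\ket{\Phi_{\tilde\theta}}$, where $\ket{\Phi_\theta}=\ket{\cL_\theta}_a\otimes\ket{\chi}$ and $\ket{\chi}$ is the fixed remaining ancilla state; this is understood as a partial inner product over the ancillas. Then
\begin{align}
\lVert \bra{\Phi_\theta}\SEL_H\ket{\Phi_\theta}-\bra{\Phi_{\tilde\theta}}\SEL_H\ket{\Phi_{\tilde\theta}}\rVert_\infty\le 2\,\lVert\ket{\cL_\theta}-\ket{\cL_{\tilde\theta}}\rVert_2 .
\end{align}
This follows from adding and subtracting the mixed term $\bra{\Phi_\theta}\SEL_H\ket{\Phi_{\tilde\theta}}$ and using that $\SEL_H$ is unitary. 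Finally, $\lVert\ket{\cL_\theta}-\ket{\cL_{\tilde\theta}}\rVert_2=2|\sin((\theta-\tilde\theta)/2)|\le\Delta\theta$. Multiplying by $\tilde\lambda_H$ gives $2\tilde\lambda_H\Delta\theta$, which matches \cref{eq:delta_theta}.

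The ancilla count $q_H$ is read off from \cref{app:reflection_cost} as $\out(\PREP_H^\dagger)$. The main obstacle is the first step: making the weight identity rigorous in both the $\OR$ and $\AND$ strategies, handling the identity branch, and including the uniform-state-preparation success factor $P_{\eq}$ consistently, so that the coefficients multiplying $\tilde T$ and $\tilde V$ are exactly $1$.
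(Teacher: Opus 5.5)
Your proof is correct, but it splits the error differently from the paper.

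The paper splits by Hamiltonian term. It bounds $\lVert T-\tilde\lambda_H\tilde w(T)\tilde T/\lambda_T\rVert_\infty+\lVert V-\tilde\lambda_H\tilde w(V)\tilde V/\lambda_V\rVert_\infty$, with the weights $\tilde w$ evaluated at the rounded angle $\tilde\theta$. It then cites Appendix D of Su et al.\ for each term being at most $\epsilon_T+\tilde\lambda_H\Delta\theta$ (resp.\ $\epsilon_V+\tilde\lambda_H\Delta\theta$). Implicitly this uses $\lVert\tilde T\rVert_\infty\le\lambda_T$ and the fact that weights such as $\sin^2\theta$ and $\cos^2\theta$ are $1$-Lipschitz in $\theta$.

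You instead split by error source, through the exact-angle operator $H'$. The coefficient errors give $\epsilon_T+\epsilon_V$ once you check that the exact-$\theta$ weights make both prefactors equal to one. They do, in both the \OR{} and \AND{} strategies, with $\tilde\lambda_H=\lambda_H/P_{\eq}$, so the step you list as the main obstacle goes through. The rounding of $\theta$ is then handled by a generic perturbation bound on $\bra{\Phi}\SEL_H\ket{\Phi}$ in the single rotated qubit.

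What each route buys:
\begin{itemize}
\item Your route is self-contained and never needs the explicit $\theta$-dependence of the weights. It also automatically controls the $\theta$-dependent identity weight $(1-p_{\bnu}p_\zeta)\sin^2\theta$ of the \AND{} strategy, which the paper's two-term split leaves out.
\item The paper's route is shorter and shows where each $\tilde\lambda_H\Delta\theta$ comes from.
\end{itemize}

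Your caveat about identity terms is well taken and applies more broadly than you state. The flagged failure branch of the uniform state preparations, where $\SEL_H$ acts as the identity, contributes $(1-P_{\eq})I$ even under the \OR{} strategy. So the stated bound holds only up to a known energy offset, a point the paper's proof passes over.
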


\begin{proof}
We note that 
\begin{align}
 \proxblock{H}=\tilde{w}(T)\bra{0}\proxblock{T}\ket{0}+\tilde{w}(V)\bra{0}\proxblock{V}\ket{0}, 
\end{align}
and therefore
\begin{align}
  \tilde{H}&= \tilde{\lambda}_{H} \bra{0}\proxblock{H}\ket{0}\\
  &=\tilde{\lambda}_{H}\Big(\tilde{w}(T)\bra{0}\proxblock{T}\ket{0}+\tilde{w}(V)\bra{0}\proxblock{V}\ket{0}\Big).
\end{align}
We thus obtain by the triangle inequality
\begin{align}
    \lVert H-\tilde{H}\rVert_{\infty}&=\big\lVert (T+V)-\tilde{\lambda}_{H}\big(\tilde{w}(T)\bra{0}\proxblock{T}\ket{0}+\tilde{w}(V)\bra{0}\proxblock{V}\ket{0}\big)\big\rVert_{\infty}\\
    &\leq \big\lVert T-\tilde{\lambda}_{H}\tilde{w}(T)\bra{0}\proxblock{T}\ket{0}\big\rVert_{\infty}+\big\lVert V-\tilde{\lambda}_{H}\tilde{w}(V)\bra{0}\proxblock{V}\ket{0}\big\rVert_{\infty}.\label{eq:first_term}
\end{align}
From Appendix D in \cite{su2021}, it follows that the first term in \cref{eq:first_term} is bounded by $\epsilon_{T}+\tilde{\lambda}_{H}\Delta\theta$ and the second term is bounded by $\epsilon_{V}+\tilde{\lambda}_{H}\Delta\theta$. Therefore, our bound on $\epsilon_{H}$ in \cref{eq:epsilon_H_expr} directly follows. 

The number of ancillas that are not reset to zero in  $\proxblock{H}$ is equal to $\out(\PREP_{H}^{\dagger})$ in \cref{app:reflection_cost},
$q_{H}=n_{\eta}+6n_{p}+n_{\cM}+2\bits{2\eta_{e}}+11$.
\end{proof}

\section{Block-encoding of yield observables}\label{app:indicator_circuit}

Here, we describe the quantum arithmetic circuit that constructs the 
 $(1,1,0)$-block-encoding
 \begin{equation}
    U_{\Pi_{S}} = \sum_{\mathbf{R}} \ket{\mathbf{R}}\bra{\mathbf{R}}_{\data}\otimes X^{I(\mathbf{R},S)\oplus 1}_{\flag}, 
\end{equation}
of the yield observable $\Pi_{S}=\left(\bra{0}\otimes I\right)U_{\Pi_{S}}\left(\ket{0}\otimes I\right)$. Throughout this section, we refer to this circuit as the indicator circuit. We decompose the indicator circuit as 
\begin{align}
    U_{\Pi_{S}}= U_{\bond}^{\dagger}U_{\flag} U_{\bond}(\textsc{tc2sm}_{n_{p}}^{\dagger})^{\otimes 3n_{\nuc}},
\end{align}
where $\textsc{tc2sm}$ converts the $n_{p}$-bit strings in the nuclear data registers from signed magnitude to two's complement to allow for the application of arithmetic circuits. $U_{\bond}$ then computes whether the $B_{j}$ inter-nuclear distances involving $n_{\nuc}$ nuclei that define the reaction channel $S$ satisfy the conditions for the channel and stores the results in $B_{j}$ qubits. If they do, controlled on these $B_{j}$ qubits, $U_{\flag}$ flips the flag qubit from $\ket{1}$ to $\ket{0}$. In particular,
\begin{align}
    U_{\flag}&=\Had^{\otimes B_j}\left(\contr{X}_{\flag}\right) \Had^{\otimes B_j},\\
    U_{\bond}&=\prod_{k \le B_{j}}U_{j,k},  \\
    U_{j,k}&=\COMP|_{2n_{p}}(b_{j,k})\, \SOS^{\alpha_{j,k}\beta_{j,k}}_{n_{p}}\, \SUB^{\alpha_{j,k}\beta_{j,k}}_{n_{p}},
\end{align}
where $\SUB^{\alpha_{j,k}\beta_{j,k}}_{n_{p}}$ performs the subtractions $R_{\alpha_{j,k},w}-R_{\beta_{j,k},w}$ for all three position-vector components. $\SOS^{\alpha_{j,k}\beta_{j,k}}_{n_{p}}$ then sums the squares $\sum_{w}(R_{\alpha_{j,k},w}-R_{\beta_{j,k},w})^2$ of the $n_{p}$-bit differences using Lemma~8 in~\cite{su2021}. Finally, $\COMP|_{2n_{p}}(b_{j,k})$ denotes a comparison test that does not uncompute ancillas at the end. Specifically, $\COMP|_{2n_{p}}(b_{j,k})$ takes the $2n_{p}$-bit output of $\SOS$ and stores the outcome of the comparison $\sum_{w}(R_{\alpha_{j,k},w}-R_{\beta_{j,k},w})^2 \geq b_{j,k}^2$ in an ancilla. 

The costs for the indicator circuit checking  $B_{j}$ inter-nuclear distances involving $n_{\nuc}$ nuclei are then 
\begin{align}
C_{\anc}(\Ind)&=\max\{ C_{\anc}((\textsc{tc2sm}_{n_{p}}^{\dagger})^{\otimes 3n_{\nuc}}),C_{\anc}(U_{\bond}),C_{\anc}(U_{\flag}),C_{\anc}(U_{\bond}^{\dagger}) \}\\
C_{\Toff}(\Ind)&=3n_\mathrm{nuc}C_{\Toff}(\textsc{tc2sm}_{n_{p}}^{\dagger})+C_{\Toff}(U_{\bond})+C_{\Toff}(U_{\flag})+C_{\Toff}(U_{\bond}^{\dagger}),
\end{align}
where 
\begin{align}
C_{\anc}(\textsc{tc2sm}_{n_{p}}^{\dagger})&=n_{p}-2 &
C_{\Toff}(\textsc{tc2sm}_{n_{p}}^{\dagger})&= (n_{p}-2)\\
C_{\anc}(U_{\flag})&=B_j-2&
C_{\Toff}(U_{\flag})&=B_j-1\\
C_{\anc}(U_{\bond})&=B_{j}C_{\anc}(U_{j,k})&
C_{\Toff}(U_{\bond})&=B_jC_{\Toff}(U_{j,k}).
\end{align}
We can break these costs down further as 
\begin{align}
C_{\anc}(U_{j,k})&=\max\{C_{\anc}(\COMP|_{2n_{p}}(b_{j,k})),C_{\anc}(\SOS^{\alpha_{j,k}\beta_{j,k}}_{n_{p}}),C_{\anc}(\SUB^{\alpha_{j,k}\beta_{j,k}}_{n_{p}})\}\label{eq:anc_cost_sqrt_U_jk}\\
C_{\Toff}(U_{j,k})&=C_{\Toff}(\COMP|_{2n_{p}}(b_{j,k}))+
C_{\Toff}(\SOS^{\alpha_{j,k}\beta_{j,k}}_{n_{p}})+C_{\Toff}(\SUB^{\alpha_{j,k}\beta_{j,k}}_{n_{p}}),
\end{align}
\begingroup
\allowdisplaybreaks
Assuming that the 3 subtractions are carried out in parallel, the costs are 
\begin{align}
C_{\anc}(\SUB^{\alpha_{j,k}\beta_{j,k}}_{n_{p}}))&=3n_{p} &
C_{\Toff}(\SUB^{\alpha_{j,k}\beta_{j,k}}_{n_{p}})&=3(n_{p}-1)
\\
C_{\anc}(\SOS^{\alpha_{j,k}\beta_{j,k}}_{n_{p}})&=3n_{p}^2-n_{p}-1&
C_{\Toff}(\SOS^{\alpha_{j,k}\beta_{j,k}}_{n_{p}})&=3n_{p}^{2}-n_{p}-1\\
C_{\anc}(\COMP|_{2n_{p}}(b_{j,k}))&=2n_{p}&
C_{\Toff}(\COMP|_{2n_{p}}(b_{j,k}))&=2n_{p}-2.
\end{align}
\endgroup

We uncompute the carry bits in the inequality test, $(\COMP|_{2n_{p}}(b_{j,k}))^{\dagger}$, at no further Toffoli cost using measurements and phase-fixups. Therefore, $C_{\Toff}(\COMP|_{2n_{p}}(b_{j,k})^{\dagger})=0$ and
\begin{align}
C_{\Toff}(U^{\dagger}_{\bond})&=B_jC_{\Toff}(U^{\dagger}_{j,k})\\
C_{\Toff}(U_{j,k}^{\dagger})&=  C_{\Toff}(\ADD^{\alpha_{j,k}\beta_{j,k}}_{n_{p}})+C_{\Toff}\big((\SOS^{\alpha_{j,k}\beta_{j,k}}_{n_{p}})^{\dagger}\big),
\end{align}
where $\ADD^{\alpha_{j,k}\beta_{j,k}}_{n_{p}}$ uncomputes $\SUB^{\alpha_{j,k}\beta_{j,k}}_{n_{p}}$, with
\begin{align}
C_{\Toff}(\ADD^{\alpha_{j,k}\beta_{j,k}}_{n_{p}})&=3(n_{p}-1)\\
C_{\Toff}\big((\SOS^{\alpha_{j,k}\beta_{j,k}}_{n_{p}})^{\dagger}\big)&=3n_{p}^{2}-n_{p}-1.
\end{align}
Overall, the costs of $U_{\Pi_{S}}$ are
\begin{align}
C_{\anc}(U_{\Pi_{S}})&=1+ B_{j}C_{\anc}(U_{j,k})\\
C_{\mathrm{Toff}}(\Ind)&=3B_j(2 n_p^2 + 2n_p-3)+3n_{\nuc}(n_{p}-2)-1.
\end{align}

\section{Bound on observable-estimation error}\label{app:yield_error}
We bound the error $\epsilon$ on the estimate $\widetilde{\langle \tilde{O}\rangle}$ of the observable expectation value $\bra{\Psi(t)}O\ket{\Psi(t)}$ in two steps. In \cref{lem:state_and_obs_error}, we account for the errors in the state preparation and 
the block-encoding of the observable. Then, in \cref{lem:obs_qpe_error}, we account for the finite precision of phase estimation. Combining the two in \cref{corr:obs_bound_main}, we obtain the desired bound on the error in the estimate of $\bra{\Psi(t)}O\ket{\Psi(t)}$.
\begin{lemma}\label{lem:state_and_obs_error}
Let $\tilde{O}=\lambda_{O}(\bra{0^{k}}\otimes I)\tilde{U}_{O}(\ket{0^{k}}\otimes I)$ such that $\lVert O-\tilde{O}\rVert_{\infty}\leq \epsilon_{O}$ and let $\lVert \ket{\Psi}-\ket{\tilde{\Psi}}\rVert_{2}\leq \epsilon_{\Psi}$. Then, 
\begin{align}
    \big\lvert\bra{\Psi}O\ket{\Psi}-\bra{\tilde{\Psi}}\tilde{O}\ket{\tilde{\Psi}}\big\lvert\leq 2\lambda_{O}\epsilon_{\Psi}+\epsilon_{O}\label{eq:app_obs_error}.
\end{align}
\end{lemma}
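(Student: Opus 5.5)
We split the error by the triangle inequality, inserting the intermediate quantity $\bra{\tilde{\Psi}}O\ket{\tilde{\Psi}}$:
\begin{align}
\big\lvert\bra{\Psi}O\ket{\Psi}-\bra{\tilde{\Psi}}\tilde{O}\ket{\tilde{\Psi}}\big\rvert
\leq \big\lvert\bra{\Psi}O\ket{\Psi}-\bra{\tilde{\Psi}}O\ket{\tilde{\Psi}}\big\rvert
+\big\lvert\bra{\tilde{\Psi}}(O-\tilde{O})\ket{\tilde{\Psi}}\big\rvert .
\end{align}
The second term is at most $\lVert O-\tilde{O}\rVert_{\infty}\lVert\tilde{\Psi}\rVert_2^2\leq\epsilon_{O}$, because $\ket{\tilde{\Psi}}$ is a normalized state; in our setting it is the output of the unitary $\tilde{U}$ acting on $\ket{0}$.

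For the first term we use the standard telescoping identity
\begin{align}
\bra{\Psi}O\ket{\Psi}-\bra{\tilde{\Psi}}O\ket{\tilde{\Psi}}
=\bra{\Psi}O\big(\ket{\Psi}-\ket{\tilde{\Psi}}\big)+\big(\bra{\Psi}-\bra{\tilde{\Psi}}\big)O\ket{\tilde{\Psi}} .
\end{align}
Cauchy--Schwarz and the operator-norm bound then give at most $\lVert O\rVert_{\infty}\big(\lVert\Psi\rVert_2+\lVert\tilde{\Psi}\rVert_2\big)\epsilon_{\Psi}=2\lVert O\rVert_{\infty}\epsilon_{\Psi}$. Since $\lambda_{O}\geq\lVert O\rVert_{\infty}$ by the definition of a $(\lambda_O,q_O,\epsilon_O)$-block encoding, this is at most $2\lambda_{O}\epsilon_{\Psi}$. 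Adding the two contributions yields \cref{eq:app_obs_error}.

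No step is genuinely hard. The one point to check is which operator norm appears in the cross terms. If we had instead inserted $\tilde{O}$ on the states, we would need $\lVert\tilde{O}\rVert_{\infty}\leq\lambda_O$, which follows from $\tilde{O}$ being $\lambda_O$ times a sub-block of a unitary. Either ordering of the triangle inequality therefore gives the same bound.

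Finally, we note why the lemma is stated in terms of $\epsilon_\Psi$. When this lemma is combined with the subroutine errors, the two-norm distance $\epsilon_\Psi$ is bounded by the operator-norm errors $\epsilon_{\mathrm{ISP}}+\epsilon_{\mathrm{prop}}+\epsilon_B$, and the trace distances $D$ bound the two-norm distances up to a global phase, which cancels in the expectation values.
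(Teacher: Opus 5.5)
Your proof is correct and uses the same decomposition as the paper. Both arguments insert the intermediate quantity $\bra{\tilde\Psi}O\ket{\tilde\Psi}$ and bound the second term by $\epsilon_O$. The difference is the first term. You use the telescoping identity and Cauchy--Schwarz, which is completely elementary. The paper instead applies H\"older's inequality, $\lvert\bra{\Psi}O\ket{\Psi}-\bra{\tilde\Psi}O\ket{\tilde\Psi}\rvert\le\lVert O\rVert_\infty\lVert\dyad{\Psi}{\Psi}-\dyad{\tilde\Psi}{\tilde\Psi}\rVert_1$, and then uses $\lVert\dyad{\Psi}{\Psi}-\dyad{\tilde\Psi}{\tilde\Psi}\rVert_1\le 2\epsilon_\Psi$. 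That last step holds because the left side equals $2D(\ket{\Psi},\ket{\tilde\Psi})$ and $D\le\lVert\ket{\Psi}-\ket{\tilde\Psi}\rVert_2$.

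The paper's route buys the intermediate bound $2\lVert O\rVert_\infty D(\ket{\Psi},\ket{\tilde\Psi})$. This is phase-invariant and expressed in the same trace distance used to define $\epsilon_{\mathrm{ISP}}$. That bears on your closing remark, which is not quite right. The phase-optimized two-norm distance is $\sqrt{2(1-\lvert\langle\Psi|\tilde\Psi\rangle\rvert)}$, which lies between $D$ and $\sqrt{2}\,D$, so trace distances do not bound it exactly. Pushing trace-distance errors through your two-norm hypothesis can therefore cost up to a factor $\sqrt{2}$. This is negligible for small errors but not free, whereas the trace-norm route incurs no loss.

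On the other hand, your alternative ordering through $\bra{\Psi}\tilde O\ket{\Psi}$ is a small improvement over both main arguments. It needs only $\lVert\tilde O\rVert_\infty\le\lambda_O$, which follows from the block encoding itself. The other orderings need $\lambda_O\ge\lVert O\rVert_\infty$, which comes from the main-text definition but is not among the lemma's stated hypotheses.
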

\begin{proof}
The proof follows via the triangle inequality. We note that 
\begin{align}
\big\lvert\bra{\Psi}O\ket{\Psi}-\bra{\tilde{\Psi}}O\ket{\tilde{\Psi}}\big\rvert\leq \norm{O}_{\infty}\big\lVert \dyad{\Psi}{\Psi}-\dyad{\tilde{\Psi}}{\tilde{\Psi}}\big\rVert_{1}\leq \lambda_{O}2\epsilon_{\Psi}\label{eq:bound1},
\end{align}
which follows from elementary properties of the trace and Hölder's inequality. Furthermore,
\begin{align}
\big\lvert\bra{\tilde{\Psi}}O\ket{\tilde{\Psi}}-\bra{\tilde{\Psi}}\tilde{O}\ket{\tilde{\Psi}}\big\rvert\leq \lVert O-\tilde{O}\rVert_{\infty}\leq \epsilon_{O}\label{eq:bound2},
\end{align}
which follows from the Cauchy-Schwarz inequality. Combining \cref{eq:bound1,eq:bound2} then gives \cref{eq:app_obs_error}.
\end{proof}

\begin{lemma}\label{lem:obs_qpe_error}
Let $\theta_{\est}$ be an estimate of the phase $\tilde{\theta}=\arccos({\bra{\tilde{\Psi}}\tilde{O}\ket{\tilde{\Psi}}}/{\lambda_{O}})$ obtained via QPE such that $|\theta_{\est}-\tilde{\theta}|\leq\epsilon_{\QAE}/\lambda_{O}$. Then the estimate $\widetilde{\langle \tilde{O}\rangle}$ of $\bra{\tilde{\Psi}}\tilde{O}\ket{\tilde{\Psi}}$ obeys 
\begin{align}
\Big\lvert\widetilde{\langle \tilde{O}\rangle}-\bra{\tilde{\Psi}}\tilde{O}\ket{\tilde{\Psi}}\Big\rvert\leq \epsilon_{\QAE}.
\label{eq:obs_qpe_error}
\end{align}
\end{lemma}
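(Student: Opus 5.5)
The estimate is formed as $\widetilde{\langle \tilde{O}\rangle}=\lambda_{O}\cos(\theta_{\est})$. Since $\tilde{\theta}=\arccos(\bra{\tilde{\Psi}}\tilde{O}\ket{\tilde{\Psi}}/\lambda_{O})$, we have $\bra{\tilde{\Psi}}\tilde{O}\ket{\tilde{\Psi}}=\lambda_{O}\cos\tilde{\theta}$. This identity requires the argument of $\arccos$ to lie in $[-1,1]$. That holds because $\tilde{O}/\lambda_{O}$ is the top-left block of the unitary $\tilde{U}_{O}$, so $\lVert\tilde{O}\rVert_{\infty}\le\lambda_{O}$. The quantity to bound is therefore
\begin{align}
\Big\lvert\widetilde{\langle \tilde{O}\rangle}-\bra{\tilde{\Psi}}\tilde{O}\ket{\tilde{\Psi}}\Big\rvert=\lambda_{O}\,\lvert\cos\theta_{\est}-\cos\tilde{\theta}\rvert .
\end{align}

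The key step is the Lipschitz property of cosine. By the mean value theorem, $\lvert\cos a-\cos b\rvert=\lvert\sin\xi\rvert\,\lvert a-b\rvert\le\lvert a-b\rvert$. Alternatively, use $\lvert\cos a-\cos b\rvert=2\lvert\sin\tfrac{a+b}{2}\rvert\lvert\sin\tfrac{a-b}{2}\rvert\le\lvert a-b\rvert$. With the hypothesis $\lvert\theta_{\est}-\tilde{\theta}\rvert\le\epsilon_{\QAE}/\lambda_{O}$, this gives $\lambda_{O}\cdot\epsilon_{\QAE}/\lambda_{O}=\epsilon_{\QAE}$, which is \cref{eq:obs_qpe_error}.

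The only point needing care is the sign ambiguity of QPE. The walk operator $\widetilde{W}$ has eigenvalues $e^{\pm 2i\tilde{\theta}}$ on the two-dimensional invariant subspace, and the phase is read out via phase doubling. The estimate may therefore come out as $-\theta_{\est}$, or be defined modulo $2\pi$. This is harmless: cosine is even and $2\pi$-periodic, so $\cos$ of any representative gives the same value. The hypothesis should be read as holding for some representative of the estimated phase, with the sign fixed by convention. I expect this step, rather than any computation, to be the main subtlety. I would state explicitly that the hypothesis is taken for the representative closest to $\tilde{\theta}$, after which the Lipschitz bound applies directly.
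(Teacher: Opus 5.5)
Your proposal is correct and follows essentially the same route as the paper: both write the error as $\lambda_{O}\lvert\cos\theta_{\est}-\cos\tilde{\theta}\rvert$ and apply $\lvert\cos x-\cos y\rvert\le\lvert x-y\rvert$ together with the hypothesis on $\lvert\theta_{\est}-\tilde{\theta}\rvert$. Your remarks on the domain of $\arccos$ and on the sign/$2\pi$ ambiguity of the QPE readout are sensible clarifications that the paper leaves implicit.
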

\begin{proof}
We note that 
\begin{align}
\Big\lvert\widetilde{\langle \tilde{O}\rangle}-\bra{\tilde{\Psi}}\tilde{O}\ket{\tilde{\Psi}}\Big\rvert=\big\lvert\lambda_{O}\cos \theta_{\est}-\lambda_{O}\cos \tilde{\theta} \big\rvert. 
\end{align}
Since $\lvert\cos x-\cos y\rvert\leq|x-y|$ and since $|\theta_{\est}-\tilde{\theta}|\leq\epsilon_{\QAE}/\lambda_{O}$, \cref{eq:obs_qpe_error} follows.
\end{proof}

\begin{corollary}\label{corr:obs_bound_main}
Let $\ket{\tilde{\Psi}(t)}=\tilde{U}_{B}\tilde{U}_{\prop}\tilde{U}_{\init}\ket{0}$ and let $\epsilon_{\Psi}=\epsilon_{\init}+\epsilon_{\prop}+\epsilon_{B}$. Then for any observable $O$ with a $(\lambda_{O},k,\epsilon_{O})$-block-encoding, 
\begin{align}
\Big\lvert\widetilde{\langle \tilde{O}\rangle}-\bra{\Psi(t)}O\ket{\Psi(t)}\Big\rvert\leq 2\lambda_{O}(\epsilon_{\init}+\epsilon_{\prop}+\epsilon_{B})+\epsilon_{O}+\epsilon_{\QAE}. 
\end{align}
\end{corollary}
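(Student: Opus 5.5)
The plan is to chain \cref{lem:state_and_obs_error,lem:obs_qpe_error} with a single triangle inequality, after first bounding the total state error $\epsilon_{\Psi}$ by the sum of the three stage errors. Set $\ket{\Psi(t)}=U_B U_{\prop}U_{\init}\ket{0}$ (with $U_{\init}$ producing the ideal, trimmed initial state) and $\ket{\tilde{\Psi}(t)}=\tilde{U}_{B}\tilde{U}_{\prop}\tilde{U}_{\init}\ket{0}$. I would then write
\begin{align}
\Big\lvert\widetilde{\langle \tilde{O}\rangle}-\bra{\Psi(t)}O\ket{\Psi(t)}\Big\rvert\leq \Big\lvert\widetilde{\langle \tilde{O}\rangle}-\bra{\tilde{\Psi}(t)}\tilde{O}\ket{\tilde{\Psi}(t)}\Big\rvert+\Big\lvert\bra{\tilde{\Psi}(t)}\tilde{O}\ket{\tilde{\Psi}(t)}-\bra{\Psi(t)}O\ket{\Psi(t)}\Big\rvert .
\end{align}
By \cref{lem:obs_qpe_error}, the first term is at most $\epsilon_{\QAE}$, given that phase estimation resolves $\tilde\theta$ to precision $\epsilon_{\QAE}/\lambda_O$. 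By \cref{lem:state_and_obs_error}, the second term is at most $2\lambda_O\epsilon_{\Psi}+\epsilon_O$, provided $\lVert\ket{\Psi(t)}-\ket{\tilde{\Psi}(t)}\rVert_2\le\epsilon_\Psi$.

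The remaining step is to show $\lVert\ket{\Psi(t)}-\ket{\tilde\Psi(t)}\rVert_2\le \epsilon_{\init}+\epsilon_{\prop}+\epsilon_B$ by a hybrid argument. Inserting intermediate states, I would bound
\begin{multline}
\lVert U_BU_{\prop}U_{\init}\ket0-\tilde U_B\tilde U_{\prop}\tilde U_{\init}\ket0\rVert\le \lVert (U_B-\tilde U_B)U_{\prop}U_{\init}\ket0\rVert\\
+\lVert \tilde U_B(U_{\prop}-\tilde U_{\prop})U_{\init}\ket0\rVert+\lVert \tilde U_B\tilde U_{\prop}(U_{\init}\ket0-\tilde U_{\init}\ket0)\rVert .
\end{multline}
Unitary invariance of the two-norm, together with operator-norm bounds, gives $\epsilon_B$ from the first term and $\epsilon_{\prop}$ from the second, using \cref{eq:epsilon_prop}. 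The third term reduces to $\lVert U_{\init}\ket0-\tilde U_{\init}\ket0\rVert$.

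The main obstacle is that the ISP error was stated as a trace distance $D$ (\cref{eq:error_isp_sep} and \cref{app:isp_error_breakdown}), not as a two-norm distance, while \cref{lem:state_and_obs_error} needs a two-norm. I would handle this by noting that every constituent ISP bound was itself derived by first bounding $D$ by a two-norm or operator-norm difference, so the same chain directly bounds $\lVert\cdot\rVert_2$; I would define $\epsilon_{\init}$ accordingly. Alternatively, $\lVert\Psi\rangle\langle\Psi\rvert-\lvert\tilde\Psi\rangle\langle\tilde\Psi\rvert\rVert_1=2D$ could be fed directly into the Hölder step of \cref{lem:state_and_obs_error}, since that step only used the trace-norm difference of projectors. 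The same remark covers any global phase freedom, which is why the trace-norm route is the safer one. Adding the three pieces yields the stated bound.
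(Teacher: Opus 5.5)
Your proposal is correct and follows the paper's argument: the paper's proof just combines \cref{lem:state_and_obs_error,lem:obs_qpe_error} with the triangle inequality, tacitly assuming that $\epsilon_{\init}+\epsilon_{\prop}+\epsilon_B$ bounds the accumulated state error, so your hybrid argument and your remark on trace distance versus two-norm make explicit a step the paper skips. Of your two fixes, use the second, since the first is not quite accurate ($\epsilon_{\mathrm{trim}}$ and the MPS errors are defined directly as trace distances, and the corresponding two-norm distances can be larger); run the hybrid argument directly in $D$, a unitarily invariant metric with $D(\ket{a},\ket{b})\le\lVert\ket{a}-\ket{b}\rVert_2$, so that the ISP term contributes exactly $\epsilon_{\init}$, the other two terms are bounded by $\epsilon_B$ and $\epsilon_{\prop}$ through operator norms, and $2D$ then enters the Hölder step.
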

\begin{proof}
This follows from \cref{lem:state_and_obs_error,lem:obs_qpe_error} by the triangle inequality and setting $\ket{\Psi}=\ket{\Psi(t)}$.
\end{proof}

\end{document}